\newtheorem{theorem}{Theorem}[section]
\newtheorem{lemma}[theorem]{Lemma}
\newtheorem{claim}[theorem]{Claim}\crefname{claim}{claim}{claims}
\newtheorem{corollary}[theorem]{Corollary}
\newtheorem{definition}[theorem]{Definition}
\newcounter{note}[section]
\newcommand{\future}[1]{}
\newcommand{\eps}{\varepsilon}
\newcommand{\mcC}{\mathcal{C}}
\newcommand{\mcD}{\mathcal{D}}
\newcommand{\mcN}{\mathcal{N}}
\newcommand{\mcP}{\mathcal{P}}
\newcommand{\mcS}{\mathcal{S}}
\newcommand{\mcT}{\mathcal{T}}
\newcommand{\mcV}{\mathcal{V}}
\newcommand{\dset}{{\mathcal D}}
\newcommand{\nset}{{\mathcal N}}
\newcommand{\pset}{{\mathcal P}}
\newcommand{\congest}{$\mathsf{CONGEST}$\xspace}
\newcommand{\poly}{\operatorname{\text{{\rm poly}}}}
\newcommand{\floor}[1]{\lfloor #1 \rfloor}
\newcommand{\batch}{b}
\newcommand{\diam}{\operatorname{diam}}
\newcommand{\spa}{\operatorname{spars}}
\newcommand{\sep}{\operatorname{sep}}
\newcommand{\load}{\operatorname{load}}
\newcommand{\ball}{\operatorname{ball}}
\newcommand{\vol}{\operatorname{vol}}
\global\long\def\poly{\mathrm{poly}}%
\global\long\def\vol{\mathrm{vol}}%
\global\long\def\eps{\epsilon}%
\global\long\def\cN{{\cal N}}%
\global\long\def\cS{{\cal S}}%
\global\long\def\l{\ell}%
\global\long\def\congest{\mathrm{cong}}%
\global\long\def\step{\mathrm{step}}%
\global\long\def\spars{\mathrm{spars}}%
\global\long\def\sep{\mathrm{sep}}%
\global\long\def\val{\mathrm{val}}%
\global\long\def\disperse{\mathrm{disperse}}%
\global\long\def\deg{\mathrm{deg}}%
\global\long\def\diam{\mathrm{diam}}%
\global\long\def\cov{\mathrm{cov}}%
\global\long\def\ball{\mathrm{ball}}%
\global\long\def\supp{\mathrm{supp}}%
\newcommand{\U}{U}
\renewcommand{\l}{\ell}
\newcommand{\copies}{\text{copies}}
\renewcommand{\poly}{\operatorname{poly}}
\newcommand{\znote}[1]{\textcolor{red}{\sc{[ZT: #1]}}}
\newcommand{\alg}{\mathcal{A}}
\newcommand{\myparskip}{3pt}
\newcommand{\nrd}{\ensuremath{D_{A,k,k'}}\xspace}
\newcommand{\barnrd}{\ensuremath{\bar{D}_{A,k,k'}}\xspace}
\newcommand{\qLC}{\text{LC}}
\newcommand{\qLSC}{\text{LSC}}
\newcommand{\qLEC}{\text{LEC}}
\newcommand{\qSED}{\text{SED}}
\newcommand{\qLWSC}{\text{LWSC}}
\newcommand{\qLDSC}{\text{LDSC}}
\newcommand{\qLDSCS}{\text{LDSCS}}
\DeclareMathOperator*{\argmax}{arg\,max}
\newcommand{\dil}{\text{dil}}
\newcommand{\epoch}{\mathsf{epoch}}
\newcommand{\wsparseCut}{\textnormal{\textsf{W}}_{\textnormal{sparse-cut}}}
\newcommand{\dsparseCut}{\textnormal{\textsf{D}}_{\textnormal{sparse-cut}}}
\newcommand{\wcutStrat}{\textnormal{\textsf{W}}_{\textnormal{cut-strat}}}
\newcommand{\dcutStrat}{\textnormal{\textsf{D}}_{\textnormal{cut-strat}}}
\newcommand{\wED}{\textnormal{\textsf{W}}_{\textnormal{ED}}}
\newcommand{\dED}{\textnormal{\textsf{D}}_{\textnormal{ED}}}
\newcommand{\stepOne}{\hyperref[step:1]{step 1}\xspace}
\newcommand{\stepTwo}{\hyperref[step:2]{step 2}\xspace}
\newcommand{\stepThree}{\hyperref[step:3]{step 3}\xspace}
\title{New Structures and Algorithms for\\ Length-Constrained Expander Decompositions}
\author{
\begin{tabular}[t]{c@{\extracolsep{3em}}cc} 
        Bernhard Haeupler\thanks{Partially funded by the European Union's Horizon 2020 ERC grant 949272.} &    D Ellis Hershkowitz & Zihan Tan\thanks{Supported by a grant to DIMACS from the Simons Foundation (820931)} \\
        \small ETH Z\"urich \&  & \small   Brown University & \small DIMACS, \\
        \small  Carnegie Mellon University &  & \small  Rutgers University \\
\end{tabular}
}
\date{}
\begin{document}

\maketitle

\begin{abstract}

Expander decompositions form the basis of one of the most flexible paradigms for close-to-linear-time graph algorithms. Length-constrained expander decompositions generalize this paradigm to better work for problems with lengths, distances and costs. Roughly, an $(h,s)$-length $\phi$-expander decomposition is a small collection of length increases to a graph so that nodes within distance $h$ can route flow over paths of length $hs$ with congestion at most $1/\phi$.

\smallskip

In this work, we give a close-to-linear time algorithm for computing length-constrained expander decompositions in graphs with general lengths and capacities. Notably, and unlike previous works, our algorithm allows for one to trade off off between the size of the decomposition and the length of routing paths: for any $\eps > 0$ not too small, our algorithm computes in close-to-linear time an $(h,s)$-length $\phi$-expander decomposition of size $m \cdot \phi \cdot n^\eps$ where $s = \exp(\poly(1/\eps))$. The key foundations of our algorithm are: (1) a simple yet powerful structural theorem which states that the union of a sequence of sparse length-constrained cuts is itself sparse and (2) new algorithms for efficiently computing sparse length-constrained flows. 
\end{abstract}

\thispagestyle{empty}

\newpage
\thispagestyle{empty}
\tableofcontents
\thispagestyle{empty}

\newpage
\setcounter{page}{1}

\section{Introduction}\label{sec:intro}




Over the past few decades, expander decompositions have come to form the foundation of one of the most powerful and flexible paradigms for close-to-linear-time algorithms \cite{chuzhoy2020deterministic,van2020bipartite,li2021deterministic,saranurak2019expander}. Informally, expander decompositions separate a graph into expanders, allowing algorithms to make use of the structure of expanders on arbitrary graphs. One of the key properties of expanders---and, indeed, a way in which they are often defined--- is that any (reasonable) multi-commodity flow demand can be routed with low congestion \cite{ghaffari2017distributed,ghaffari2018new,goranci2021expander,broder1992existence,peleg1987constructing,kleinberg1996short,valiant1990general,leighton1999multicommodity,leighton1994packet}. 
However, while this property makes the expander decomposition paradigm very useful for algorithms dealing with flows, congestion and cuts, it is less useful for quantities like lengths, distances and costs.


\subsection{Length-Constrained Expanders and Expander Decompositions}
Motivated by this, \cite{haeupler2022expander} introduced the idea of length-constrained expanders. Informally, a length-constrained expander is a graph in which any (reasonable) multi-commodity flow can be routed over \emph{short} paths. More formally, a demand $D : V \times V \to \mathbb{Z}_{\geq 0}$ is \emph{$h$-length} if $D(u,v) > 0$ implies that $u$ and $v$ are at distance at most $h$ and \emph{unit} if no vertex sends or receives more than its degree in demand. Then, an $(h,s)$-length $\phi$-expander is a graph where any $h$-length unit demand can be routed by a multi-commodity flow with congestion $\tilde{O}(\frac{1}{\phi})$ \emph{over length $hs$-length paths}. $s$ is called the \emph{length slack}. If $hs \ll O(\frac{\log n}{\phi})$, a graph which is an $(h,s)$-length $\phi$-expander may not be a $\phi$-expander. For example, a path is an $(h, 1)$-length $\Omega(1)$-expander for constant $h$ but not an $\Omega(1)$-expander; henceforth, we use \emph{classic expander} to refer to (non-length-constrained) expanders. 

In order to better bring the expander decomposition machinery to bear on problems that deal with lengths, distances and costs, \cite{haeupler2022expander} introduced the idea of an $(h,s)$-length $\phi$-expander decomposition: a collection of $(\kappa \cdot \phi m)$ total edge length increases that make the input graph an $(h,s)$-length $\phi$-expander. Here, $\kappa$ is called the \emph{cut slack}. 

Length-constrained expander decompositions greatly extend the problems for which the expander decomposition paradigm is suitable. For instance, length-constrained expander decompositions give a simple tree-like way of routing demands that is $n^{o(1)}$-competitive with respect to \emph{both} congestion and flow path length \cite{haeupler2022expander}. Furthermore, these routings are oblivious: the flow for each pair is fixed without knowledge of the demand. In turn, these routings give (1) compact routing tables that allow nodes to perform $n^{o(1)}$-competitive point-to-point communication 
and (2) universally-optimal distributed algorithms in the CONGEST model of distributed computation, bypassing $\Omega(\sqrt{n} / \log n)$ lower bounds \cite{peleg2000near,das2011distributed} on networks with $n^{o(1)}$ time algorithms.

\subsection{Our Contributions}
In this work, we provide a deeper theory of length-constrained expanders and significantly more powerful close-to-linear time algorithms for computing length-constrained expander decompositions. 




We begin by describing our new algorithm for computing length-constrained expander decompositions. Our algorithm is based on the notion of $(h,s)$-length $\phi$-sparse cuts which generalizes the classic notion of $\phi$-sparse cuts (a.k.a.\ moving cuts) \cite{haeupler2022expander}. An $(h,s)$-length $\phi$-sparse cut is a collection of length increases such that there is a large $h$-length unit ``witness'' demand whose support pairs are made at least $hs$-far apart by these length increases. Our algorithm will simply repeatedly cut length-constrained sparse cuts where ``cutting'' such a cut just consists of applying its length increases to the graph.\footnote{For the case of length-constrained expanders, ``cutting'' necessarily involves increasing lengths (possibly fractionally) rather than (integrally) deleting edges since the length-constrained setting is known to exhibit large flow-cut integrality gaps \cite{haeupler2020network}.}

It is known that a graph is a length-constrained expander iff it contains no length-constrained sparse cuts \cite{haeupler2022expander} and so, if one repeatedly cuts  $(h,s)$-length $\phi$-sparse cuts until none exist, the union of all such cuts gives a length-constrained expander decomposition. Of course, if these cuts are always very small in size (e.g.\ only a single edge has its length increased) then such an algorithm would have no hope of running in close-to-linear time. In order to avoid this, we require that our cuts be (approximately) largest among all length-constrained sparse cuts (for a suitable notion of ``large''). Summarizing, we have the following (conceptually simple) rough outline of our algorithm:
\begin{quote}
    \centering \textit{Until none exist:\\ Cut a $(\approx h, \approx s)$-length $(\approx\phi)$-sparse (approximately) largest cut.}
\end{quote}
The only technical caveat to the above outline is what is hidden by the ``$\approx$''s above. Specifically, our algorithm proceeds in epochs in order to deal with the slacks that occur in the length-constrained setting where after each epoch we relax the cuts we look for by appropriately increasing the length slack $s$ and decreasing the length $h$ and sparsity $\phi$. A more formal version of the above algorithm is given in \Cref{alg:EDsfromCuts}.

Appropriately instantiating the above approach gives us the main result of our work. We state the formal result below; some of the precise definitions are left to Sections \ref{sec:conventions} and \ref{sec:prelim}. Informally, though, the ``node-weighting'' $A$ specifies the subset of our graph we would like to be $(h,s)$-length expanding and our decomposition is ``witnessed'' in the sense that it comes with an embedding which shows how to route with low congestion over low length paths in the graph after applying the decomposition. We discuss $(\leq h, s)$-length below and make use of standard work-depth models of parallel computation (see, e.g.\ \cite{blelloch1996programming}); for sequential algorithms work is equivalent to time.

\begin{restatable}{thm}{mainAlgThm}
\label{thm:expdecomp exist} There is a constant $c >1$ such that given graph $G$ with edge lengths and capacities, $\eps \in \left(\frac{1}{\log^{1/c} N},1 \right)$, node-weighting $A$, length bound
$h \geq 1$, and conductance $\phi>0$, one can compute a witnessed $(\leq h,s)$-length $\phi$-expander
decomposition for $A$ in $G$
with cut and length slack respectively
\begin{align*}
    \kappa = n^{\eps}   \qquad \qquad s= \exp(\poly(1/\eps))
\end{align*}
and work and depth respectively
\begin{align*}
    \wED(A, m) \leq m \cdot  \tilde{O}\left(n^{\poly(\eps)}\cdot \poly(h) \right) \qquad  \dED(A, m) \leq \tilde{O}\left(n^{\poly(\eps)}\cdot \poly(h) \right).
\end{align*}
\end{restatable}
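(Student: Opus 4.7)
The plan is to realize the conceptually simple ``repeatedly cut an (approximately largest) length-constrained sparse cut'' outline given in the introduction as a concrete algorithm with carefully tuned parameters, and then bound its correctness and complexity using the two foundational tools the paper advertises: (i) the structural theorem that a sum of sparse cuts is itself sparse, and (ii) the new efficient sparse length-constrained flow subroutines. First, I would fix target parameters per epoch: an epoch $i$ would look for $(h_i, s_i)$-length $\phi_i$-sparse cuts with $h_i$ slightly smaller, $s_i$ slightly larger, and $\phi_i$ slightly smaller than the values of the preceding epoch, chosen so that after a total of $T = \poly(1/\eps)$ epochs the accumulated length slack is $\exp(\poly(1/\eps)) = s$ and the accumulated sparsity degradation is absorbed into $n^{\eps}$. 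Within each epoch I would iterate the \textbf{FindAndCut} step: invoke the (approximate) sparse-cut subroutine from the flow machinery; if it certifies that no qualifying sparse cut exists, advance to the next epoch; otherwise, apply the length increases it returns to $G$ and repeat.

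Second, I would prove correctness by combining two ingredients already isolated in the paper. When the final epoch ends, the subroutine's certificate shows $G$ (after all applied length increases) has no $(\leq h,s)$-length $\phi$-sparse cut; by the characterization that absence of length-constrained sparse cuts is equivalent to length-constrained expansion, the final graph is an $(\leq h,s)$-length $\phi$-expander for $A$, which is exactly what a witnessed decomposition requires (the ``witness'' is assembled from the embeddings produced by the flow subroutine along the way). To control the total size of the cut, I would invoke the structural theorem: since each cut returned in epoch $i$ is $\phi_i$-sparse, the union over all cuts in that epoch is still $O(\phi_i)$-sparse up to the appropriate slack; summing over epochs with geometric loss yields total length increase $\kappa \phi m$ with $\kappa = n^{\eps}$, exactly the claimed cut slack.

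Third, for the work and depth bounds I would analyze the number of iterations per epoch together with the cost of each iteration. Each call to the sparse-cut subroutine runs in $m \cdot \tilde{O}(n^{O(\poly(\eps))} \poly(h))$ work and $\tilde{O}(n^{O(\poly(\eps))} \poly(h))$ depth, so it suffices to show that the number of cut invocations across the whole algorithm is $\tilde{O}(n^{O(\poly(\eps))})$. This is where the ``approximately largest'' requirement pays off: by insisting each cut captures an $n^{-\poly(\eps)}$ fraction of the remaining sparse-cut potential (measured by the witness demand the cut separates), a standard potential argument bounds the number of iterations per epoch by $n^{O(\poly(\eps))}$, and the work/depth bounds follow by multiplying per-call cost times the iteration count times $T = \poly(1/\eps)$ epochs, all of which fold into $n^{O(\poly(\eps))}$.

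The main obstacle I anticipate is the tight interaction between the three axes of slack and the algorithmic efficiency. Concretely, the epoch-to-epoch blowup in $s$ must be only multiplicative by a constant (so that $T = \poly(1/\eps)$ epochs yield $s = \exp(\poly(1/\eps))$), while simultaneously each epoch's sparse-cut subroutine must be accurate enough that the structural theorem's loss in the union bound is at most $n^{\poly(\eps)}$, and each iteration must make enough potential progress to terminate in $n^{\poly(\eps)}$ rounds. Balancing these three requirements forces very careful choices of $h_i, s_i, \phi_i$ and of the approximation factor demanded from the sparse-cut subroutine; verifying that a feasible balance exists, and that the sparse-cut algorithm (from the separate technical section on sparse length-constrained flows) indeed meets the required approximation within the claimed work, is the real content of the proof.
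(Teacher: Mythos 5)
Your outline of the ``outer loop'' --- fix a schedule of slackening parameters $(h_i, s_i, \phi_i)$, repeatedly cut a sparse cut that is approximately largest, stop when the sparse-cut subroutine certifies none exists, then appeal to the routing/cut characterization of length-constrained expanders for correctness and to the union-of-sparse-moving-cuts theorem for the size bound --- matches the paper's Algorithm~\ref{alg:EDsfromCuts} and Lemma~\ref{lem: from MSC to ED} quite closely, and the use of an approximately largest cut to drive a potential argument on the number of iterations is also the right idea. However, there is a genuine gap in how you treat the sparse-cut subroutine, and that gap is where most of the work in \Cref{sec:LCsFromStrats} and \Cref{sec:spiral} lives.

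You describe the sparse-cut subroutine as coming directly from ``the flow machinery.'' In the paper it does not: the sparse flow and cutmatch algorithms of \Cref{thm:sparseMultiFlows} and \Cref{thm:multiCutmatch} only give the matching player's moves in a cut-matching game, and the cut strategy of \Cref{thm:cutStrat} that drives that game itself requires computing \emph{length-constrained expander decompositions} as a subroutine. So the sparse-cut algorithm (\Cref{lem:cutsFromEDs}) is not self-contained; it and the expander-decomposition algorithm are mutually recursive. Treating the sparse-cut call as an oracle hides the central difficulty: one must show this mutual recursion terminates, and cheaply. The paper does this with the ``spiral'': the recursion size parameter $L = N^{\eps'}$ ensures every recursive expander-decomposition call is made on a node-weighting that is smaller by a factor $N^{\eps'}$, so the recursion depth is at most $1/\eps'$, and the total edge count across a whole recursion level is controlled by \Cref{eq:boundmi}. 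Without some such argument, your work bound simply does not follow --- you would have an unbounded cycle of dependencies, not an algorithm. Relatedly, your remark that the epoch-to-epoch blowup in $s$ must be ``only a constant multiplicative factor'' does not match what actually happens: in the paper each recursion level roughly raises $s$ to the power $1+O(1/\eps)$ (see the recurrence $s_i = \eps^{-2}\, s_{i-1}^{1+O(1/\eps)}$ in the proof of \Cref{thm:EDNoLink}), which is a superconstant per-round blowup; the bound $s=\exp(\poly(1/\eps))$ holds only because the recursion depth is bounded by $1/\eps'$, which again is exactly the missing spiral argument.
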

\noindent The above improves over the previous  algorithms for length-constrained expander decompositions of  \cite{haeupler2022expander} in several major ways.\footnote{We also note that our algorithm is considerably conceptually simpler than that of \cite{haeupler2022expander} which relied on an intricate ``expander gluing'' framework. }
\begin{enumerate}
    \item \textbf{Trading Off Between Slacks:} Our algorithm allows one to trade off between length slack and cut slack. In particular, by increasing the cut slack $\kappa$, one can decrease the length slack $s$ and vice-versa. Notably, for suitably small constant parameter, the above allows us to achieve length slack $s = O(1)$ and cut slack $n^{\eps}$ in work $m^{1+\eps} \cdot \poly(h)$ and depth $n^{\eps} \cdot \poly(h)$ for any small constant $\eps$. The previous approach of \cite{haeupler2022expander} computed length-constrained expander decompositions with $\kappa = s = \exp(\log^{1-\delta} n)$ for a fixed $\delta$ slightly less than $1$ and so could not produce such decompositions. Crucially, all of the applications of our results (discussed in \Cref{sec:appAndRelated}) will make use of a constant length slack of $s = O(1)$.
    
    
    \item \textbf{General Lengths and Capacities:} Our algorithm is the first (close-to-linear time) algorithm for computing length-constrained expander decompositions on graphs with general lengths and general edge lengths. The algorithm of \cite{haeupler2022expander} only worked if one assumed that both the capacity and length of every edge is $1$. Even in the classic setting, efficient algorithms for expander decompositions for general capacities are significantly more difficult than those in the unit capacity setting \cite{li2021deterministic}. All of our applications will make use of general lengths and capacities and implementing the above approach for general lengths and capacities---particularly, for general capacities---presents significant difficulties; discussed below.
    \item \textbf{Stronger Routing Guarantees:} The routing guarantees provided by our algorithm are significantly stronger than those of \cite{haeupler2022expander} in two ways. First, our decomposition is what we call a $(\leq h, s)$-length expander decomposition (as opposed to an $(h, s)$-length expander decomposition). In particular, after applying our decomposition we provide the guarantee that any $h$-length demand can be routed by a low congestion multi-commodity flow where the flow paths between vertices $u$ and $v$ at distance $d \leq h$ have length at most $d\cdot s$ as opposed to $h \cdot s$ as in $(h,s)$-length expander decompositions. Notably, if $d \ll h$ (i.e.\ the two vertices are very close initially) then we route between $u$ and $v$ over paths whose length is proportional to $d$ rather than proportional to $h$.
    
    Second, the expander decomposition output by our algorithm is ``strong'' in the sense that after applying the decomposition to graph $G$ to get $G'$, any $h$-length demand in $G'$ can be routed with low congestion over low-length paths in $G'$; that of \cite{haeupler2022expander} was ``weak'' in that $h$-length demands could only be routed over low-length paths \emph{in $G$}.
\end{enumerate}
\noindent Showing that the above approach works requires overcoming two significant challenges:
\begin{quote}
     \centering \centering\textbf{Challenge 1:} \textit{How can we show that cutting large length-constrained sparse cuts\\ quickly yields a length-constrained expander decomposition?}

\end{quote}
and
\begin{quote}
     \centering \textbf{Challenge 2:} \textit{How can we efficiently compute large length-constrained sparse cuts?}
\end{quote}
In what follows, we discuss these challenges and how we overcome them.

\subsubsection{Overcoming Challenge 1: Union of Sparse Length-Constrained Cuts is Sparse}
We discuss how we show that cutting large length-constrained sparse cuts quickly yields a length-constrained expander decomposition.

\paragraph*{The Classic Approach.}
Repeatedly cutting large sparse cuts in order to compute classic expander decompositions is a well-studied approach \cite{saranurak2019expander}. In the classic setting the ``largness'' of a cut is its balance, namely, the volume of the side of the cut with smaller volume.\footnote{The volume of a set of vertices is the sum of their degrees.}

In the classic setting, this approach hinges on the fact that the ``union of sparse cuts is itself a sparse cut.'' In particular, if $C_1, C_2, \ldots$ are a series of cuts where each $C_i$ is a $\phi$-sparse cut after $C_j$ for $j < i$ has been cut, then $\bigcup_i C_i$ is itself a $O(\log n \cdot \phi)$-sparse cut. 

Thus, if each $C_i$ is a $\phi$-sparse cut whose size is within an $\alpha$ factor of the largest $O(\log n \cdot \phi)$-sparse cut and we cut about $\alpha$ of these cuts then we know that after cutting all of these cuts we must have substantially reduced the size of the largest $O(\log n \cdot \phi)$-sparse cut (otherwise $\bigcup_i C_i$ would be a $O(\log n \cdot \phi)$-sparse cut whose size is larger than the largest such cut).

The above union of cuts fact in the classic setting is easily shown. In particular, because we are measuring the size of the cut in terms of the smaller volume side and each time we apply a cut the smaller volume side has at most half of the total volume, we get a depth of $O(\log n)$, leading to the $O(\log n)$ in the above sparsity. In the interest of completeness, we give a proof of this fact in \Cref{sec:unionOfNHCCuts}.

\paragraph*{Issues in the Length-Constrained Setting.}
However, showing a comparable fact for the length-constrained setting is significantly more challenging. In particular, there is no clear notion of the ``side'' of a cut in the length-constrained setting since we are applying length increases, not fully deleting edges. Thus, a proof of the sparsity of the union of sparse length-constrained cuts cannot appeal to a tidy recursion where each time one recurses, one side of the cut reduces by a constant. Further, it is not even clear what the appropriate notion of the ``size'' of a cut is in this setting since, again, there is no ``smaller side'' whose volume we can measure.

In fact, given the definition of an $(h,s)$-length $\phi$-sparse cut, one might not think that the union of a sequence of $(h,s)$-length $\phi$-sparse cuts should be sparse. In particular, recall that an $(h,s)$-length cut $C_i$ is $\phi$-sparse iff there is some large witnessing demand $D_i$ that is $h$-length and unit such that after $C_i$ is applied every pair in the support of $D_i$ is at least $hs$-far apart. However, if we take the union of $(C_1, C_2, \ldots)$ as $C_1+ C_2+ \ldots$, then we cannot witness the sparsity of $C_1+ C_2+ \ldots$ by $D_1 + D_2 + \ldots$: the resulting demand can be arbitrarily far from being unit!

\paragraph*{Result.} Nonetheless, for the length-constrained setting we show that, somewhat surprisingly, the union of a sequence of $(h,s)$-length $\phi$-sparse cuts is itself an $(h,s)$-length $O(\phi \cdot n^{O(1/s)})$-sparse cut (with some slack in $h$ and $s$). More generally, we show that, the union of cuts in a sequence of length-constrained sparse cuts is at least as sparse as the (appropriately-weighted) average sparsity of its constituent cuts. The following gives our formal result (we again defer some of the technical definitions to later sections).

\begin{restatable}[Union of Sparse Moving Cuts is a Sparse Moving Cut]{thm}{unionOfCuts}
	\label{thm:unionOfMovingCuts} Let $(C_1, \ldots, C_k)$ be a sequence of moving cuts where $C_i$ is an $(h,s)$-length $\phi_i$-sparse cut in $G - \sum_{j < i} C_j$ w.r.t.\ node-weighting $A$. Then the moving cut $\sum_i C_i$ is an $(h',s')$-length $\phi'$-sparse cut w.r.t.\ $A$ where $h' = 2h$, $s' = \frac{(s-2)}{2}$ and $\phi' = s^3 \cdot \log^3 n \cdot n^{O(1/s)} \cdot \frac{\sum_i |C_i|}{\sum_i |C_i|/\phi_i}$.
\end{restatable}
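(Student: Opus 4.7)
The plan is to extract a single unit witness demand $D$ for $C := \sum_i C_i$ of size at least $(\sum_i |D_i|)/(s^3 \log^3 n \cdot n^{O(1/s)})$ up to constants, where $D_i$ denotes the $h$-length unit witness of $C_i$; the sparsity bound $\phi'$ then follows from $|C|/|D| \le \phi'$. The naive superposition $\bar D := \sum_i D_i$ already satisfies the two metric requirements of a witness for $C$: every pair in $\supp(D_i)$ lies at $G$-distance $\le h \le h'$ (since each $D_i$ is $h$-length in $G - \sum_{j<i} C_j$ and distances in $G$ are no larger) and at $G - C$-distance $\ge hs \ge h(s-2) = h' s'$ (since $C$ contains $C_i$ and distances only grow when further cuts are applied). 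The sole obstruction is that $\bar D$ can be arbitrarily overloaded with respect to $A$: a single vertex may be incident to $\sum_i D_i(v)$ total demand, which can vastly exceed $A(v)$.

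To cap this overload, the plan is a hierarchical ball-growing argument in $G - C$ at geometrically-spaced radii $r_\ell := h \cdot n^{\ell/s}$ for $\ell = 0, 1, \ldots, s$. A telescoping volume argument produces, for every vertex $v$, a slow-growth level $\ell(v) \le s$ at which $|\ball_{G-C}(v, r_{\ell(v)+1})| \le n^{1/s} \cdot |\ball_{G-C}(v, r_{\ell(v)})|$, since otherwise the ball would exceed $n$ in size. Pigeonholing over the $O(s)$ levels, one level $\ell^*$ accounts for an $\Omega(1/s)$ fraction of the mass of $\bar D$; a padded low-diameter decomposition of $G - C$ at radius $r_{\ell^*}$ then partitions $V$ into clusters of $G - C$-diameter $\le r_{\ell^* + 1}$. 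Since pairs of $\bar D$ sit $\ge hs \ge r_{\ell^*+1}$ apart in $G - C$, almost all pairs (up to an $O(\log n)$ padding loss) end up in distinct clusters. At the cluster level, the extraction problem reduces to a fractional matching LP with capacities $A(\mathrm{cluster})$; the slow-growth bound implies that each cluster is $G$-reachable (within distance $h$) from only $n^{O(1/s)}$ other clusters, which bounds the per-cluster overload. LP rounding (a further $O(\log n)$ factor) converts this into an integral unit sub-demand, and accumulating the slack factors $s$ (pigeonhole), $n^{O(1/s)}$ (slow growth), $O(\log^3 n)$ (padded decomposition and rounding), and an extra $s^2$ of bookkeeping yields the stated $\phi'$.

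The main obstacle is controlling two opposing metrics simultaneously in one combinatorial object: pairs of the extracted $D$ must be $G$-close, which heavily constrains any rerouting of $\bar D$ through intermediate vertices to reduce overload, yet must also be $(G-C)$-far, so that $D$ genuinely witnesses the cut $C$. The $n^{1/s}$ slack at the slow-growth level is precisely what buys this two-metric control: it bounds how many $G$-close clusters a given cluster can see, ensuring that rerouting does not inadvertently produce pairs which are $(G-C)$-close and hence invalid. Making the trade-off between the $G$-metric and the $(G-C)$-metric quantitative across all $O(s)$ scales, and tracking the losses through the padded decomposition and LP rounding, is where the technical heart of the argument lies.
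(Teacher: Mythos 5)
There is a genuine gap, and it is not a technical lacuna but a wrong approach: any argument that extracts a \emph{sub-demand} of $\bar D := \sum_i D_i$ cannot work, because the overload at a single vertex can exceed the target slack factor $s^3 \log^3 n \cdot n^{O(1/s)}$ by an unbounded amount. Concretely, consider a node-weighting with $A(v) = 1$ for all $v$ and a cut sequence whose witness demands $D_1, \ldots, D_k$ all send a single unit from a fixed vertex $v^*$ to distinct vertices $u_1, \ldots, u_k$. Each $D_i$ is individually $A$-respecting, and $k$ can be as large as $\Theta(n)$ (every $u_i$ must be new since $C_{\le i}$ makes $u_i$ permanently $hs$-far from $v^*$, but a fresh $u_i$ can always be within distance $h$). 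Any $A$-respecting sub-demand of $\bar D$ is incident to $v^*$ and therefore has size at most $A(v^*) = 1$, whereas the theorem requires a witness of size $\Omega\big(k / (s^3 \log^3 n\, n^{O(1/s)})\big)$, which is $\gg 1$ when $k = \Theta(n)$ and $s$ is not too large. No amount of clustering, padding, or LP rounding over the original pairs escapes this, because the obstruction is information-theoretic: the pairs you need simply do not exist in $\supp(\bar D)$. The paper's construction sidesteps this by building a demand matching (multi)graph on $A(v)$ copies per vertex, invoking the low-arboricity theorem for parallel-greedy graphs from \cite{haeupler2023parallel} on a forest cover of it, and then defining a \emph{tree matching demand} that pairs up \emph{siblings} in each forest tree. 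These new pairs, e.g.\ $(u_i, u_j)$ for two $u$'s that both sent demand to $v^*$, are typically absent from $\bar D$, and it is precisely this dispersion away from overloaded vertices that rescues the size. The resulting pairs are $2h$-close (common neighbor in the graph metric) and $h(s-2)$-separated (the earlier of the two underlying cuts already blows up one leg of the triangle), which explains where the slacks $h' = 2h$ and $s' = (s-2)/2$ come from, something your scheme does not account for since you keep the original pairs.

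Two secondary problems in your writeup are worth flagging even if the structural issue above were repaired. First, you assert that pairs of $\bar D$ sit $\ge hs \ge r_{\ell^*+1}$ apart; since $r_{\ell^*+1} = h \cdot n^{(\ell^*+1)/s}$, the inequality $hs \ge r_{\ell^*+1}$ fails for every $\ell^* \ge 1$ unless $n^{2/s} \le s$, which never holds in the interesting regime. Second, your clustering is in $G - C$ but your re-routing argument needs control on $G$-diameters of clusters (to preserve $h$-closeness of the rerouted pairs); a padded decomposition of $G - C$ gives no such control, and the slow-growth level in $G - C$ does not bound $G$-reachability as you claim, because the two metrics are essentially unrelated modulo the one-sided inequality $d_G \le d_{G-C}$.
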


\paragraph*{Techniques.} Our result for the length-constrained setting is based on an intriguing connection to parallel algorithms for greedy spanner constructions. In particular, we show that the witnessing demands in a sequence of $(h,s)$-length $\phi$-sparse cuts are analogous to a parallel process for greedily computing a spanner. Recent work \cite{haeupler2023parallel} showed that such parallel processes produce a graph with low---namely about $n^{O(1/s)}$---arboricity. We then make use of the low arboricity of such graphs to  decompose the demands of a sequence of length-constrained sparse cuts into forests and then use each tree in this forest to ``disperse'' the corresponding demand. The result is a demand that can be used to witness the sparsity of the union of $(C_1, C_2, \ldots)$: in particular, it is of approximately the same size as the sum of the $D_i$ and still separated by the union of $(C_1, C_2, \ldots)$ but it is actually unit, unlike the sum of the $D_i$. These results are discussed in \Cref{sec:unionOfHCCuts}.

By making use of the above union of cuts fact for the length-constrained setting and defining an appropriate notion of the size of a length-constrained cut---a notion we call the ``demand-size'' of a cut---we are able to extend the above approach to the length-constrained setting. In particular, we show that that repeatedly cutting an $(h,s)$-length $\phi$-sparse cut which is approximately demand-size-largest quickly yields a length-constrained expander decomposition. 

\subsubsection{Overcoming Challenge 2: Sparse Flows for the Spiral}
Having discussed how it suffices to show that repeatedly cutting large and sparse length-constrained cuts quickly yields a length-constrained expander, we now discuss how to compute these cuts. 

\paragraph*{Classic Approach.} In the classic setting, a well-studied means of cutting large sparse cuts is by what we call ``the spiral.'' In particular, it is known that one can compute large sparse cuts using ``cut matching games.'' Cut matching games, in turn, can be efficiently computed by computing expander decompositions. By the above-mentioned arguments, expander decompositions can be computed using large sparse cuts. In order to avoid a cycle of dependencies, one turns this cycle into a ``spiral.'' In particular, the algorithm is set up so that each time one goes around the cycle of dependencies the input size significantly decreases. 

\paragraph*{Issues in the Length-Constrained Setting.}
A recent work of \cite{ghaffari2022cut} provided a cut matching game that is suitable for our purposes and so, in light of our discussion above, one might hope to implement a similar approach in the length-constrained setting. However, here, the fact that we are interested in general capacities significantly complicates our problem. 

In particular, in the spiral we recurse on graphs produced by cut matching games. Each edge of these graphs, in turn, corresponds to a flow path in the flow decomposition of a flow we computed on our input graph. Thus, if we want to guarantee that when we recurse the input size has significantly gone down, it must be the case that the flows we construct for our cut matching game have low support size; i.e.\ the flow can be decomposed into a small number of flow paths.

This is easy to do in the classic setting but, to our knowledge, prior to our work no such result was known for the length-constrained setting for general capacities. In the length-constrained setting, the relevant notion of flow is $h$-length flows (i.e.\ flows whose flow paths have length at most $h$). Prior work \cite{haeupler2021fast} showed that one can compute $h$-length multi-commodity flows but with support size $\tilde{O}\left(b \cdot \poly(h) \cdot m \right)$ where, roughly, $b$ is the number of commodities. Using this result would lead to a multiplicative blowup of $\poly(h)$ in the total number of edges of our graphs which, unfortunately, cannot be made to work with the spiral.

\paragraph*{Result.} To solve the above issue, we give the first sparse flow algorithms for the length-constrained setting, improving the above sparsity to $\tilde{O}(|E| + b)$ and, notably, so that it does not depend on $h$. Specifically, we show the following (again, see later sections for relevant technical definitions).
\begin{restatable}{thm}{sparseMultiFlows}\label{thm:sparseMultiFlows}
Given a graph $G=(V,E)$ with capacities $\U$, lengths $\l$, length constraint $h \geq 1$, $0 < \eps < 1$ and $\batch$-batchable source, sink pairs $\{(S_i, T_i)\}_i$, one can compute a feasible $h$-length flow cut pair $(F, C)$ of $\{(S_i, T_i)\}_i$ that is $(1 \pm \epsilon)$-approximate in (deterministic) depth $\tilde{O}(\batch \cdot \poly(\frac{1}{\eps}, h))$ and work $m \cdot \tilde{O}(\batch \cdot \poly(\frac{1}{\eps}, h))$ where 
\begin{align*}
    |\supp(F)| \leq \tilde{O}(|E| + b).
\end{align*}
Furthermore, $F = \eta \cdot \sum_{j=1}^k F_j$ where $\eta = \tilde{\Theta}(\epsilon^2)$, $k = \tilde{O}\left(\kappa \cdot \frac{h}{\epsilon^4} \right)$ and $F_j$ is an integral $h$-length $S_i$-$T_i$ flow for some $i$.
\end{restatable}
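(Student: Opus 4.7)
The plan is to adapt a multiplicative weights update (MWU) framework for length-constrained multicommodity flow in the spirit of \cite{haeupler2021fast}, but modified so that the resulting flow has support independent of $h$. Concretely, I would maintain edge weights $w_e = \exp(\beta \cdot F(e)/U(e))$ that grow exponentially with current congestion, and run $k = \tilde O(b\cdot h/\epsilon^4)$ rounds. In each round I would compute, for one batch, an integral $h$-length $S_i$-$T_i$ flow $F_j$ of near-minimum total $w$-weight via a length-constrained shortest-path / augmenting-flow subroutine, and add $\eta F_j$ to the running aggregate with step $\eta = \tilde\Theta(\epsilon^2)$. The standard MWU potential argument would then yield that $F = \eta \sum_j F_j$ is $(1\pm\epsilon)$-feasible and $(1\pm\epsilon)$-optimal, while the dual length function $\ell_e := \log w_e$, threshold-rounded as in \cite{haeupler2022expander}, gives the moving cut $C$ that witnesses optimality.

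For the support bound $|\supp(F)| \leq \tilde O(|E|+b)$, the crucial design choice is that the per-iteration primitive producing $F_j$ must operate directly on edges of $G$, so that $\supp(F_j) \subseteq E(G)$ and hence $\supp(F) = \bigcup_j \supp(F_j) \subseteq E(G)$; the $+b$ term accounts for super-source/super-sink bookkeeping across the $b$ batches. This contrasts with running MWU on a standard length-layered graph of size $\Theta(m\cdot h)$, which is what produces the $\poly(h)\cdot m$ support in \cite{haeupler2021fast}. To control depth, I would realize the per-round $h$-length shortest-path / integral flow primitive via a parallel hop-constrained Bellman--Ford variant running in depth $\tilde O(\poly(h,1/\epsilon))$ and work $m \cdot \tilde O(\poly(h,1/\epsilon))$. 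Multiplying by the round count $k$ gives the claimed overall depth $\tilde O(b\cdot \poly(1/\epsilon,h))$ and work $m \cdot \tilde O(b\cdot \poly(1/\epsilon,h))$, and the ``furthermore'' clause $F = \eta\sum_{j=1}^{k} F_j$ with each $F_j$ integral holds by construction.

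The hardest piece will be designing the per-iteration integral-flow primitive so that three requirements hold simultaneously: it must operate on $G$ itself rather than on an $h$-layered copy (so that the support bound does not inherit an $h$-factor), its output $F_j$ must be integral with edge-support in $E(G)$, and it must deliver enough MWU progress per round to converge within $k = \tilde O(bh/\epsilon^4)$ rounds despite \emph{general} capacities. To cope with general capacities I plan to use capacity scaling together with integral flows of magnitude proportional to a common base unit, then invoke the standard width-based MWU round bound. Determinism is inherited from the deterministic shortest-path primitive, completing the argument.
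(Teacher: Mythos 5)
Your proposal correctly reproduces the paper's overall MWU skeleton (edge weights growing exponentially in congestion, $k = \tilde{O}(bh/\eps^4)$ rounds each contributing an integral $h$-length flow with step $\eta = \tilde{\Theta}(\eps^2)$, dual threshold-rounding to produce the moving cut $C$), and your per-round primitive — a parallel length-constrained Bellman--Ford variant — is essentially the near-lightest path blocker of~\cite{haeupler2021fast} used by the paper's \Cref{alg:mwMulti}. However, the argument you give for the support bound $|\supp(F)| \leq \tilde{O}(|E|+b)$ — namely, that the per-round primitive operates directly on $G$ so that ``$\supp(F_j) \subseteq E(G)$'' — has a fundamental type error and cannot work. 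In this paper (and in~\cite{haeupler2021fast}) the support of a flow is its set of \emph{flow-paths}, not its set of edges; $|\supp(F)|$ is the number of distinct paths $P$ with $F(P)>0$. A flow supported entirely on edges of $G$ can still have exponentially many flow-paths. Indeed, the prior algorithm of~\cite{haeupler2021fast} already runs directly on $G$ (not a layered graph) and still produces support $\tilde{O}(b \cdot \poly(h) \cdot m)$, because each of the $\tilde{O}(bh/\eps^4)$ near-lightest path blockers can contribute $\tilde{O}(\poly(h,1/\eps)\cdot m)$ new flow-paths. Avoiding the layered graph is neither what~\cite{haeupler2021fast} got wrong nor what this paper fixes.

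The missing ingredient is the paper's ``blaming'' argument, which is the actual new technical content of this theorem. The paper rounds each near-lightest path blocker so that it is $\frac{1}{2}$-blaming: every flow-path $P$ in its support points to a unique edge $e\in P$ whose remaining capacity $P$ consumed at least half of (\Cref{lem:makeBlaming}, \Cref{thm:pathBlockerAlgNew}). Two counting facts then bound the total number of flow-paths (\Cref{lem:blamingFlowSeq}, \Cref{lem:blamingFlows}): within one blocker an edge can be blamed only $O(\log N/\gamma)$ times because its residual capacity halves each time; and across blockers an edge can be blamed by only $\tilde{O}(1)$ blockers, since each blame multiplicatively inflates its MWU dual weight by $(1+\eps_0)^\gamma$ from the initial $1/m^{O(1)}$ to a bounded final value. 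Summing over the $|E|$ edges gives $|\supp(F)| \le \tilde{O}(|E|/\gamma) = \tilde{O}(|E|)$, with the extra $+b$ coming from super-source bookkeeping. Without this per-path charging scheme — or some equivalent mechanism that attributes each flow-path to a bounded-frequency witness — there is no route from ``runs directly on $G$'' to the claimed support bound, and your capacity-scaling/width plan for general capacities does not by itself supply one.
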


\paragraph*{Techniques.} In order to show the above result, we make use of a novel ``blaming'' argument. The rough idea is to construct a near-optimal flow where each path in the support of the flow can uniquely point to an edge whose capacity was mostly used up when flow along this flow path was added to our solution. We discuss and prove the above result in \Cref{sec:sparseFlows}.
 
Combining the above flow algorithms with the spiral and our union of cuts fact shows that the above strategy quickly yields a length-constrained expander decomposition. As a corollary, we get efficient algorithms for large length-constrained sparse cuts (since these algorithms form a sub-routine of the spiral). While we feel the above are the main contributions of our work, we note that building up to them requires developing several new ideas and techniques for the length-constrained setting, including proofs of the robustness of length-constrained expanders to edge deletions and the equivalence of several notions of a graph's ``distance'' from being a length-constrained expander.

\section{Applications of Our Work}\label{sec:appAndRelated}

In this section we discuss applications of our work; both those in subsequent work and some corollaries of our results. We note that the first two results directly use the algorithms from this work.

\paragraph*{Application 1: $O(1)$-Approx.\ MC Flow in Close-To-Linear Time (Subsequent Work).} A recent work of \cite{lowStep} gave the first close-to-linear time algorithms to compute an $O(1)$-approximate $k$-commodity flow in almost-linear-time. These algorithms run in time $O((m+k)^{1+\epsilon})$ for arbitrarily small constant $\eps > 0$ \cite{lowStep}. Roughly speaking, these algorithms make use of the ``boosting'' framework of Garg and K{\"o}nemann \cite{garg2007faster} wherein solving multi-commodity flow is, by way of multiplicative-weights-type techniques, reduced to problems on graphs with arbitrary capacities and lengths. Thus, the fact that our algorithms work for arbitrary capacities and lengths are crucial for this later work. Likewise, this work also makes use of our sparse flow algorithms. Lastly, we note that this work uses the fact that our algorithms can output length-constrained expander decompositions with $s=O(1)$ to compute ``low-step multi-commodity flow emulators'' which are, roughly, low-diameter graphs which represent all multi-commodity flows.


\paragraph*{Application 2: Distance Oracles (Subsequent Work)} Another recent work \cite{haeupler2024dynamic} makes use of our algorithms for length-constrained expander decompositions to give new distance oracle results. Specifically, this work shows that in a graph with general edge lengths one can maintain a data structure with $n^\eps$ worst-case update time which can answer distance queries between vertices that are $\exp(1/\eps)$-approximate in $\poly(1/\eps) \cdot \log \log n$ time. Crucially, this work makes use of the fact that our algorithms for length-constrained expander decompositions can trade off between cut and length slack (the $\eps$ in their work and ours are roughly analogous). The previous best result along these lines is that of  Chuzhoy and Zhang \cite{chuzhoy2023new} which gave a $(\log \log n)^{2^{O(1/\eps^3)}}$-approximate fully-dynamic deterministic all-pair-shortest-path distance oracle with \emph{amortized} $n^{\eps}$ update time using the below-mentioned well-connected graphs.

\paragraph*{Application 3: Capacitated Length-Constrained Oblivious Routing.} Lastly, we note that we obtain the first close-to-linear time algorithms for length-constrained oblivious routing on graphs with general capacities (with constant length slack). 
Specifically, in length-constrained oblivious routing the goal is to fix for each pair of vertices a flow over $h'$-length paths so that for any demand the induced flow is always congestion-competitive with the minimum congestion flow over $h$-length paths routing this demand. We refer to $h'/h$ as the length competitiveness of such a routing scheme. \cite{ghaffari2021hop} proved the existence of length-constrained oblivious routing schemes that simultaneously achieve $\poly \log n$ length and congestion competitiveness (not using length-constrained expander decompositions). However, this result did not provide an efficient algorithm for computing such a scheme. \cite{haeupler2022expander} addressed this by observing that one can use length-constrained expander decompositions to compute length-constrained oblivious routing schemes in time $m^{1+o(1)}$ that simultaneously achieve $n^{o(1)}$ length and congestion competitiveness.

Applying the techniques of \cite{haeupler2022expander} and our algorithms for length-constrained expander decompositions, it follows that for any $\eps > 0 $ (not too small as in \Cref{thm:expdecomp exist}) one can compute in time $m^{1 + \eps}$ an oblivious length-constrained routing scheme that achieves length and congestion slack $\exp(\poly(1/\eps))$ and $n^{O(\eps)}$ respectively. Note that setting $\eps = 1/\sqrt{\log n}$ generalizes the \cite{haeupler2022expander}. More importantly, since our algorithms for length-constrained expander decompositions work for the general capacities case, so too do our oblivious routing schemes unlike those of \cite{haeupler2022expander}. Furthermore, if $\eps=O(1)$ then we achieve constant length competitiveness with sub-linear congestion competitiveness. Not only is this the first efficient algorithms for such a routing scheme, but even the existence of routing schemes with the stated competitiveness was not known prior to our work.
\future{Actually write this stuff up}

\section{Additional Related Work}

We give a brief overview of additional related work. 

\subsection{Applications of Expander Decompositions}

We start by describing some additional work on the applications of expander decompositions.

Areas of use include linear systems \cite{spielman2004nearly}, unique games \cite{arora2015subexponential,trevisan2005approximation,raghavendra2010graph}, minimum cut \cite{kawarabayashi2018deterministic}, and dynamic algorithms \cite{nanongkai2017dynamic}. Some of the long-standing-open questions which have recently been solved thanks to expander decompositions include: deterministic approximate balanced cut in near linear time (with applications to dynamic connectivity and MST) \cite{chuzhoy2020deterministic}, subquadratic time algorithms for bipartite matching \cite{van2020bipartite}, and deterministic algorithms for global min-cut in almost linear time \cite{li2021deterministic}.

\subsection{Parallel Work on Well-Connected Graphs}

A parallel and independent series of works  by Chuzhoy \cite{CSODA2023} and Chuzhoy  and Zhang \cite{chuzhoy2023new} develops notions similar to $(h,s)$-length $\phi$-expanders, sparse moving cuts, and cut-matching games for length-constrained expanders.

In particular, the concept of $(\nu,d)$-well-connected graphs (proposed after the length-constrained expanders of \cite{haeupler2022expander}) is similar in spirit to $(h,s)$-hop $\phi$-expanders when restricted to graphs with diameter less than $h$. By way of their parameters, both types of graphs provide separate control over congestion ($\nu$ for well-connected graphs and $\frac{1}{\phi}$ for length-constrained expanders) and length ($d$ for well-connected graphs and $hs$ for length-constrained expanders) of routing paths, but there are some technical differences. Like the focus on small or constant $s$ in this paper and \cite{haeupler2022expander}, the well-connected graphs of \cite{CSODA2023} are particularly interesting when guaranteeing routing via very short sub-logarithmic paths. 

One important difference between well-connected graphs and length-constrained expanders seems to be that $(h,s)$-length expanders are an expander-like notion that applies to arbitrary graphs with (potentially) large diameter, unlike well-connected graphs. In particular, they provide low congestion ($\tilde{O}(\frac{1}{\phi})$) and low length ($hs$) routing guarantees for every unit demand between $h$-close nodes where $h$ and $hs$ are both independent and potentially much smaller than the diameter of the entire graph. On the other hand, well-connected graphs provide routing guarantees between all nodes and, as such, seem to correspond more closely to what we call routers in this work; see \Cref{def:router}. Overall, it is unclear if the notion of a length-constrained expander decomposition of a general graph that stands at the center of this paper relates in an immediate way to the notions of \cite{CSODA2023} and \cite{chuzhoy2023new}.




\future{Talk about dynamic paper: \cite{haeupler2024dynamic}}




\subsection{Routing in Expanders}
As much of our work deals with finding good routes in (length-constrained) expanders, we briefly review some work on routing in expanders.

As mentioned earlier, expanders admit low congestion good multi-commodity flow solutions \cite{leighton1999multicommodity}. A closely related problem is that of finding edge-disjoint paths. \cite{peleg1987constructing} showed given any $\Omega(1)$-expander and $n^\eps$ pairs for some small constant $\eps > 0$, it is possible to find edge-disjoint paths between these pairs in polynomial time. This was later improved by \cite{frieze2001edge} to $\Theta(\frac{n}{\log n})$ edge-disjoint paths in regular expanders. Many other works have studied this problem \cite{broder1992existence,kleinberg1996short,valiant1990general,leighton1994packet}.

Along similar lines, \cite{ghaffari2017distributed} introduced the notion of ``expander routing'' which allows each node $v$ to exchange $\deg(v)$ messages with nodes of its choosing in the CONGEST model of distributed computation in about $n^{o(1)}/\phi$ time on $\phi$-expanders. Using this approach, \cite{ghaffari2017distributed} showed that a minimum spanning tree (MST) can be constructed in $\poly(\phi^{-1}) \cdot n^{o(1)}$ distributed time in $\phi$-expanders, bypassing the earlier-mentioned $\Omega(\sqrt{n} / \log n)$ lower bound \cite{peleg2000near,das2011distributed} for small $\phi$ networks. This was later extended by \cite{ghaffari2018new} to a much wider class of optimization problems.

What use are expander decompositions for routing in graphs that are not expanders? 
\cite{goranci2021expander} showed that expander decompositions can be used to route arbitrary demands in a tree-like, oblivious and $n^{o(1)}$-congestion-competitive manner. 
There is also a great deal of related work on related ``tree flow sparsifiers'; see, e.g.\ \cite{racke2002minimizing}.

\section{Notation and Conventions}\label{sec:conventions}
Before moving on to a more formal description of our results we introduce the notation and conventions that we use throughout this work.

\paragraph*{Graphs.}
Let $G=(V,E)$ be a graph with $n := |V|$ vertices and $m:= |E|$ edges. 
By default, $G$ is undirected, and allowed to have self-loops but not parallel edges.
For each vertex $v\in V(G)$, we denote by $\deg_{G}(v)$ the
degree of $v$ in $G$. For $S\subset V$ we let the volume of $S$ be $\vol(S)=\sum_{v\in S} \deg_G(v)$. $E(S, V\setminus S)$ gives all edges with exactly one endpoint in $S$. We drop $G$ subscript when it is clearly implied.
We use standard graph terminology like adjacency, connectivity, connected components, as defined in, e.g.\  \cite{west2001introduction}.

\paragraph*{Edge Values and Path Lengths}
We will associate two functions with the edges of graph $G$. We clarify these here.
\begin{enumerate}
    \item \textbf{Capacities:} We will let $\U = \{\U_e\}_e$ be the capacities of edges of $E$. These capacities will specify a maximum amount of flow (either length-constrained or not) that is allowed over each edge. Throughout this work we imagine each $\U_e$ is in $\mathbb{Z}_{\geq 0}$.
    \item \textbf{Lengths:} We will let $ \l = \{\l_e\}_e$ be the \emph{lengths} of edges in $E$. These lengths will be input to our problem and determine the lengths with respect to which we are computing length-constrained expanders and length-constrained expander decompositions. Throughout this work we imagine each $\l_e$ is in $\mathbb{Z}_{> 0}$. We will let $d_\l(u,v)$ or $d_G(u,v)$ or just $d(u,v)$ when $G$ or $\l$ are clear from context give the minimum value of a path in $G$ that connects $u$ and $v$ where the value of a path $P$ is $\l(P) := \sum_{e \in P} \l(e)$. We let $\ball(v,h) := \{u : d_G(v,u) \leq h\}$ be all vertices within distance $h$ from $v$ according to $\l_G$. Prior works primarily used length-constrained expanders in the context of unit-length graphs and talked about $h$-hop expanders and $h$-hop expander decompositions. We deal with general lengths and use ``length'' instead of ``hop'' where appropriate.
\end{enumerate}
\future{Make $\l$ vs $G$ consistent}


%

\paragraph*{Polynomial Size Objects ($N$).} All objects (e.g. graphs with self-loops) defined in this paper are assumed to be of size polynomial in $n$, i.e., for a fixed large enough constant $c_{\max}$ we assume that all objects are of size at most $N < n^{c_{\max}}$ where $n$ is the number of vertices in the underlying graph. 
This polynomial bound on object sizes in this paper also allows us to treat logarithmic upper bounds in the sizes of these objects as essentially interchangeable, e.g., for any constant bases $b,b'$ we have that $\log_b n \leq \log_{b'} n^{c_{\max}} = \Theta(\log_{b'} N)$. Throughout the paper we therefore use $O(\log N)$ without any explicitly chosen basis to denote such quantities. That is, all $O$-notation depends on $c_{\max}$. We use $\tilde{O}$, $\tilde{\Omega}$ and $\tilde{\Theta}$ notation to hide $\poly(N)$ factors.


\future{Change lengths to regular ``$l$''}
\future{Double check that we do $h$-length not $h$-hop}

\paragraph{Graph Arboricity.} Give graph $G$, a \emph{forest cover} of $G$ consists of sub-graphs $F_1, F_2, \ldots, F_k$ of $G$ which are forests such that for $j\neq i$ we have $F_i$ and $F_j$ are edge-disjoint and every edge of $G$ occurs in some $F_i$. $k$ is called the size of the forest cover and the \emph{arboricity} $\alpha$ of graph $G$ is the minimum size of a forest cover of $G$.

\paragraph{Flows.} A \emph{(multicommodity) flow} $F$ in $G$ is a function that assigns to each simple path $P$ in $G$ a flow value $F(P)\ge0$. We say $P$ is a \emph{flow-path} of $F$ if $F(P)>0$. 
The value of $F$ is $\val(F) = \sum_{P} F(P)$.
We let $F(e) := \sum_{P \ni e} F(P)$ be the total flow through edge $e$.
The \emph{congestion of $F$ on an edge $e$} is defined to be 
\begin{align*}
    \congest_{F}(e):=\sum_{P:e\in P}F(P) / \U_e,
\end{align*}
i.e.,\ the total flow value of all paths going through $e$ divided by the capacity of $e$. The \emph{congestion} of $F$ is 
\begin{align*}
    \congest(F):=\max_{e\in E(G)}\congest_{F}(e).
\end{align*}
The \emph{length} (a.k.a.\ dilation) of $F$ is 
\begin{align*}
    \dil (F)=\max_{P:F(P)>0}\ell(P),
\end{align*}
i.e.,\ the maximum length of all flow-paths of $F$. The \emph{(maximum) step} of $F$ is 
\begin{align*}
    \step_{F}=\max_{P:F(P)>0}|P|,
\end{align*}
i.e.,\ the maximum number of edges in all flow-paths of $F$. Given $S, T \subseteq V$ we say that $F$ is an $S$-$T$ flow if each path in its support is from a vertex in $S$ to a vertex in $T$. Given source, sink pairs $\{S_i, T_i\}_i$, we say that $F$ is an $\{S_i, T_i\}_i$ flow if each path in its support is from a vertex in some $S_i$ to a vertex in $T_i$. We say that $F$ is feasible (with respect to capacities $\U$) if $\congest(F) \leq 1$.



\paragraph{Demands.} A \emph{demand} $D:V\times V\rightarrow\mathbb{R}_{\ge0}$ assigns
a non-negative value $D(v,w) \ge 0$ to each ordered pair of vertices in $V$. The size of a demand is written as $|D|$ and is defined as $\sum_{v,w} D(v,w)$. 
A demand $D$ is called \emph{$h$-length} constrained (or simply \emph{$h$-length} for short) if it assigns positive values only to pairs that are within
distance at most $h$, i.e., $D(v,w)>0$ implies that $d_{G}(v,w)\le h$. We call a demand integral if all $D(v,w)$ are integers and empty if $|D|=0$.
Given a flow $F$, the \emph{demand routed by $F$} is denoted
by $D_{F}$ where, for each $u,v\in V$,\emph{ }$D_{F}(u,v)=\sum_{P\text{ is a }(u,v)\text{-path}}F(P)$
is the value of $(u,v)$-flow of $F$. We say that a \emph{demand $D$ is routable in $G$ with congestion $\eta$ and dilation $h$} iff there exists a flow $F$ in $G$ where $D_{F}=D$, $\congest_{F}\le\eta$ and $\dil(F) \le h$. We say that demand $D$ is routable with dilation inflation $s$ if there is a flow $F$ that routes $D$ and for every $u$ and $v$ in the support of $F$, every path in the support of $F$ has length at most $s \cdot d(u,v)$. We say that an $\alpha$ fraction of $D$ is routable if there exists a flow $F$ with $\val(F) \geq \alpha \cdot |D|$ where the flow $F$ sends from $u$ to $v$ is at most $D(u,v)$ for every $u,v \in V$. 

\paragraph{Graph Embeddings.} We will adopt the convention of an embedding of one graph into another being a mapping of each edge to a flow. Specifically, an $h$-length embedding of edge-capacitated graph $H = (V, E)$ into edge-capacitated graph $G$ is defined as follows. Let $D_H$ be the demand that for each edge $e = \{u,v\}$ with capacity $\U_e$ sends $\U_e$ demand from $u$ to $v$ and vice versa. Then an $h$-length embedding of $H$ into $G$ is an $h$-length flow in $G$ that routes $D_H$.


%
%

\section{Preliminaries}\label{sec:prelim}

In this section we review key definitions and theorems from previous work of which we make use.

\subsection{Classic Expanders and Expander Robustness}\label{sec:regularExpanders}

We summarize some (mostly standard) definitions of cut sparsity and classic expanders. As this work will mostly deal with cuts understood as a collection of edges (rather than a collection of vertices), we provide edge-centric definitions. We begin with such a definition for cuts and cut sparsity.

\future{Sparsity should be conductance}

\begin{definition}[Classic Cut]\label{dfn:cutSparsity}
    Given connected graph $G = (V,E)$, a (classic) cut is a set of edges $C \subseteq E$ such that $G - C := (V, E \setminus C)$ contains at least $2$ connected components
\end{definition}

\begin{definition}[(Classic Edge) Cut Sparsity]\label{dfn:normalSpars}
     Given graph $G$, the sparsity of (classic) cut $C$ is
    \begin{align*}
        \phi(C) := 
            |C| / \sum_{S_C \in \mathcal{S}_C} \vol(S_C)
    \end{align*}
        where $\mcS_C$ is all connected components of $G-C$ except for the connected component of maximum volume. We refer to $\mathcal{S}_C$ as the witness components of $C$.
\end{definition}

\noindent Notice that the above definition of cut sparsity is equivalent to the vertex cut definition provided in \Cref{sec:intro} provided $C$ separates the graph into two components. 

The following formalizes classic expanders.

\begin{definition}[Classic Expander]\label{def:normal-expander}
A graph $G=(V, E)$ is a $\phi$-expander if the sparsity of every cut $C \subseteq E$ is at least $\phi$.
\end{definition} 


For the sake of comparison to our results in the length-constrained setting, we give the following result summarizing the robustness of expanders to edge deletions.

\begin{theorem}[Theorem 1.3 of \cite{saranurak2019expander}]\label{thm:expPrun} Let $G = (V, E)$ be a $\phi$-expander with $m$ edges, let $D \subseteq E$ be a collection of edges and let $G' = (V, E \setminus D)$ be $G$ with $D$ deleted. Then there is a set $P \subseteq V$ such that:
\begin{enumerate}
    \item $G'[V \setminus P]$ is $\frac{\phi}{6}$-expanding;
    \item $\vol(P) \leq \frac{8}{\phi} \cdot |D|$.
\end{enumerate}
\end{theorem}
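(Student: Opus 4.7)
The plan is to prove this via a \emph{slow pruning} scheme: starting from $P_0 = \emptyset$, we iteratively find a sparse cut in the current residual graph $G'[V \setminus P_{i-1}]$ and add its smaller side to $P$. Formally, at step $i$, if there exists $A_i \subseteq V \setminus P_{i-1}$ with $\vol_{G'-P_{i-1}}(A_i) \leq \vol_{G'-P_{i-1}}(V \setminus P_{i-1})/2$ and
\[ |E_{G'}(A_i, V \setminus P_i)| < \tfrac{\phi}{6} \cdot \vol_{G'-P_{i-1}}(A_i), \]
set $P_i := P_{i-1} \cup A_i$; otherwise terminate and output $P := P_{i-1}$. The correctness (expansion) conclusion (1) is immediate from the termination condition together with \Cref{dfn:cutSparsity} and \Cref{def:normal-expander} applied to $G'[V \setminus P]$.

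The real content is the volume bound (2), which I would prove by a charging argument that plays the $\phi/6$ sparsity of each pruning step against the $\phi$ expansion of the \emph{original} graph $G$ applied to the final set $P$. For each iteration $i$, decompose the contributions as follows: let $\alpha_i := |E_{G'}(A_i, V \setminus P_i)|$ be the new $G'$-boundary created, $\gamma_i := |E_{G'}(A_i, P_{i-1})|$ be the $G'$-boundary destroyed, and $\delta_i$ be the sum over $v \in A_i$ of $D$-degrees. A direct count gives
\[ \vol_G(A_i) = \beta_i + \gamma_i + \delta_i, \qquad \text{where} \qquad \beta_i := \vol_{G'-P_{i-1}}(A_i), \]
while the $G'$-boundary of the final $P$ satisfies $b := |E_{G'}(P, V \setminus P)| = \sum_i (\alpha_i - \gamma_i) \leq \sum_i \alpha_i$. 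By the pruning condition $\alpha_i \leq (\phi/6) \beta_i$, hence $b \leq (\phi/6) \sum_i \beta_i \leq (\phi/6)\vol_G(P)$.

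Now I would apply the expansion of $G$ to $P$: provided $\vol_G(P) \leq \vol_G(V)/2$, we have $\phi \cdot \vol_G(P) \leq |E_G(P, V \setminus P)| \leq b + |D|$. Chaining the two inequalities yields $\phi \vol_G(P) \leq (\phi/6) \vol_G(P) + |D|$, i.e., $\vol_G(P) \leq \tfrac{6}{5\phi} \cdot |D| \leq \tfrac{8}{\phi} \cdot |D|$.

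The main obstacle I expect is the side condition $\vol_G(P) \leq \vol_G(V)/2$ needed to invoke $G$'s expansion: it must be verified throughout the iterative process, not only at the end, since the global expansion inequality of \Cref{def:normal-expander} is stated for the smaller side of a cut. I would handle this by strong induction on the iteration count $i$: assume $\vol_G(P_{i-1}) \leq \tfrac{6}{5\phi}|D|$ and check that the bound persists after adding $A_i$, which amounts to noting that the identical charging $\phi \vol_G(P_i) \leq (\phi/6)\vol_G(P_i) + |D|$ goes through as long as $\vol_G(P_i) \leq m$, and otherwise the claimed bound $\vol(P) \leq 8|D|/\phi$ is vacuous. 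A secondary minor issue is that the slow pruning is not efficient, but that is irrelevant here since the statement is purely existential; an efficient realization is not required for this preliminary lemma.
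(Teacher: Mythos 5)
Your overall strategy --- slow pruning plus a charging argument against the original expander --- is the right one, and the bookkeeping is sound: the decomposition $\vol_G(A_i)=\beta_i+\gamma_i+\delta_i$, the telescoping $b=\sum_i(\alpha_i-\gamma_i)$, and the final chain $\phi\,\vol_G(P)\le b+|D|<(\phi/6)\vol_G(P)+|D|$ all check out \emph{provided} $\vol_G(P)\le m$. The gap is precisely in how you dispose of the complementary case. You assert that if $\vol_G(P_i)>m$ at some step then the conclusion $\vol(P)\le 8|D|/\phi$ is ``vacuous,'' but that would require $8|D|/\phi\ge\vol_G(V)=2m$, i.e.\ $|D|\ge\phi m/4$, and nothing in what you have written forces $|D|$ to be this large. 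The danger is a single step that jumps $\vol_G(P)$ past $m$: you only constrain $\beta_i=\vol_{G'-P_{i-1}}(A_i)$ (which can itself be as large as $m$), while $\vol_G(A_i)$ exceeds it by $\gamma_i+\delta_i$; moreover, $A_i$ being the $\vol_{G'-P_{i-1}}$-smaller side does not make it the $\vol_G$-smaller side, so you cannot directly apply $G$'s expansion to $A_i$.

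What fills the gap is to invoke $G$'s expansion at \emph{each} pruning step on whichever of $A_i$ or $V\setminus A_i$ has $\vol_G$ at most $m$, and combine it with the induction hypothesis. Reduce WLOG to $|D|<\phi m/4$ (otherwise $P=V$ works trivially). Let $x=\min(\vol_G(A_i),\vol_G(V\setminus A_i))\le m$; since $\beta_i\le\min(\vol_{G'-P_{i-1}}(A_i),\vol_{G'-P_{i-1}}(V\setminus P_i))\le x$ and $|E_G(A_i,V\setminus A_i)|\le\alpha_i+\gamma_i+|D|$, the sparsity of the cut together with $G$'s expansion gives $\phi x<(\phi/6)x+b_{i-1}+|D|$, hence $x<\tfrac{6}{5\phi}(b_{i-1}+|D|)<\tfrac{36}{25\phi}|D|$ using $b_{i-1}<(\phi/6)\vol_G(P_{i-1})$ and the inductive bound on $\vol_G(P_{i-1})$. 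If $x=\vol_G(A_i)$, this yields $\vol_G(P_i)<\tfrac{66}{25\phi}|D|\le m$ and the induction closes by your global charging. If instead $x=\vol_G(V\setminus A_i)$, then $\vol_G(A_i)=2m-x>2m-\tfrac{36}{25\phi}|D|$ while also $\vol_G(A_i)=\beta_i+\gamma_i+\delta_i\le x+b_{i-1}+2|D|<\tfrac{36}{25\phi}|D|+\tfrac{11}{5}|D|$; for $|D|<\phi m/4$ these two bounds are incompatible, so this case never arises. You should make this per-step case split explicit rather than waving at vacuity --- without it, the side condition is genuinely unestablished.
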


\future{expander pruning even works online. do we (not) need this?}

\subsection{Length-Constrained Cuts (Moving Cuts)}
We now recall formal definitions of length-constrained cuts from \cite{haeupler2020network} which will allow us to define length-constrained expanders and length-constrained expander decompositions.

The following is the length-constrained analogue of a cut.
\begin{definition}[Length-Constrained Cut (a.k.a.\ Moving Cut)~\cite{haeupler2020network}]\label{def:movingcut}
An $h$-length moving cut (a.k.a. $h$-length cut) $C: E \mapsto \{0,\frac{1}{h},\frac{2}{h},\dots,1\}$ assigns to each edge $e$ a fractional cut value between zero and one which is a multiple of $\frac{1}{h}$. The \emph{size} of $C$ is defined
as $|C|=\sum_{e} \U_e \cdot C(e)$. The length increase associated with the $h$-length moving cut $C$ is denoted with $\l_{C,h}$ and defined as assigning an edge $e$ the length $\l_{C,h}(e) = h\cdot C(e)$. Any moving cut which
only assigns cut values equal to either $0$ or $1$ is called a pure moving cut. 
\end{definition}

We can understand length-constrained cuts as cutting in one of two ways. First, we can consider them as cutting apart vertex sets for which they cover all $h$-length paths between these vertex sets.
\begin{definition}[Length-Constrained $S$-$T$ and $\{S_i,T_i\}_i$ Cuts]
    Given vertex subsets $S, T \subseteq V$, we say that $h$-length cut $C$ is an $h$-length $S$-$T$ cut if each $h$-length $S$-$T$ path $P$ satisfies $C(P) \geq 1$. Likewise, given vertex subset pairs $\{S_i, T_i\}_i$ we say that $C$ is an $h$-length $\{S_i, T_i\}_i$ cut if it is an $S_i$-$T_i$ cut for each $i$.
\end{definition}
\noindent By strong duality the size of the minimum size $h$-length $\{S_i,T_i\}$ cut is equal to the value of the maximum value feasible $\{S_i,T_i\}$ flow; see e.g.\ \cite{haeupler2021fast}.\footnote{Really, strong duality requires that $C$ assigns general values to edges (not just values that are multiples of $\frac{1}{h}$); this nuance can be ignored in this work as we are only interested in approximately optimal cuts.} As such, we will say that a pair of $h$-length $\{S_i, T_i\}_i$ flow and cut $(F, C)$ is $(1 \pm \eps)$-approximate for $\eps \geq 0$ if the cut certifies the value of the length-constrained flow up to a $(1 - \epsilon)$; i.e.\ if $(1-\epsilon)\cdot|C| \leq \val(F)$. 


Second, if we interpret length-constrained cuts as length increases, then we can understand them as cutting apart vertices that are made sufficiently far apart (i.e.\ separated). Specifically, consider the following is the result of applying a length-constrained cut in a graph.
\begin{definition}[$G-C$]
For a graph $G$ with length function $l_G$ and moving cut $C$, we denote with $G-C$ the graph $G$ with length function $l_{G-C} =  l_G + \l_{C,h}$. We refer to $G-C$ as the \emph{graph $G$ after cutting}
$C$ or \emph{after applying the moving cut} $C$.
\end{definition}

\noindent Then, the following gives the appropriate length-constrained analogue of disconnecting two vertices or a demand by making the demand pairs sufficiently far apart (i.e.\ separated).
\begin{definition}[$h$-Length Separation]
Let $C$ be an $h$-length moving cut. We say two node $v,v' \in V$ are $h$-length separated by $C$ if their distance in $G-C$ is larger than $h$, i.e., if $d_{G-C}(v,v')>h$.
\end{definition}
Observe that if $C$ is an $h$-length $\{u\}$-$\{v\}$ cut then it always $h$-length separates $u$ and $v$. However, $C$ might $h$-length separate nodes $u$ and $v$ even if it is not an $h$-length $\{u\}$-$\{v\}$ cut; e.g.\ in the case where $u$ and $v$ are nearly $h$-far in $G$.

\begin{definition}[$h$-Length Separated Demand]\label{dfn:sepDem}
For any demand $D$ and any $h$-length moving cut $C$, we define the amount of $h$-length separated demand as the sum of demands between vertices that are $h$-length separated by $C$. We denote this quantity with $\sep_{h}(C,D)$,
i.e., $$\sep_{h}(C,D) = \sum_{u,v : d_{G-C}(u,v)>h} D(u,v).$$
\end{definition}

\noindent Using demand separation, we can carry over cut sparsity to the length-constrained setting.
\begin{definition}[$h$-Length Sparsity of a Cut $C$ for Demand $D$]\label{dfn:CDSparse}
For any demand $D$ and any $h$-length moving cut $C$ with $\sep_{h}(C,D)>0$, the $h$-length sparsity of $C$ with respect to $D$ is the ratio of $C$'s size to how much demand it $h$-length separates i.e., 
$$\spa_{h}(C,D) = \frac{|C|}{\sep_{h}(C,D)}.$$
\end{definition}

\noindent Note that if a demand $D$ has any demand between vertices that have length-distance exceeding $h$ then the empty cut has an $h$-length sparsity for $D$ which is equal to zero. For all other demands, i.e., for any non-empty $h$-length demand $D$ with $h < h'$, the $h'$-length sparsity of any cut $C$ for $D$ is always strictly positive. 


\subsection{Length-Constrained Expanders}\label{sec:h-length-expander-def}
We now move on to formally defining length-constrained expanders. Informally, they are graphs with no sparse length-constrained cuts.

We begin by introducing the notion of node-weightings which will give us a formal way of defining what it means for a subset of a graph to be a length-constrained $\phi$-expanding.

\begin{definition}[Node-Weightings]
    A \emph{node-weighting} $A:V\rightarrow\mathbb{R}_{\ge0}$ of $G$ assigns a value $A(v)$ to a vertex $v$. The \emph{size} of $A$ is denoted by $|A|=\sum_{v}A(v)$. For two node-weightings $A,A'$ we define $\min(A,A')$, $A-A'$ and $A+A'$ as pointwise operations and let $\supp(A) := \{v : A(v) > 0\}$.
\end{definition}

\noindent The following summarizes the demands we consider for a particular node-weighting.
\begin{definition}[Demand Load and Respect]
The load of a demand $D$, denoted with $\load(D)$, is the node-weighting which assigns the node $v$ the weight $\max\{\sum_{w \in V}D(v,w),\sum_{w \in V}D(w,v)\}$.  We write $A \prec A'$ if $A$ is pointwise smaller than $A'$. We say demand $D$ is \emph{$A$-respecting} if $\load(D) \prec A$. We say that $D$ is \emph{degree-respecting} if $D$ is $\deg_{G}$-respecting. 
\end{definition}

\noindent Having defined node-weightings and their corresponding demands, we can now define their sparsity.

\begin{definition}[$(h,s)$-Length Sparsity of a Cut w.r.t.\ a Node-Weighting]\label{def:sparsity}
The $(h,s)$-length sparsity of any $hs$-length moving cut $C$ with respect to a node-weighting $A$ is defined as:
$$\spa_{(h,s)}(C,A) = \min_{A\text{-respecting h-length demand}\ D} \spa_{s \cdot h} (C,D).$$
\end{definition}
\noindent We refer to the minimizing demand $D$ above as the demand \emph{witnessing} the sparsity of $C$ with respect to $A$. We can analogously define the length-constrained conductance of a node-weighting.

\begin{definition}[$(h,s)$-Length Conductance of a Node-Weighting]\label{def:conductance}
The $(h,s)$-length conductance of a node-weighting $A$ in a graph $G$ is defined as the $(h,s)$-length sparsity of the
sparsest $hs$-length moving cut $C$ with respect to $A$, i.e.,
$$\operatorname{cond}_{(h,s)}(A) = \min_{hs\text{-length moving cut } C} \spa_{(h,s)}(C,A).$$ 
\end{definition}

When no node-weighting is mentioned then $(h,s)$-length sparsity and $(h,s)$-length conductance are defined with respect to the node-weighting $\deg_G$ which gives vertex $v$ value $\deg_G(v)$. In other words, the $(h,s)$-length sparsity of an $hs$-length moving cut $C$ in $G$ is defined as $\spa_{(h,s)}(C,G) = \min_{D} \spa_{hs} (C,D)$ where the minimum is taken over all degree-respecting $h$-length demands. Similarly, the $(h,s)$-length conductance of $G$ is defined as $\operatorname{cond}_{(h,s)}(G) = \min_{C} \spa_{(h,s)}(C,\deg_G)$ where the minimum is taken over all $hs$-length moving cuts.

With the above notion of conductance, we can now define when $A$ is length-constrained expanding and when $G$ is length-constrained expander.
\begin{definition}[$(h,s)$-Length $\phi$-Expanding Node-Weightings]
We say a node-weighting $A$ is $(h,s)$-length $\phi$-expanding in $G$ if the $(h,s)$-length conductance of $A$ in $G$ is at least $\phi$.
\end{definition}
\noindent Equivalently to the above, we will sometimes say that $G$ is an $(h,s)$-length $\phi$-expander for $A$. Applying the above definition to $\deg_G$ gives our formal definition of length-constrained $\phi$-expanders.
\begin{definition}[$(h,s)$-Length $\phi$-Expanders]
A graph $G$ is an $(h,s)$-length $\phi$-expander if $\deg_G$ is $(h,s)$-length $\phi$-expanding in $G$.
\end{definition}

The above definition of length-constrained expanders characterizes them in terms of conductance. The below fact from \cite{haeupler2022expander} (see their Lemma 3.16) exactly characterizes length-constrained expanders as those graphs that admit both low congestion and low dilation routings.

\begin{theorem}
[Routing Characterization of Length-Constrained Expanders, \cite{haeupler2022expander}]\label{thm:flow character} Given graph $G$ and node-weighting $A$, for any $h \geq 1$, $\phi < 1$ and $s \geq 1$ we have:
\begin{itemize}
\item \textbf{Length-Constrained Expanders Have Good Routings} If $A$ is $(h,s)$-length $\phi$-expanding in $G$, then every $h$-length $A$-respecting demand can be routed in $G$ with congestion at most $O(\log(N)/\phi)$ and dilation at most $s\cdot h$.
\item \textbf{Not Length-Constrained Expanders Have a Bad Demand} If $A$ is not $(h,s)$-length $\phi$-expanding in $G$, then some $h$-length $A$-respecting demand cannot be routed in $G$ with congestion at most $1/2\phi$ and dilation at most $\frac{s}{2}\cdot h$.
\end{itemize}
\end{theorem}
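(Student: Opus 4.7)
The plan is to prove the two directions separately; the second (converse) bullet is a direct path-counting computation, while the first bullet invokes the classical $O(\log N)$ flow-cut gap for length-constrained multicommodity flow.

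\textbf{Converse direction (non-expander implies bad demand).} If $A$ is not $(h,s)$-length $\phi$-expanding, unpacking Definitions \ref{def:sparsity} and \ref{def:conductance} produces an $sh$-length moving cut $C$ together with an $A$-respecting $h$-length witnessing demand $D$ with $|C| < \phi \cdot \sep_{sh}(C,D)$. I would show this $D$ is the bad demand. Suppose toward contradiction that a flow $F$ routes $D$ with dilation at most $sh/2$. For every pair $(u,v)$ separated by $C$ (i.e.\ $d_{G-C}(u,v) > sh$), any $u$-to-$v$ flow-path $P$ with $\ell_G(P) \leq sh/2$ must satisfy $\sum_{e \in P} C(e) > 1/2$, because $\ell_{G-C}(P) = \ell_G(P) + sh \cdot \sum_{e \in P} C(e)$ is forced to exceed $sh$. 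Summing $F(P) \cdot \sum_{e \in P} C(e)$ over flow-paths routing separated pairs gives $\sum_e F(e) \cdot C(e) \geq \sep_{sh}(C,D)/2$. On the other hand $\sum_e F(e) \cdot C(e) \leq \congest(F) \cdot |C|$, and rearranging yields $\congest(F) > 1/(2\phi)$, contradicting a congestion bound of $1/(2\phi)$.

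\textbf{Forward direction (expander implies good routing).} Fix an $A$-respecting $h$-length demand $D$. I would set up the standard LP that minimizes the congestion $\lambda$ of a multicommodity flow routing $D$ along paths of length at most $sh$. Its LP dual assigns non-negative weights to edges and, after standard rounding of fractional edge-values into multiples of $1/(sh)$, is equivalent to minimizing $\spa_{sh}(C,D)$ over $sh$-length fractional moving cuts $C$. The $(h,s)$-length $\phi$-expanding assumption, together with the fact that $D$ is $A$-respecting and $h$-length, gives $\spa_{sh}(C,D) \geq \spa_{(h,s)}(C,A) \geq \phi$ for every $sh$-length cut $C$, so the dual optimum is at least $\phi$. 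The classical $O(\log N)$ flow-cut gap for length-constrained multicommodity flow (proved via randomized KPR-style region growing, see e.g.\ \cite{haeupler2020network}) then turns this dual lower bound into a primal upper bound of $O(\log N / \phi)$ on the congestion of a flow routing $D$ along $sh$-length paths, as desired.

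The only substantive technical input is the $O(\log N)$ length-constrained flow-cut gap in the forward direction; the rest is LP duality plus the direct path-counting computation in the converse. I do not expect surprises beyond correctly tracking the factor-of-$2$ slack between $sh/2$ (the bad-demand dilation) and $sh$ (the cut and witness length scale), which is why both the congestion and dilation bounds in the converse are weaker than in the forward direction by a factor of $2$.
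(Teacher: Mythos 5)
Your proof is correct and follows the same two-part strategy that the cited source (\cite{haeupler2022expander}, Lemma 3.16) uses: the converse direction is the direct weighting/path-counting argument showing that a sparse $sh$-length moving cut forces congestion $> 1/(2\phi)$ on any $(sh/2)$-length routing of its witnessing demand, and the forward direction reduces to the $O(\log N)$ flow-cut gap for length-constrained multicommodity flow together with the observation that $\spa_{(h,s)}(C,A) \leq \spa_{sh}(C,D)$ for every $A$-respecting $h$-length $D$. The present paper states \Cref{thm:flow character} as an imported result and does not reprove it, so there is nothing further to compare; your accounting of the factor-of-two slack between the two bullets is also correct.
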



\subsection{At Most Length-Constrained Expanders}\label{sec:proportional-h-length-expanders}

\begin{definition}[$(\leq h,s)$-Length $\phi$-Expanding Node-Weightings]
We say a node-weighting $A$ is $(\leq h,s)$-length $\phi$-expanding in $G$ if the $(h',s)$-length sparsity of $A$ in $G$ is at least $\phi$ for every $h' \leq h$.
\end{definition}

\future{Note in intro that we get stronger results of at most length expanders}

\begin{definition}[$(\leq h,s)$-Length $\phi$-Sparsity of Cut]
We say a node-weighting $A$ is $(\leq h,s)$-length $\phi$-expanding in $G$ if the $(h',s)$-length sparsity of $A$ in $G$ is at least $\phi$ for every $h' \leq h$.
\end{definition}

\subsection{Length-Constrained Expander Decompositions}

Having defined length-constrained expanders, we can now define length-constrained expander decompositions. Informally, these are simply moving cuts whose application renders the graph a length-constrained expander.

\begin{definition}[Length-Constrained Expander Decomposition] \label{def:LCED}
Given graph $G$, an \emph{$(h,s)$-length (resp.\ $(\leq h,s)$-length) $\phi$-expander decomposition} for a node-weighting $A$ with cut slack $\kappa$ and length slack $s$ is an $hs$-length cut $C$ of size at most $\kappa \cdot \phi|A|$ such that $A$ is $(h,s)$-length (resp.\ $(\leq h,s)$-length) $\phi$-expanding in $G-C$.
\end{definition}

We will make use of a strengthened version of length-constrained expander decompositions called ``linked'' length-constrained expander decompositions. Informally, this is a length-constrained expander decomposition which renders $G$ length-constrained expanding even after adding many self-loops. This is a strengthened version because adding self-loops only makes it harder for a graph to be a length-constrained expander. The following definition gives the self-loops we will add for a length-constrained expander decomposition $C$.

\begin{definition}[Self-Loop Set $L^{\ell}_C$]
Let $C$ be an $h$-length moving cut of a graph $G = (V,E)$ and let $\ell$ be a positive integer divisible by $h$. For any vertex $v$, define $C(v)=\sum_{e \ni v}C(e)$. The self-loop set $L^{\ell}_C$ consists of $C(v)\cdot \ell$ self-loops at $v$. We let $G+L^{\ell}_C := (V, E \cup L^{\ell}_C)$.
\end{definition}
\noindent Using the above self-loops, we can now define linked length-constrained expander decompositions.

\begin{definition}[Linked Length-Constrained Expander Decomposition]
Let $G$ be a graph. An $\ell$-linked $(h,s)$-length (resp.\ $(\leq h, s)$-length) $\phi$-expander decomposition of a node-weighting $A$ on $G$ with cut slack $\kappa$ is an $hs$-length moving cut $C$ such that $|C|\le \kappa \cdot \phi |A|$ and $A\cup L^{\ell}_C$ is $(h,s)$-length (resp.\ $(\leq h, s)$-length) $\phi$-expanding in $G + L^{\ell}_C - C$.
\end{definition}
\noindent As before, we say that moving cut $C$ is an $(h,s)$-length $\phi$-expander decomposition (linked or not) for $G$ if it is an $(h,s)$-length $\phi$-expander decomposition for the node-weighting $\deg_G$.

\future{I.e. State linkedness in terms of node-weightings instead of self-loops}

\subsection{Routers, Power Graphs and Expander Power Graph Robustness}\label{sec:routers}
Having introduced length-constrained expander decompositions in the previous section, we introduce and discuss the closely related notion of graph routers, which will be useful for our characterizations of length-constrained expanders in terms of (classic) expanders.

\begin{definition}
[Routers]\label{def:router} Given graph $G$ and node-weighting $A$, we say that $G$ is a \emph{$t$-step $\kappa$-router for $A$} if every $A$-respecting demand can be routed in $H$ via a $t$-step flow with congestion at most $\kappa$.
\end{definition}
\noindent As with length-constrained expanders, we say $G$ is just a $t$-step $\kappa$-router if it is a $t$-step $\kappa$-router for $\deg_G$.

Observe that, by the flow characterization of length-constrained expanders (\Cref{thm:flow character}), routers are essentially the same object as $(h,s)$-length $\phi$-expanders with unit length $t=\Theta(hs)$ long routing paths, congestion $\kappa = \tilde{\Theta}(\frac{1}{\phi})$ congestion, and a diameter smaller than $h$, which guarantees that any demand is $h$-length and therefore any degree-respecting demand must be routable. The following formalizes this.

\begin{lemma}
Let $G$ be a graph with unit length edges. Let $A$ be a node-weighting where $\diam_{G}(\supp(A))\le h$. Then:
\begin{itemize}
\item \textbf{Expander Implies Router:} If $G$ is a $(h,s)$-length $\phi$-expander for $A$, then $G$
is an $hs$-step $O(\frac{\log N}{\phi})$-router for $A$;
\item \textbf{Router Implies Expander:} If $G$ is not a $(h,s)$-length $\phi$-expander for $A$, then $G$
is not an $(\frac{hs}{2})$-step $\frac{1}{2\phi}$-router for $A$. 
\end{itemize}
\end{lemma}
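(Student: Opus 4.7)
The plan is to derive both directions directly from the Routing Characterization of Length-Constrained Expanders (\Cref{thm:flow character}), using the diameter assumption $\diam_G(\supp(A)) \leq h$ as a bridge between the two notions. The key observation is that under this diameter bound, every $A$-respecting demand is automatically $h$-length: any pair $(u,v)$ with $D(u,v) > 0$ lies in $\supp(A)$ (since $D$ is $A$-respecting) and hence has $d_G(u,v) \leq h$. Additionally, since all edges have unit length, the notions of dilation (total edge length) and step (number of edges) of a flow path coincide.

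For the first bullet (\textbf{Expander Implies Router}), I would take an arbitrary $A$-respecting demand $D$ and observe that it is $h$-length by the diameter assumption. By \Cref{thm:flow character}, $D$ can be routed in $G$ with congestion at most $O(\log(N)/\phi)$ and dilation at most $s \cdot h$. Because edges are unit length, the dilation bound translates to a step bound of at most $hs$. This is exactly the definition of $G$ being an $hs$-step $O(\log(N)/\phi)$-router for $A$.

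For the second bullet (\textbf{Router Implies Expander}), I would proceed by contrapositive, directly invoking the second part of \Cref{thm:flow character}. If $A$ is not $(h,s)$-length $\phi$-expanding in $G$, then there exists an $h$-length $A$-respecting demand $D$ which cannot be routed with congestion at most $1/(2\phi)$ and dilation at most $\tfrac{s}{2} \cdot h$. Since edges are unit length, such a demand cannot be routed as a flow with congestion $1/(2\phi)$ and step at most $hs/2$, which is precisely the definition of $G$ failing to be an $(hs/2)$-step $1/(2\phi)$-router for $A$.

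Neither direction presents a real obstacle; the only thing worth being careful about is ensuring the $A$-respecting demand in the router definition is the same class of demands as the $h$-length $A$-respecting demand in the expander definition, which is exactly what the diameter hypothesis guarantees. No additional constructions or quantitative estimates are needed beyond invoking \Cref{thm:flow character} and noting unit lengths identify steps with dilation.
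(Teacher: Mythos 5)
Your proposal is correct and takes essentially the same route as the paper: you observe that the diameter bound on $\supp(A)$ makes the classes of $A$-respecting and $h$-length $A$-respecting demands coincide, then read off both directions from \Cref{thm:flow character}, with unit edge lengths identifying dilation and step count. The only difference is that you spell out the unit-length-to-step translation explicitly, which the paper leaves implicit.
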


\begin{proof}
Since $\diam_{G}(\supp(A))\le h$, the set of $A$-respecting demands
and the set of $h$-length $A$-respecting demands are identical.
Therefore, by \Cref{thm:flow character},
if $G$ is a $(h,s)$-length $\phi$-expander for $A$, then every
$A$-respecting demand can be routed in $G$ with congestion $O(\log(N)/\phi)$
and $sh$ steps. Otherwise, some $A$-respecting demand cannot be
routed in $G$ with congestion $1/2\phi$ and $(sh)/2$ steps. This
completes the proof by \Cref{def:router}.
\end{proof}

Of particular interest to us will be routers constructed by taking power graphs, defined as follows.
\begin{definition}[$k$th Power Graph]
Given a graph $G = (V,E)$, we let $G^k = (V, E^k)$ be the graph that has an edge for each path of length at most $k$. In particular, $\{u,v\} \in E^k$ iff $d_G(u,v) \leq k$.
\end{definition}

The following are easy-to-verify routers. 

\begin{lemma}[Star is a Router]
Let $G$ be a star where rooted at $r$ with node-weighting $A$ where the set of leaves of $G$ is $\supp(A)$. The capacity of each star-edge $(r,v)\in E(G)$ is $A(v)$. Then,
$G$ is a $2$-step $1$-router for $A$.
\end{lemma}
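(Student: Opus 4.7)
The plan is straightforward: in a star there is essentially a unique way to route any demand between two leaves, so I just need to exhibit the natural routing and verify that it has step count $2$ and congestion $1$.

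First I would fix an arbitrary $A$-respecting demand $D$. Since $\supp(A)$ is exactly the set of leaves, the only pairs $(u,v)$ on which $D$ can be positive are pairs of distinct leaves, and for any such pair the unique simple $u$-$v$ path in $G$ is $P_{u,v}=(u,r,v)$, which consists of exactly two edges. I then define the flow $F$ by setting $F(P_{u,v}):=D(u,v)$ for each ordered pair of distinct leaves and $F(P):=0$ on every other simple path. By construction $D_F=D$, and since every flow-path of $F$ uses exactly two edges, $\step(F)\le 2$ is immediate.

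Next I would bound the congestion. The only edges of $G$ are the star-edges $(r,v)$, each with $\U_{(r,v)}=A(v)$. A flow-path of $F$ contains $(r,v)$ only if $v$ is one of its endpoints, so the flow across $(r,v)$ splits cleanly into ``flow into $v$'' and ``flow out of $v$'', with values
\[
\sum_{u\neq v} D(u,v) \qquad \text{and} \qquad \sum_{u\neq v} D(v,u),
\]
respectively. Each of these is bounded by $\load(D)(v)\le A(v)$ because $D$ is $A$-respecting. Applied against the capacity $A(v)$ of the star-edge (treating the two routing directions separately, as is standard in the router setting), this yields $\congest_F((r,v))\le 1$, proving that $G$ is a $2$-step $1$-router for $A$.

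The plan has essentially no obstacle — the star has only one candidate routing and the $A$-respecting condition is tailored precisely to make it fit. The only point that requires a moment's care is accounting for the two directions of flow on each star-edge: the $A$-respecting bound must be applied once per direction, rather than to the combined in-plus-out flow, in order to obtain the clean congestion bound of $1$.
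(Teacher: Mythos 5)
The paper leaves this lemma unproven; your routing (send $D(u,v)$ along the unique two-edge path $(u,r,v)$) is the only sensible one and your argument is essentially the intended one. The subtlety you flag is real and correctly identified: under the paper's literal definition of congestion, the total flow across the star-edge $(r,v)$ is $\sum_u D(u,v)+\sum_u D(v,u)$, and since the $A$-respecting condition bounds each of these two sums \emph{separately} by $A(v)$ (the paper's $\load(D)(v)$ is a max over in- and out-flow, not their sum), a literal reading gives congestion at most $2$ rather than $1$. Charging the two routing directions to the edge capacity separately, as you do, is exactly the convention that makes the stated constant $1$ correct; the same convention is what makes the companion complete-graph lemma come out as a $2$-router rather than a $4$-router. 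So your proof is right, and your parenthetical about direction-splitting is not a throwaway remark but the one point of substance in the whole argument.
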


\begin{lemma}[Complete Graph is a Router]
Let $G$ be a complete graph with node-weighting $A$ where $V(G)=\supp(A)$ and where for each $v,w\in\supp(A)$ and the capacity of $(v,w)$ is $\frac{A(v)\cdot A(w)}{|A|}$. Then, $G$
is a $2$-step $2$-router for $A$. 
\end{lemma}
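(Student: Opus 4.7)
The plan is to route $D$ using a Valiant-style randomized intermediate-vertex scheme, where the intermediate is chosen with probability proportional to $A$. Concretely, for each pair $u,v$ with $D(u,v) > 0$, I would define the flow $F$ by placing, for each $w \in \supp(A)$, an amount
\[
F(u \to w \to v) \;=\; D(u,v) \cdot \frac{A(w)}{|A|}
\]
on the (at most) $2$-step path $u \to w \to v$. Summing over $w$ gives $\sum_w F(u \to w \to v) = D(u,v) \cdot \tfrac{1}{|A|}\sum_w A(w) = D(u,v)$, so $F$ routes the demand $D$, and by construction $\step_F \leq 2$, as required.

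The crux is the congestion bound on an arbitrary edge $(x,y)$ with capacity $U_{(x,y)} = A(x)A(y)/|A|$. An edge $(x,y)$ appears in a supported path $u \to w \to v$ in exactly four ways, corresponding to whether $(x,y)$ is the first or second edge of the path and to its orientation. Grouping the supported paths by these four roles lets me split the contributions as sums in which one index ranges freely and can be collapsed via the $A$-respecting property $\sum_v D(u,v) \leq A(u)$ (and symmetrically $\sum_u D(u,v) \leq A(v)$). For instance, the contribution from paths with $u = x$, $w = y$ is
\[
\sum_v F(x \to y \to v) \;=\; \frac{A(y)}{|A|}\sum_v D(x,v) \;\leq\; \frac{A(x)A(y)}{|A|},
\]
and analogously for the other three role/orientation combinations.

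Adding the two contributions that traverse $(x,y)$ in a fixed orientation (say $x \to y$, namely the ``$u=x, w=y$'' and ``$w=x, v=y$'' groups) gives a total of at most $2 \cdot A(x)A(y)/|A| = 2 \cdot U_{(x,y)}$, which is the stated $2$-router bound (matching the directional convention that is already implicit in the preceding ``Star is a Router'' lemma, where a similar analysis yields the $1$-router guarantee). The degenerate cases $w = u$ and $w = v$, which shrink a $2$-step path to a $1$-step path, are absorbed uniformly in the same four-role count and do not require a separate argument.

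The one point requiring care is bookkeeping: one must not double-count the contribution of a given supported path, and must use out-load and in-load bounds symmetrically. Beyond this accounting, the proof is a direct calculation with no real obstacle; the only conceptual content is that weighting the intermediate by $A$ exactly balances the $A$-proportional edge capacities, which is precisely what makes the $K_n$-style router work for arbitrary node-weightings.
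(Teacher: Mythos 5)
The paper states this lemma without proof, so there is nothing to compare against; but your Valiant-style intermediate-vertex argument is the standard and correct way to prove it, and the calculation is right. Splitting $D(u,v)$ across two-step paths $u\to w\to v$ with weight $A(w)/|A|$ clearly routes $D$ with $\step\le 2$, and the congestion accounting by role (first vs.\ second edge, each orientation) together with the out-load bound $\sum_v D(u,v)\le A(u)$ and in-load bound $\sum_u D(u,v)\le A(v)$ is exactly what collapses each group to $A(x)A(y)/|A| = \U_{(x,y)}$. The degenerate $w\in\{u,v\}$ cases are indeed absorbed without special treatment.

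You correctly flag the one genuine subtlety, which is worth making fully explicit: under the paper's literal definition of congestion, $\congest_F(e)=\sum_{P\ni e}F(P)/\U_e$, all four role/orientation groups contribute, giving congestion at most $4$, not $2$. Your defense---that a per-orientation accounting gives $2$, and that the same per-orientation accounting is the only one under which the adjacent ``Star is a Router'' lemma yields its claimed $1$-router bound (the literal definition gives $2$ there too)---is the right sanity check, and the factor is in fact tight: for uniform $A\equiv 1$ and the all-pairs demand $D(u,v)=1/(n-1)$, a counting argument on total flow-steps vs.\ total capacity forces congestion $\approx 4$ for any $2$-step routing under the all-paths definition. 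So either the paper is implicitly per-direction here or these two unproved warm-up lemmas carry a harmless factor-of-$2$ slack; in either case your proof establishes the bound the paper intends, and the constants do not propagate into anything downstream.
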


\begin{lemma}[Expander Power is a Router]\label{lem:expanPower}
Let $G$ be a constant-degree $\Omega(1)$-expander. Then $G$ is a $O(\log n)$-step $O(\log n)$-router and $G^k$ is a $O(\log n/k)$-step $O(\log n/k)$-router.
\end{lemma}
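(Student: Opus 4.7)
The plan is to prove the two halves separately: the first via the routing characterization of length-constrained expanders applied directly to $G$, and the second via a Valiant-style two-phase random routing in $G^k$ that leverages both the small diameter of $G^k$ and the spectral-mixing properties of $G$.

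\textbf{First statement.} Because $G$ is a constant-degree classical $\Omega(1)$-expander, the standard Cheeger/random-walk bound gives diameter $D = O(\log n)$, so every $\deg_G$-respecting demand is automatically $D$-length. Applying \Cref{thm:flow character} with $h = D$, $s = 1$, and $\phi = \Omega(1)$ yields a flow routing any such demand with congestion $O(\log N)$ and dilation at most $sh = O(\log n)$, which by \Cref{def:router} is exactly the claim with $t = \kappa = O(\log n)$.

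\textbf{Second statement.} First, $\diam(G^k) \le \lceil D/k \rceil = O(\log n/k)$ follows immediately from the definition of the $k$-th power. For the routing, I plan to use Valiant-style two-phase random routing in $G^k$: each unit of $\deg_{G^k}$-respecting demand from $u$ to $v$ is sent via a uniformly random intermediate vertex $w$, first along a shortest $G^k$-path from $u$ to $w$ and then from $w$ to $v$. The dilation is then at most $2\diam(G^k) = O(\log n/k)$. For the congestion, I would count: the total demand is $\sum_v \deg_{G^k}(v) = 2|E(G^k)|$, each flow-path has length $O(\log n/k)$, and the random intermediate vertex spreads each pair's flow uniformly over the edges of $G^k$, giving expected congestion $O(\log n/k)$ per edge. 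Standard Chernoff concentration over the $|E(G^k)|$ edges then yields the same bound with high probability, certifying the router property.

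\textbf{Main obstacle.} The delicate step is showing that the random intermediate vertex genuinely distributes load uniformly over the edges of $G^k$, since $G^k$ is not literally vertex-transitive and shortest $G^k$-paths are not true random walks. The plan is to couple the uniform intermediate $w$ with (or replace it by) the endpoint of a random walk of length $O(\log n)$ in $G$, which by the spectral gap of $G$ is $O(1/n)$-close to uniform. This guarantees that each $G^k$-edge is traversed by any individual demand pair with probability $O(1/|E(G^k)|)$, giving the uniform-congestion bound by linearity of expectation together with Chernoff.
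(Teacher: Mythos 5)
The proposal has genuine gaps in both halves, and in particular I would not accept the congestion analysis for $G^k$ as written.

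For the first half, invoking \Cref{thm:flow character} with $h=D$, $s=1$, $\phi=\Omega(1)$ requires first establishing that $G$ (i.e.\ $\deg_G$) is $(D,1)$-length $\Omega(1)$-expanding, and you treat this as an immediate consequence of classical $\Omega(1)$-expansion. It is not --- in fact it is false. Consider the uniform $D$-length moving cut with $C(e)=1/D$ on every edge: it has size $m/D$, it increases every edge length from $1$ to $2$, and hence it $D$-separates every pair at original distance in $(D/2,D]$. In a constant-degree expander one can support a degree-respecting $D$-length demand of size $\Omega(m)$ on such pairs, so this cut has $(D,1)$-length sparsity $O(1/D)=O(1/\log n)$, not $\Omega(1)$. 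So the premise of the theorem you apply does not hold with $s=1$; with $s\geq 2$ the uniform-cut counterexample disappears, but then you are asserting a non-trivial ``classical expander implies length-constrained expander with $s=2$'' fact whose natural proof already \emph{is} the router statement via flow-cut duality, i.e.\ the appeal to \Cref{thm:flow character} is circular in spirit. The robust route for the first half is a direct two-phase random-walk routing: with $T=\Theta(\log n)$ the lazy walk mixes, and by stationarity the $T$-step walk induces $O(T)$ flow per edge on the Valiant demand, giving both dilation and congestion $O(\log n)$ without any detour through moving cuts.

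For the second half the gap is more serious. Your plan is to route each demand unit to a (near-uniform) random intermediate and claim this ``spreads each pair's flow uniformly over the edges of $G^k$,'' then replace the intermediate by a $\Theta(\log n)$-step random-walk endpoint. But after projecting a $T$-step $G$-walk to $G^k$ in strides of $k$, the flow placed on a $G^k$-edge $\{a,b\}$ is proportional to $P^k(a,b)$, the $k$-step transition probability, and this is \emph{not} $O(1/\deg_{G^k}(a))$ uniformly over $b$. For $b$ at $G$-distance $1$ from $a$ one has $P^k(a,b)=\Omega(\lambda_2^k)$ with $\lambda_2$ the second eigenvalue, while $1/\deg_{G^k}(a)=\Theta(d^{-k})$; since $d\lambda_2\geq 2\sqrt{d-1}>1$ for every $d\geq 2$ (Alon--Boppana), the ratio $\deg_{G^k}(a)\cdot P^k(a,b)=(d\lambda_2)^{\Theta(k)}$ grows exponentially in $k$. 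So the projected-walk routing overloads the ``short'' $G^k$-edges by an exponential factor and does not give congestion $O(\log n/k)$. One needs to route along paths that advance exactly $k$ hops per $G^k$-step (e.g.\ non-backtracking walks on a large-girth expander, for which $P^k_{\mathrm{nb}}(a,b)=\Theta(d^{-k})$ and the short $G^k$-edges get zero flow), or some comparable load-balancing argument; the shortest-path-to-a-uniform-intermediate plan and the lazy-walk coupling both skip exactly this step, which is the crux of the $1/k$ saving.
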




Lastly, we observe that power graphs when used as routers are robust to edge deletions. In particular, the following is immediate from \Cref{lem:expanPower} and well-known results for expanders.
\begin{lemma}[Robustness of Expander Power Routers, Adaptation of Theorem 2.1 of \cite{saranurak2019expander}.]\label{lem:prunePowerToRouter}
Suppose $H = (V, E_H)$ is an $\Omega(1)$-expander with maximum degree $\Delta$ and $G := H^k = (V, E_G)$ is the $k$th power graph of $H$. Then for any subset $D \subseteq E_G$, let $G' := (V, E_G \setminus D)$ be $G$ with $D$ deleted. Then, there exists a subset of vertices $V' \subseteq V$ connected in $G'[V']$ where:
\begin{enumerate}
    \item $G'[V']$ is a $O(\frac{\log N}{k})$-step $O(\frac{\log N}{k})$-router;
    \item $|V'| \geq |V| - O(\frac{k \cdot \Delta^{2k}}{\phi}) \cdot |D|$.
\end{enumerate}
\end{lemma}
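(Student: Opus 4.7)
The plan is to reduce this result to classic expander pruning (\Cref{thm:expPrun}) applied to $H$ itself, then ``lift'' the resulting expansion to the power graph. Concretely, for each deleted edge $e = \{u,v\} \in D$, note that $d_H(u,v) \leq k$, so the set $\Pi_e$ of all simple $u$--$v$ paths in $H$ of length at most $k$ is nonempty and has at most $\Delta^k$ paths (bounded by the number of length-$k$ walks out of $u$). I will define
\[ D_H := \bigcup_{e \in D} E(\Pi_e), \qquad |D_H| \leq k \cdot \Delta^k \cdot |D|, \]
i.e., the set of all $H$-edges participating in some short $u$--$v$ route for some $e \in D$. Applying \Cref{thm:expPrun} to the $\Omega(1)$-expander $H$ with edge-deletion set $D_H$ yields a pruning set $P \subseteq V$ with $\vol_H(P) = O(|D_H|)$ such that $H' := H[V \setminus P] - D_H$ is still an $\Omega(1)$-expander, and \Cref{lem:expanPower} then gives that $(H')^k$ is an $O(\log N/k)$-step $O(\log N/k)$-router.

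The key structural observation is that $(H')^k$ is a subgraph of $G'[V\setminus P]$. Indeed, any edge $\{u,v\} \in E((H')^k)$ is witnessed by some $H'$-path of length $\leq k$, which is also an $H$-path, so $\{u,v\} \in E(H^k) = E(G)$; and if $\{u,v\}$ were in $D$, then by construction every length-$\leq k$ $H$-path from $u$ to $v$ would lie in $D_H$, contradicting the existence of the $H'$-path. Thus $(H')^k \subseteq G'[V \setminus P]$ as graphs on $V \setminus P$.

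The main obstacle is that for $G'[V']$ itself to be a router (with respect to its own degree sequence, per the convention of \Cref{def:router}), I must ensure that degrees in $G'[V']$ are not much larger than in $(H')^k$; otherwise, a $\deg_{G'[V']}$-respecting demand could exceed what the router subgraph can route. To handle this, I will further restrict $V'$ to
\[ V' := \{ v \in V \setminus P \,:\, B_H(v,k) \cap (P \cup V(D_H)) = \emptyset \}, \]
i.e., vertices whose $k$-ball in $H$ is untouched by both the pruning set $P$ and the endpoints of $D_H$. For such $v$, the neighborhood of $v$ in $G'[V\setminus P]$ is exactly its $H$-ball $B_H(v,k) \setminus \{v\}$ (no incident edge is in $D$ by the same argument as above) and this matches the neighborhood of $v$ in $(H')^k$. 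Consequently $\deg_{G'[V']}(v) \leq \deg_{(H')^k}(v)$ for $v \in V'$, so every $\deg_{G'[V']}$-respecting demand is a $\deg_{(H')^k}$-respecting demand and can be routed with the claimed congestion and step bound using only edges of $(H')^k \subseteq G'[V']$ (whose flow paths stay within $V'$ by the neighborhood-matching).

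Finally, I bound $|V \setminus V'|$: the excluded set consists of $P$ together with all vertices within $H$-distance $k$ of $P \cup V(D_H)$. Using $|P| \leq \vol_H(P) \leq O(|D_H|/\phi)$, $|V(D_H)| \leq 2|D_H|$, and the fact that each vertex has at most $\Delta^k$ vertices within $H$-distance $k$, this gives
\[ |V \setminus V'| \leq O\!\left(\Delta^k \cdot |D_H|/\phi\right) \leq O\!\left( k \cdot \Delta^{2k} \cdot |D|/\phi \right), \]
matching the stated bound. Connectedness of $G'[V']$ follows because $H'$ is a connected $\Omega(1)$-expander, so $(H')^k$ is connected; restricting to $V' \subseteq V \setminus P$ preserves this via the same neighborhood-matching argument.
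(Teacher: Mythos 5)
Your overall plan (reduce to classic pruning on $H$, then lift via the power construction) is sensible, and your bookkeeping of $|D_H|$, $|P|$, and the final size bound is correct, but there are two places where the argument genuinely breaks.

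First, the containment ``$(H')^k \subseteq G'[V']$'' is not true. What you actually established is $(H')^k \subseteq G'[V\setminus P]$. The graph $(H')^k$ lives on vertex set $V\setminus P$, whereas $G'[V']$ lives on the strictly smaller set $V'$, so $(H')^k$ is not a subgraph of $G'[V']$. What \emph{is} true (and follows from your neighborhood-matching observation) is that $G'[V'] = (H')^k[V']$, i.e.\ the two graphs coincide on $V'$ — but that is weaker, and it is the source of the real gap: you invoke the router property of $(H')^k$ to route a $\deg_{(H')^k}$-respecting demand, and that routing uses flow paths in $(H')^k$ whose intermediate vertices can be anywhere in $V\setminus P$, not just in $V'$. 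Nothing in the ball-cleaning definition of $V'$ forces those intermediate vertices to lie in $V'$ (for $w\in B_H(v,k)$ with $v\in V'$ you only know $w\notin P\cup V(D_H)$, not that $B_H(w,k)$ avoids $P\cup V(D_H)$). So you cannot conclude that $G'[V']$ is a router \emph{for its own degree sequence} from the router property of the larger graph $(H')^k$. To close this you would need either to show that $(H')^k[V']$ is itself an expander power of some still-expanding graph (which requires a second, different pruning argument, since restricting an expander to an arbitrary vertex subset does not preserve expansion), or to iterate the ball-cleaning to a fixed point and control the total volume removed — and neither step is in your proof.

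Second, the claim of connectedness of $G'[V']$ is unsupported. You correctly note that $(H')^k$ is connected because $H'$ is a connected expander, but restricting a connected graph to the subset $V' \subsetneq V\setminus P$ does not in general preserve connectivity, and the ``same neighborhood-matching argument'' does not give you this: the matching only controls the \emph{local} edge set at each $v\in V'$, not global reachability within $V'$. In the classic pruning statement (\Cref{thm:expPrun}) one gets connectedness because the pruned subgraph is itself an $\Omega(\phi)$-expander, hence connected; your $V'$ is a further subset chosen by a local condition, and that expansion guarantee is exactly the thing the first gap is missing. So the two gaps are really one: you need some mechanism that guarantees $G'[V']$ (not just $(H')^k$) inherits the expansion/router property, and the local ball-cleaning does not provide it.
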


\future{cite algorithmic results (too); again, we get this online and maybe (or not) should mention this?}

\subsection{Neighborhood Covers}

Our characterizations of length-constrained expander decompositions in terms of (classic) expander decompositions will make use of so-called neighborhood covers. Neighborhood covers are a special kind of clustering.

\begin{definition}[Clustering]
    Given a graph $G$ with edge lengths $\l$, a \emph{clustering} $\cS$ in $G$ with diameter $h_{\diam}$ is a collection of
disjoint vertex sets $S_1,\ldots,S_{|\cS|}$, called clusters, where every cluster has diameter at most $h_{\diam}$ in $G$.  A clustering has \emph{absolute separation} $h_{\sep}$ or \emph{separation factor} $s$ if the distance between any two clusters is at least $h_{\sep}$ or $s
\cdot h_{\diam}$, respectively.
\end{definition}
\future{Is above weak or strong diameter?}

\future{Clarify this; separation is with respect to diameter of one cluster; separation proportional to radius of any two clusters in any clustering}
\begin{definition}[Neighborhood Cover]
    A neighborhood cover $\cN$ with width $\omega$ and covering radius $h_{\cov}$ is a collection of $\omega$ many clusterings $\cS_1, \ldots, \cS_\omega$ such that for every node $v$ there exists a cluster $S$ in some $\mcS_i$ containing $\ball(v,h_{\cov})$. 
\end{definition}
\noindent We will say that a neighborhood cover $\cN$ has diameter $h_{\diam}$ if every clustering has diameter at most $h_{\diam}$ and absolute separation $h_{\sep}$ or separation-factor $s$ if this applies to each of its clusterings. Note, however, that two clusterings in a given neighborhood cover may have different diameters.

We use the following result for neighborhood covers,  which can be proved in a similar way to Lemma 8.15 of \cite{haeupler2022expander}.


\begin{theorem}[\cite{haeupler2022expander}]
\label{thm:cover-separation-factor-existential}
For any $h_{\cov}, s>1$ and $\eps \in (0,1)$ and graph $G$ there exists a neighborhood cover with covering radius $h_{\cov}$, separation-factor $s$, diameter $h_{\diam} = \frac{1}{\eps}\cdot O(s)^{O(1/\eps)} \cdot h_{\cov}$ and width $\omega = N^{O(\eps)} \log N$.
Moreover, there exists an algorithm that computes such a neighborhood cover in $h_{\diam} \cdot N^{O(\eps)}$ depth and $m \cdot h_{\diam} \cdot N^{O(\eps)}$ work with high probability.
\end{theorem}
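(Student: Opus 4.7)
The plan is to build the cover by iterating a standard padded low-diameter decomposition $k = O(1/\eps)$ times, with the scale growing by a factor of $\Theta(s)$ per round. The base subroutine I would use is a CKR/MPX-style ball-carving: for any target scale $r$, pick ball centers with exponentially distributed truncation radii in a window $[\Theta(r), \Theta(r/\eps)]$, assign each vertex to the nearest sampled center whose ball reaches it, and output the resulting partition. The classical split-probability analysis combined with sub-sampling the centers at rate $N^{-\eps}$ yields a split probability of $N^{-O(\eps)}$ per $r$-ball; taking the union of $N^{O(\eps)}\log N$ independent trials produces a collection of clusterings in which every ball of radius $r$ lies inside a single cluster (of diameter $O(r/\eps)$) in at least one trial with high probability. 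Each trial can be implemented in parallel via a shifted BFS in depth $r\cdot N^{O(\eps)}$ and work $m\cdot r \cdot N^{O(\eps)}$, already matching the stated depth/work bounds for a single scale.

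To upgrade the separation factor from the constant guarantee of a single round to the required factor $s$, I would iterate the base subroutine at geometrically increasing scales $h_0 = h_{\cov}$, $h_{i+1}=\Theta(s)\cdot h_i$, so that after $k = O(1/\eps)$ rounds one has $h_k = O(s)^{k}\cdot h_{\cov} = O(s)^{O(1/\eps)}\cdot h_{\cov}$, absorbing an extra $1/\eps$ factor into $h_{\diam}$. In round $i$, one runs the base routine at scale $h_i$ and then inflates each cluster by a padding moat of width $\Theta(s\cdot h_i)$, so that distinct clusters in the round-$i$ output are pairwise $\Omega(s\cdot h_i)$-apart, i.e.\ have separation factor $s$ at the relevant scale. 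The final neighborhood cover is the union of the round-$k$ clusterings; since each round contributes $N^{O(\eps)}\log N$ clusterings and there are $k=O(1/\eps)$ rounds, the total width remains $N^{O(\eps)}\log N$, with the $1/\eps$ blowup absorbed into the $N^{O(\eps)}$ factor, as are the corresponding overheads in depth and work.

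The main obstacle is the second step: verifying that iterating $O(1/\eps)$ rounds of base padded decompositions really yields clusters with true $s$-separation (rather than just per-vertex padding), and that the $h_{\cov}$-ball around every vertex survives all rounds uncut. The key structural argument is a nested-padding one: couple the random ball radii across levels so that level-$(i{+}1)$ clusters deterministically contain the entirety of the level-$i$ clusters together with their $\Theta(s\cdot h_i)$ neighborhoods, and conclude inductively that every vertex's $h_{\cov}$-ball lies in some final cluster and that any two final clusters are at distance at least $s\cdot h_{\diam}$. This is exactly the iteration underlying Lemma~8.15 of \cite{haeupler2022expander}, and instantiating it with the sub-sampled base decomposition above yields the stated parameter trade-off.
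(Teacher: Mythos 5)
The paper does not actually prove this statement: it simply states it with a citation and the remark ``which can be proved in a similar way to Lemma 8.15 of \cite{haeupler2022expander}.'' Your proposal does essentially the same thing --- you sketch the iterated-padded-decomposition route and then defer to Lemma~8.15 for the actual nesting/separation argument --- so in that sense you and the paper are taking the same (deferred) approach. That said, the sketch as written has gaps that would need to be filled before it could stand on its own.

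Two concrete issues. First, the central step --- ``inflates each cluster by a padding moat of width $\Theta(s\cdot h_i)$, so that distinct clusters \ldots are pairwise $\Omega(s\cdot h_i)$-apart'' --- is backwards: inflating clusters from a partition brings them closer together and makes them overlap, not farther apart. What achieves separation is the opposite operation (carving the core $\{v\in S:\ball(v,\Theta(s h_i))\subseteq S\}$) or, alternatively, inflating and then \emph{merging} clusters whose inflations intersect; you need to commit to one of these and then explain why the covering radius and diameter survive. Even with core-carving, a single round gives cores of diameter $O(h_i/\eps)$ that are $\Omega(s h_i)$-apart, i.e.\ separation factor $\Omega(\eps s)$, not $s$ --- so the argument that $O(1/\eps)$ rounds of this boost the separation factor all the way to $s$ is exactly the non-trivial content, and the sketch does not supply it. Second, the claim that one can ``couple the random ball radii across levels so that level-$(i{+}1)$ clusters deterministically contain the entirety of the level-$i$ clusters together with their $\Theta(s\cdot h_i)$ neighborhoods'' is doing all the work but is unjustified: with independent MPX/CKR carvings at different scales, nesting of that kind holds only with some probability, and you would either need to describe a genuine coupling or take a union bound over non-nesting events and fold that loss into the width. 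Relatedly, if the round-$k$ clusterings are indexed by a chain of choices across rounds, the width is $\bigl(N^{O(\eps)}\log N\bigr)^{k}$ rather than $N^{O(\eps)}\log N$, so you need to argue that the construction produces a single chain per trial (not a product of trials across levels) to keep the width bound.
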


\future{write/cite fatldd paper}

\subsection{Length-Constrained Expansion Witnesses}

In this section we define the notion of a witness of $(\leq h,s)$-length $\phi$-expansion. Below, for node-weighting $A$ and nodes $S$, we let $A_S$ be $A$ restricted to nodes in $S$.
\begin{definition}[Length-Constrained Expansion Witness]\label{def:LCExpWitness}
    A $(\leq h,s)$-length $\phi$-expansion witness for graph $G$ and node-weighting $A$ consists of the following where $s_0 \cdot s_1 \leq s$ and $\kappa_0 \cdot \kappa_1 \leq 1/\phi$:
    \begin{itemize}
        \item \textbf{Neighborhood Cover:} a neighborhood cover $\mcN_{h'}$ of $G$ with covering radius $h'$ for each $h' \leq h$ a power of $2$;
        \item \textbf{Routers:} an $s_0$-step and $\kappa_0$ congestion router $R_S$ of $A_S$ for each $S \in \bigcup_{h'} \mcN_{h'}$;
        \item \textbf{Embedding of Routers:} an $(h' \cdot s_1)$-length embedding $F_S$ of $R_S$ into $G$ for each $S \in \mcN_{h'}$ for each $\mcN_{h'}$ such that $\sum_{h'}\sum_{S \in \mcN_{h'}} F$ has congestion at most $\kappa_1$.
    \end{itemize}
\end{definition} 
\noindent A witness of $(h,s)$-length $\phi$-expansion is defined identically to the above but there is only a single neighborhood cover $\mcN_{h}$ with covering radius $h$.

It is easy to verify that such a witness indeed witnesses that an input graph is a length-constrained expander.
\begin{lemma}
    If a graph $G$ and node-weighting $A$ have an $(h,s)$-length (resp.\ $(\leq h,s)$-length) $\phi$-expansion witness then $G$ is an $(h,s)$-length (resp.\ $(\leq h,s)$-length) $\phi$-expander with respect to $A$.
\end{lemma}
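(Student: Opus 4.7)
My plan is to use the witness to construct, for every $h$-length $A$-respecting demand $D$, a flow in $G$ routing $D$ with congestion at most $\kappa_0 \kappa_1 \le 1/\phi$ and dilation at most $s_0 s_1 h \le sh$, and then conclude via the routing characterization of length-constrained expanders (\Cref{thm:flow character}) that $A$ is $(h,s)$-length $\phi$-expanding in $G$.

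First, I would route $D$ level-by-level through the neighborhood covers. For each demand pair $(u,v) \in \supp(D)$, let $h'_{u,v}$ be the smallest power of two with $h'_{u,v} \ge d_G(u,v)$; then $h'_{u,v} \le h$ (up to a factor of two, which is absorbed in the definition of $s$). By the covering-radius property of $\mcN_{h'_{u,v}}$, there is a cluster $S=S(u,v)$ with $\ball(u,h'_{u,v}) \subseteq S$, so in particular $v \in S$. Aggregate into a per-cluster sub-demand $D_S := \sum_{(u,v):S(u,v)=S} D(u,v)\cdot \mathbf{1}_{(u,v)}$. Since $D_S \le D$ pointwise and $D$ is $A$-respecting, each $D_S$ is $A_S$-respecting.

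Next, I would route each $D_S$ inside the router $R_S$: by \Cref{def:router}, there is a flow $F'_S$ in $R_S$ routing $D_S$ with at most $s_0$ steps and congestion at most $\kappa_0$. I would then replace the routing in $R_S$ by its embedding in $G$: for each edge $e$ of $R_S$ of capacity $c_e$ carrying flow value $F'_S(e) \le \kappa_0 c_e$, use $F'_S(e)/c_e$ copies of the $F_S$-paths that embed $e$ into $G$. The resulting flow $F^\star_S$ in $G$ routes $D_S$; every one of its flow-paths concatenates at most $s_0$ embedded paths, each of length at most $h' s_1$ in $G$, giving dilation at most $s_0 s_1 h' \le s h$ (and in the $(\le h,s)$-case, at most $O(s)\cdot d_G(u,v)$, using $h'_{u,v} \le 2 d_G(u,v)$).

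For congestion, fix an edge $f \in E(G)$. The contribution to the flow on $f$ from cluster $S$ is $\sum_{e \in R_S} (F'_S(e)/c_e)\cdot F_S^e(f) \le \kappa_0 \cdot F_S(f)$, where $F_S^e$ is the portion of $F_S$ embedding edge $e$. Summing over all $h'$ and $S \in \mcN_{h'}$, and using the congestion bound $\sum_{h'}\sum_{S\in \mcN_{h'}} F_S(f) \le \kappa_1 \U_f$ from \Cref{def:LCExpWitness}, the total flow on $f$ is at most $\kappa_0 \kappa_1 \U_f \le \U_f/\phi$. Thus the composite flow routes $D$ with congestion $\le 1/\phi$ and dilation $\le sh$, and the routing characterization theorem yields that $A$ is $(h,s)$-length $\phi$-expanding in $G$; the $(\le h,s)$-case follows identically by applying the argument at each scale $h' \le h$.

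The main technical subtlety is not the argument itself but the book-keeping around (i) ensuring that $h'_{u,v}$ can indeed be chosen so that $(u,v)$ lies inside a single cluster of $\mcN_{h'_{u,v}}$ (requiring that $h$ can, without loss of generality, be rounded to a power of two), and (ii) tracking constants in the dilation so that the $s_0 s_1 \le s$ product cleanly implies dilation at most $sh$ when invoking \Cref{thm:flow character} (which itself has a factor of $2$ slack on $s$ that the definition of a witness is designed to absorb).
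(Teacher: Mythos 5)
The paper itself provides no proof here (the source marks the spot with a ``write proof'' placeholder), so there is nothing to compare against directly; but your plan---route each cluster's sub-demand $D_S$ through the router $R_S$ from \Cref{def:router}, then substitute each router edge by the $G$-flow paths that embed it, giving composite dilation at most $s_0 s_1 h'$ and composite congestion at most $\kappa_0\kappa_1$ via the embedding congestion bound from \Cref{def:LCExpWitness}---is clearly the intended argument, and both the per-cluster decomposition of $D$ and the congestion accounting $\sum_{e\in R_S}(F'_S(e)/\U_{R_S}(e))F_S^e(f)\le \kappa_0 F_S(f)$ are sound.

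The one genuine gap is the constant-factor slack you flag in your closing remark (ii) but do not actually close. The contrapositive of the second bullet of \Cref{thm:flow character} yields $(h,s)$-length $\phi$-expansion only from routability with congestion at most $1/(2\phi)$ and dilation at most $sh/2$, whereas the witness constraints $s_0 s_1\le s$ and $\kappa_0\kappa_1\le 1/\phi$ in \Cref{def:LCExpWitness} give you only congestion $\le 1/\phi$ and dilation $\le sh$; your claim that the witness definition ``is designed to absorb'' this slack is not supported by the definition as written. The loss is not peculiar to \Cref{thm:flow character}: arguing directly from \Cref{dfn:CDSparse}, for an $hs$-separated pair $(u,v)$ every flow path $P$ used for that pair satisfies $\ell(P) + hs\cdot C(P) > hs$, so $C(P) > 1 - \ell(P)/(hs)$, and one needs $\ell(P)\le sh/2$ (i.e.\ $s_0 s_1\le s/2$) to get $C(P) > 1/2$; the bound $|C|\ge \frac{1}{\kappa_0\kappa_1}\sum_P F(P) C(P) > \frac{1}{2\kappa_0\kappa_1}\sep_{hs}(C,D)$ then gives sparsity $\ge\phi$ only when $\kappa_0\kappa_1\le 1/(2\phi)$. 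In the $(\le h,s)$ case your rounding $h'_{u,v}\le 2 d_G(u,v)$ costs another factor of two, since for a demand that is $h''$-length the cluster scale can be as large as $2h''$. None of this damages the structure of your argument, but the lemma as stated does not literally follow from \Cref{def:LCExpWitness} as stated. You should either note explicitly that the witness parameters must satisfy $s_0 s_1\le s/4$ and $\kappa_0\kappa_1\le 1/(2\phi)$ for the claim to hold verbatim, or weaken the conclusion to $(h,O(s))$-length $\Omega(\phi)$-expanding, which is in any case all the downstream uses require.
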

\future{write proof}
The following gives our definition of a witnessed length-constrained expander decomposition.
\begin{definition}[Witnessed Length-Constrained Expander Decompositions]
A witnessed $(h,s)$-length (resp.\ $(\leq h,s)$-length) $\phi$-expander decomposition consists of an $(h,s)$-length (resp.\ $(\leq h,s)$-length) $\phi$-expander decomposition $C$ along with an $(h,s)$-length (resp.\ $(\leq h,s)$-length) $\phi$-expansion witness for $G - C$.
\end{definition}

\section{Main Result Restated: Length-Constrained ED Algorithm}\label{sec:algGuarantees}
Having formally defined length-constrained expander decompositions in the previous section, we now restate the guarantees of our algorithm for computing these decompositions for convenience.

The following gives the main result of our work.
\mainAlgThm*
\noindent The proof of \Cref{thm:expdecomp exist} uses several new notions and techniques that we introduce in the subsequent sections. The final proof of \Cref{thm:expdecomp exist} is in \Cref{sec:spiral}. We show that we can additionally achieve linkedness in \Cref{sec:graphProps}.

\future{Put the algorithm based on balanced cuts here and then say will give algo later}

\section{Length-Constrained Expanders as Embedded Expander Powers}\label{sec:expanderPowersCharacterization}

Towards understanding whether large sparse length-constrained cuts quickly yield length-constrained expander decompositions, we begin by providing a new characterization of length-constrained expander decompositions in terms of classic expanders. One of the great benefits of such a characterization is that it will allow us to bring well-studied tools from the classic expander setting to bear on length-constrained expanders. 

\paragraph*{Challenge.} Achieving such a characterization may seem impossible. Length-constrained expanders are fundamentally different objects from classic expanders: as earlier mentioned, many graphs which are length-constrained expanders are not classic expanders and adding length constraints destroys the structure of many otherwise well-behaved objects.
 \paragraph*{Basic Version of Result.} We show that, nonetheless, length-constrained expander decompositions are exactly those graphs that admit low congestion \emph{and dilation} embeddings of regular expanders in every local neighborhood. More formally, if $H$ is a constant-degree $\Omega(1)$-expander, we show that a graph $G$ is an $(h,s)$-length $\phi$-expander iff it is possible to embed $H$ into every $h$-neighborhood in $G$ with congestion about $\frac{1}{\phi}$ using paths of length at most $hs/\Theta(\log n)$.

\paragraph*{Full Result.}
The above characterization has two downsides. First, assuming integer lengths, such an embedding only makes sense when $hs \geq \Omega(\log n)$. However, $(h,s)$-length $\Omega(1)$-expanders are only interesting and distinct from classic $\Omega(1)$-expanders when $hs \ll O(\log n)$. Second, the embedding is brittle in the sense that if even a small part of the graph is not embeddable then the entire graph is declared to be not length-constrained expanding.

We address these issues by significantly strengthening this result. First, we allow for $hs = o(\log n)$ by considering embeddings of the ``$k$th power graph'' $H^k$ of classic expander $H$ rather than $H$ itself. In particular, this allows uxs to trade off between congestion and dilation, showing $G$ is an $(h,s)$-length $\phi$-expander iff $H^k$ can be embedded into every neighborhood with congestion at most about $\frac{1}{\phi \cdot 2^k}$ using paths of length at most $hs \cdot k / \Theta(\log n)$. Observe that, assuming integer lengths, such an embedding makes sense as long as $hs \cdot k \geq \Theta(\log n)$. Second, we show that if such embeddings are \emph{mostly} possible in $G$, then \emph{most} of $G$ is an $(h,s)$-length $\phi$-expander. 

\paragraph*{Techniques.} The main proof idea for this result is as follows. First, we argue that we can route in $H^k$ with low congestion over paths of length $O(\log n / k)$. Then, if our embedding exists, we can embed these routes in the expander powers into the original graph with low dilation and congestion. This gives low congestion and dilation routes in the original graph, showing it is a length-constrained expander. We formalize these embeddings with the idea of the ``neighborhood router demand.'' 

\future{BH: To get this algorithmic group node-weighting valuers for powers of 2; embed each group as is with capacities proportional to node weighting value; now just need to route between node weightings; then add a capacitied edge between each blob with capacity the smaller of the two sides; unpack this edge into a matching; now number of edges is proportional to support size of node weighting; main thing this is needed for is algorithmic pruning; would put the algorithmic version of this in its own section (try and set up proofs so not too much needed to rehash)}

\future{Parallel to regular expanders: there have sense of witnessing graph; question becomes what is sense of witness graph of length-bounded expanders}

\future{BH: name change: What's to change neighborhood routing demand to a witness graph then it's demand of witness}

\begin{restatable}{thm}{embedding}\label{lem:neighRouting}
Suppose we are given a graph $G$, node-weighting $A$, $h \geq 1$ and parameters $k, k' \geq 1$. Then:
\begin{itemize}
    \item \textbf{Length-Constrained Expander Implies Embedding:} If $G$ is an $(hk',s)$-length $\phi$-expander for $A$ then \future{ANY NEIGHBORHOOD ROUTING DEMAND (up to loss parameters, no?)} $\nrd$ is routeable with congestion $O(\frac{\log N}{\phi})$ and dilation $2k' \cdot hs$.
    \item \textbf{Embedding Implies Length-Constrained Expander:} Let \future{D be ANY neighborhood routing demand and let } $G'$ be $G$ with a moving cut applied. If a $(1-\epsilon)$ fraction of  $\nrd$ is routable in $G'$ with congestion $\frac{1}{\phi}$ and dilation $hs$ then there is a node-weighting $A' \preccurlyeq A$ of size $|A'| \geq |A| \cdot \left(1 - \epsilon'  \right)$ such that $A'$ is $\left(h, s' \right)$-length $\phi'$-expanding in $G'$ where $\epsilon' =  O\left(\epsilon \cdot 2^{O(k)} \cdot N^{O(1/k')} \log N\right)$, $s' = O\left(s \cdot \frac{\log N }{ k} \right)$ and $\phi' = \Omega\left(\phi \cdot 2^{-O(k)} \cdot N^{-O(1/k')} \cdot \log^{-2} N \right)$.
\end{itemize}
\end{restatable}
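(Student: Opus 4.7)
The plan is to reduce both directions to the routing characterization of length-constrained expanders (\Cref{thm:flow character}) composed with the routing properties of expander powers (\Cref{lem:expanPower,lem:prunePowerToRouter}). I first fix an interpretation of \nrd: using \Cref{thm:cover-separation-factor-existential} with covering radius $h$ and an appropriate separation factor, build a neighborhood cover $\mcN$ of $G$ whose clusters have diameter at most $2hk'$ and whose width is $\omega = N^{O(1/k')}\log N$. For each cluster $S\in\mcN$, pick a constant-degree $\Omega(1)$-expander $H_S$ on $\supp(A_S)$ with edge capacities scaled so that $H_S$ is $A_S$-respecting; the demand \nrd is the union over $S$ of the edge-capacity demand of $H_S^k$, which is $(2hk')$-length and $O(\omega \cdot 2^{O(k)})\cdot A$-respecting. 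The first bullet then follows immediately by applying \Cref{thm:flow character} to \nrd: since $G$ is an $(hk',s)$-length $\phi$-expander for $A$, we obtain a routing with dilation $(2hk')\cdot s=2k'hs$ and congestion $O(\log N/\phi)$ after absorbing the load factor $\omega\cdot 2^{O(k)}$ into the constants.

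For the second bullet, given a $(1-\eps)$-approximate embedding of \nrd in $G'$ with congestion $\phi^{-1}$ and dilation $hs$, I would prune as follows. Let $B_S\subseteq E(H_S^k)$ be the edges whose $\nrd$-capacity is not fully embedded in $G'$; the total capacity in $\bigcup_S B_S$ is at most $\eps\cdot|\nrd|=O(\eps\cdot\omega\cdot 2^{O(k)}\cdot|A|)$. Applying \Cref{lem:prunePowerToRouter} inside each cluster to $H_S^k$ against $B_S$ gives a surviving subset $V'_S\subseteq\supp(A_S)$ on which $H_S^k[V'_S]$ is still an $O(\log N/k)$-step, $O(\log N/k)$-congestion router, with $|\supp(A_S)|-|V'_S|\le O(k\Delta^{2k})\cdot|B_S|$. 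Declare a vertex pruned iff it is removed in every cluster covering it, and define $A'$ as $A$ restricted to unpruned vertices; the total weight loss is at most $O(\omega\cdot 2^{O(k)})\cdot\eps\cdot|A|=\eps'\cdot|A|$, matching the claimed bound on $\eps'$.

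To certify that $A'$ is $(h,s')$-length $\phi'$-expanding in $G'$, I would take any $h$-length $A'$-respecting demand $D^\star$. For each support pair $(u,v)$, the covering property of $\mcN$ combined with the definition of $A'$ produces a cluster $S\in\mcN$ with $u,v\in V'_S$; route the restriction of $D^\star$ to this pair inside $H_S^k[V'_S]$ with $O(\log N/k)$ dilation and $O(\log N/k)$ congestion, then substitute each router edge by its $(hs)$-length, $\phi^{-1}$-congestion embedding in $G'$. The composite flow routes $D^\star$ with dilation $O(hs\log N/k)=s'h$ and congestion $O(\log^2 N\cdot\omega\cdot 2^{O(k)}/(\phi\cdot k))=1/\phi'$, completing the proof.

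The main obstacle will be the pruning analysis. Per-cluster robustness is immediate from \Cref{lem:prunePowerToRouter}, but the width-$\omega$ cover makes a naive summation of per-cluster losses blow up by an extra factor of $\omega$ in $\eps'$. The crucial observation is that a vertex contributes to $|A|-|A'|$ only when it is pruned in \emph{every} cluster covering it, which collapses what would be a product into a sum-by-minimum and yields exactly the stated $\eps'$. A secondary subtlety is ensuring that for every support pair of the test demand $D^\star$ some covering cluster has both endpoints surviving, so that the composed routing is feasible in $G'$; this follows from the definition of $A'$ together with the covering property of $\mcN$, but must be verified carefully against the choice of separation factor so that the cluster-internal routing dilation budget $2hk'$ and the embedding dilation $hs$ combine consistently with $s'$.
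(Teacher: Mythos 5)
Your first bullet and the overall plan (route via the routing characterization, prune the embedded expander powers, recompose) match the paper's approach. The second bullet has a genuine gap in the definition of $A'$ that breaks the routing step.

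You define a vertex as pruned iff it is removed in \emph{every} cluster covering it, and set $A'$ to be $A$ restricted to unpruned vertices. Under this definition the claim ``the covering property of $\mcN$ produces a cluster $S\in\mcN$ with $u,v\in V'_S$'' is false: if $u$ survives only in cluster $S_1$ and $v$ survives only in cluster $S_2$, both $u$ and $v$ are unpruned (and thus retain their full weight in $A'$), yet no cluster covering the pair $(u,v)$ has both endpoints surviving, so your composed routing cannot be built. The paper instead works with $A(u)$ ``copies'' of each $u$ in each cluster, prunes copies, and defines $A'(u):=\min_{i,j}|\copies_{i,j}(u)\cap V'_{i,j}|$; this $\min$ over clusters is the key to making the routing feasible, since it guarantees that any pair $(u,v)$ with $A'(u),A'(v)>0$ has sufficiently many surviving copies in \emph{every} cluster (in particular the one that covers the pair). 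Your binary ``pruned only if removed everywhere'' rule is optimistic in the wrong direction: it makes $|A'|$ larger but the resulting $A'$ need not be expanding. The observation you flag as ``crucial'' (sum-by-minimum collapsing the width factor) is actually handled in the paper by the bound $A(v)-A'(v)=\max_{i,j}(A(v)-A'_{i,j}(v))\le\sum_{i,j}(A(v)-A'_{i,j}(v))$, i.e.\ by $\max\le\sum$, not by requiring removal everywhere.

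Two secondary issues. First, you define $B_S$ as the edges of $H_S^k$ whose capacity is \emph{not fully} embedded; this gives no control on $|B_S|$ (a $(1-\eps)$-fractional routing can leave every edge slightly under-routed, making $B_S$ everything). The paper calls a pair unsuccessfully embedded only if less than half its demand is routed, and then bounds the total unsuccessfully embedded demand by $2\eps|\nrd|$ via Markov; this threshold is what makes the pruning input small. Second, you instantiate $H_S$ directly on $\supp(A_S)$ with scaled edge capacities and then want to apply \Cref{lem:prunePowerToRouter}, but that pruning statement is for unit-capacity expander powers; the paper's copies construction is precisely what reduces the node-weighted setting to the unit-capacity one so the pruning lemma applies.
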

\noindent Notice that the above states that if one can embed expander powers into \emph{most} of a graph then \emph{most} of the graph is a length-constrained expander. $k$ and $k'$ above correspond to the power of the expanders that we take and the diameter of neighborhood covers that we use respectively.

\subsection{Formalizing the Embedding via the Neighborhood Router Demand}\label{sec:NRD}
We begin by formalizing the aforementioned embedding with what we call the neighborhood router demand, $\nrd$.

\begin{figure}
    \centering
    \begin{subfigure}[b]{0.32\textwidth}
        \centering
        \includegraphics[width=\textwidth,trim=0mm 0mm 0mm 0mm, clip]{./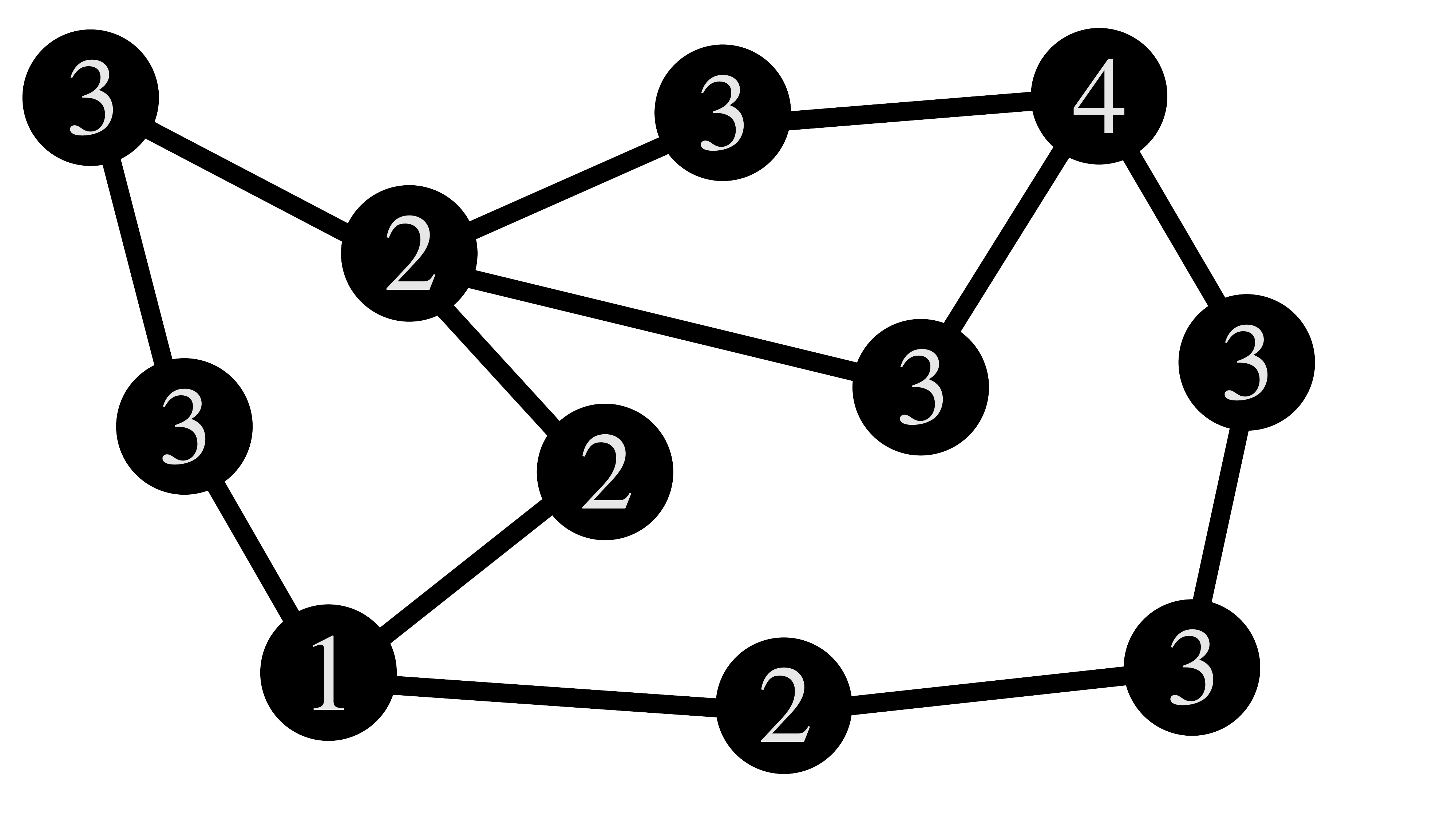}
        \caption{Cluster $S_{i,j}$.}\label{sfig:NRD1}
    \end{subfigure}    \hfill
    \begin{subfigure}[b]{0.32\textwidth}
        \centering
        \includegraphics[width=\textwidth,trim=0mm 0mm 0mm 0mm, clip]{./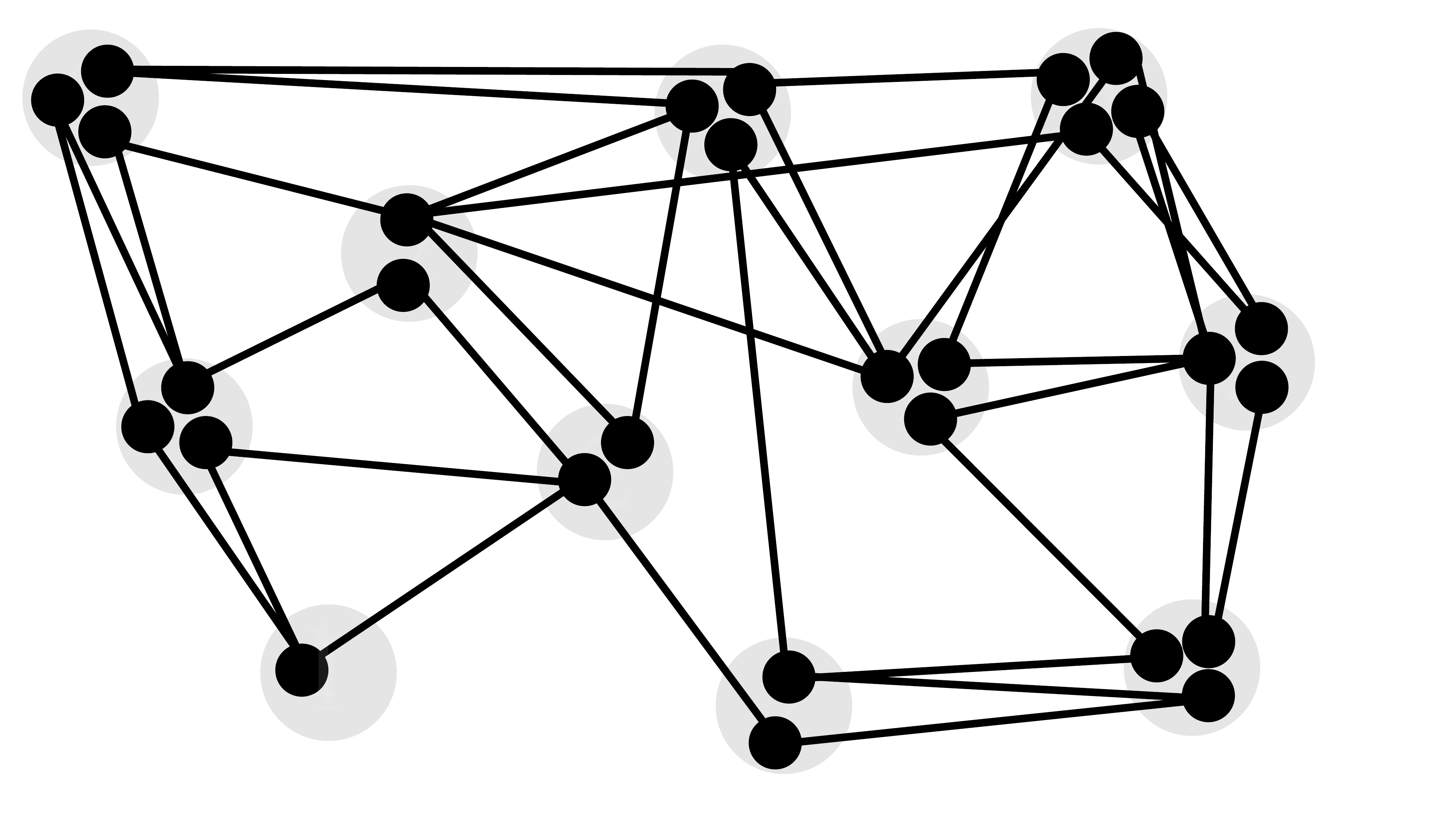}
        \caption{Graph $H_{i,j}^k$.}\label{sfig:NRD2}
    \end{subfigure}    \hfill
    \begin{subfigure}[b]{0.32\textwidth}
        \centering
        \includegraphics[width=\textwidth,trim=0mm 0mm 0mm 0mm, clip]{./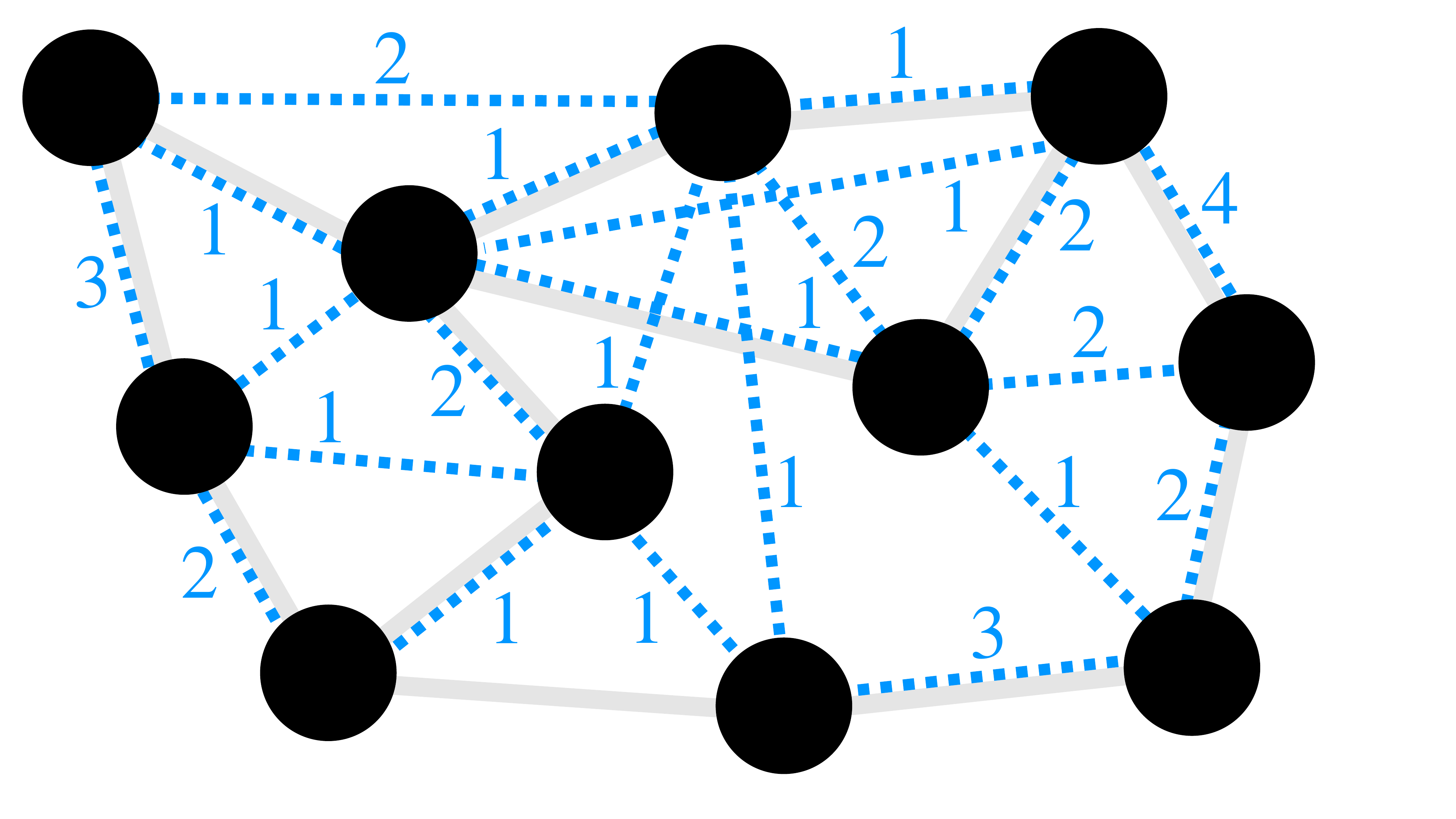}
        \caption{Demand $\Delta^k \cdot D_{i,j}$.}\label{sfig:NRD3}
    \end{subfigure}
    \caption{Our neighborhood router demand for one cluster $S_{i,j}$. \ref{sfig:NRD1} gives the cluster $S_{i,j}$ with node $v$ labeled with $A(v)$. \ref{sfig:NRD2} gives the graph $H_{i,j}^k$ we want to embed into $S_{i,j}$. \ref{sfig:NRD3} gives the demand $D_{i,j}$ corresponding to this embedding (times $\Delta^k$) in dashed blue. Our neighborhood router demand is then computed by doing this for all such clusters and scaling down appropriately.}\label{fig:nrd}
\end{figure}

Formally, suppose we are given a graph $G = (V, E)$ and a node-weighting $A$ of $G$ as well as a length bound $h$ and parameters $k, k' \geq 1$. To compute the neighborhood router demand we first compute a neighborhood cover $\mcN = \{\mcS_1, \mcS_2 \ldots \}$ with covering radius $h$, diameter $k' \cdot h$, load $N^{O(1/k')} \log N$ and separation factor $2$; such a neighborhood cover exists by \Cref{thm:cover-separation-factor-existential}. 

We now construct a graph $H_{i,j}$ associated with each cluster $S_{i,j} \in \bigcup_i \mcS_i$. The vertex set of $H_{i,j}$ consists of $A(u)$ copies for each vertex $u \in V$ for $\sum_{u \in S_{i,j}} A(u)$ total vertices in $H_{i,j}$. We let $\copies_{i,j}(u)$ denote these copies of $u$ in $H_{i,j}$ and let $V_{i,j} := \bigcup_{u \in S_{i,j}}\copies_{i,j}(u)$ be all such copies. Then, we let $H_{i,j} = (V_{i,j}, E_{i,j})$ be a fixed but arbitrary $\Omega(1)$-expander with max degree $\Delta = O(1)$ with vertex set $V_{i,j}$; such graphs are well-known to exist by e.g.\  \cite{alon2008elementary}.

Likewise, we let $H_{i,j}^k = (V_{i,j}, E_{i,j}^k)$ be the $k$th power graph of $H_{i,j}$ as defined in \Cref{sec:routers}. Lastly, for $u, v \in V$ we let $$E_{i,j}^k(u,v) := \{ \{u', v'\} \in E': u' \in \copies_{i,j}(u), v' \in \copies_{i,j}(v)\}$$ be edges of $H^{k}_{i,j}$ between copies of $u$ and $v$. We then define the neighborhood router demand associated with cluster $S_{i, j}$ on vertices $u,v \in V$ as
\begin{align*}
    D_{i,j}(u, v) := \frac{1}{\Delta^k}  \cdot |E_{i,j}^k(u,v)|.
\end{align*}
\noindent See \Cref{fig:nrd} for an illustration of $D_{i,j}$.

Similarly, we define the entire neighborhood router demand as
\begin{align*}
    \nrd := \frac{1}{\load_\mcN} \cdot \sum_{i,j} D_{i,j}.
\end{align*}

We verify the basic properties of \nrd. 
\begin{lemma}\label{lem:nrdProps}
\nrd is $A$-respecting and $(hk')$-length. Furthermore $|\nrd| \geq \frac{1}{N^{O(1/k')} \cdot \log N} \cdot |A|$.
\end{lemma}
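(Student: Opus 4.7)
The plan is to prove the three assertions in turn, in order of increasing subtlety.

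For the \emph{$A$-respecting} claim, I would bound the load of $\nrd$ at an arbitrary vertex $v$. Because each clustering in $\mcN$ partitions its ground set, $v$ lies in at most $\load_\mcN$ clusters in total. For each cluster $S_{i,j}\ni v$, the out-load of $D_{i,j}$ at $v$ equals
$$\frac{1}{\Delta^k}\sum_{v'\in\copies_{i,j}(v)}\deg_{H_{i,j}^k}(v'),$$
and since $H_{i,j}$ has max degree $\Delta$, the $k$-ball around any vertex in $H_{i,j}$ has size at most $\Delta^k$, so each of the $A(v)$ copies contributes at most $\Delta^k$. Thus this per-cluster contribution is at most $A(v)$. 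Summing over the at most $\load_\mcN$ clusters containing $v$ and dividing by the $\frac{1}{\load_\mcN}$ normalization gives total out-load at most $A(v)$; the in-load bound is symmetric as each $H_{i,j}^k$ is undirected.

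For the \emph{length} property, any $\nrd(u,v)>0$ must come from some $D_{i,j}(u,v)>0$, which forces $u, v \in S_{i,j}$. Since $\mcN$ has diameter $k'h$, this gives $d_G(u,v) \leq k'h$. For the \emph{size lower bound}, I would combine two ingredients: (i) the covering property of $\mcN$ gives $\sum_{i,j}|V_{i,j}| \geq |A|$, since every $v$ with $A(v)>0$ is contained in at least one cluster and contributes $A(v)$ copies to it; and (ii) the $\Omega(1)$-expansion and bounded degree of $H_{i,j}$ ensure that the $k$-ball of any vertex in $H_{i,j}$ has size $\Omega(\Delta^k)$ (either through the exponential growth guaranteed by edge expansion or through saturation when the ball has reached $\Omega(|V_{i,j}|)$), so the average degree in $H_{i,j}^k$ is $\Omega(\Delta^k)$ and consequently $|D_{i,j}| = \Omega(|V_{i,j}|)$. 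Summing over clusters and dividing by $\load_\mcN = N^{O(1/k')}\log N$ yields $|\nrd| \geq \Omega(|A|)/(N^{O(1/k')}\log N)$.

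The main obstacle will be establishing the clean bound $|D_{i,j}|=\Omega(|V_{i,j}|)$. For generic $\Omega(1)$-expanders the $k$-ball in $H_{i,j}$ only grows as $(\Delta-1)^k$ rather than $\Delta^k$, introducing a $(1-1/\Delta)^k$ gap against the normalization $\frac{1}{\Delta^k}$ in the definition of $D_{i,j}$. For constant $\Delta, k$ this loss is absorbed into the hidden constant, but for larger $k$ one must either pick a near-optimal $H_{i,j}$ whose balls grow close to $\Delta^k$, or acknowledge a $2^{O(k)}$ multiplicative loss---which is consistent with the $2^{O(k)}$ factors already appearing in $\epsilon'$ and $\phi'$ of \Cref{lem:neighRouting}.
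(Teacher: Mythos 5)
Your proof is correct and follows essentially the same route as the paper for all three claims. For $A$-respecting, you bound the per-cluster out-load at $v$ by $\frac{1}{\Delta^k}\sum_{v'\in\copies_{i,j}(v)}\deg_{H_{i,j}^k}(v')\leq A(v)$ via the max-degree bound on $H_{i,j}^k$ and then divide by the number of clusters containing $v$, which is exactly what the paper does (it writes out the double sum $\sum_{u}\sum_{u'\in\copies(u)}\sum_{v'\in\copies(v)}$ explicitly, but the content is identical). The length claim is also identical. For the size lower bound, your analysis is actually \emph{more} careful than the paper's. The paper asserts that "each copy has degree $\Omega(1)$ and so contributes $\Omega(A(u)/\load_\mcN)$ to $|\nrd|$," but, read literally, degree $\Omega(1)$ in $H_{i,j}^k$ only gives a contribution of $\Omega(1/(\Delta^k\load_\mcN))$ per copy, so the paper's own argument appears to lose a $\Delta^{-k}=2^{-O(k)}$ factor relative to its stated conclusion — exactly the discrepancy you identify. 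You are right that the stated lower bound requires each copy to have degree $\Omega(\Delta^k)$ in $H_{i,j}^k$, which a generic $\Omega(1)$-expander of maximum degree $\Delta$ does not supply (balls grow as $(1+\Omega(1))^k$, not $\Omega(\Delta^k)$), and you are also right that this $2^{O(k)}$ slack is harmless for every downstream use: in \Cref{lem:neighRouting} and \Cref{lem:EDLemma} the lower bound on $|\nrd|$ is always multiplied by further $2^{O(k)}$ factors, so $2^{O(k)}\cdot 2^{O(k)}=2^{O(k)}$ and the final asymptotics are unchanged. In short: your proposal is sound, takes the same approach as the paper, and correctly surfaces an imprecision in the paper's statement of the size lower bound that is ultimately benign.
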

\begin{proof}
We prove that \nrd is $hk'$-length and then that it is $A$-respecting.
\begin{itemize}
    \item $\nrd$ is $hk'$-length since each $D_{i,j}(u,v) > 1$ only if $u,v \in S_{i,j}$ for some $i$ and $j$ and each $S_{i,j}$ has diameter at most $hk'$ by our assumption that $\mcN$ has diameter $hk'$.
    \item $\nrd$ is $A$-respecting: recall that to be $A$-respecting we must show that for every $v$ we have $\max\left\{\sum_u \nrd(u,v), \sum_u \nrd(v, u) \right\} \leq A(v)$; consider such a $v$; we will show $\sum_u \nrd(v,u) \leq A(v)$ (showing $\sum_u \nrd(u,v) \leq A(v)$ is symmetric); observe that since each vertex in $H$ has degree at most $\Delta$, we know that each vertex in $H^k_{i,j}$ has degree at most $\Delta^k$ and so for each $i$ and $j$ we have
    \begin{align}
        \sum_u D_{i,j}(v,u) &= \sum_{u} \sum_{u' \in \copies_{i,j}(u)} \sum_{v' \in \copies_{i,j}(v)} \frac{1}{\Delta^k} \cdot \mathbbm{1}(\{u', v'\} \in E(H^k_{i,j}))\nonumber\\\nonumber
        &= \sum_{v' \in \copies_{i,j}(v)} \frac{1}{\Delta^k} \cdot \deg_{H^k_{i,j}}(v')\\\nonumber
        &\leq \sum_{v' \in \copies_{i,j}(v)} 1\\ 
        &\leq A(v)\label{eq:eb1}
    \end{align}
    The claim then follows from the fact that no vertex is in more than $\load \mcN$-many clusters. Namely, applying \Cref{eq:eb1} and the definition of $\load{\mcN}$:
    \begin{align*}
        \nrd(v,u) &= \frac{1}{\load{\mcN}} \cdot \sum_{i,j} D_{i,j}(v, u)\\
        & \leq \frac{1}{\load{\mcN}} \cdot \sum_{i,j} A(v)\\
        & \leq A(v).
    \end{align*}
\end{itemize}

Lastly, we argue the lower bound on $|\nrd|$. Observe that each time a vertex $u$ is in $S_{i,j}$, it is responsible for $A(u)$ copies in each $H_{i,j}^k$ where each copy has degree $\Omega(1)$ and so contributes at least $\Omega(A(u)/\load_{\mcN})$ to $|\nrd|$.
\end{proof}

\subsection{Proving Length-Constrained Expanders are Embedded Expander Powers}

Having formally defined the neighborhood router demand, we proceed to show the main theorem of this section (\Cref{lem:neighRouting}): the extent to which a graph is a length-constrained expander is more or less captured by how well the neighborhood router demand can be routed and, in particular, how well expander powers embed into the input graph.

\embedding*
\begin{proof}
The fact that if $G$ is an $(hk',s)$-length $\phi$-expander for $A$ then $\nrd$ can be routed with the stated path length and congestion is immediate from \Cref{thm:flow character} and \Cref{lem:nrdProps}

We now turn to the second point; namely we argue that if a $(1-\epsilon)$ fraction of $\nrd$ can be routed with congestion $\frac{1}{\phi}$ and dilation $hs$ in $G'$ then there is a node-weighting $A' \preccurlyeq A$ of size at least $|A| \cdot (1-\epsilon')$ that is $(h,s')$-length $\phi'$-expanding in $G'$. To argue that such an $A'$ exists, by \Cref{thm:flow character} it suffices to show that there is an $A'$ of size at least $|A| \cdot (1 - \epsilon')$ where every $A'$-respecting $h$-length demand can be routed with congestion at most $O(\frac{1}{\phi'})$ and dilation at most $O(h s')$. We proceed to argue this $A'$ exists.

We let $F$ be the flow in $G'$ that routes at least a $(1-\epsilon)$ fraction of \nrd with congestion $\frac{1}{\phi}$ and dilation $hs$. The basic idea is to map $F$ into each $H^k_{i,j}$, perform (normal) expander pruning in $H^k_{i,j}$ and then map the large expanding subset in $H^k_{i,j}$ back into $G'$.

We say that the pair $\{u,v\} \subseteq V$ is \emph{unsuccessfully embedded} if $F(u,v) < .5 \cdot \nrd(u,v)$ and let $\bar{E}$ be all unsuccessfully embedded pairs. For a fixed $S_{i,j}$, this allows us to define $\bar{E}_{i,j}^k$ as the edges of $H^k_{i,j}$ corresponding to all unsuccessfully embedded pairs as follows:
\begin{align*}
\bar{E}_{i,j}^k := \bigcup_{\{u,v\} \in \bar{E}} E_{i,j}^k(u,v).
\end{align*}



We proceed to define our node weights $A' \preccurlyeq A$ of size at least $(1-\epsilon') \cdot |A|$. In short, we will find this subset by applying expander pruning on $H^k_{i,j}$ where $\bar{E}_{i,j}^k$ gives us the edges we deleted as input to expander pruning. Specifically, letting $\hat{H}^k_{i,j}$ be $H^k_{i,j}$ with $\bar{E}_{i,j}$ deleted, we know by \Cref{lem:prunePowerToRouter} that there is a subset of vertices $V'_{i,j} \subseteq V_{i,j}$ connected in $\hat{H}^k_{i,j}[V'_{i,j}]$ satisfying
\begin{enumerate}
    \item $\hat{H}^k_{i,j}[V'_{i,j}]$ is a $O(\frac{\log N}{k})$-step $O(\frac{\log N}{k})$-router;
    \item $|V'_{i,j}| \geq |V_{i,j}| - k \cdot 2^{O(k)} \cdot |\bar{E}_{i,j}|$.
\end{enumerate}
This subset $V'_{i,j}$ naturally corresponds to a node-weighting $A_{i,j}'$. In particular, we let $A_{i,j}'$ on $v$ be:
\begin{align*}
    A_{i,j}'(v) := |\copies_{i,j}(v) \cap V'_{i,j}|
\end{align*}
and then let $A'$ on $v$ be defined as:
\begin{align*}
    A'(v) := \min_{i,j} A_{i,j}'(v).
\end{align*}
First, we claim that $A' \preccurlyeq A$. Observe that for each $v$ we have:
\begin{align*}
    A'(v) &=  \min_{i,j} A_{i,j}'(v)\\
    & \leq  \max_{i,j} A_{i,j}'(v)\\
    & =  \max_{i,j} |\copies_{i,j}(v) \cap V'_{i,j}|\\
    & \leq A(v).
\end{align*}

Next, we claim that $|A'| \geq |A| \cdot \left(1 - \eps \cdot 2^{O(k)} \cdot N^{O(1/k')} \cdot \log N\right)  $. 

Let $\barnrd$ be all the demand that $F$ does not satisfy to extent at least $.5$; that is, $\barnrd$ on $(u, v)$ is $\sum_{\{u,v\} \in \bar{E}} \nrd(u,v)$. Observe that by an averaging argument and our assumption that at least a $(1-\epsilon)$ fraction of \nrd is routed we know that $|\barnrd| \leq 2 \epsilon \cdot |\nrd|$. On the other hand, each pair of vertices $\{u,v\}$ in $\bar{E}$ (i.e.\ each unsuccessfully embedded pair) corresponds to at most $|E_{i,j}(u,v)|$ edges in $H^k_{i,j}$; and so
\begin{align}
    \sum_{i,j} |\bar{E}_{i,j}| & = \sum_{i,j} \sum_{\{u, v\} \in \bar{E}} |E_{i,j}(u,v)|\nonumber\\
    &= \sum_{i,j} \Delta^k \sum_{\{u, v\} \in \bar{E}} D_{i,j}(u,v)\nonumber\\
    &= \Delta^k \cdot \load_{\mcN} \cdot \sum_{\{u, v\} \in \bar{E}} \nrd(u,v) \nonumber\\
    &= \Delta^k \cdot \load_{\mcN} \cdot |\barnrd| \nonumber\\
    &\leq \Delta^k \cdot \load_{\mcN} \cdot 2 \epsilon \cdot |\nrd|\nonumber\\
    & \leq \epsilon \cdot 2^{O(k)} \cdot N^{O(1/k')} \cdot \log N \cdot |\nrd| .\label{eq:unembeddedEdges}
\end{align}
where in the last line we used our choice of neighborhood cover with load $N^{O(1/k')} \log N$ and the fact that $\Delta = O(1)$.



Thus, combining \Cref{eq:unembeddedEdges}, the fact that $|V'_{i,j}| \geq |V_{i,j}| - k \cdot 2^{O(k)} \cdot |\bar{E}_{i,j}|$ for each $i$ and $j$ and the fact that $|\nrd| \leq |A|$ since $\nrd$ is $A$-respecting (as proved in \Cref{lem:nrdProps}), we have
\begin{align*}
   |A'| &= \sum_v A'(v)\\
   & = \sum_v \min_{i,j} |\copies_{i,j}(v) \cap V_{i,j}'|\\
   & = \sum_v A(v)-\max_{i,j} |\copies_{i,j}(v) \setminus V_{i,j}'|\\
    & \geq \sum_v \left[ A(v)-\sum_{i,j} |\copies_{i,j}(v) \setminus V_{i,j}'| \right]\\
    &= |A| - \sum_{i,j}\left[|V(H^k_{i,j})| - |V_{i,j}'| \right]\\
   & \geq |A| - k \cdot 2^{O(k)} \cdot \sum_{i,j}  |\bar{E}_{i,j}|\\
    & \geq |A| \cdot \left(1 - \epsilon \cdot 2^{O(k)} \cdot N^{O(1/k')} \cdot \log N \right)
\end{align*}



Lastly, we claim that $A'$ is $(h,s')$-length $\phi'$-expanding in $G'$. Recall that $s' = O\left(s \cdot \frac{\log N}{k}\right)$ and $\phi' = \Omega\left(\phi \cdot 2^{-O(k)} \cdot N^{-O(1/k')} \log^{-2} N \right)$ By \Cref{thm:flow character} it suffices to argue that every $h$-length $A'$-respecting demand can be routed in $G'$ with congestion at most $O\left(\frac{1}{\phi'}\right)$ and dilation at most $O(hs')$. 

Consider an $A'$-respecting $h$-length demand $D$ in $G'$. As above, let $F$ be our flow which certifies that at least a $(1-\eps)$ fraction of $\nrd$ can be routed in $G'$. The basic idea is to route $D$ by treating $F$ as an embedding of each power graph $H^k_{i,j}$ into $G'$ and then routing in each $H^k_{i,j}$. We will use this strategy to construct a flow $\hat{F}$ which routes $D$ by routing it in each $H^k_{i,j}$ and then using $F$ to project this routes back into $G'$.

More formally, fix an $S_{i,j}$. We proceed to define a demand $D'_{i,j}$ on $\hat{H}^k_{i,j}[V'_{i,j}]$. In particular, for each pair of vertices $u', v' \in V'_{i,j}$ where $u' \in \copies(u)$ and $v' \in \copies(v)$ we let 
\begin{align*}
   D'_{i,j}(u',v'):= 
       \begin{cases}
           \frac{D(u,v)}{A_{i,j}'(u) \cdot A_{i,j}'(v)}  & \text{if $A'(u) \cdot A'(v) \neq 0$}\\
           0 &\text{otherwise}
       \end{cases}
\end{align*}
and so by definition of $A_{i,j}'(u)$ we have
\begin{align*}
    \sum_{u' \in \copies_{i,j}(u)}\sum_{v' \in \copies_{i,j}(v)}D'_{i,j}(u',v') = D(u,v).
\end{align*}
We next claim that $D_{i,j}'$ is degree-respecting in $\hat{H}^k_{i,j}[V'_{i,j}]$. To do so we will argue that for any $u' \in V_{i,j}'$ we have $\sum_{v'}D'_{i,j}(u',v') \leq 1$; showing a symmetric upper bound on $\sum_{v'}D'_{i,j}(v', u')$ is symmetric. $u'$ must have degree at least $1$ since we know that $\hat{H}^k_{i,j}[V'_{i,j}]$ is connected and so this shows that $D_{i,j}'$ is degree-respecting. To see this observe that applying the definition of $D_{i,j}'$ and the fact that $D$ is $A'$-respecting we have:
\begin{align*}
    \sum_{v'}D'_{i,j}(u',v') &= \sum_{v \in S_{i,j}} \sum_{v' \in \copies_{i,j}(v)} D'_{i,j}(u',v')\\
    &= \frac{1}{A'(u)} \cdot \sum_{v} D(u,v)\\
    &\leq 1;
\end{align*}

Since $D_{i,j}'$ is degree respecting and since $\hat{H}^k_{i,j}[V'_{i,j}]$ is a $O(\log N/k)$-step $O(\log N/k)$-router, $D'_{i,j}$ can be routed on $\hat{H}^k_{i,j}[V'_{i,j}]$ with congestion $O(\log N/k)$ and $O(\log N/k)$ steps. Let $F'_{i,j}$ be the flow on $\hat{H}^k_{i,j}[V'_{i,j}]$ that routes $D'_{i,j}$ with congestion $O(\log N/k)$ and $O(\log N/k)$ dilation.

By using $F$ as an embedding, we can see that $F'_{i,j}$ naturally corresponds to a flow $F_{i,j}$ on $G$. Specifically, by assumption an edge of $\hat{H}^k_{i,j}[V'_{i,j}]$ is of the form $\{u',v'\}$ where $u' \in \copies_{i,j}(u)$, $v' \in \copies_{i,j}(v)$ and $F(u,v) \geq .5 \cdot \nrd(u,v)$. We can therefore project a flow path in the support of $F_{i,j}'$ into $G$ by concatenating the projection of each of its incident edges $\{u',v'\}$ to the flow paths given by $F$ in $G$ between $u$ and $v$. Below, we formalize this idea.

Consider a path $P' = (v_1', v_2', \ldots, v_l')$ in the support of $F_{i,j}'$ where $v_x' \in \copies_{i,j}(v_x)$ for each $x \in [l]$. Let $\mcP_x(P')$ be all paths between $v_x$ and $v_{x+1}$ in the support of $F$. By adding paths with multiplicity to $\mcP_x(P')$, we may assume that $F$ sends some equal flow amount $\rho_{P'}$ along each path in $\mcP_x(P')$ for every $x$. Then, letting $z$ be $\min_x |\mcP_x(P')|$ (i.e.\ the number of paths corresponding to the $v_x, v_{x+1}$ pair that has the least flow sent by $F$) we construct a collection of $z$ paths $\mcP(P')$ in $G$ gotten by selecting one path from each $x$; that is if use an arbitrary fixed ordering to the paths of $\mcP_x(P')$ and we imagine that the first $z$ paths of $\mcP_x(P')$ are $\{P_x^{(1)}, P_x^{(2)},\ldots\, P_x^{(z)}\}$ then $\mcP(P')$ consists of paths in $G$ and is
\begin{align*}
    \mcP(P') := \left\{P_1^{(y)} \oplus P_2^{(y)} \oplus \ldots \oplus P_{l-1}^{(y)} : y \in [z] \right\}
\end{align*}
where $\oplus$ is path concatenation. Observe that each path in $\mcP(P')$ is a path in $G$ of length at most $O(h \cdot \log N / k)$. Let $F_{i,j}^{(P')}$ be the flow that sends $F_{i,j}(P')$ flow by sending equal flow along each path in $\mcP(P')$ in $G$; that is
\begin{align*}
	F_{i,j}^{(P')}(P) := 
	\begin{cases}
		\frac{1}{z} \cdot F_{i,j}'(P')& \text{if $P \in \mcP(P')$}\\
		0 & \text{otherwise}
		\end{cases}
\end{align*}	


Lastly, we can construct $F_{i,j}$ as the result of doing this for all flows in the support of $F'_{i,j}$; that is,
\begin{align*}
F_{i,j} := \sum_{P' \in \supp(F'_{i,j})} F_{i,j}^{(P')}
\end{align*}

We let our final flow $\hat{F}$ to route the demand $D$ that we started with be the result of doing this for every cluster in our neighborhood cover:
\begin{align*}
	\hat{F} := \sum_{i,j} F_{i,j}
\end{align*}

It remains to argue that $\hat{F}$	routes our $A'$-respecting $h$-length demand $D$ with congestion $O\left(\frac{1}{\phi'}\right)$ and dilation $O(hs')$ in $G'$. 

We begin by arguing that $\hat{F}$ indeed routes $D$. To see this consider a pair of vertices $u$ and $v$ in $G'$ with $D(u,v) > 0$. Since $D$ is $h$-length, $\mcN$ has covering radius $h$ and applying a moving cut only increases the distances between nodes, it follows that there is some $i$ and $j$ for which $u,v \in S_{i,j}$. For this $i$ and $j$ we know that each $F_{i,j}^{(P')}$ sends $F_{i,j}'(P')$ from a copy of $u$ to a copy of $v$ and so summing across all $P'$ from copies of $u$ to copies of $v$ we can see that
\begin{align*}
		\hat{F}(u,v) &\geq F_{i,j}(u,v) \\
			&= \sum_{u' \in \copies_{i,j}(u)} \sum_{v' \in \copies_{i,j}(v)}  \sum_{P' =(u', \ldots, v')} F_{i,j}^{(P')}(u,v)\\
					&= \sum_{u' \in \copies_{i,j}(u)} \sum_{v' \in \copies_{i,j}(v)}  \sum_{P' =(u', \ldots, v')} F'_{i,j}(P') \\
		&= \sum_{u' \in \copies_{i,j}(u)} \sum_{v' \in \copies_{i,j}(v)}  F'_{i,j}(u', v') \\
		&= \sum_{u' \copies_{i,j}(u)} \sum_{v' \in \copies_{i,j}(v)}D'_{i,j}(u', v') \\
		&= D(u,v)
\end{align*}	

Our bound on the dilation of $\hat{F}$ is immediate from the fact that every path in $\mcP(P')$ for every $P' \in \supp(F_{i,j}')$ for every $i$ and $j$ has length at most $O(h  \cdot \log N / k) = O(hs')$.

Lastly, we now argue that $\hat{F}$ has congestion at most $O\left(\frac{1}{\phi'}\right) = O\left(\frac{1}{\phi} \cdot 2^{O(k)} \cdot N^{O(1/k')} \log^{2} N \right)$. For each $e' = \{u', v'\} \in E(H^k_{i,j})$ where $u' \in \copies_{i,j}(u)$ and $v' \in \copies_{i,j}(v)$, let
\begin{align*}
F_{e' \to e} := \frac{1}{|E_{i,j}^k(u,v)|} \cdot  \sum_{P = (u, \ldots, v) \ni e} F(P)
\end{align*}	
be the flow induced on edge $e$ resulting from embedding $e'$ into $e$. Applying the fact that each $F_{i,j}'$ has congestion $O(\log N / k)$, the fact that $F(u,v) \geq .5 \cdot \nrd(u,v)$ if $\{u', v'\} \in \hat{H}_{i,j}^k[V_{i,j}']$, the separation factor of our neighborhood cover $\mcN$ is $2$ (and so our absolute separation is at least $h$) and $F$ has congestion $\frac{1}{\phi}$, we have that the congestion of $\hat{F}$ on a given edge $e$ in $G$ is
\begin{align*}
	\sum_{P \ni e} \hat{F}(P) &= \sum_{i,j}  \sum_{e' \in E(H^k_{i,j}[V_{i,j}'])} F_{i,j}'(e') \cdot F_{e' \to e}\\
	&\leq O(\log N/k) \cdot  \sum_{i,j}   \sum_{e' \in E(H^k_{i,j}[V_{i,j}'])} F_{e' \to e}\\
	& = O(\log N/k) \cdot \Delta^k \cdot \load_{\mcN} \cdot F(e)\\
	& \leq O\left(\frac{1}{\phi} \cdot \frac{\Delta^k \cdot \load_{\mcN}}{k}\right)\\
	& \leq O\left(\frac{1}{\phi} \cdot 2^{O(k)} \cdot N^{O(1/k')} \cdot \log^2 N\right)
\end{align*}
as desired.
\end{proof}

\section{Robustness of Length-Constrained Expanders}\label{sec:robustness}

In this section we prove that, like (classic) expanders, length-constrained expanders are robust to edge deletions.  In particular, we show that if we begin with a length-constrained expander and delete a small number of edges, then, up to various slacks, the remaining graph is mostly still a length-constrained expander. 

The basic strategy for proving this fact will be to use the characterization of length-constrained expanders in terms of expander power embeddings from previous section. In particular, we will use this characterization and then apply the robustness of each of the embedded expander powers.

Formally, we show the following theorem.
\begin{restatable}[Robustness of Length-Constrained Expanders]{thm}{HCExpPru}
	\label{thm:HCExpPru} Suppose that $G = (V,E, \l)$ is an $(hk',s)$-length $\phi$-expander with respect to node weighting $A$ for $k' \geq 1$, let $C$ be a moving cut and let $G':= (V, E,  \l + C)$ be $G$ with $C$ applied. Then, for any integer $k \geq 1$ there exists a node weighting $A' \preccurlyeq A$ that is $(h,s')$-length $\phi'$-expanding in $G'$ such that $|A'| \geq |A| - |C| \cdot O\left(\frac{2^{O(k)} \cdot  N^{O(1/k')} \cdot \log N}{\phi} \right)$ where $\phi' = \Omega\left(\phi \cdot 2^{-O(k)} \cdot N^{-O(1/k')} \cdot \log^{-2} N \right)$ and $s' = O\left(s \cdot \frac{k'}{k}  \cdot \log N  \right)$.
\end{restatable}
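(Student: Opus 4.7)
\textbf{Proof plan for \Cref{thm:HCExpPru}.} The strategy, telegraphed in the section introduction, is to invoke \Cref{lem:neighRouting} in both directions together with the expander-power robustness that is already baked into the reverse direction. First I would apply the \emph{expander implies embedding} bullet of \Cref{lem:neighRouting} to the $(hk',s)$-length $\phi$-expander $G$ to obtain a flow $F$ in $G$ that routes the neighborhood routing demand $\nrd$ with congestion $O(\log N/\phi)$ and dilation $2k' hs$. This flow will play the role of the ``embedded expander powers'' through which damage from the cut $C$ must be routed.

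Next I would account for the damage that the moving cut $C$ inflicts on $F$ when we pass to $G'$. Call a flow path $P\in\supp(F)$ \emph{bad} if its length in $G'$ exceeds a threshold $T=\Theta(k' h s)$. The key identity
\begin{equation*}
    \sum_{P\in\supp(F)} F(P)\cdot\bigl(\l_{G'}(P)-\l_G(P)\bigr) \;=\; \sum_{e\in E} F(e)\cdot \l_C(e) \;\leq\; O\!\left(\tfrac{\log N}{\phi}\right)\sum_{e\in E}\U_e\cdot \l_C(e)
\end{equation*}
bounds the total weighted length increase by $O(\tfrac{\log N}{\phi})\cdot O(hs\cdot |C|)$ using the size of the moving cut. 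Dividing by the per-path excess $T-2k'hs=\Theta(k'hs)$ yields a bound of $O(|C|\cdot\poly\log N / \phi)$ on the total flow on bad paths, which is the only portion of $\nrd$ that may fail to be routed in $G'$.

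Then I would let $F'$ be the restriction of $F$ to non-bad paths. By construction $F'$ has dilation at most $T\le h\tilde{s}$ for $\tilde{s}=\Theta(k's)$, congestion at most $O(\log N/\phi)$ in $G'$, and routes a $(1-\eps)$-fraction of $\nrd$ where $\eps$ is the damaged flow divided by $|\nrd|$. Applying the \emph{embedding implies length-constrained expander} bullet of \Cref{lem:neighRouting} in $G'$ with parameter $\tilde{s}$ produces a node-weighting $A'\preccurlyeq A$ that is $(h,s')$-length $\phi'$-expanding in $G'$ with
\begin{equation*}
    s'=O\!\left(\tilde{s}\cdot\tfrac{\log N}{k}\right)=O\!\left(s\cdot\tfrac{k'}{k}\cdot\log N\right),\qquad \phi'=\Omega\!\left(\phi\cdot 2^{-O(k)}\cdot N^{-O(1/k')}\cdot\log^{-2}N\right),
\end{equation*}
matching the claimed parameters. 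The additive loss $|A|-|A'|\leq \eps'|A|$ given by the lemma, combined with the lower bound $|\nrd|\geq |A|/(N^{O(1/k')}\log N)$ from \Cref{lem:nrdProps} and the bound on damaged flow from step two, will yield the desired estimate $|A|-|A'|\le |C|\cdot O\!\bigl(\tfrac{2^{O(k)}N^{O(1/k')}\log N}{\phi}\bigr)$.

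The main obstacle is the parameter bookkeeping in the damage step: the threshold $T$ controls a tension between how much flow is wasted (smaller $T$ loses more flow) and how much dilation slack the surviving flow needs (larger $T$ costs in $s'$). The choice $T=\Theta(k'hs)$ is the natural match for the $2k'hs$ dilation produced by the forward direction of \Cref{lem:neighRouting}, and it is what forces the $k'/k$ factor in $s'$. A secondary subtlety is interpreting $\l+C$ in the statement consistently with the identity $\sum_e \U_e\l_C(e)=O(hs\cdot|C|)$; this should be handled by taking $C$ to be an $hs$-length (or, more generally, $O(k'hs)$-length) moving cut as is standard in later applications.
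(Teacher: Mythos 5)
Your proposal is correct and follows essentially the same route as the paper's proof: apply the forward direction of \Cref{lem:neighRouting} to obtain a low-congestion, low-dilation flow $F$ routing $\nrd$ in $G$; use a Markov/averaging argument driven by the congestion bound on $F$ to show that only $O(|C|/\phi)$ worth of flow lands on paths whose length in $G'$ exceeds a threshold $T=\Theta(k'hs)$; restrict to the surviving paths; and then apply the reverse direction of \Cref{lem:neighRouting} with $\tilde s=\Theta(k's)$ and the size bound $|\nrd|\geq |A|/(N^{O(1/k')}\log N)$ from \Cref{lem:nrdProps} to extract the large expanding sub-weighting $A'$. The one addition you make — flagging the ambiguity in how $\ell+C$ and $|C|$ interact and resolving it by assuming $C$ is an $O(k'hs)$-length moving cut — is a reasonable reading of the paper's slightly loose notation, and it does not change the argument: the Markov step already yields the claimed $O(|C|/\phi)$ bound under either interpretation because the per-path excess $T-2k'hs=\Theta(k'hs)$ is at least as large as the reference length scale of the cut.
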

\begin{proof}
Our proof sketch is as follows. Consider the neighborhood router demand. Then, observe that $C$ can increase the length of only a constant fraction of the paths in the support of the flow which routes the neighborhood router demand and conclude by \Cref{lem:neighRouting} that there exists a large expanding subset.

More formally, let \nrd be our neighborhood routing demand as defined in \Cref{sec:NRD}. 

By \Cref{lem:neighRouting} since $G$ is an $(hk',s)$ $\phi$-expander, it follows that $\nrd$ can be routed in $G$ with congestion at most $O(\frac{1}{\phi})$ and dilation $2k' \cdot hs$. Let $F$ be the flow that witnesses this routing. Since $F$ has congestion at most $O(\frac{1}{\phi})$, it follows by an averaging argument that the total flow across paths in the support of $F$ which have their length increased to at least $4k' \cdot hs$ is at most $O(\frac{1}{\phi}) \cdot |C|$. That is,
\begin{align*}
	\sum_{P \in \supp(F): \l_{G'}(P) > 4k' \cdot hs } F(P) \leq O\left(\frac{1}{\phi} \right) \cdot  |C|.
\end{align*}	
Let $F'$ be $F$ restricted to all paths in the support of $F$ with length at most $4k' \cdot hs$ in $G'$ so that $\val(F') \geq \val(F) - O\left(\frac{1}{\phi} \right) \cdot  |C|$. Observe that by construction the dilation of $F'$ is $O(k' \cdot hs)$.

Furthermore, observe that it follows from the existence of $F'$ that at least a $1-\epsilon$ fraction of $\nrd$ can be routed in $G'$ for $\eps = \Omega\left(\frac{|C|}{\val(F)} \cdot \frac{1}{\phi} \right) \geq \Omega\left(\frac{|C|}{|A|} \cdot \frac{1}{ \phi} \right)$ with congestion at most $O\left(\frac{1}{\phi} \right)$ and dilation at most $O(h \cdot sk')$. Applying \Cref{lem:neighRouting} tells us that there is a node weighting $A' \preceq A$ where 
\begin{align*}
	|A'| &\geq |A| \cdot \left(1 - O\left( \eps \cdot 2^{O(k)} \cdot N^{O(1/k')} \cdot \log N \right) \right)\\
	& \geq |A| - |C| \cdot O\left(\frac{2^{O(k)} \cdot N^{O(1/k')} \cdot \log N}{\phi} \right)
\end{align*}
and $A'$ is $(h,s')$-length $\phi'$-expanding in $G'$ where $s' = O(s \cdot \frac{ k'}{k} \cdot \log N)$ and $\phi' = \Omega\left(\phi \cdot 2^{-O(k)} \cdot N^{-O(1/k')} \cdot \log^{-2} N \right)$ as required to show our theorem.
\end{proof}

\section{Union of Sparse (Classic) Cut Sequence is Sparse}\label{sec:unionOfNHCCuts}
In this section we prove that the union of a sequence of sparse (classic) cuts is still a sparse (classic) cut with only an $O(\log n)$ loss in sparsity. Before proceeding, it may be useful for the reader to recall the definition of the sparsity of a classic cut in \Cref{sec:regularExpanders} and, in particular, the definition of the witness components of a cut.

Formally, we consider the following notion of a sequence of sparse (classic) cuts.
\begin{definition}[Sequence of (Classic) Cuts]
    Given graph $G = (V,E)$, a sequence of (classic) cuts is a sequence of (classic) cuts $(C_1, C_2, \ldots)$ where each $C_i$ is a cut all of whose edges are internal to one connected component $H_i$ of $G - \sum_{j < i} C_j$. We refer to $H_i$ as the source component of $C_i$ and the sparsity of $C_i$ in $H_i$ as its sparsity in the sequence.
\end{definition}

The following formalizes the idea that the union of (classic) sparse cuts is sparse. Observe that, in fact, it proves a slightly stronger statement, namely, that the union of cuts has a sparsity at most the (appropriately weighted) average of the cuts in the sequence.
\begin{restatable}[Union of Sparse (Classic) Cut Sequence is a Sparse (Classic) Cut]{thm}{unionOfNHCCuts} \label{thm:unionOfNHCCuts} 
Given graph $G = (V,E)$, let $(C_1, C_2, \ldots)$ be a sequence of (classic) cuts where $C_i$ has sparsity $\phi_i$ in this sequence. Then the cut $\bigcup_i C_i$ has sparsity $\phi'$ in $G$ where $\phi' \leq O(\log n) \cdot \frac{\sum_i |C_i|}{\sum_i |C_i|/\phi_i}$.
\end{restatable}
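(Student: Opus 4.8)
The plan is to track the "witness components" through the sequence and set up a charging argument that is really a potential/recursion based on volume. Recall that for a classic cut $C$ with source component $H$, the sparsity is $|C|/\sum_{S \in \mathcal{S}_C}\vol_H(S)$ where $\mathcal{S}_C$ is all components of $H - C$ except the one of largest volume. The key structural observation is that $\bigcup_i C_i$ applied to $G$ refines $G$ into connected components, and these components are obtained by a laminar/recursive splitting: each time we cut $C_i$ inside its source component $H_i$, $H_i$ splits into its largest-volume piece plus the witness pieces $\mathcal{S}_{C_i}$.

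First I would define, for each connected component $K$ of $G - \bigcup_i C_i$, a "depth" $d(K)$ equal to the number of cuts $C_i$ in the sequence for which $K$ is contained in some \emph{witness} component of $C_i$ (i.e. $K \subseteq S$ for some $S \in \mathcal{S}_{C_i}$). Since each such containment forces $\vol(K) \le \vol(S) \le \tfrac12 \vol(H_i)$ (the witness side has at most half the volume of its source component), and these source components are nested, we get $d(K) \le \log_2(\vol(G)) = O(\log n)$ for every component $K$. This is the length-constrained-free analogue of the classical "$O(\log n)$ depth" bound. Second I would argue that $\bigcup_i C_i$, viewed as a classic cut of $G$, has a witness-component collection $\mathcal{S}$ whose total volume is within a constant factor of $\vol(G)$ — actually here I'd be slightly careful: the natural bound is $\sum_{S \in \mathcal{S}_{\bigcup C_i}} \vol_G(S) \ge \sum_i \sum_{S \in \mathcal{S}_{C_i}} \vol_{H_i}(S)$ does NOT hold directly, so the cleaner route is the charging one below.

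The main step is the charging argument. I would charge $|\bigcup_i C_i| \le \sum_i |C_i|$ against $\sum_{S} \vol(S)$ for $S$ ranging over witness components of the final cut. For each $i$, write $|C_i| = \phi_i \cdot \sum_{S \in \mathcal{S}_{C_i}} \vol_{H_i}(S)$. Each witness component $S \in \mathcal{S}_{C_i}$ is a union of final components $K$ of $G - \bigcup_j C_j$, each of which has its depth counter incremented by this occurrence; so $\sum_i \sum_{S \in \mathcal{S}_{C_i}} \vol_{H_i}(S)/\phi_i$... — hmm, the weighting is by $1/\phi_i$ on the left of the target inequality, so let me instead bound $\sum_i |C_i| \le \big(\max_i \phi_i\big)$-type bounds is wrong too. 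The correct move: set $W := \sum_i |C_i|/\phi_i = \sum_i \sum_{S\in\mathcal S_{C_i}} \vol_{H_i}(S)$, i.e. $W$ is the total volume of all witness components counted with multiplicity across the sequence. By the depth bound, each final component $K$ is counted at most $O(\log n)$ times in $W$, and every final component except the single global "giant" one appears at least once; hence $W \le O(\log n)\cdot \vol(G)$ and also $W \ge \sum_{K \ne K_{\mathrm{giant}}} \vol(K) = \sum_{S \in \mathcal{S}_{\bigcup C_i}} \vol_G(S)$ (the witness components of the union are exactly the non-giant final components, possibly grouped). Therefore
\begin{align*}
\phi'\big(\textstyle\bigcup_i C_i\big) = \frac{|\bigcup_i C_i|}{\sum_{S \in \mathcal{S}_{\bigcup C_i}} \vol_G(S)} \le \frac{\sum_i |C_i|}{W / O(\log n)} = O(\log n)\cdot \frac{\sum_i |C_i|}{\sum_i |C_i|/\phi_i}.
\end{align*}

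**The main obstacle** I expect is the bookkeeping around which final component is the "giant" one and verifying that the witness components of the union $\bigcup_i C_i$ indeed account for at least $W/O(\log n)$ volume — in particular, making sure the single globally-largest component is the only one excluded and that no volume is lost when passing from $\vol_{H_i}$ to $\vol_G$ (these agree since volume is measured by degree in $G$, which is unchanged by deleting edges — wait, degrees \emph{do} change under deletion in the classic setting; \Cref{dfn:normalSpars} uses $\vol$ which is $\vol_G$, so I must double-check whether the paper's convention keeps $\vol$ fixed as the original $G$-degree or the current-component degree). Resolving this convention cleanly, and handling the edge case where the giant component changes identity along the sequence, is where the real care is needed; the depth/recursion bound itself is the same $O(\log n)$ halving argument as in the classical proof.
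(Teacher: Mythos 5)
Your route is genuinely different from the paper's: the paper arranges the cuts into a cut sequence tree (\Cref{dfn:cutSequenceTree}), merges ``heavy paths'' losslessly via \Cref{cor:sparseCutCorr}, contracts them to get an $O(\log n)$-depth tree, and then averages over its layers; you instead charge every witness-component occurrence directly to the final components of $G-\bigcup_i C_i$ that it contains, via the per-component depth $d(K)\le O(\log n)$. That charging idea is sound (and, completed, gives a shorter proof), and your depth bound is correct: a witness component never has maximum volume among the pieces of its source, so its volume is at most half the source's, and the relevant sources are nested. Your volume-convention worry is benign --- the paper measures $\vol$ by $\deg_G$ throughout, which is exactly what its own proof uses.

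However, the final step as written is a non sequitur. What the theorem requires is the \emph{upper} bound $W \le O(\log n)\cdot\sum_{S\in\mathcal S_{\cup_i C_i}}\vol(S) = O(\log n)\,\bigl(\vol(G)-\vol(K_{\mathrm{giant}})\bigr)$, but the two facts you prove are $W\le O(\log n)\,\vol(G)$, which is too weak precisely in the typical regime where the giant retains almost all of the volume, and $W\ge\sum_{K\ne K_{\mathrm{giant}}}\vol(K)$, which points in the wrong direction and cannot be used in your displayed chain (its justification is also slightly off: the unique depth-zero final component need not be the giant). The missing piece is exactly the giant-component bookkeeping you defer to the ``obstacle'' paragraph, and it can be filled as follows. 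Write $W=\sum_K d(K)\vol(K)$ and split off the giant. The non-giant terms contribute at most $O(\log n)\bigl(\vol(G)-\vol(K_{\mathrm{giant}})\bigr)$ by the depth bound. For the giant term, either $d(K_{\mathrm{giant}})=0$ and it vanishes, or $K_{\mathrm{giant}}$ was contained in a witness component of some $C_i$, in which case $\vol(K_{\mathrm{giant}})\le\tfrac12\vol(H_i)\le\tfrac12\vol(G)$, hence $\vol(G)-\vol(K_{\mathrm{giant}})\ge\vol(K_{\mathrm{giant}})$ and $d(K_{\mathrm{giant}})\,\vol(K_{\mathrm{giant}})\le O(\log n)\bigl(\vol(G)-\vol(K_{\mathrm{giant}})\bigr)$ as well. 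Summing gives the required bound on $W$, and together with $|\bigcup_i C_i|\le\sum_i|C_i|$ this yields the theorem. With this one observation added, your argument closes and stands as a valid alternative to the paper's heavy-light/layer-averaging proof.
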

\noindent Observe that if each $C_i$ has sparsity $\phi$ in the sequence then we get that $\bigcup_i C_i$ has sparsity $O(\phi \cdot \log n)$. Intuitively, we get $O(\log n)$ overhead because each time we take a cut we charge to the smaller side and so an edge can only be charged at most $O(\log n)$ times until its side is empty. The remainder of this section makes this proof idea formal.

\subsection{Warmup: When the Source Component is Not a Witness Component}

As a simple warmup and helper lemma we observe that the union of two sparse cuts is still sparse assuming the second cut is not in a witness component of the first cut.
\begin{lemma}\label{lem:pairOfCuts}
    Let $(C_1, C_2)$ be a sequence of two (classic) cuts where $C_i$ has sparsity $\phi_i$ in the sequence. Then if the source component of $C_2$ is not a witness component of $C_1$ we have $C_1 \cup C_2$ has sparsity at most $\phi'$ in $G$ where $\phi' \leq \frac{\sum_i |C_i|}{\sum_i |C_i|/\phi_i}$.
\end{lemma}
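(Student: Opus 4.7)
The plan is to reduce the claim to a single lower bound on the witness-volume sum of $C := C_1 \cup C_2$. Since $|C| = |C_1| + |C_2|$ and $\phi(C) = |C| / \sum_{S \in \mcS_C} \vol_G(S)$ by \Cref{dfn:normalSpars}, it suffices to prove
\[
\sum_{S \in \mcS_C} \vol_G(S) \;\ge\; \frac{|C_1|}{\phi_1} + \frac{|C_2|}{\phi_2},
\]
after which dividing yields the stated sparsity bound. First I would set up notation cleanly: write $\mcS_{C_1} = \{T_1, \dots, T_k\}$ and let $H^*$ be the max-$\vol_G$ component of $G - C_1$; by the hypothesis that the source of $C_2$ is not a witness of $C_1$, every edge of $C_2$ lies inside $H^*$. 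Inside $H^*$, let $B_0$ be the max-$\vol_{H^*}$ component of $H^* - C_2$ and $\mcS_{C_2} = \{B_1, \dots, B_l\}$, so by the definitions $\sum_i \vol_G(T_i) = |C_1|/\phi_1$ and $\sum_{j \ge 1} \vol_{H^*}(B_j) = |C_2|/\phi_2$. The components of $G - C$ are then precisely $\{T_1, \dots, T_k\} \cup \{B_0, B_1, \dots, B_l\}$.

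The key observation I would invoke throughout is that $\vol_G(B_j) \ge \vol_{H^*}(B_j)$ for every $j$, which is immediate since $\deg_G(v) \ge \deg_{H^*}(v)$ for every $v \in H^*$. Equipped with this, I would perform a case analysis on the identity of the max-$\vol_G$ component $M$ of $G - C$. The clean case is $M = B_0$, where $\mcS_C = \{T_i\}_i \cup \{B_j\}_{j \ge 1}$ and the computation unfolds directly:
\[
\sum_{S \in \mcS_C} \vol_G(S) \;=\; \sum_{i=1}^k \vol_G(T_i) + \sum_{j=1}^l \vol_G(B_j) \;\ge\; \frac{|C_1|}{\phi_1} + \sum_{j=1}^l \vol_{H^*}(B_j) \;=\; \frac{|C_1|}{\phi_1} + \frac{|C_2|}{\phi_2}.
\]
When $M = B_j$ for some $j \ge 1$ the same inequality follows after using $\vol_{H^*}(B_0) \ge \vol_{H^*}(B_j)$ to swap $B_j$ with $B_0$ in the sum on the right-hand side without decreasing it.

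The main obstacle I anticipate is the final case, $M = T_i$ for some $i$, in which the overall $G$-maximum has migrated out of $H^*$ into a previous witness of $C_1$. Here the witness set becomes $\{T_{i'}\}_{i' \ne i} \cup \{B_0, B_1, \dots, B_l\}$, so in the witness sum one loses $\vol_G(T_i)$ but picks up $\vol_G(H^*) = \sum_j \vol_G(B_j)$. The plan is to recover the required $|C_2|/\phi_2 = \vol_{H^*}(H^*) - \vol_{H^*}(B_0)$ by combining the maximality inequality $\vol_G(T_i) \le \vol_G(H^*)$ (which holds because $H^*$ is the $G$-max of $G - C_1$) with the boundary identity $\vol_G(H^*) - \vol_{H^*}(H^*) = \partial_{C_1}(H^*)$ measuring the $C_1$-edges touching $H^*$. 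Tightening this last piece of bookkeeping, and making sure the $G$-versus-$H^*$ volume slack is enough to absorb the move of the maximum across $C_1$, is where I expect the main work to go.
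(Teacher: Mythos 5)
Your case decomposition is the right way to think about this, and your treatment of the first two cases ($M = B_0$ and $M = B_{j^*}$ for $j^* \ge 1$) is correct. But your instinct that the case $M = T_i$ is where the trouble lies is exactly right, and unfortunately it is not a matter of tightening the bookkeeping: that case can genuinely fail, and the lemma as stated is false. Take three disjoint subgraphs $A,B,C$ with $\vol_G(A) \approx 2v$ and $\vol_G(B)\approx\vol_G(C)\approx v$, joined by single bridge edges $A$--$B$ and $B$--$C$. With $C_1 = \{A\text{-}B \text{ bridge}\}$, the max-volume component of $G-C_1$ is $H^*=B\cup C$ (so the hypothesis is satisfied) and $|C_1|/\phi_1 = \vol_G(A) \approx 2v$. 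With $C_2 = \{B\text{-}C\text{ bridge}\}$ inside $H^*$, $|C_2|/\phi_2 \approx v$. But $G-C$ has components $A\ (\approx 2v)$, $B\ (\approx v)$, $C\ (\approx v)$; the global maximum is $A = T_1$, and the witness-volume sum is $\approx 2v$, strictly less than $|C_1|/\phi_1 + |C_2|/\phi_2 \approx 3v$. So $\phi(C_1\cup C_2)$ exceeds the claimed bound. Unwinding your own computation in this case, the inequality one needs is $\partial_{C_1}(H^*) + \vol_{H^*}(B_0) \ge \vol_G(T_i)$, and the only tools available, $\vol_G(T_i)\le\vol_G(H^*)$ and $\vol_G(H^*)-\vol_{H^*}(H^*)=\partial_{C_1}(H^*)$, leave a shortfall of up to $\sum_{j\ge 1}\vol_{H^*}(B_j) = |C_2|/\phi_2$, which is exactly what the example exploits.

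For what it is worth, the paper's own proof of \Cref{lem:pairOfCuts} does not actually dispose of this case either: it observes that $\vol(M_2)\le\vol(M)$ (which is the correct direction, as $M$ is the global max) but then uses this in a chain of inequalities that requires $\vol(M)\le\vol(M_2)$, implicitly assuming the global max of $G-C$ coincides with the max piece of $H^*-C_2$. When the maximum migrates to a witness of $C_1$, as in the example, that assumption fails. What does hold, by the two estimates $\vol_G(T_i)\le\vol_G(H^*)$ and $\vol_{H^*}(H^*)\ge|C_2|/\phi_2$, is that in all cases $\sum_{S\in\mcS_C}\vol_G(S)\ge\max(|C_1|/\phi_1,\ |C_2|/\phi_2)$, i.e., the lemma is true with an extra factor of $2$. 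That factor is harmless for \Cref{thm:unionOfNHCCuts}, which already carries an $O(\log n)$ slack, but the exact statement of \Cref{lem:pairOfCuts} (and hence of \Cref{cor:sparseCutCorr}) needs either the factor-of-$2$ weakening or an added hypothesis that the max-volume component of $G-C$ lies in the source component of $C_2$.
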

\begin{proof}
    Let $C = C_1 \cup C_2$. Let $M_1$ be the connected component of $G - C_1$ of maximum volume in $G$, (i.e.\ the one component that is not a witness component for $C_1$ but which is a source component for $C_2$). Similarly, let $M$ be the connected component of $G-C$ of maximum volume in $G$ (i.e.\ the one component that is not a witness component for $C$).

   Observe that $\vol(M_2) \leq \vol(M)$ since $M_2$ is a subset of $M$. Letting $m$ be $|E|$, o, it follows that
    \begin{align}
        m - \vol(M) &= m - \vol(M_1) + \vol(M_1) - \vol(M) \nonumber \\
        &\geq m - \vol(M_1) + \vol(M_1) - \vol(M_2).\label{eq:aa}
    \end{align}

    On the other hand, observe that by definition of sparsity and the fact that the source component of $C_2$ is $M_1$ we have
    \begin{align}\label{eq:ab}
        \frac{|C_1|}{\phi_1} = m - \vol(M_1)
    \end{align}
    and
    \begin{align}\label{eq:ac}
        \frac{|C_2|}{\phi_2} = \vol(M_1) - \vol(M_2)
    \end{align}
    and so combining Equations \ref{eq:aa}, \ref{eq:ab} and \ref{eq:ac} we get 
    \begin{align*}
        m - \vol(M) \geq \sum_i \frac{|C_i|}{\phi_i}.
    \end{align*}

    Applying this and the definition of sparsity we get
    \begin{align*}
        \phi(C) \leq \frac{\sum_i |C_i|}{\sum_i |C_i| / \phi_i},
    \end{align*}
    as required.
\end{proof}

As a simple implication of this fact we have that the union of an arbitrary-length sequence of (classic) sparse cuts is still sparse provided each cut in the sequence is never applied to the witness component of any previous cut.
\begin{lemma}\label{cor:sparseCutCorr}
   Let $(C_1, C_2, \ldots )$ be a sequence of (classic) cuts where $C_i$ has sparsity $\phi_i$ in the sequence. Then if we have that the source component of $C_j$ is not a witness component of $C_i$ for all $i$ and $j > i$ then $\bigcup_i C_i$ is a $\phi'$-sparse cut in $G$ where $\phi' \leq \frac{\sum_i |C_i|}{\sum_i |C_i|/\phi_i}$.
\end{lemma}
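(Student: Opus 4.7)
My plan is to induct on the length $k$ of the cut sequence, using Lemma~\ref{lem:pairOfCuts} as the two-cut combining step.

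\textbf{Base case} ($k=1$): Immediate, since $\phi(C_1) = \phi_1 = |C_1|/(|C_1|/\phi_1)$ by definition of sparsity in the source $H_1 = G$.

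\textbf{Inductive step:} Assuming the bound for shorter sequences, set $C^{\ast} := \bigcup_{i<k}C_i$ and let $\phi^{\ast}$ denote its sparsity in $G$. The inductive hypothesis gives
\[
\frac{|C^{\ast}|}{\phi^{\ast}} \;\geq\; \sum_{i<k}\frac{|C_i|}{\phi_i}.
\]
Viewing $(C^{\ast}, C_k)$ as a length-$2$ cut sequence and applying Lemma~\ref{lem:pairOfCuts} yields
\[
\phi\!\left(\bigcup_i C_i\right) \;=\; \phi(C^{\ast} \cup C_k) \;\leq\; \frac{|C^{\ast}|+|C_k|}{|C^{\ast}|/\phi^{\ast} + |C_k|/\phi_k} \;\leq\; \frac{\sum_i|C_i|}{\sum_i|C_i|/\phi_i},
\]
completing the induction.

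To apply Lemma~\ref{lem:pairOfCuts} I must verify its hypothesis: that the source component of $C_k$ as a cut after $C^{\ast}$ is not a witness component of $C^{\ast}$ in $G$. The source is the connected component $H_k$ of $G - C^{\ast}$ containing $C_k$'s edges, and "not a witness of $C^{\ast}$" means $H_k$ is the maximum-volume component of $G - C^{\ast}$. I would prove this by a side-induction maintaining the strengthened invariant that, for every $i \leq k$, $H_i$ is the maximum-volume connected component of $G - \bigcup_{j<i}C_j$. Writing $M_i$ for the maximum-volume component of $H_i - C_i$, the corollary's hypothesis forces the chain $H_{i+1} \subseteq M_i$ (since otherwise $H_{i+1}$ would lie in a witness of $C_i$); because $H_{i+1}$ is a connected component of $G - \bigcup_{j \leq i}C_j$ while $M_i$ is also such a component, $H_{i+1} = M_i$. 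Combined with the definition of $\phi_i$ in $H_i$, this delivers the telescoping identity $\vol(H_{i+1}) = \vol(H_i) - |C_i|/\phi_i$, and hence $\vol(H_k) = \vol(G) - \sum_{i<k}|C_i|/\phi_i$.

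\textbf{Main obstacle.} The delicate step is maintaining the strengthened invariant through the induction, i.e.\ promoting "max within $H_i - C_i$" to "max within $G - \bigcup_{j \leq i}C_j$". The corollary's hypothesis gives the former at each step for free, but the global maximality has to be argued by comparing $\vol(H_{i+1}) = \vol(M_i)$ against the volumes of all previously-deposited witness components $W_{j,\beta}$ for $j \leq i$. The telescoping identity above is the natural lever: it controls $\vol(H_k)$ exactly, and one then checks that under the hypothesis the accumulated witness sizes sum to exactly $\vol(G) - \vol(H_k)$, ruling out any single witness being larger than $H_k$ (up to the convention used to break ties). Once this invariant is in hand, the two-cut Lemma~\ref{lem:pairOfCuts} composes cleanly and the induction closes.
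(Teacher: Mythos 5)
Your high-level strategy --- induct on $k$ with Lemma~\ref{lem:pairOfCuts} as the combining step --- matches the paper's, but you fold the induction from the opposite end, and that is where a genuine gap appears. You set $C^\ast := \bigcup_{i < k} C_i$ and apply Lemma~\ref{lem:pairOfCuts} to the pair $(C^\ast, C_k)$, which requires the source component $H_k$ of $C_k$ to be the \emph{maximum-volume} component of $G - C^\ast$. You correctly flag this as the delicate step, but the ``strengthened invariant'' you propose is false in general: the corollary's hypothesis only pins down the chain $H_{i+1} = M_i$ of local maxima (each $H_{i+1}$ is the max of $H_i - C_i$), not that $H_i$ is globally maximal in $G - \bigcup_{j<i}C_j$.

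Concretely, take $G$ connected with $\vol(G) = 100$. Let $C_1$ split $G$ into pieces of volume $51$ and $49$; the hypothesis forces $H_2$ to be the volume-$51$ piece. Let $C_2$ split $H_2$ into pieces of volume $30$ and $21$; the hypothesis forces $H_3$ to be the volume-$30$ piece. But $G - C_1 - C_2$ has components of volume $49, 30, 21$, so the maximum-volume component of $G - C^\ast$ for $k=3$ is the volume-$49$ piece, not $H_3$. Hence $H_3$ \emph{is} a witness component of $C^\ast$, and the hypothesis of Lemma~\ref{lem:pairOfCuts} fails for $(C^\ast, C_3)$. Your telescoping identity $\vol(H_k) = \vol(G) - \sum_{i<k}|C_i|/\phi_i$ is correct, but it does not prevent an early witness from being individually larger than $H_k$, which is exactly what happens here. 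The paper avoids this entirely by folding from the right: it applies Lemma~\ref{lem:pairOfCuts} to $(C_{k-1}, C_k)$ inside the ambient graph $H_{k-1}$, whose hypothesis (source of $C_k$ not a witness of $C_{k-1}$) is literally the corollary's hypothesis with $j=k$, $i=k-1$, then recurses on the shorter sequence $(C_1,\ldots,C_{k-2},\,C_{k-1}\cup C_k)$, whose source components are unchanged. No global ``max of $G - C^\ast$'' claim is ever needed.
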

\begin{proof}
The proof is by a simple induction on the number of cuts in the sequence $k$ and repeated application of \Cref{lem:pairOfCuts}.

The base case of $k=1$ is trivial and the base case of $k = 2$ is immediate from \Cref{lem:pairOfCuts}. 

Consider the case of $k > 3$. By \Cref{lem:pairOfCuts} we know that $C := C_{k-1} \cup C_k$ is a cut with sparsity $\phi_C$ which is at most $\frac{|C_{k-1}| + |C_k|}{|C_{k-1}|/\phi_{k-1 } + |C_{k}|/\phi_{k}}$. Now consider the cut sequence resulting from taking the union of the last two cuts, namely $(C_1, C_2, \ldots, C_{k-2}, C)$. The source component of $C$ is the same as the source component of $C_{k-1}$ and so the precondition of our induction holds and, in particular, by induction we know that $\bigcup_i C_i$ has sparsity at most
\begin{align}\label{eq:ad}
    \frac{\sum_{i} |C_i|}{|C|/\phi_C + \sum_{i \leq k-2}|C_i|/\phi_i}.
\end{align}
Further, observe that by our upper bound on $\phi_C$ and the fact that $|C| = |C_{k-1}| + |C_{k}|$ we know that $\frac{|C|}{\phi_C}\geq \frac{|C_{k-1}|}{\phi_{k-1}} + \frac{|C_{k}|}{\phi_{k}}$. Combining this with \Cref{eq:ad} gives that $\bigcup_i C_i$ has sparsity at most
\begin{align*}
    \frac{\sum_{i} |C_i|}{\sum_{i}|C_i|/\phi_i},
\end{align*}
as required.
\end{proof}

\subsection{Proving the General Case for (Classic) Cuts}

We now conclude this section with our proof that the union of (classic) sparse cuts is a sparse cut (\Cref{thm:unionOfNHCCuts}). Roughly, our proof arranges the sequence of cuts into a ``cut sequence tree'' and then decomposes this tree into paths that satisfy the preconditions of \Cref{cor:sparseCutCorr} and so can be unioned together to get cuts of the same sparsity. After contracting such paths in our cut sequence tree we can argue that the result is a depth $O(\log n)$ tree and so by averaging over layers of this tree we can argue that some layer induces sparsity within an $O(\log n)$ of the sparsity for which we are aiming. For readers familiar with heavy-light decompositions \cite{sleator1981data}: this decomposition can be understood as a special sort of heavy-light decomposition whose heavy paths correspond satisfy the preconditions of \Cref{cor:sparseCutCorr}.


Formally, our proof will arrange our cuts into a cut sequence tree which corresponds to the natural laminar partition gotten by applying a sequence of sparse cuts. See \Cref{fig:cutSequenceTree} for an illustration.

\begin{figure}
    \centering
    \begin{subfigure}[b]{0.32\textwidth}
        \centering
        \includegraphics[width=\textwidth,trim=0mm 0mm 0mm 0mm, clip]{./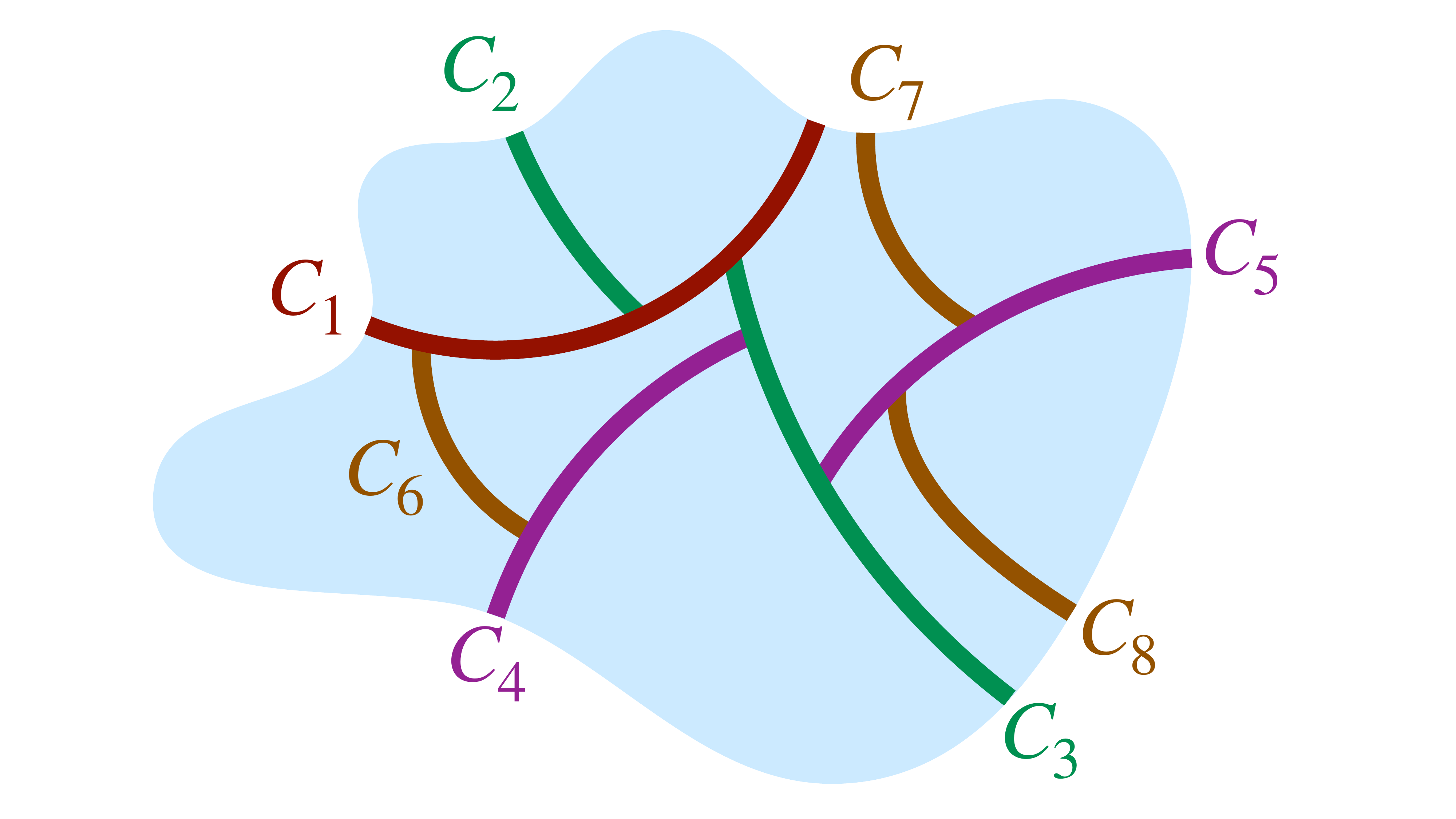}
        \caption{Cut sequence on $G$.}\label{sfig:cutTree1}
    \end{subfigure}    \hfill
    \begin{subfigure}[b]{0.32\textwidth}
        \centering
        \includegraphics[width=\textwidth,trim=0mm 0mm 0mm 0mm, clip]{./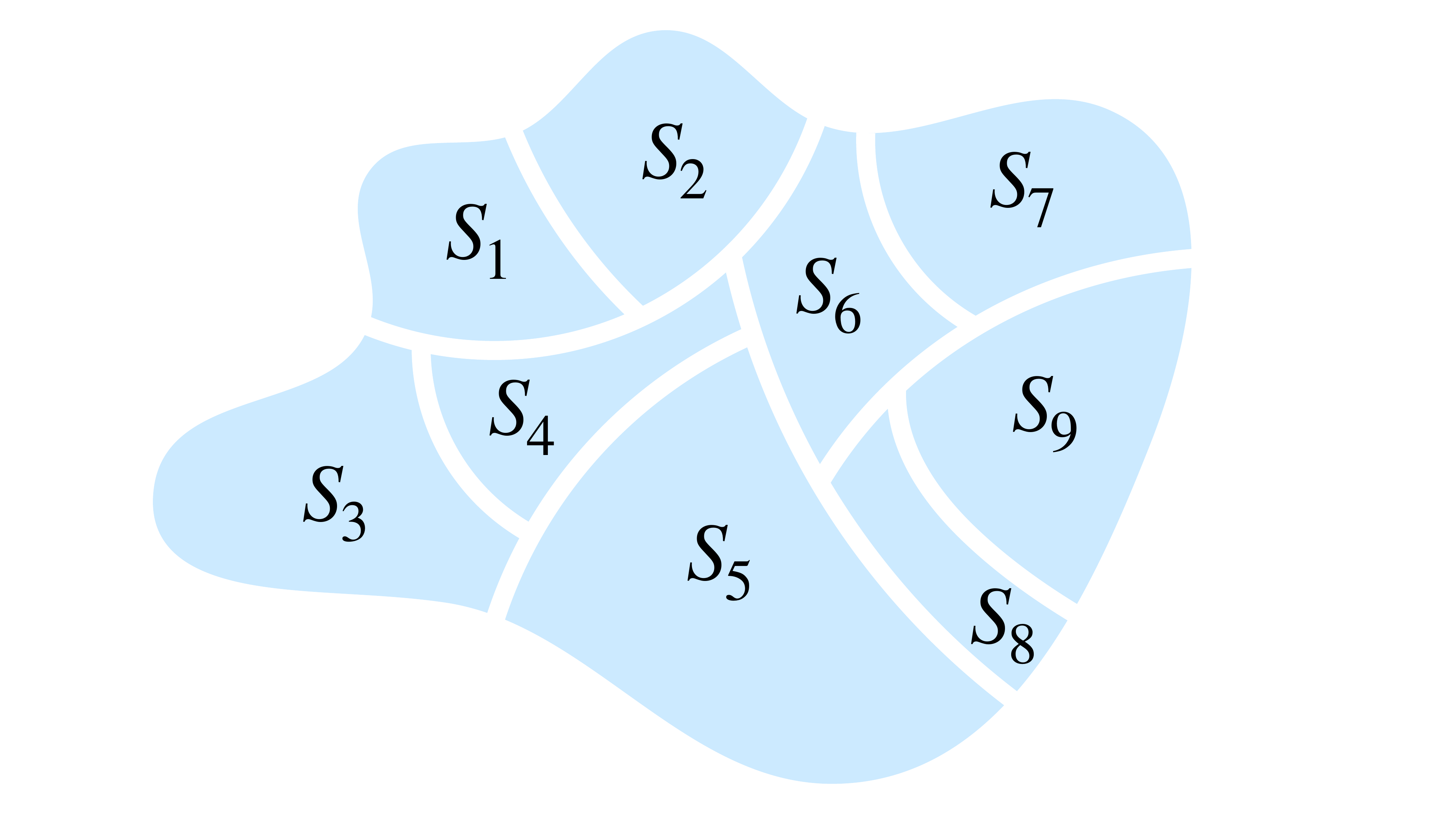}
        \caption{Connected components.}\label{sfig:cutTree2}
    \end{subfigure}    \hfill
    \begin{subfigure}[b]{0.32\textwidth}
        \centering
        \includegraphics[width=\textwidth,trim=0mm 0mm 0mm 0mm, clip]{./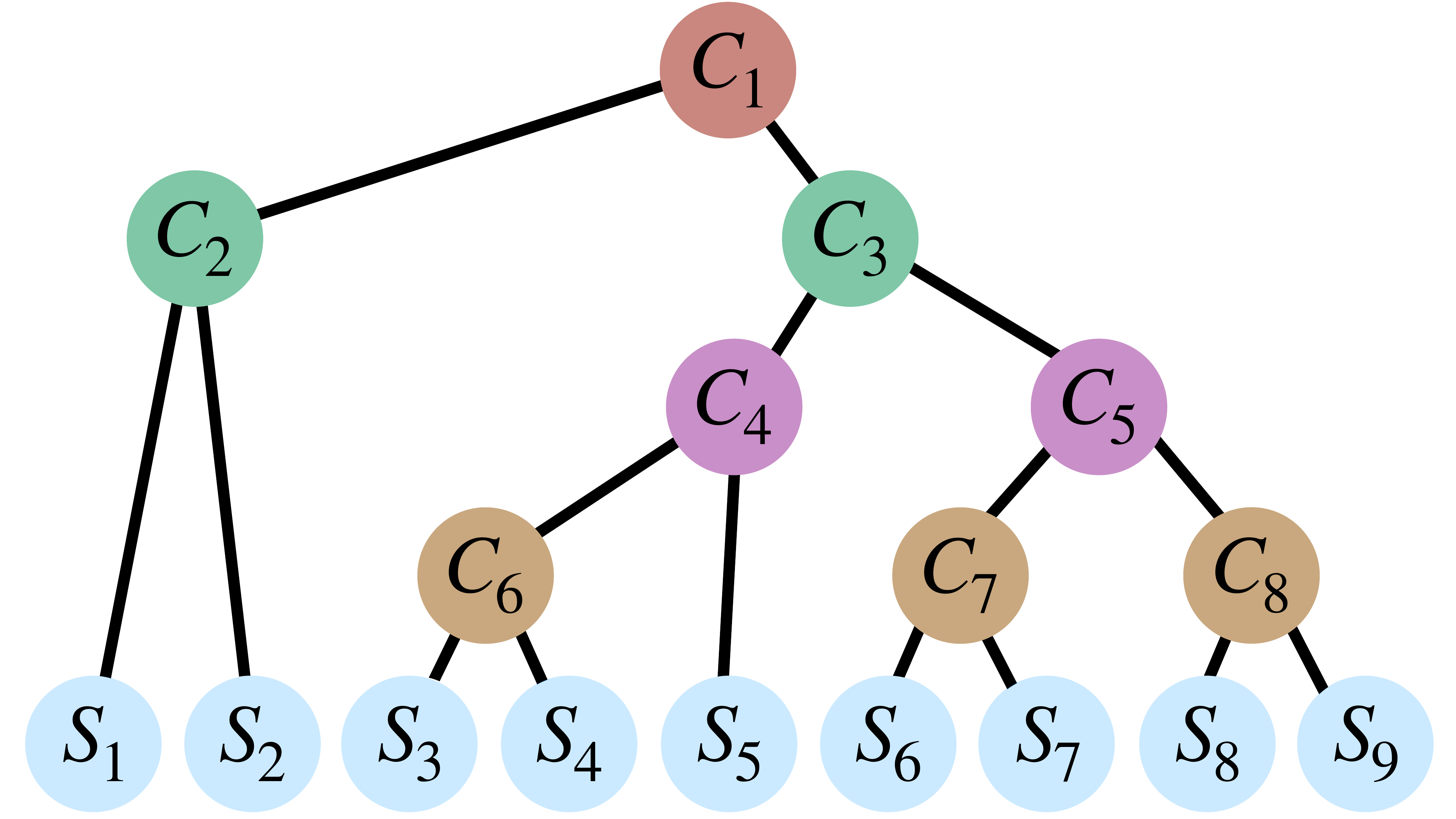}
        \caption{Cut sequence tree.}\label{sfig:cutTree3}
    \end{subfigure}
    \caption{A cut sequence (\ref{sfig:cutTree1}), the resulting connected components from applying all cuts in the sequence (\ref{sfig:cutTree2}) and the corresponding cut sequence tree (\ref{sfig:cutTree3}). Cuts colored to correspond to the depth of their nodes in the cut sequence tree. Internal nodes in cut sequence tree labeled according to their corresponding cut and leaves labeled according to their connected component.}\label{fig:cutSequenceTree}
\end{figure}

\begin{definition}[Cut Sequence Tree]\label{dfn:cutSequenceTree}
Given graph $G= (V,E)$, let $\mcC := (C_1, C_2, \ldots, C_k)$ be a sequence of (classic) cuts. Then, the cut tree $T_\mcC$ of $\mcC$ is recursively defined as follows. 
\begin{itemize}
    \item Suppose $k = 1$. Then $T_{\mcC}$ is a star with one leaf for every connected component of $G - C_1$.
    \item Suppose $k > 1$. Then let $T_{\mcC'}$ be the cut sequence tree for $\mcC' = (C_1, C_2, \ldots, C_{k-1})$ and let $H_k$ be the source component of $C_k$.  $T_{\mcC}$ is the result of taking $T_{\mcC'}$ and adding one child for each connected component of $H_k - C_k$ to the leaf of $T_{\mcC'}$ corresponding to $H_k$.
\end{itemize}
\end{definition}
\noindent Observe that in the above tree each internal vertex corresponds to a cut of $\mcC$ and every vertex corresponds to a connected component in $G$ after applying some prefix of the cuts $\mcC$.


We proceed to prove the our main theorem for the section.
\unionOfNHCCuts*
\begin{proof}
Let $T_{\mcC}$ be the cut sequence tree of $\mcC = (C_1, C_2, \ldots)$ as defined in \Cref{dfn:cutSequenceTree}. 

Call an edge from parent $u$ to child $v$ \emph{heavy} if:
\begin{enumerate}
    \item $u$ and $v$ are internal in $T_{\mcC}$ and;
    \item the component corresponding to $v$ is not a witness component of the cut corresponding to $u$.
\end{enumerate}
Otherwise, call an edge \emph{light}.

Observe that as each vertex has at most one heavy child the collection of heavy edges induces vertex-disjoint paths in $T_{\mcC}$, the union of which contain all vertices corresponding to a cut in $\mcC$ (i.e.\ all internal vertices of $T_{\mcC}$). Also, observe that any any root to leaf path intersects at most $O(\log n)$ light edges since each time a light edge is traversed the corresponding component's number of vertices is decreased by at least $\frac{1}{2}$.

Consider one such path of heavy edges $P = (u_1, u_2, \ldots)$ where $u_1$ is the vertex closest to the root in $T_{\mcC}$. Call such a path a heavy path. Observe that by definition of a heavy edge the cut sequence gotten by taking the cuts corresponding $(C_1', C_2', \ldots )$ to these vertices satisfies the conditions of \Cref{cor:sparseCutCorr} and so their union $\mcC' := \bigcup_i C_i'$ has sparsity at most
\begin{align}
    \frac{\sum_i |C_i'|}{\sum_i |C_i'|/\phi_i'}.\label{eq:cc}
\end{align}
where $\phi_i'$ is the sparsity of $C_i'$.

Next, consider the cut sequence from taking the cuts corresponding to all such heavy paths. More formally order the heavy paths of $T_{\mcC}$ in a fixed but arbitrary order $(P_1, P_2, \ldots)$ where if $i < j$ then no vertex of $P_j$ is an ancestor of a vertex in $P_i$ in $T_{\mcC}$. Next, consider the cut sequence $\tilde{\mcC} := (\tilde{C}_1, \tilde{C}_2, \ldots )$ where $\tilde{C}_i$ is the result of taking the union of all cuts corresponding to vertices in path $P_i$. 

Observe that the cut tree $T_{\tilde{\mcC}}$ for this sequence $\tilde{\mcC}$ is just the result of contracting all heavy edges in $T_{\mcC}$. Furthermore, observe that this tree has depth at most $O(\log n)$ since any root to leaf path in $T_{\mcC}$ intersects at most $O(\log n)$ light edges.

Consider a fixed layer of $T_{\tilde{\mcC}}$, namely a subset of vertices all of who are equal distance in $T_{\tilde{\mcC}}$ from the root of $T_{\tilde{\mcC}}$. Let $\tilde{L}$ be the cuts of $\tilde{\mcC}$ corresponding to these vertices and let $L$ be all cuts of $\mcC$ that are a subset of some cut in $\tilde{\mcC}$. 

Further, assume that $\tilde{L}$ is the layer with corresponding $L$ maximizing $\sum_{C_i \in L} |C_i| / \phi_i$. Observe that by averaging we know that 
\begin{align}\label{eq:avg}
    \sum_{C_i \in L} |C_i| / \phi_i \geq \Omega(1 / \log n) \cdot \sum_{i} |C_i| / \phi_i
\end{align}


Letting $C = \bigcup_i C_i$ we know that
\begin{align}\label{eq:bb}
    \sum_{S_C \in \mathcal {S}_C} \vol(S_C) \geq \sum_{\tilde{C} \in \tilde{L}} \sum_{S_{\tilde{C}} \in \mathcal {S}_{\tilde{C}}} \vol(S_{\tilde{C}})
\end{align}
since the witness components of all cuts in $\tilde{L}$ are disjoint and the one component of $G - C$ that is not a witness component of $C$ can only be smaller in volume than any of the non-witness components of cuts in $\tilde{L}$.

Furthermore, observe that by the definitions of sparsity (\Cref{dfn:normalSpars}), $L$ and $\tilde{L}$ as well as \Cref{eq:cc} we have
\begin{align}\label{eq:bc}
    \sum_{\tilde{C} \in \tilde{L}} \sum_{S_{\tilde{C}} \in \mathcal {S}_{\tilde{C}}} \vol(S_{\tilde{C}}) \geq \sum_{C_i \in L} |C_i| / \phi_i.
\end{align}
Combining Equations \ref{eq:avg}, \ref{eq:bb} and \ref{eq:bc} we get 
\begin{align*}
    \Omega(1 / \log n) \cdot \sum_{i} |C_i| / \phi_i \leq \sum_{S_C \in \mathcal {S}_C} \vol(S_C).
\end{align*}
which when combined with the definition of sparsity and fact that $|C| = \sum_i |C_i|$ gives our claim.
\end{proof}

\section{Union of Sparse Moving Cut Sequence is Sparse}\label{sec:unionOfHCCuts}
In the previous section we saw that the union of sequence of classic sparse cuts is itself a sparse cut. In this section, we prove this fact in the much more challenging length-constrained setting. In particular, we show that taking the union of moving cuts preserves sparsity up to an $N^{O(1/s)}$ factor. 

Formally, we consider a sequence of moving cuts, defined as follows.
\begin{definition}[Sequence of Moving Cuts]\label{dfn:movingCutSequence}
    Given graph $G = (V,E)$ and node-weighting $A$, a sequence of moving cuts is a sequence of moving cuts $(C_1, C_2, \ldots)$. We refer to the $(h,s)$-length sparsity of $C_i$ with respect to $A$ in $G - \sum_{j < i} C_j$ as its $(h,s)$-length sparsity in the sequence. We say the sequence $(C_1, C_2, \ldots)$ is $(h,s)$-length $\phi$-sparse if each $C_i$ is $(h,s)$-length $\phi$-sparse in the sequence.
\end{definition}

The following summarizes the main theorem of this section: that the union of sparse length-constrained moving cuts is sparse.
\unionOfCuts*
\noindent Observe that if every cut $C_i$ above is $(h,s)$-length, then $\sum_i C_i$ is $(h',s')$-length $\phi$-sparse. However, if some of the cuts are even sparser then this lowers the sparsity of $\sum_i C_i$.
$\frac{\phi}{\spa_{(h,s)}(C,A)} \cdot |C|$.

To prove the above theorem we must demonstrate the existence of some unit demand $D$ that witnesses the sparsity of $\sum_i C_i$. However, if $D_i$ is the demand which witnesses the sparsity of $C_i$, we cannot just use $\sum_i D_i$ as $D$ since the result need not be unit. The main idea is to argue that $\sum_i D_i$ can be understood as, more or less, greedily constructing a spanner in parallel and as such induces a graph with arboricity about $N^{O(1/s)}$ where $s$ is the length slack (see \Cref{sec:conventions} for a definition of arboricity). We can then decompose $\sum_i D_i$ into trees and use each of these trees to ``disperse'' the load of $\sum_i D_i$ so that the resulting demand $D$ is unit (after scaling down by about the arboricity). The rest of this section formalizes this argument.

\subsection{Low Arboricity Demand Matching Graph via  Parallel Greedy Spanners}
We begin by formalizing the graph induced by the witnessing demands of our cut sequence. We call this graph the demand matching graph. Informally, this graph simply creates $A(v)$ copies for each vertex $v$ and then matches copies to one another in accordance with the witnessing demands.
\begin{definition}[Demand Matching Graph]\label{def:demandMatching}
Given a graph $G = (V,E)$, a node-weighting $A$ and $A$-respecting demands $\mcD = (D_1, D_2, \ldots )$ we define the demand matching (multi)-graph $G(\mcD) = (V', E')$ as follows:
\begin{itemize}
    \item  \textbf{Vertices:} $H$ has vertices $V' = \bigsqcup_v \copies(v)$ where $\copies(v)$ is $A(v)$ unique ``copies'' of $v$.
    \item \textbf{Edges:} For each demand $D_i$, let $E_i$ be any matching where the number of edges between $\copies(u)$ and $\copies(v)$ for each $u,v \in V$ is $D_i(u,v)$. Then $E' = \bigcup_i E_i$.
\end{itemize}
\end{definition}

The key property of the demand matching graph that we use is that it induces a graph with low arboricity. We prove this by observing that it can be understood as performing a certain parallel greedy spanner construction. The following summarizes this.

\begin{lemma}[Bounded Arboricity of Demands of Sequence of Cuts]\label{lem:arbBound} Let $C_1, C_2, \ldots$ be a sequence of $(h, s)$-length cuts with witnessing demands $\mcD = (D_1, D_2, \ldots)$. Then $G(\mcD)$ has arboricity at most $s^3 \cdot \log^3 N \cdot N^{O(1/s)}$.
\end{lemma}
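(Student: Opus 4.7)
The plan is to realize the demand matching graph $G(\mcD)$ as the output of a parallel greedy spanner-like process and then invoke the arboricity bound for such processes from \cite{haeupler2023parallel}. Fix the ordering given by the cut sequence and imagine building $G(\mcD)$ incrementally by adding the matching edges of $D_1$, then $D_2$, and so on, giving every added edge a uniform length $h$. This length is justified by the fact that whenever a pair $(u,v)$ appears in $\supp(D_i)$, the demand $D_i$ is $h$-length in $G - \sum_{j<i}C_j$, so $u$ and $v$ are witnessed to be at $G$-distance at most $h$ just before $C_i$ is applied.

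Next, I would establish the key ``stretch'' property: at the moment a matching edge $\{u',v'\}\in\copies(u)\times\copies(v)$ is added as part of $D_i$, any path in the already-built demand matching graph between a copy of $u$ and a copy of $v$ consists of $\Omega(s)$ matching edges (equivalently, has edge-length $\Omega(sh)$). The argument is by contrapositive: a short path in $G(\mcD_{<i})$ would be a concatenation of matching edges $\{u'_\ell, v'_\ell\}$ with each underlying pair $(u_\ell,v_\ell)\in\supp(D_{i_\ell})$ for some $i_\ell<i$. Each such pair is $(sh)$-length separated by $C_{i_\ell}$ and hence $d_{G-\sum_{j<i}C_j}(u_\ell,v_\ell)>sh$, because cuts only increase distances. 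But the endpoints $u,v$ of the newly added edge satisfy $d_{G-\sum_{j<i}C_j}(u,v)\le h$, and the triangle inequality, applied in the graph $G-\sum_{j<i}C_j$ augmented by zero-length identifications $\copies(w)\to w$, then forces the path to use strictly more than $s$ matching edges. Thus every newly added edge has demand-graph-distance strictly more than $s$ in $G(\mcD_{<i})$, which is exactly the stretch condition for a $(2s-1)$-spanner-like process.

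Finally, I would invoke the main arboricity guarantee for parallel greedy spanner processes of \cite{haeupler2023parallel}: a graph in which every added edge satisfies the above stretch property admits a forest cover of size at most $N^{O(1/s)}$ up to $\poly(\log N, s)$ factors, giving the claimed bound $s^3\cdot\log^3 N\cdot N^{O(1/s)}$. Since the demands $D_i$ are $A$-respecting, the multiplicities $\copies(v)$ only inflate the relevant vertex count by a factor $|A|\le\poly(N)$, which is absorbed into the $N^{O(1/s)}$ term.

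The main obstacle will be making the second step fully rigorous, because demand-graph distance is an abstract notion that does not \emph{a priori} refer to any distance in $G$. The key technical move will be to set up a coupling between demand-graph paths and $G$-paths: for each matching edge added in $D_i$, its pre-image certificate is a short $G$-path of length at most $h$ in $G-\sum_{j<i}C_j$, and concatenating these certificates realizes every demand-graph path as a $G$-path in the final cut graph. The $(sh)$-length separation imposed by each $C_{i_\ell}$ is then precisely what prevents those concatenated $G$-paths from being too short, yielding the spanner-stretch certification required to plug into the parallel greedy spanner theorem.
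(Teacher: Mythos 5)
Your high-level plan — encode $G(\mcD)$ as a parallel-greedy-spanner-type process and invoke the arboricity bound of \cite{haeupler2023parallel} — is exactly the paper's strategy. But the ordering you choose for the process is wrong, and the triangle-inequality step does not go through as a result.

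You process the matchings in forward order $D_1, D_2, \ldots$, so that when the edges of $D_i$ are added, the already-built graph $G(\mcD_{<i})$ consists of matchings from $D_{i_\ell}$ with $i_\ell < i$. For such a pair $(u_\ell,v_\ell)\in\supp(D_{i_\ell})$, the separation fact propagates to $G-\sum_{j<i}C_j$ (you correctly note this gives $d_{G-\sum_{j<i}C_j}(u_\ell,v_\ell) > sh$), but the \emph{closeness} fact does not: $D_{i_\ell}$ is $h$-length only in $G-\sum_{j<i_\ell}C_j$, a graph with \emph{fewer} cuts, and applying more cuts only increases distances. So in the reference graph $G-\sum_{j<i}C_j$ you have: every edge of the short path is $>sh$-far, and the endpoints $u,v$ are $\le h$-close. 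This pair of facts produces no contradiction — the triangle inequality gives $d(u,v)\le\sum_\ell d(u_\ell,v_\ell)$, which is the trivial ``small $\le$ big,'' and cannot lower-bound the number of path edges. (Concretely, a two-edge path $u, z, v$ with $z$ very far from both $u$ and $v$ satisfies both constraints, so nothing forces the path to have $>s$ edges.)

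The fix is to process the matchings in \emph{reverse} order $D_k, D_{k-1}, \ldots, D_1$, which is what the paper does. When you add $E_{i-1}$, the already-built graph contains matchings from $D_j$ with $j\ge i$; those pairs are $h$-close in $G-\sum_{l<j}C_l$ (a graph with at least as many cuts as $G-\sum_{l\le i-1}C_l$), hence also $h$-close in $G-\sum_{l\le i-1}C_l$. Now a path of $\le s$ edges between copies of $u$ and $v$ gives $d_{G-\sum_{l\le i-1}C_l}(u,v)\le sh$ by the triangle inequality, which directly contradicts the fact that $C_{i-1}$ $hs$-separates the pair $(u,v)\in\supp(D_{i-1})$ in that very graph. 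The closeness of old edges and the farness of the new edge are the two facts you need on the \emph{correct} sides, and reverse order is what aligns them.
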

\begin{proof}
\newcommand{\pg}{\textnormal{\textsf{pg}}\xspace}
We say that a sequence $(E_1,\ldots,E_k)$ of edge sets on $V$ is $s$-$\pg$ (abbreviation for $s$-parallel greedy) for some integer $s\ge 2$, iff for each $1\le i\le k$,
\begin{itemize}
    \item set $E_i$ is a matching on $V$; and
    \item if we denote by $G_{i-1}$ the graph on $V$ induced by edges in $\bigcup_{1\le j\le i-1}E_j$, then for every edge $(u,v)\in E_i$, $d_{G_{i-1}}(u,v)> s$.
\end{itemize}
Equivalently, a sequence $(E_1,\ldots,E_k)$ is $s$-$\pg$ iff for each $1\le i\le k$, every cycle in $G_i$ of length at most $s+1$ contains at least two edges in $E_i$.
We say that a graph $G$ is $s$-$\pg$ iff its edge set $E(G)$ is the union of some $s$-$\pg$ sequence on $V(G)$. \cite{haeupler2023parallel} proves that every $s$-\pg graph on $n$ vertices has arboricity $s^3 \cdot \log^3 n \cdot n^{O(1/s)}$.

We now show that this fact implies \Cref{lem:arbBound} by showing that $G(\dset)$ is a $s$-$\pg$ graph. By definition, $G(\dset)$ is the union of a sequence of matchings. 
Denote $D=(D_1,\ldots,D_k)$. For each $i$, let $G_{i}(\dset)$ be the union of all matchings corresponding to $D_k,\ldots,D_i$. Consider an edge in $E_{i-1}$, the matching corresponding to demand $D_{i-1}$. By definition, it suffices to show that for each $(u,v)\in E_{i-1}$, there is no length-at-most-$s$ path in $G_i$ containing $u,v$. Assume for contradiction that there exists a path $P=(u,x_1,\ldots,x_{s-1},v)$ in $G_{i}$. 
This means that in graph $G-\sum_{1\le j\le i-1}C_{j}$, every pair in $(u,x_1),(x_1,x_2),\ldots,(x_{s-1},v)$ is at distance at most $h$. 
By triangle inequality, this implies that the distance between $u,v$ in $G-\sum_{1\le j\le i-1}C_{j}$ is less than $hs$. However, as the moving cut $C_{i-1},C_{i-1},\ldots,C_1$ separates all pairs in $D_{i-1},D_{i-1},\ldots,D_1$ to distance more than $hs$, as an edge in $E_{i-1}$ (which is a pair in $\bigcup_{1\le j\le i-1}D_{j}$), $u$ and $v$ should be at distance more than $hs$ in $G-\sum_{1\le j\le i-1}C_{j}$, a contradiction.
\end{proof}

\subsection{Matching-Dispersed Demand}
In the previous section we formalized the graph induced by the witnessing demands $(D_1, D_2, \ldots)$ of a sequence of sparse moving cuts and argued that this graph has low arboricity. We now discuss how to use the forest decomposition of this graph to disperse $(D_1, D_2, \ldots)$ so that the result is a unit demand.

\begin{figure}
    \centering
    \begin{subfigure}[b]{0.48\textwidth}
        \centering
        \includegraphics[width=\textwidth,trim=0mm 0mm 0mm 0mm, clip]{./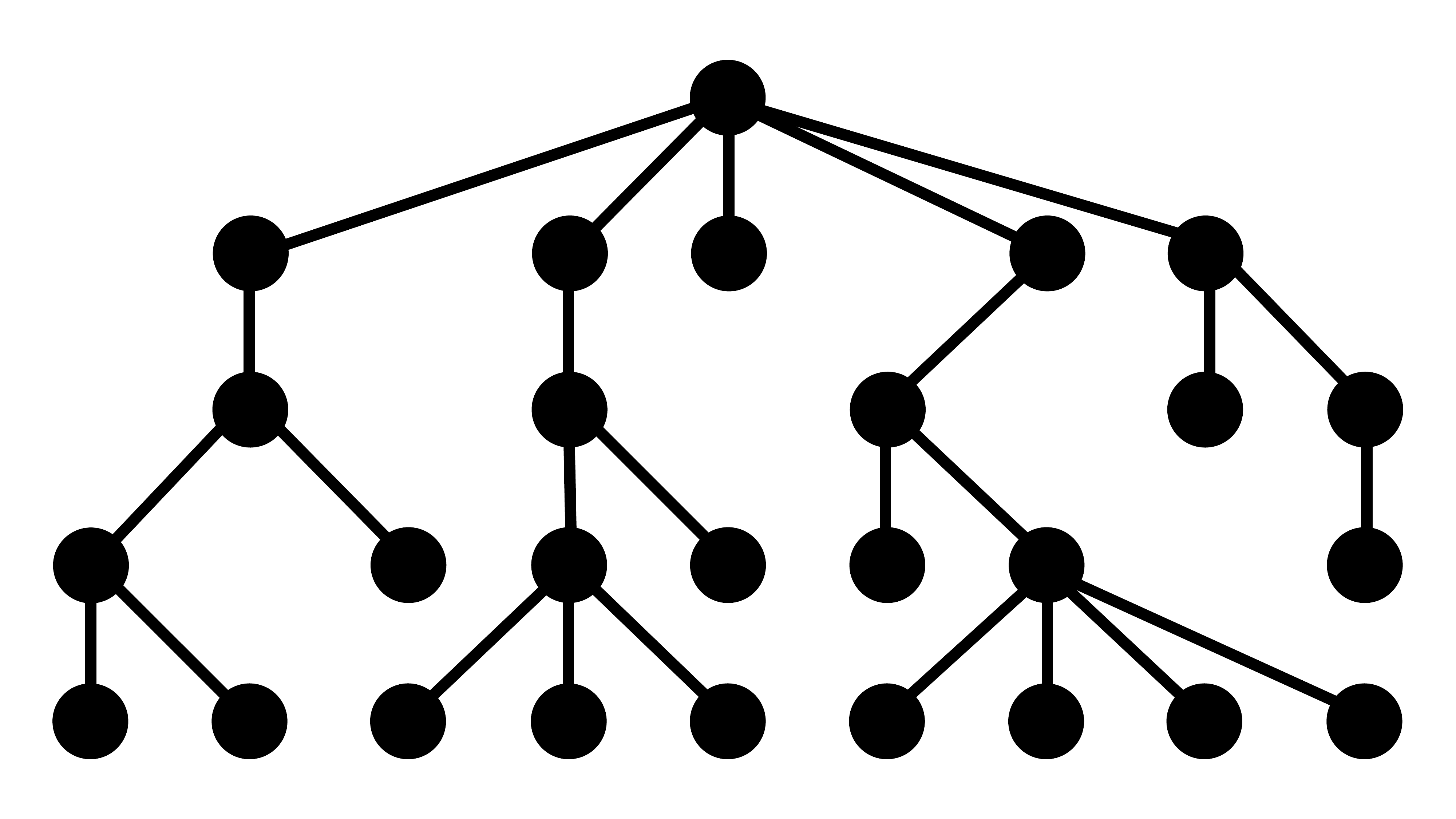}
        \caption{Input tree $T$.}\label{sfig:dispTree1}
    \end{subfigure}    \hfill
    \begin{subfigure}[b]{0.48\textwidth}
        \centering
        \includegraphics[width=\textwidth,trim=0mm 0mm 0mm 0mm, clip]{./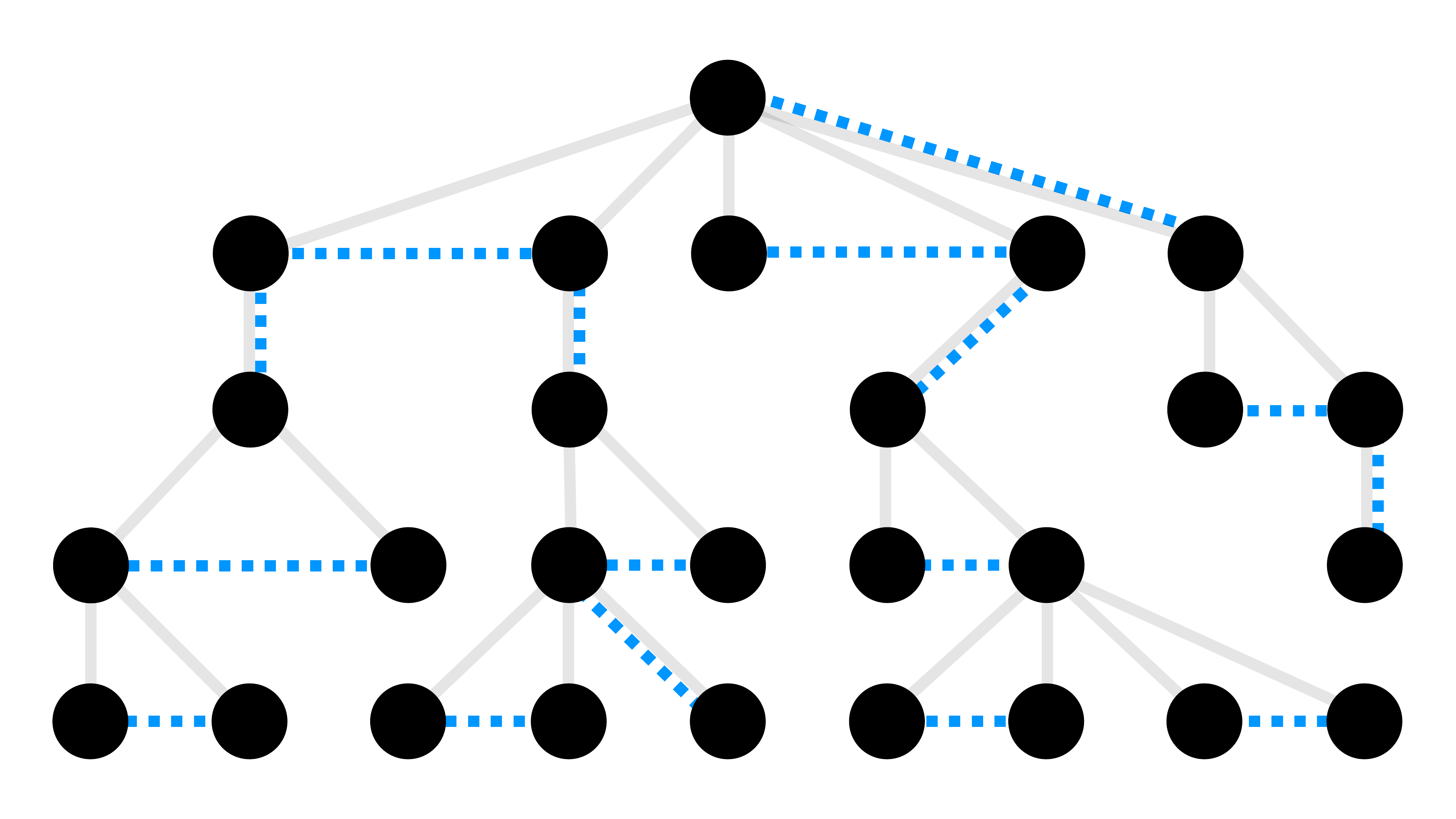}
        \caption{$\disperse_T$.}\label{sfig:dispTree2}
    \end{subfigure}
    \caption{How we disperse demand given a tree $T$ (\ref{sfig:dispTree1}). \ref{sfig:dispTree2} gives the support of $\disperse_T$ dashed in blue; notice that each vertex has degree at most $2$.}\label{fig:disperseTree}
\end{figure}

The following notion of tree matching demand formalizes how we disperse the demand in each tree of the forest decomposition of demand matching graph. Informally, given a tree this demand simply matches siblings in the tree to one another. If there are an odd number of siblings the leftover child is matched to its parent. See \Cref{fig:disperseTree} for an illustration.

\begin{definition}[Tree Matching Demand]\label{def:treeMatchDemand}
Given tree $T = (V,E)$ we define the tree-matching demand on $T$ as follows. Root $T$ arbitrarily. For each vertex $v$ with children $C_v$ do the following. If $|C_v|$ is odd let $U_v = C_v \cup \{v\}$, otherwise let $U_v = C_v$. Let $M_v$ be an arbitrary perfect matching on $U_v$ and define the demand associated with $v$ as 
\begin{align*}
    D_v(u_1, u_2) := 
    \begin{cases}
        1 & \text{if $\{u_1,u_2\} \in M_v$}\\
        0 & \text{otherwise}.
    \end{cases}
\end{align*}
where each edge in $M_v$ has an arbitrary canonical $u_1$ and $u_2$. Then, the tree matching demand for $T$ is defined as
\begin{align*}
    \disperse_{T} := \sum_{v \text{ internal in }T} D_v
\end{align*}
\end{definition}

We observe that a tree matching demand has size equal to the input size (up to constants).
\begin{lemma} \label{lem:treeMatchSize}
Let $T$ be a tree with $n-1$ vertices. Then $|\disperse_T| \geq \frac{n-1}{2}$.
\end{lemma}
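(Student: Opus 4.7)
The plan is a short counting argument: bound $|\disperse_T|$ from below by summing the sizes of the matchings $M_v$ over all internal vertices, then use a handshake-style identity for trees.

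First I would observe that by \Cref{def:treeMatchDemand}, the set $U_v$ is always of even cardinality: if $|C_v|$ is even we take $U_v = C_v$ directly, and if $|C_v|$ is odd we adjoin $v$ itself, making $|U_v| = |C_v|+1$ even. Hence $M_v$ is a perfect matching on $U_v$ with $|M_v| = |U_v|/2 = \lceil |C_v|/2 \rceil$, and the demand $D_v$ places exactly one unit of load on each such pair, giving $|D_v| = \lceil |C_v|/2 \rceil$.

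Next I would use that the size of a sum of nonnegative demands equals the sum of sizes, so
\[
|\disperse_T| \;=\; \sum_{v \text{ internal}} |D_v| \;=\; \sum_{v \text{ internal}} \lceil |C_v|/2 \rceil \;\geq\; \frac{1}{2}\sum_{v \text{ internal}} |C_v|.
\]
Finally, I would invoke the elementary fact that $\sum_{v \text{ internal}} |C_v|$ equals the number of non-root vertices of $T$, since each non-root vertex is a child of exactly one (necessarily internal) vertex. With this identity, the displayed inequality immediately yields a bound of the form ``(number of non-root vertices)$/2$,'' from which the stated lower bound follows.

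No serious obstacle is expected; the whole argument is bookkeeping on the matchings $M_v$ plus the parent-counting identity. The only mild subtlety is the parity case in the definition of $U_v$, which the construction already absorbs by promoting $v$ into $U_v$ whenever $|C_v|$ is odd (and this promotion is in fact what lets us use $\lceil \cdot \rceil$ rather than $\lfloor \cdot \rfloor$ in the per-vertex contribution).
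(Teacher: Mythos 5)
Your proof is correct and takes essentially the same approach as the paper's: both rely on the observation that $|M_v| = |U_v|/2 = \lceil|C_v|/2\rceil$ and then count the non-root vertices (each appears in exactly one $C_v$ / at least one $U_v$), giving the bound of half that count. The only cosmetic difference is that you sum $|C_v|$ (exactly the non-root vertex count) whereas the paper sums $|U_v|$ (at least that count); both routes agree once one notes $|U_v| = 2\lceil|C_v|/2\rceil$, and both tacitly read the hypothesis as ``$T$ has $n-1$ edges,'' so that there are $n-1$ non-root vertices.
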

\begin{proof}
    For each $v$ that is internal in $T$ let the vertices $U_v$ and the perfect matching $M_v$ on $U_v$ be as defined in \Cref{def:treeMatchDemand}. Then, observe that $\sum_v |U_v| \geq n-1$ since every vertex except for the root appear in at least one $U_v$. On the other hand, for each $v$ since $M_v$ is a perfect matching on $U_v$ we have $|M_v| = \frac{1}{2} |U_v|$ and since $|\disperse_T| = \sum_{v \text{ internal in } T} |M_v|$, it follows that $|\disperse_T| \geq \frac{n-1}{2}$ as required.
\end{proof}

Having formalized how we disperse a demand on a single tree with the tree matching demand, we now formalize how we disperse an arbitrary demand by taking a forest cover, applying the matching-dispersed demand to each tree and then scaling down by the arboricity.

\begin{definition}[Matching-Dispersed Demand]\label{dfn:matchingDemand}
Given graph $G$, node-weighting $A$ and demands $\mcD = (D_1, D_2, \ldots)$, let $G(\mcD)$ be the demand matching graph (\Cref{def:demandMatching}), let $T_1, T_2, \ldots$ be the trees of a minimum size forest cover with $\alpha$ forests of $G(\mcD)$ (\Cref{def:demandMatching}) and let $\disperse_{T_1}, \disperse_{T_2}, \ldots$ be the corresponding tree matching demands (\Cref{def:treeMatchDemand}). Then, the matching-dispersed demand on nodes $u,v \in V$ is
\begin{align*}
    \disperse_{\mcD, A}(u,v) := \frac{1}{2\alpha} \cdot \sum_i \sum_{u' \in \copies(u)}  \sum_{v' \in \copies(v)}\disperse_{T_i}(u', v')
\end{align*}
\end{definition}

We begin with a simple helper lemma that observes that the matching-dispersed demand has size essentially equal to the input demands (up to the arboricity).
\begin{lemma}\label{lem:matchingDemandSize}
    Given graph $G$, node-weighting $A$ and and demands $\mcD = (D_1, D_2, \ldots)$ where $G(\mcD)$ has arboricity $\alpha$, we have that the matching-dispersed demand $\disperse_{\mcD, A}$ satisfies $|\disperse_{\mcD, A}| \geq \frac{1}{4 \alpha} \sum_i |D_i|$
\end{lemma}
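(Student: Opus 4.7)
The plan is to combine the arboricity bound on $G(\mcD)$ with the per-tree size bound from \Cref{lem:treeMatchSize} via a straightforward edge-counting argument. First I would observe that by \Cref{def:demandMatching}, each demand $D_i$ contributes a matching $E_i$ of exactly $|D_i|$ edges to $G(\mcD)$ (where existence of the matching uses that $D_i$ is $A$-respecting, so the required degrees fit into the copy-sets). Since the $E_i$ are taken to be edge-disjoint in the definition, the total edge count is $|E(G(\mcD))| = \sum_i |D_i|$.

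Next I would invoke the arboricity assumption: $G(\mcD)$ admits a forest cover of size $\alpha$, and the trees $T_1, T_2, \ldots$ appearing across those $\alpha$ forests partition the edges of $G(\mcD)$. Letting $e_i$ denote the number of edges of $T_i$, we therefore have $\sum_i e_i = |E(G(\mcD))| = \sum_j |D_j|$.

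Now I would apply \Cref{lem:treeMatchSize} tree-by-tree: a tree with $e_i$ edges (equivalently $e_i+1$ vertices) yields a tree-matching demand of size at least $e_i/2$. Summing gives $\sum_i |\disperse_{T_i}| \ge \tfrac12 \sum_j |D_j|$. Finally, unrolling \Cref{dfn:matchingDemand} --- where the nested sums over $u,v$ and over copies $u'\in\copies(u), v'\in\copies(v)$ simply regroup each $\disperse_{T_i}$ back into its total mass --- yields
\[
  |\disperse_{\mcD, A}| \;=\; \sum_{u,v} \disperse_{\mcD, A}(u,v) \;=\; \frac{1}{2\alpha}\sum_i |\disperse_{T_i}| \;\ge\; \frac{1}{4\alpha}\sum_j |D_j|,
\]
which is the claimed bound.

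The proof is essentially bookkeeping --- the real content sits in \Cref{lem:arbBound} (bounding the arboricity of $G(\mcD)$) and in \Cref{lem:treeMatchSize} (showing that a tree's matching demand captures within a factor of $2$ of its edge count). The only mild subtlety I would double-check is that the edge count of $G(\mcD)$ really equals $\sum_i |D_i|$ rather than, say, half of it, depending on whether $|D_i|$ counts ordered or unordered pairs; any small constant loss of this kind is absorbed by the $\tfrac14$ in the statement, so I do not anticipate a genuine obstacle.
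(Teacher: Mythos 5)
Your proof is correct and takes essentially the same route as the paper: count edges of $G(\mcD)$ as $\sum_i |D_i|$, decompose into a forest cover of size $\alpha$, apply the per-tree bound from \Cref{lem:treeMatchSize}, and unroll the scaling by $\tfrac{1}{2\alpha}$ in \Cref{dfn:matchingDemand}. The ordered-vs-unordered concern you flag at the end is exactly absorbed by the paper's convention that $|E(G(\mcD))| = \sum_i |D_i|$ by construction in \Cref{def:demandMatching}, as the paper also states without further comment.
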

\begin{proof}
    Observe that the number of edges in $G(\mcD)$ is exactly $\sum_i |D_i|$ and so summing over each tree $T_j$ in our forest cover and applying \Cref{lem:treeMatchSize} gives
\begin{align*}
    \sum_j |\disperse_{T_j}| \geq \frac{1}{2} \cdot \sum_i |D_i|
\end{align*}
Combining this with the definition of $\disperse_{\mcD, A}$ (\Cref{dfn:matchingDemand}) gives the claim.
\end{proof}

We now argue the key properties of the matching-dispersed demand which will allow us to argue that it can be used as a witnessing demand for $\sum_i D_i$.
\begin{lemma}[Properties of Matching-Dispersed Demand]\label{lem:sparseOfMatching}
    Given graph $G = (V, E)$ and node-weighting $A$, let $C_1, C_2, \ldots$ be a sequence of moving cuts where $C_i$ is $(h, s)$-length $\phi_i$-sparse in $G - \sum_{j < i} C_i$ w.r.t.\ $A$ with witnessing demands $\mcD = (D_1, D_2, \ldots)$. Then the matching dispersed demand $\disperse_{\mcD, A}$ is:
    \begin{itemize}
        \item a $2h$-length $A$-respecting demand;
        \item $h \cdot (s-2)$-separated by $\sum_i C_i$ and;
        \item of size $|\disperse_{\mcD,A}| \geq \frac{1}{s^3 \cdot \log^3 N \cdot N^{O(1/s)}} \sum_i \frac{|C_i|}{\phi_i} $.
    \end{itemize}
\end{lemma}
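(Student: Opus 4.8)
The plan is to verify the three bullets one at a time, relying on the structure of $\disperse_{\mcD,A}$ (\Cref{dfn:matchingDemand}) and on the two facts already established: the arboricity bound $\alpha \le s^3 \log^3 N \cdot N^{O(1/s)}$ for $G(\mcD)$ (\Cref{lem:arbBound}), and the size bound $|\disperse_{\mcD,A}| \ge \frac{1}{4\alpha}\sum_i |D_i|$ (\Cref{lem:matchingDemandSize}), where $\alpha$ denotes the arboricity of $G(\mcD)$. Write $G_i := G - \sum_{j<i}C_j$, so $G_{i+1} = G_i - C_i$, and let $D_i$ be the demand witnessing the $(h,s)$-length sparsity of $C_i$ in $G_i$. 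I will use throughout that, without loss of generality, $D_i$ is $A$-respecting, every support pair $(u,v)$ of $D_i$ satisfies $d_{G_i}(u,v) \le h$, and every such pair is $hs$-separated by $C_i$ in $G_i$, i.e.\ $d_{G_{i+1}}(u,v) > hs$; the last point holds because any demand mass on non-$hs$-separated pairs can be deleted without changing $\sep_{hs}(C_i, D_i)$, so a minimizer supported on separated pairs exists. I also use that distances are monotone in the applied cut: $d_{G - \sum_l C_l}(\cdot, \cdot) \ge d_{G_{i+1}}(\cdot, \cdot)$ for every $i$.

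For the first bullet, recall that $\disperse_{\mcD,A}$ is obtained from a minimum forest cover $T_1, \dots, T_\alpha$ of $G(\mcD)$ by forming each tree-matching demand $\disperse_{T_j}$ (\Cref{def:treeMatchDemand}), pushing it forward to $V$, summing, and scaling by $\frac{1}{2\alpha}$. Within one $\disperse_{T_j}$, a copy $v' \in \copies(v)$ is matched to at most one of its siblings and to at most one of its own children, so it carries total load at most $2$; summing over the $\alpha$ trees and the $A(v)$ copies of $v$, then dividing by $2\alpha$, bounds the load of $v$ in $\disperse_{\mcD,A}$ by $A(v)$, so $\disperse_{\mcD,A}$ is $A$-respecting. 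For the length bound, if $\disperse_{\mcD,A}(u,v) > 0$ then some $\disperse_{T_j}$ matches a copy $u'$ of $u$ to a copy $v'$ of $v$, and these two vertices are either adjacent in $T_j$ or have a common neighbour $w' \in \copies(w)$ in $T_j$. By \Cref{def:demandMatching}, every edge of $T_j \subseteq G(\mcD)$ joins copies of a support pair of some demand $D_i$, and such a pair is within distance $h$ in $G_i$ and hence within distance $h$ in $G$; thus $d_G(u,v) \le h$ in the first case and $d_G(u,v) \le d_G(u,w) + d_G(w,v) \le 2h$ in the second, so $\disperse_{\mcD,A}$ is $2h$-length.

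The second bullet is the crux. Fix $(u,v)$ with $\disperse_{\mcD,A}(u,v) > 0$, witnessed by copies $u', v'$ in a tree $T_j$. If $u'v'$ is itself an edge of $T_j$, it lies in $E_a$ for some $a$, so $(u,v)$ is a support pair of $D_a$ and is therefore $hs$-separated by $C_a$ in $G_a$; monotonicity gives $d_{G - \sum_l C_l}(u,v) > hs \ge h(s-2)$. Otherwise $u'$ and $v'$ are siblings with common parent $w' \in \copies(w)$, and the two tree edges $u'w'$ and $v'w'$ are distinct edges incident to $w'$. Since each $E_j$ is a matching, these two edges come from two different matchings, say $u'w' \in E_a$ and $v'w' \in E_b$ with $a \ne b$; after possibly swapping the names of $u$ and $v$, assume $a < b$. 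From $v'w' \in E_b$, the pair $(v,w)$ is a support pair of $D_b$, which is $h$-length in $G_b$, so $d_{G_b}(v,w) \le h$. From $u'w' \in E_a$, the pair $(u,w)$ is $hs$-separated by $C_a$ in $G_a$, i.e.\ $d_{G_{a+1}}(u,w) > hs$; and since $a < b$ we have $\sum_{l \le a} C_l \le \sum_{l < b} C_l$ pointwise, so $d_{G_b}(u,w) \ge d_{G_{a+1}}(u,w) > hs$. The triangle inequality in $G_b$ then gives $d_{G_b}(u,v) \ge d_{G_b}(u,w) - d_{G_b}(w,v) > hs - h = h(s-1) \ge h(s-2)$, and monotonicity upgrades this to $d_{G - \sum_l C_l}(u,v) > h(s-2)$, as desired.

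For the third bullet, combine the two preceding lemmas: \Cref{lem:matchingDemandSize} gives $|\disperse_{\mcD,A}| \ge \frac{1}{4\alpha}\sum_i |D_i|$; since $D_i$ witnesses the $(h,s)$-length sparsity of $C_i$ and $C_i$ is $(h,s)$-length $\phi_i$-sparse, $|D_i| = \sep_{hs}(C_i, D_i) = |C_i| / \spa_{hs}(C_i, D_i) \ge |C_i| / \phi_i$; and \Cref{lem:arbBound} bounds $\alpha \le s^3 \log^3 N \cdot N^{O(1/s)}$, so, absorbing the factor $4$, $|\disperse_{\mcD,A}| \ge \frac{1}{s^3 \log^3 N \cdot N^{O(1/s)}}\sum_i |C_i| / \phi_i$. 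The only genuinely delicate step is the sibling case of the separation bound: what makes it go through is that two tree edges meeting at a common parent must come from two distinct cuts of the sequence (because each $E_j$ is a matching), which is exactly what lets one pit the $h$-length-ness of the later cut's witnessing demand against the $hs$-separation forced by the earlier cut, the triangle inequality bridging the two. The remaining parts are routine bookkeeping on top of \Cref{lem:arbBound} and \Cref{lem:matchingDemandSize}.
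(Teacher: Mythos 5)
Your proof is correct and follows essentially the same route as the paper's: you bound the load per copy by $2$ per forest to get $A$-respecting, use the two-edge-path structure of $G(\mcD)$ for the $2h$-length bound, pit the $hs$-separation forced by the earlier cut against the $h$-length-ness of the later cut's witness via the triangle inequality for separation, and combine \Cref{lem:matchingDemandSize} with \Cref{lem:arbBound} for the size bound. The only (welcome) additions are that you explicitly treat the case where the matched copies are adjacent in the tree, which the paper glosses over, and your sibling-case bound of $h(s-1)$ is slightly tighter than the $h(s-2)$ required.
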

\begin{proof}
To see that $\disperse_{\mcD, A}$ is $2h$-length observe that a pair of vertices $u$ and $v$ have $\disperse_{\mcD, A}(u,v) > 0$ only if there is a path consisting of at most two edges between a node in $\copies(u)$ and a node in $\copies(v)$ in the demand matching graph $G(\mcD)$ (\Cref{def:demandMatching}). Furthermore, $u' \in \copies(u)$ and $v' \in \copies(v)$ have an edge in $G(\mcD)$ only if there is some $i$ such that $D_i(u, v) > 0$ and since each $D_i$ is $h$-length, it follows that in such a case we know $d_G(u,v) \leq h$. Thus, it follows by the triangle inequality that $\disperse_{\mcD, A}$ is $2h$-length.
    
To see that $\disperse_{\mcD, A}$ is $A$-respecting we observe that each vertex in $G(D)$ is incident to at most $2\alpha$ matchings across all of the tree matching demands we use to construct $\disperse_{\mcD, A}$ (at most $2$ matchings per forest in our forest cover). Thus, for any $u \in V$ since $|\copies(u)| = A(u)$ we know
\begin{align*}
    \sum_{u' \in \copies(u)} \sum_j  \sum_{v} \sum_{v' \in \copies(v)}\disperse_{T_j}(u', v') \leq \sum_{u' \in \copies(u)} 2\alpha \leq 2\alpha \cdot A(u).
\end{align*}

It follows that for any $u \in V$  we have
\begin{align*}
\sum_v \disperse_{\mcD, A}(u, v) = \sum_v \frac{1}{2\alpha}\sum_j \sum_{u' \in \copies(u)}  \sum_{v' \in \copies(v)}\disperse_{T_j}(u', v') \leq A(u)
\end{align*}
A symmetric argument shows that $\sum_v \disperse_{\mcD, A}(v, u) \leq A(u)$ and so we have that $\disperse_{\mcD, A}$ is $A$-respecting.

We next argue that $\sum_i C_i$ is a moving cut that $h(s-2)$-separates $\disperse_{\mcD, A}$. Consider an arbitrary pair of vertices $u$ and $v$ such that $\disperse_{\mcD, A}(u,v) > 0$; it suffices to argue that $\sum_i C_i$ $h(s-2)$-separates $u$ and $v$. As noted above, $\disperse_{\mcD, A}(u,v) > 0$ only if there is a path $(u', w', v')$ in $G(\mcD)$ where $u' \in \copies(u)$, $v' \in \copies(v)$ and for some $w \in V$ we have $w' \in \copies(w)$. But, $\{u', w'\}$ and $\{w', v'\}$ are edges in $G(\mcD)$ only if there is some $i$ and $j$ such that $D_i(u,w) > 0$ and $D_j(w,v) > 0$. 

By definition of $G(\mcD)$ (\Cref{def:demandMatching}), each $D_i$ corresponds to a different matching in $G(\mcD)$ and so since $\{u', w'\}$ and $\{w', v'\}$ share the vertex $w'$, we may assume $i \neq j$ and without loss of generality that $i < j$. Let $G_{\leq i}$ be $G$ with $\sum_{l \leq i} C_l$ applied.

Since $D_i$ is $hs$-separated by $C_{\leq i}$ and $D_i(u,w) > 0$, we know that
\begin{align}
    d_{G_{\leq i}}(u, w) \geq hs.\label{eq:a}
\end{align}

On the other hand, since $D_j$ is an $h$-length demand, $j > i$ and $D_j(w,v) > 0$, we know that the distance between $w$ and $v$ in $G_{\leq i}$ is
\begin{align}
    d_{G_{\leq i}}(w, v) \leq h. \label{eq:b}
\end{align}

Thus, it follows that $C_{\leq i}$ must $h(s-2)$ separate $u$ and $v$ since otherwise we would know that $d_{G_{\leq i}}(u,w) \leq h(s-2)$ and so combining this with \Cref{eq:b} and the triangle inequality we get $d_{G_{\leq i}}(u, w) \leq hs - h$, contradicting \Cref{eq:a}. Thus, $\sum_{i} C_i$ must $h(s-2)$ vertices $u$ and $v$.

Lastly, we argue that $|\disperse_{\mcD,A}| \geq \frac{1}{s^3 \cdot \log^3 N \cdot N^{O(1/s)}} \sum_i \frac{|C_i|}{\phi_i}$. By \Cref{lem:matchingDemandSize} we know that 
\begin{align*}
    |\disperse_{\mcD, A}| \geq \frac{1}{4\alpha} \sum_i |D_i|
\end{align*}
where $\alpha$ is the arboricity of $G(\mcD)$; applying our bound of $s^3 \cdot \log^3 N \cdot N^{O(1/s)}$ on the arboricity of $G(\mcD)$ from \Cref{lem:arbBound} and the fact that since each $C_i$ is $\phi_i$-sparse, we know that $|D_i| \geq \frac{|C_i|}{\phi_i}$ for each $i$ gives us 
\begin{align*}
|\disperse_{\mcD,A}| \geq \frac{1}{4 \cdot s^3 \cdot \log^3 N \cdot N^{O(1/s)}} \sum_i \frac{|C_i|}{\phi_i},
\end{align*}
 as required.
\end{proof}

\subsection{Proving Union of Sparse Moving Cuts is a Sparse Moving Cut}
We conclude this section by arguing that the union of sparse moving cuts is itself sparse. Our argument does so by using the matching-dispersed demand as the witnessing demand for the union of sparse cuts.

\unionOfCuts*
\begin{proof}
Recall that to demonstrate that $\sum_i C_i$ is a $\phi'$-sparse $(h', s')$-length sparse cut, it suffices to argue that there exists an $h'$-length $A$-respecting demand $D$ that is $h's'$-separated by $\sum_i C_i$ where $|D| \geq  \frac{\sum_i |C_i|}{\phi'}$.

\Cref{lem:sparseOfMatching} demonstrates the existence of exactly such a demand---namely the matching dispersed demand as defined in \Cref{dfn:matchingDemand}---for $h' = 2h$, $s' = \frac{(s-2)}{2}$ and $\phi' = s^3 \cdot \log^3 N \cdot N^{O(1/s)} \cdot \frac{\sum_i |C_i|}{\sum_i |C_i| / \phi_i}$, giving the claim.
\end{proof}

\section{Equivalence of Distances to Length-Constrained Expander}\label{sec:equivalences}
We now use the tools we developed in the previous section to argue that several quantities related to length-constrained expansion are all equal (up to slacks in sparsity, $h$ and length slack). This equivalence will form the backbone of the analysis of our algorithm. Before proceeding, it may be useful for the reader to recall the definition of a sequence of moving cuts (\Cref{dfn:movingCutSequence}) and $\spa$ (\Cref{def:sparsity}). The



The following series of definitions provides the quantities we will argue are all equal up to slacks.

\begin{definition}\label{dfn:equantities}
Fix a graph $G$, node-weighting $A$ and parameters $h$, $s$ and $\phi$. 
Then, we define the following quantities:
\begin{enumerate}
    \item \textbf{Largest Sparse Cut Size:} $\qLC(\phi, h, s)$ is the size of the largest $(h,s)$-length $\phi$-sparse cut in $G$ w.r.t.\ $A$. That is 
    \begin{align*}
        \qLC(\phi, h, s) := |C_0|
    \end{align*}
    where $C_0$ is the moving cut of largest size in the set $\{C : \spa_{(h,s)}(C,A) \leq \phi \}$.\label{eqv: maximum sparse cut}
    
    \item \textbf{Largest Sparse Cut Sequence Size:}
    $\qLSC(\phi, h, s)$ is the size of the largest sequence of $\phi$-sparse moving cuts. Then 
    \begin{align*}
        \qLSC(\phi, h, s) := \sum_i |C_i|
    \end{align*}
    where $(C_1, C_2, \ldots)$ is the $(h,s)$-length $\phi$-expanding moving cut sequence maximizing $\sum_i |C_i|$.\label{eqv: maximum sparse cut sequence}
    
    \item \textbf{Largest Weighted Sparse Cut Sequence Size:} $\qLWSC(\phi, h, s)$ is the largest weighted size of a sparse cut sequence, namely
    \begin{align*}
        \qLWSC(\phi, h, s) := \sum_i \frac{\phi}{\spa_{(h,s)}(C_i,A)} \cdot |C_i|
    \end{align*}
    where $(C_1, C_2, \ldots)$ is the $(h,s)$-length $\phi$-expanding moving cut sequence maximizing $\sum_i \frac{\phi}{\spa_{(h,s)}(C_i,A)} \cdot |C_i|$ and each $\spa_{(h,s)}(C_i, A)$ is computed after applying $C_j$ for $j <i$.\label{eqv: weighted maximum sparse cut sequence}
    
    \item \textbf{Largest Expander's Complement Size:} $\qLEC(\phi, h, s)$ is $\phi$ times the size of the complement of the largest $(h,s)$-length $\phi$-expanding subset of $A$. That is, let $\hat{A}$ be the $(h,s)$-length $\phi$-expanding node-weighting on $G$ satisfying $\hat{A} \preceq A$ with largest size and let $\bar{A} = A - \hat{A}$ be its complement. Then 
    \begin{align*}
        \qLEC(\phi, h, s) := \phi \cdot |\bar{A}|.
    \end{align*}\label{eqv: node weighting}
    
    \item \textbf{Smallest Expander Decomposition Size:} $\qSED(\phi, h, s)$ is the size of the smallest expander decomposition. That is, 
    \begin{align*}
        \qSED(\phi, h, s) := |C^*|
    \end{align*}
    where $C^*$ is the moving cut of minimum size such that $A$ is $(h,s)$-length $\phi$-expanding in $G-C^*$.\label{eqv: ED size}
\end{enumerate}
\end{definition}

The following formalizes our main claim in this section, the equivalence of the above quantities.
\begin{restatable}{thm}{equivQuan}
\label{thm: equivalence}
Fix a graph $G$, parameters $k, k' \geq 1$ and $\phi, h, s$ and a node-weighting $A$. 
\begin{align*}
    \qLC(\phi, h, s) \leq \qLEC(\phi_1, h, s) \leq \qLSC(\phi_2, h_2, s_2) \leq \qLC(\phi_3, h_3, s_3)
\end{align*}
where 
\begin{align*}
    &\phi_1 = 3 \phi&&\\
    &\phi_2 = \phi_1 \cdot 2^{O(k)} \cdot N^{O(1/k')} \cdot \log^{5} N  \cdot s^3 \cdot N^{O(1/s)}, &h_2 = h \cdot 2k', &&&s_2 = s \cdot O\left(\frac{k}{ k'} \cdot \log N \right)\\
    &\phi_3 = \phi_2 \cdot s^3 \cdot \log^3 N \cdot N^{O(1/s)},& h_3 = h_2 \cdot 2, &&&s_3 = \frac{(s_2-2)}{2}~.
\end{align*}
Furthermore, if $A$ is $(h_4, s_4)$-length $\phi_G$-expanding then:
\begin{align*}
    \frac{\phi_G}{\phi_4} \cdot \qLEC(\phi, h, s) \leq \qSED(\phi_4, h_4,s_4) \leq \qLSC(\phi_4, h_4, s_4) \leq \qLC(\phi_5, h_5, s_5) \leq \qLEC(\phi_5, h_5, s_5)
\end{align*}
where
\begin{align*}
    &\phi_4 = \phi \cdot 2^{O(k)} \cdot N^{O(1/k')} \cdot \log^{2} N, &h_4 = h \cdot k', &&&s_4 = O\left(\frac{k}{ k'} \cdot \log N \right)\\
    &\phi_5 = \phi_4 \cdot s^3 \cdot \log^3 N \cdot N^{O(1/s)}, &h_5 = h_4 \cdot 2, &&&s_5 = \frac{(s_4 - 2)}{2}~.
\end{align*}
\end{restatable}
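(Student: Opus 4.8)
The two displayed chains will be proved one link at a time, and the links fall into three groups. The ``upward'' links $\qLSC \le \qLC$ (in both chains) follow immediately from the union-of-moving-cuts theorem (\Cref{thm:unionOfMovingCuts}), and $\qSED \le \qLSC$ follows immediately from $\qLSC$ being a maximum. The links $\qLC \le \qLEC$ are proved by a short rescaling argument applied to the witnessing demand of a sparsest cut. The remaining links, $\qLEC \le \qLSC$ and $\tfrac{\phi_G}{\phi_4}\qLEC \le \qSED$, are the crux: they assert that a large ``inherently non-expanding'' part of $A$ forces a large sparse cut sequence (resp.\ a large expander decomposition), and they are handled by combining the robustness of length-constrained expanders (\Cref{thm:HCExpPru}) and the embedded-expander-power characterization (\Cref{lem:neighRouting}) with an iterative extraction of sparse cuts; the factor $s^3\log^3 N\cdot N^{O(1/s)}$ in $\phi_2,\phi_5$ is the union-of-cuts overhead and the factor $2^{O(k)}N^{O(1/k')}\log^{2}N$ in $\phi_2,\phi_4$ is the robustness/pruning overhead, with $\qLWSC$ serving as an internal bookkeeping quantity.

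\textbf{The easy links.} For $\qLSC(\phi_2,h_2,s_2)\le \qLC(\phi_3,h_3,s_3)$ (and identically $\qLSC(\phi_4,h_4,s_4)\le \qLC(\phi_5,h_5,s_5)$): let $(C_1,C_2,\dots)$ achieve $\qLSC(\phi_2,h_2,s_2)$, with $C_i$ of $(h_2,s_2)$-length sparsity $\phi_i\le\phi_2$ in the sequence. By \Cref{thm:unionOfMovingCuts}, $\sum_i C_i$ is a single $(2h_2,(s_2-2)/2)$-length $\phi'$-sparse cut for $A$ with $\phi' = s_2^3\log^3 N\cdot N^{O(1/s_2)}\cdot\tfrac{\sum_i|C_i|}{\sum_i|C_i|/\phi_i}\le s_2^3\log^3 N\cdot N^{O(1/s_2)}\cdot\phi_2 = \phi_3$ (using $\phi_i\le\phi_2$ and $s_2\ge s$); since this cut has size $\sum_i|C_i| = \qLSC(\phi_2,h_2,s_2)$, it witnesses the inequality. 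For $\qSED(\phi_4,h_4,s_4)\le \qLSC(\phi_4,h_4,s_4)$: a sequence achieving $\qLSC(\phi_4,h_4,s_4)$ admits no further $(h_4,s_4)$-length $\phi_4$-sparse cut, so $A$ is $(h_4,s_4)$-length $\phi_4$-expanding in $G-\sum_i C_i$; hence $\sum_i C_i$, capped to a moving cut (which does not change distances up to $h_4 s_4$ and hence preserves expansion at this scale), is a valid $(h_4,s_4)$-length $\phi_4$-expander decomposition of $A$, so $\qSED(\phi_4,h_4,s_4)\le \sum_i|C_i| = \qLSC(\phi_4,h_4,s_4)$.

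\textbf{A sparse cut gives a large non-expanding part.} For $\qLC(\phi,h,s)\le \qLEC(\phi_1,h,s)$ (and $\qLC(\phi_5,h_5,s_5)\le \qLEC(\phi_5,h_5,s_5)$, with the constant absorbed into the stated slacks): let $C_0$ achieve $\qLC(\phi,h,s)$ with witnessing $h$-length $A$-respecting demand $D_0$, all of whose support pairs are $sh$-separated by $C_0$ and with $|D_0|\ge |C_0|/\phi$. Let $\hat A\preceq A$ be a largest $(h,s)$-length $\phi_1$-expanding node-weighting and $\bar A = A-\hat A$. Rescale $D_0$ to $D'$ via $D'(u,v) := D_0(u,v)\cdot\min\!\big(1,\tfrac{\hat A(u)}{\sum_w D_0(u,w)}\big)\cdot\min\!\big(1,\tfrac{\hat A(v)}{\sum_w D_0(w,v)}\big)$ (with the convention that a factor is $1$ when its denominator vanishes). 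Then $D'$ is $\hat A$-respecting, it is still $h$-length and $sh$-separated by $C_0$, and using $1-ab\le (1-a)+(1-b)$ for $a,b\in[0,1]$ together with $\sum_w D_0(u,w)\le \hat A(u)+\bar A(u)$ one gets $|D_0|-|D'|\le 2|\bar A|$. Hence $\cond_{(h,s)}(\hat A)\le \spa_{(h,s)}(C_0,\hat A)\le \spa_{sh}(C_0,D') = |C_0|/|D'|$; since $\hat A$ is $(h,s)$-length $\phi_1$-expanding the left side is $\ge\phi_1$, so $|D'|\le |C_0|/\phi_1$ (the case $D'=0$ being immediate), and combining with $|D_0|\ge |C_0|/\phi$ and $|D_0|-|D'|\le 2|\bar A|$ yields $2|\bar A|\ge |C_0|(1/\phi-1/\phi_1) = |C_0|/\phi_1$ for $\phi_1 = 3\phi$, i.e.\ $\qLEC(\phi_1,h,s)=\phi_1|\bar A|\ge |C_0| = \qLC(\phi,h,s)$.

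\textbf{The crux and the main obstacle.} There remain $\qLEC(\phi_1,h,s)\le \qLSC(\phi_2,h_2,s_2)$ and $\tfrac{\phi_G}{\phi_4}\qLEC(\phi,h,s)\le \qSED(\phi_4,h_4,s_4)$. The plan is: write $\hat A$ for a largest $(h,s)$-length $\phi_1$-expanding (resp.\ $(h_4,s_4)$-length $\phi_G$-expanding) subset of $A$ and $\bar A$ for its complement, and build a sparse cut sequence by repeatedly cutting a largest $(h_2,s_2)$-length $\phi_2$-sparse cut of $A$ (resp.\ take the minimum expander decomposition $C^\star$ of $A$). The difficulty is that applying a moving cut neither uniformly helps nor hurts length-constrained expansion — it lengthens distances, which shrinks the relevant $h$-length demand class but forces longer flow paths — so one cannot simply transport the expander obtained after cutting back to $G$. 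Instead one argues through the neighborhood-router-demand flow of \Cref{lem:neighRouting} together with the pruning bound of \Cref{thm:HCExpPru}: a cut of size $|C|$ can spoil the expansion of only an $O\!\big(|C|\cdot 2^{O(k)}N^{O(1/k')}\log N/\phi\big)$-sized portion of $A$ — and, when $A$ is itself $(h_4,s_4)$-length $\phi_G$-expanding in $G$ (which is exactly what lets the pruning be run \emph{in $G$}), only an $O\!\big(|C|\cdot 2^{O(k)}N^{O(1/k')}\log N/\phi_G\big)$-sized portion, which is where the $\phi_G$ in the denominator comes from. Summing these losses over the extracted cuts bounds $|\bar A|$ by $O$ of the total cut size divided by the relevant conductance, and then feeding the resulting single cut through \Cref{thm:unionOfMovingCuts} (converting among $\qLSC$, $\qLWSC$ and $\qLC$) produces the stated parameters. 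I expect the bookkeeping of these robustness losses across the iteration — in particular making the reverse inference ``$A$ expands after cutting $\Rightarrow$ a large part of $A$ already expanded before cutting'' precise despite the non-monotonicity just noted — to be the main technical obstacle; every other link above is essentially immediate from the cited results.
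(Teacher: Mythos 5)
Your treatment of the easy links and of $\qLC \le \qLEC$ is essentially the paper's: the union-of-cuts step, the maximality argument for $\qSED \le \qLSC$, and your multiplicative rescaling of the witnessing demand is just a cosmetic variant of the paper's ``projected down demand'' (\Cref{def:projDemand}), with only a harmless arithmetic slip ($1/\phi - 1/(3\phi) = 2/(3\phi)$, which is what actually yields $\phi_1|\bar A|\ge|C_0|$). The problem is the two central inequalities, $\qLEC(\phi_1,h,s)\le\qLSC(\phi_2,h_2,s_2)$ and $\tfrac{\phi_G}{\phi_4}\qLEC(\phi,h,s)\le\qSED(\phi_4,h_4,s_4)$: you do not prove them, you only sketch a plan and explicitly defer its key step as ``the main technical obstacle,'' and the mechanism you sketch is not the right one. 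The pruning bound of \Cref{thm:HCExpPru} says that if $G$ is \emph{already} an expander for $A$ and one applies a cut, a large sub-node-weighting stays expanding \emph{in $G-C$}; here $G$ is precisely not assumed expanding at scale $(h,s,\phi)$ (otherwise $\bar A=0$ and there is nothing to prove), and $\bar A$ is defined via expansion \emph{in $G$}, so ``summing pruning losses over repeatedly extracted largest sparse cuts'' has neither the hypothesis nor the conclusion it needs. Likewise, the hypothesis that $A$ is $(h_4,s_4)$-length $\phi_G$-expanding does not enter by ``letting the pruning be run in $G$''.

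What the paper actually does (Sections on $\qLEC\le\qLWSC$ and $\qLEC\le\qSED$, via the helper \Cref{lem:EDLemma}) is the reverse inference you flag but do not carry out: take $C^*$ to be either the union of a \emph{maximal weighted} sparse-cut sequence (which is an expander decomposition by maximality) or the minimum expander decomposition; then $G-C^*$ is an $(h',s')$-length $\phi'$-expander, so the neighborhood router demand $\nrd$ (built from a cover of $G$), restricted to pairs not $h's'$-separated by $C^*$, routes in $G-C^*$ with low congestion and dilation; the ``embedding implies length-constrained expander'' direction of \Cref{lem:neighRouting} then yields an expanding subset of size $|A|-\sep_{h's'}(C^*,A)\cdot O(2^{O(k)}N^{O(1/k')}\log^2 N)$, and $\sep_{h's'}(C^*,A)=|C^*|/\spa_{(h',s')}(C^*,A)$. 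The hypothesis on $\phi_G$ is used only to lower-bound $\spa_{(h',s')}(C^*,A)\ge\phi_G$, which is where that factor comes from; for $\qLEC\le\qLSC$ one additionally passes through $\qLWSC$ and the padding lemma (\Cref{lem:paddingCuts}) to compare the weighted sequence to a single largest cut, rather than through any iterative pruning. Since these links are the substance of the theorem, the proposal as written has a genuine gap.
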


The remainder of this section is dedicated to providing proofs of a series of inequalities which can be combined to get the inequalities in \Cref{thm: equivalence}. \paragraph*{Techniques.} We prove the equivalence of these quantities by a series of inequalities. Four of these inequalities are non-trivial and rely on the above-established theory; intuition below.
\begin{itemize}
    \item \textbf{$\qLWSC \leq \qLC$:} Let $(C_1, C_2, \ldots)$ be the largest weighted sequence of $(h,s)$-length $\phi$-sparse cuts and let $C_0$ be the largest $(h,s)$-length $\phi$-sparse cut. Showing that the size of $(C_1, C_2, \ldots)$ is at most the size of $C_0$ follows from observing that (as discussed above), one can take the union of cuts in $(C_1, C_2, \ldots)$ to (essentially) get an $(h,s)$-length $\phi$-sparse cut of equal size. Since $C_0$ is the largest such cut, the inequality follows. Here, we also make use of the idea of ``padding out'' a sparse cut which forces said cut to have an exact desired sparsity.
    \item  \textbf{$\qLC \leq \qLEC$:} Consider the largest $(h,s)$-length $\phi$-sparse cut $C_0$. Intuitively, $C_0$ should not cut too much into any part of the graph that is already $(h,s)$-length $\phi$-expanding, otherwise it would not sparse. Thus, $C_0$ cannot have size much larger than the part of the graph that is not expanding. Formalizing this intuition relies on the idea of a ``projected down demand.''
    \item \textbf{$\qLEC \leq \qLWSC$ and $\qLEC \leq \qSED$:} Again, let $(C_1, C_2, \ldots)$ be the largest weighted sequence of $(h,s)$-length $\phi$-sparse cuts. Proving that the amount of the graph that is not expanding is at most the size of $(C_1, C_2, \ldots)$ can be done using the above-described characterization of $(h,s)$-length expanders in terms of expander power embeddings. In particular, the union of cuts in $(C_1, C_2, \ldots)$ must be an $(h,s)$-length $\phi$-expander decomposition, otherwise we could append another cut to it and contradict its maximum size. Thus, after applying this sequence, the resulting graph can embed expander powers into \emph{most} neighborhoods using short paths. It follows that in the original graph these parts must have been $(h,s)$-length expanding, which, in turn, upper bounds how much of the original graph is not $(h,s)$-length expanding. Arguing $\qLEC \leq \qSED$ is similar.
\end{itemize}
The remainder of the inequalities to be proven are mostly immediate from the relevant definitions. We conclude the section with the proof of this theorem by appropriately stringing together these inequalities.


\begin{figure}
    \centering
        \centering
        \includegraphics[width=.8\columnwidth,trim=0mm 0mm 0mm 0mm, clip]{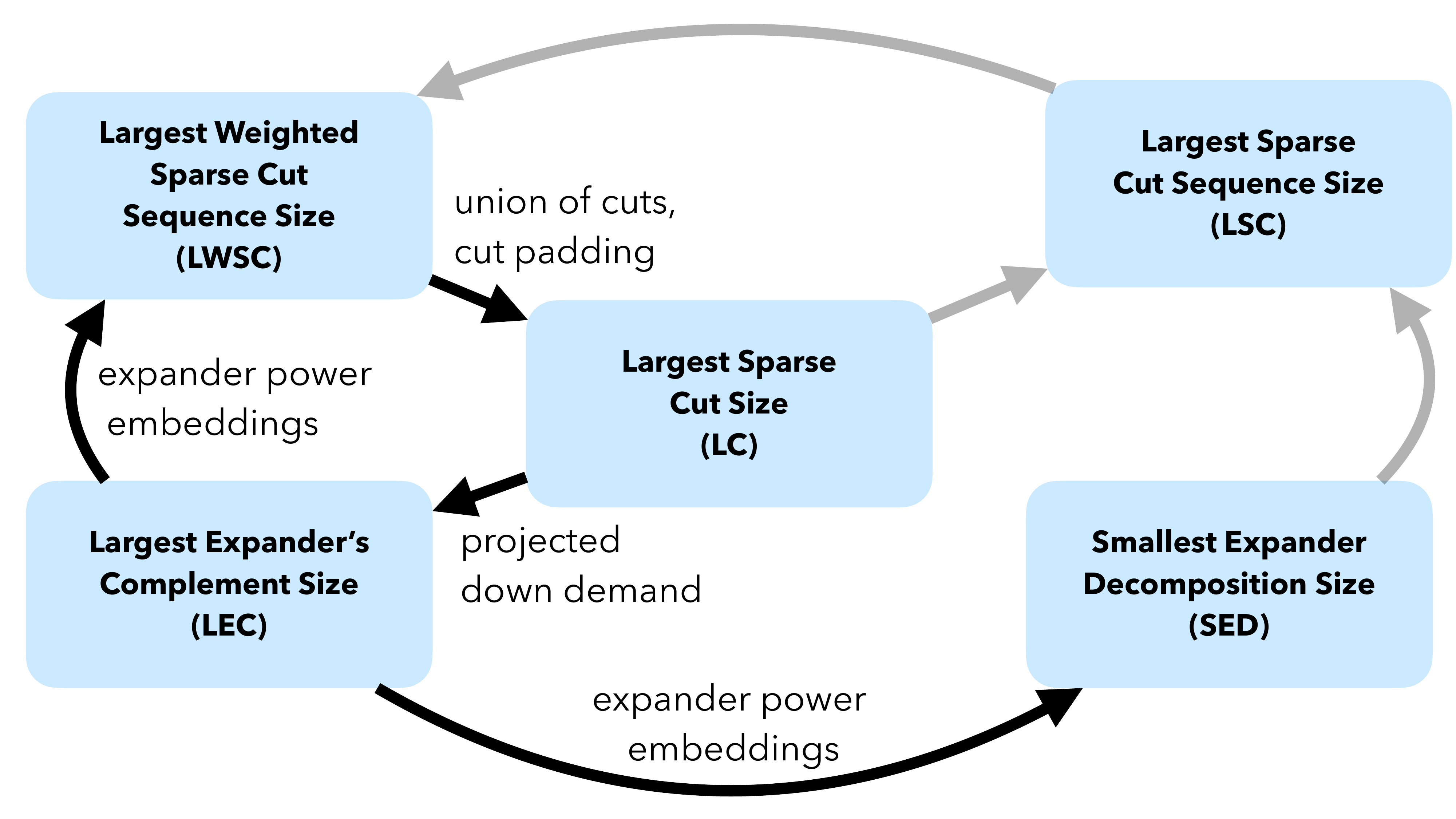}
    \caption{An overview of the inequalities we show. An arrow from $a$ to $b$ indicates $a \leq b$. Each non-trivial inequality opaque and labeled with the key idea of its proof.}\label{fig:ineqOverview}
\end{figure}

\future{This should all be largest demand-size sparse cut sequence}

\subsection{Weighted Sparse Cut Sequence (\ref{eqv: weighted maximum sparse cut sequence}) at Most Largest Cut (\ref{eqv: maximum sparse cut})}\label{sec:LWSCAtMostLC}

We show $\qLWSC \leq \qLC$ using the idea of padding out sparse cuts and the fact that the union of sparse moving cuts is sparse. In particular, padding out cuts allows us to increase the size of our cuts while forcing them to still have bounded sparsity. This allows us to pad out the cuts in our sequence and then take the union of this padded out sequence to observe that it results in a sparse cut which, by definition, can be no larger than the sparsest cut.

\begin{restatable}{thm}{LWSCAtMostLC}\label{lem:LWSCAtMostLC}
Given graph $G$ and node-weighting $A$, we have that 
\begin{align*}
\qLWSC(\phi, h, s) \leq \qLC(\phi', h', s')
\end{align*}
where $\phi' = \phi \cdot s^3 \cdot \log^3 N \cdot N^{O(1/s)}$, $h' = 2h$ and $s' = \frac{(s-2)}{2}$.
\end{restatable}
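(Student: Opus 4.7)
Our plan is to exhibit a single $(h',s')$-length $\phi'$-sparse moving cut of size at least $\qLWSC(\phi, h, s)$, which immediately gives $\qLC(\phi', h', s') \geq \qLWSC(\phi, h, s)$. I would do this in three steps: take the union of the sequence achieving $\qLWSC$, bound its sparsity using the union-of-cuts theorem \Cref{thm:unionOfMovingCuts}, and then ``pad'' the resulting cut up to the exact target sparsity $\phi'$.

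More concretely, let $(C_1, C_2, \ldots)$ be a sequence of $(h,s)$-length $\phi$-sparse moving cuts achieving $\qLWSC(\phi, h, s)$ and, for each $i$, set $\phi_i := \spa_{(h,s)}(C_i, A)$ computed in $G - \sum_{j<i} C_j$, so that $\phi_i \leq \phi$ and $\qLWSC(\phi, h, s) = \phi \cdot \sum_i |C_i|/\phi_i$. Applying \Cref{thm:unionOfMovingCuts} to this sequence shows that $C := \sum_i C_i$ is an $(h', s')$-length $\tilde{\phi}$-sparse cut w.r.t.\ $A$ with $\tilde{\phi} = s^3 \log^3 N \cdot N^{O(1/s)} \cdot \frac{\sum_i |C_i|}{\sum_i |C_i|/\phi_i} \leq \phi'$, where the inequality uses $\phi_i \leq \phi$. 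Next, I pad $C$ by arbitrarily increasing its cut values on edges (or on fresh self-loops) to obtain $C^*$ with $|C^*| = |C| \cdot \phi'/\tilde{\phi}$; a short computation then yields $|C^*| = \phi \cdot \sum_i |C_i|/\phi_i = \qLWSC(\phi, h, s)$. Letting $D$ be the $h'$-length $A$-respecting demand witnessing $\spa_{(h',s')}(C, A) \leq \tilde{\phi}$ (so $|D| \geq |C|/\tilde{\phi}$ and $D$ is $h's'$-separated by $C$), pointwise monotonicity $C^* \geq C$ gives $d_{G-C^*} \geq d_{G-C}$, so $D$ is still $h's'$-separated by $C^*$; hence $\spa_{(h',s')}(C^*, A) \leq |C^*|/|D| \leq |C^*| \tilde{\phi}/|C| = \phi'$, as required.

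The main obstacle is really just the padding step, specifically verifying that adding cut mass does not accidentally introduce a low-sparsity demand that violates $\spa(C^*, A) \leq \phi'$; this is handled by explicitly inheriting the witnessing demand $D$ from $C$ and appealing to monotonicity of $G-C$-distances in $C$. A secondary bookkeeping concern is that moving cuts must take values in multiples of $1/(h's')$, so the padding may require a small rounding up, but this only increases $|C^*|$ and preserves separation, so it can be absorbed without affecting the stated inequality.
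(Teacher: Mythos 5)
Your proposal is correct and follows the paper's proof essentially verbatim: take the union of the sequence, bound its sparsity via \Cref{thm:unionOfMovingCuts}, and pad it up to sparsity exactly $\phi'$ (the paper isolates this padding step as \Cref{lem:paddingCuts}, which also handles the concern that there is enough remaining capacity to pad, while you inline it). The only small inaccuracy is the parenthetical "or on fresh self-loops" --- the moving cut must be supported on $E(G)$, so the padding has to happen on existing edges as in \Cref{lem:paddingCuts} --- but since you also offer the edge-padding option this does not affect the argument.
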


The following formalizes our notion of padding out sparse cuts.
\begin{lemma}[Padding Out Sparse Cut]\label{lem:paddingCuts}
Given graph $G$, let $C$ be a $(h,s)$-length $\phi$-sparse moving cut w.r.t.\ node-weighting $A$. Then there exists an $(h,s)$-length $\phi$-sparse moving cut $C'$ such that 
\begin{align*}
    \frac{\phi}{\spa_{(h,s)}(C,A)} \cdot |C| \leq |C'|.
\end{align*}
\end{lemma}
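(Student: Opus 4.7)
The plan is to construct $C'$ by ``padding'' the given cut $C$ with additional cut weight on edges while preserving the witnessing demand. Let $\alpha := \spa_{(h,s)}(C, A)$, which is at most $\phi$ since $C$ is $(h,s)$-length $\phi$-sparse, and let $D$ be the $A$-respecting $h$-length demand that witnesses this sparsity, so $\sep_{sh}(C, D) \geq |C|/\alpha$. I would set the target size $t := (\phi/\alpha)|C| \geq |C|$ and construct $C' \geq C$ pointwise with $|C'| \geq t$, by initializing $C' := C$ and greedily raising $C'(e)$ on edges with spare capacity (in multiples of $1/(hs)$) until the target size is reached.

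The sparsity bound then follows from a short calculation. Because $C' \geq C$ pointwise, any pair of vertices that was $sh$-length separated by $C$ remains $sh$-length separated by $C'$, so $\sep_{sh}(C', D) \geq \sep_{sh}(C, D) \geq |C|/\alpha$. Since $D$ is still an $A$-respecting $h$-length demand, it is a valid minimizer in the definition of $\spa_{(h,s)}(C', A)$, so
\begin{align*}
\spa_{(h,s)}(C', A) \;\leq\; \spa_{sh}(C', D) \;=\; \frac{|C'|}{\sep_{sh}(C', D)} \;\leq\; \frac{(\phi/\alpha)|C|}{|C|/\alpha} \;=\; \phi,
\end{align*}
certifying that $C'$ is $(h,s)$-length $\phi$-sparse with size at least $(\phi/\alpha)|C|$, as required.

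The one obstacle I anticipate is showing that there is enough unused capacity in the graph to bring $|C'|$ up to $t$. For this I would use that $D$ is $A$-respecting, which implies $\sep_{sh}(C, D) \leq |D| \leq |A|$, and therefore $t = \phi \cdot \sep_{sh}(C, D) \leq \phi|A|$. In the parameter regimes of interest---for example when $A \preceq \deg_G$ and $\phi$ is a reasonably small conductance---this upper bound is comfortably within the total capacity $\sum_e \U_e$, so ample padding room exists. The remaining technicality is that each $C'(e)$ must lie in $\{0, \tfrac{1}{hs}, \ldots, 1\}$, but rounding introduces at most a sub-$1/(hs)$ loss per edge, which is negligible and does not affect the conclusion.
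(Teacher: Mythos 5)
Your plan follows the same padding approach as the paper: keep $D$ as the witnessing demand, raise $C$ pointwise to a target size $t = (\phi/\alpha)|C|$ where $\alpha = \spa_{(h,s)}(C,A)$, and verify via the ratio $|C'|/\sep_{sh}(C',D)$ that the padded cut is still $\phi$-sparse. The sparsity calculation is correct and matches the paper's. The gap is precisely the step you yourself flag as ``the one obstacle'': showing that the graph actually has enough spare cut capacity to reach size $t$. You correctly derive $t \leq \phi|A|$ from $\sep_{sh}(C,D) \leq |D| \leq |A|$, but then assert this is ``comfortably within the total capacity $\sum_e \U_e$'' only ``in the parameter regimes of interest.'' That is a promise, not a proof, and the lemma is stated unconditionally in $G$, $A$, $h$, $s$, $\phi$. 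The paper closes this step by introducing the complement moving cut $\bar{C}(e) := 1 - C(e)$, whose size $|\bar{C}| = \sum_e \U_e - |C|$ is exactly the headroom available for padding, and combining $|\bar{C}| \geq |A| - |C|$ with $(\phi/\alpha)|C| \leq |A|$ to conclude $|\bar{C}| \geq t - |C|$, i.e., the padding fits. You should carry out this explicit capacity accounting rather than defer it to unspecified parameter regimes; doing so also makes visible the concrete assumption relating $\sum_e \U_e$ and $|A|$ that the argument relies on.
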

\begin{proof}
The basic idea is to simply arbitrarily add length increases to $C$ which increases its size while increasing its sparsity; doing so until its sparsity reaches $\phi$ allows us to make $C'$ of the appropriate size. We must take some slight care to make sure that there are enough length increases we can add to $C$ to make it appropriately large.


For each edge $e$ let $\bar{C}(e) = 1 - C(e)$ be the complement of $C$. It follows that 
\begin{align*}
    |\bar{C}| \geq |A| - |C|.    
\end{align*}
Furthermore, since there is some $A$-respecting demand $D$ witnessing the $(h,s)$-length $\spa_{(h,s)}(C,A)$-sparsity of $C$ where $\frac{|C|}{\spa_{(h,s)}(C,A)} = |D| \leq |A|$ so we have that
\begin{align*}
\frac{\phi}{\spa_{(h,s)}(C,A)} \cdot |C| \leq |A|.
\end{align*}

Thus, it follows that $|\bar{C}| \geq \frac{\phi}{\spa_{(h,s)}(C,A)} \cdot |C| - |C|$ and so we can arbitrarily increase the length of edges to turn $C$ into a $C'$ satisfying $\frac{\phi}{\spa_{(h,s)}(C,A)} \cdot |C| = |C'|$.\footnote{We ignore rounding to multiples of $\frac{1}{h}$ here for simplicity of presentation.} 

Lastly, any such $C'$ is $(h,s)$-length $\phi$-sparse since $C'$ $hs$-separates $D$ and so
\begin{align*}
    \spa_{(h,s)}(C',A) \geq \frac{|C'|}{|D|} = \frac{\phi \cdot |C|}{|D| \cdot \spars_{(h,s)}(C,A)} = \phi.
\end{align*}
\end{proof}

We now prove the main theorem of this section.
\LWSCAtMostLC*
\begin{proof}
The proof is immediate from \Cref{lem:paddingCuts} and the fact that the union of sparse moving cuts is itself a sparse cut as per \Cref{thm:unionOfMovingCuts} and so smaller than the largest sparse moving cut. 

More formally, let $(C_1, C_2, \ldots)$ be the largest $(h,s)$-length $\phi$-sparse moving cut sequence w.r.t.\ $A$ in $G$  of largest weighted size as defined in \Cref{dfn:equantities} and let $C_0$ be the $(h,s)$-length $\phi$-sparse cut of largest size w.r.t.\ $A$ in $G$. Our goal is to show 
\begin{align}
    \sum_{i \geq 1} \frac{\phi}{\spa_{(h,s)}(C_i,A)} \cdot |C_i| \leq |C_0|.\label{eq:x}
\end{align}


Let $\mcC := \sum_{i \geq 1} C_i$ be the union of our moving cut sequence. By \Cref{thm:unionOfMovingCuts} we know that
\begin{align}
    \spa_{(h',s')}(\mcC,A) \leq s^3 \cdot \log^3 N \cdot N^{O(1/s)} \cdot \frac{\sum_i |C_i|}{\sum_i |C_i|/\spa_{(h,s)}(C_i, A)}.\label{eq:y}
\end{align}

Furthermore, by \Cref{lem:paddingCuts} we know that there exists an $(h',s')$-length $\phi'$-sparse moving cut $\mcC'$ such that 
\begin{align}
    \frac{\phi'}{\spa_{(h',s')}(\mcC,A)} \cdot |\mcC| \leq |\mcC'|. \label{eq:z}
\end{align}

Thus, combining \Cref{eq:y} and \Cref{eq:z} we get
\begin{align*}
    \frac{\sum_i |C_i|/\spa_{(h,s)}(C_i, A)}{s^3 \cdot \log^3 N \cdot N^{O(1/s)} \cdot \sum_i |C_i|} \cdot \phi' \cdot |\mcC| \leq |\mcC'|
\end{align*}
and so using our definition of $\phi'$ and the fact that $|\mcC| = \sum_{i \geq 1} |C_i|$ we get
\begin{align*}
    \sum_{i \geq 1} \frac{\phi}{\spa_{(h,s)}(C_i,A)} \cdot |C_i| \leq |\mcC'|.
\end{align*}
However, since $C_0$ is the largest $(h',s')$-length $\phi'$-sparse cut we know that $|\mcC'| \leq |C_0|$, giving our desired inequality (\Cref{eq:x}).
\end{proof}

\subsection{Largest Cut (\ref{eqv: maximum sparse cut}) at Most Largest Expander's Complement (\ref{eqv: node weighting})}\label{sec:LCAtMostLEC}

We now show $\qLC \leq \qLEC$. The basic idea is to argue that if the largest length-constrained sparse cut were too large then it would cut into the length-constrained expanding part of our graph, contradicting its sparsity. We formalize this argument with the notion of projected down demand.

\begin{restatable}{thm}{LCAtMostLEC}\label{thm:lcAtMostLEC}
Given graph $G$ and node-weighting $A$ and parameters $h,s,\phi$, we have that 
\begin{align*}
\qLC(\phi, h, s) \leq \qLEC(\phi', h, s)  
\end{align*}
where $\phi' = 3 \phi$.
\end{restatable}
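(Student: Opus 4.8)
The plan is to take $C_0$ to be the largest $(h,s)$-length $\phi$-sparse moving cut with respect to $A$ in $G$, so $\qLC(\phi,h,s)=|C_0|$, and to let $\hat A\preceq A$ be the largest $(h,s)$-length $3\phi$-expanding sub-node-weighting of $A$, with $\bar A:=A-\hat A$ and $\qLEC(3\phi,h,s)=3\phi|\bar A|$; the goal is $|C_0|\le 3\phi|\bar A|$. The central device is the \emph{projected-down demand}: let $D$ be an $A$-respecting $h$-length demand witnessing $\spa_{(h,s)}(C_0,A)\le\phi$, let $D_{\mathrm{sep}}$ be the restriction of $D$ to the pairs that are $sh$-length separated by $C_0$ (so $|D_{\mathrm{sep}}|=\sep_{sh}(C_0,D)\ge|C_0|/\phi$, and every support pair of $D_{\mathrm{sep}}$ stays $sh$-separated by $C_0$), and let $D'$ be obtained from $D_{\mathrm{sep}}$ by greedily deleting demand units until $D'$ is $\hat A$-respecting.

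First I would bound the demand lost to the projection. Since $\load(D_{\mathrm{sep}})\preceq\load(D)\preceq A=\hat A+\bar A$, both the total ``out-excess'' $\sum_v\max\{0,\sum_w D_{\mathrm{sep}}(v,w)-\hat A(v)\}$ and the total ``in-excess'' are at most $|\bar A|$. Deleting one unit of out-demand at an over-subscribed vertex lowers the out-excess by exactly $1$ and raises no excess anywhere, so at most $|\bar A|$ deletions repair all out-constraints; at most $|\bar A|$ further deletions of in-demand units repair all in-constraints without re-breaking the out-constraints (deletions only decrease demand). Hence $|D'|\ge|D_{\mathrm{sep}}|-2|\bar A|\ge|C_0|/\phi-2|\bar A|$. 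Moreover $D'$ is still $h$-length (being pointwise at most $D_{\mathrm{sep}}$) and supported only on pairs $sh$-separated by $C_0$, so $\sep_{sh}(C_0,D')=|D'|$.

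Next I would use that $\hat A$ is expanding. If $|D'|=0$ then $|C_0|\le 2\phi|\bar A|\le 3\phi|\bar A|$ and we are done; otherwise $D'$ is a nonempty $\hat A$-respecting $h$-length demand, so $\spa_{(h,s)}(C_0,\hat A)\le\spa_{sh}(C_0,D')=|C_0|/|D'|$, while $C_0$ being an $hs$-length moving cut and $\hat A$ being $(h,s)$-length $3\phi$-expanding in $G$ give $\spa_{(h,s)}(C_0,\hat A)\ge\cond_{(h,s)}(\hat A)\ge 3\phi$; thus $|D'|\le|C_0|/(3\phi)$. Chaining with the previous bound yields $|C_0|/\phi-2|\bar A|\le|C_0|/(3\phi)$, i.e.\ $2|\bar A|\ge\tfrac{2}{3\phi}|C_0|$, hence $|C_0|\le 3\phi|\bar A|=\qLEC(3\phi,h,s)$. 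The main obstacle I anticipate is the bookkeeping in the projection step: the directed/max definition of $\load$ forces out- and in-constraints to be repaired separately, which is precisely the source of the factor $2$ and hence of the constant $3$ in $\phi'=3\phi$; everything else is definition-chasing together with the basic fact that an $(h,s)$-length $3\phi$-expanding node-weighting admits no $hs$-length moving cut of sparsity below $3\phi$.
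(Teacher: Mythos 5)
Your proposal is correct and follows essentially the same route as the paper: your greedy trimming of the separated witness demand to make it $\hat A$-respecting, losing at most $2|\bar A|$ of demand, is exactly the paper's ``projected down demand'' construction (\Cref{def:projDemand}, \Cref{lem:projDemProps}), and the remaining chain of inequalities using that $\hat A$ is $(h,s)$-length $3\phi$-expanding matches the paper's argument, with your explicit $|D'|=0$ case being a minor extra bit of care.
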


\subsubsection{Projected Down Demands}
The following formalizes the projected down demand.
\begin{definition}[Projected Down Demand]\label{def:projDemand}
Suppose we are given graph $G$, node-weighting $A$, $A$-respecting demand $D$ and $\hat{A} \preceq A$ where $\bar{A} := A - \hat{A}$ is the complement of $\hat{A}$. Then, let $D^+$ be any demand such that $\sum_v D^+(u, v) = \min(\bar{A}(u), \sum_v D(u,v))$ for every $u$ and  $D^+ \preceq D$. Symmetrically, let $D^-$ be any demand such that $\sum_v D(v,u) = \min(\bar{A}(u), \sum_v D(v,u))$ and $D^- \preceq D$. Then we define the demand $D$ projected down to $\hat{A}$ on $(u,v)$ as
\begin{align*}
    D^{\downharpoonright \hat{A}}(u,v) := \max(0, D(u,v) - D^+(u,v) - D^-(u,v)).
\end{align*}
\end{definition}

The following establishes the basic properties of $D^{\downharpoonright \hat{A}}$.
\begin{lemma}\label{lem:projDemProps} Given graph $G$, node-weighting $A$, $\hat{A} \preceq A$ where $\bar{A} := A - \hat{A}$, we have that $D^{\downharpoonright \hat{A}}$ is $\hat{A}$-respecting, $|D^{\downharpoonright \hat{A}}| \geq |D| - 2 |\bar{A}|$ and $D^{\downharpoonright \hat{A}} \preceq D$.
\end{lemma}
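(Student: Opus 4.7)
The statement asks for three properties of the projected-down demand $D^{\downharpoonright \hat{A}}$: (i) that it is dominated by $D$, (ii) that it loses at most $2|\bar{A}|$ total mass compared to $D$, and (iii) that it is $\hat{A}$-respecting. All three follow essentially mechanically from the defining equation
\[
D^{\downharpoonright \hat{A}}(u,v) = \max\bigl(0,\, D(u,v) - D^+(u,v) - D^-(u,v)\bigr),
\]
together with the two constraints on $D^+$ and $D^-$ from \Cref{def:projDemand} and the hypothesis that $D$ is $A$-respecting. My plan is to dispose of them in order of increasing difficulty, so that the bookkeeping in the third part can re-use the computations from the first two.

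For (i), I would just note that $D^+, D^-$ are nonnegative and $D^+ \preceq D$ (so $D(u,v) - D^+(u,v) \ge 0$), whence $D(u,v) - D^+(u,v) - D^-(u,v) \le D(u,v)$, so clipping at $0$ still gives something bounded by $D(u,v)$. For (ii), I would sum the inequality $D^{\downharpoonright \hat A}(u,v) \ge D(u,v) - D^+(u,v) - D^-(u,v)$ over all pairs, obtaining $|D^{\downharpoonright \hat{A}}| \ge |D| - |D^+| - |D^-|$, and then observe that by the defining identity
\[
|D^+| = \sum_u \min\bigl(\bar A(u),\, \textstyle\sum_v D(u,v)\bigr) \le \sum_u \bar A(u) = |\bar A|,
\]
and symmetrically $|D^-| \le |\bar A|$.

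The main (mild) obstacle is (iii), where the interaction between the $\max(0,\cdot)$ truncation and the row/column constraints needs care. My plan here is to fix a vertex $u$ and bound $\sum_v D^{\downharpoonright \hat A}(u,v)$ by dropping the $D^-$ term inside the max (which is safe because $D^- \ge 0$) to get
\[
\sum_v D^{\downharpoonright \hat A}(u,v) \le \sum_v \bigl(D(u,v) - D^+(u,v)\bigr) = \sum_v D(u,v) - \min\bigl(\bar A(u), \textstyle\sum_v D(u,v)\bigr).
\]
I would then split into two cases depending on which term achieves the minimum: if $\bar A(u) \ge \sum_v D(u,v)$ the right-hand side is $0 \le \hat A(u)$; otherwise it equals $\sum_v D(u,v) - \bar A(u) \le A(u) - \bar A(u) = \hat A(u)$, using that $D$ is $A$-respecting. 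The symmetric bound on $\sum_v D^{\downharpoonright \hat A}(v,u)$ is obtained by the identical argument after dropping the $D^+$ term inside the max instead. Combining the two gives $\hat A$-respecting, finishing the proof.
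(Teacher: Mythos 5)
Your proof is correct and follows essentially the same route as the paper: drop the $D^-$ term from inside the $\max$ (using $D^+\preceq D$ to keep the truncated terms nonnegative), then case on which argument achieves the minimum defining $D^+$, and finally invoke that $D$ is $A$-respecting; parts (i) and (ii) are the same mechanical observations. Nothing is missing.
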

\begin{proof}
To see that $D^{\downharpoonright \hat{A}}$ is $\hat{A}$-respecting, fix a vertex $u$. Casing on the minimizer of $\min(\bar{A}(u), \sum_v D(u,v))$ we have the following.
\begin{itemize}
    \item If $\sum_v D^+(u,v) = \bar{A}(u)$ (where $D^+$ is defined in \Cref{def:projDemand}) then by the fact that $D$ is $A$-respecting we have
    \begin{align*}
        \sum_v D^{\downharpoonright \hat{A}}(u,v) &\leq \sum_v D(u,v) - \sum_{v} D^+(u,v) \\
        &\leq A(v) - \bar{A}(u) \\
        & = \hat{A}(u).
    \end{align*}
    \item On the other hand, if $\sum_v D^+(u,v) = \sum_v D(u,v)$ then by the non-negativity of node-weightings we have \begin{align*}
        \sum_v D^{\downharpoonright \hat{A}}(u,v) &\leq \sum_v D(u,v) - \sum_{v} D^+(u,v) \\
        &= 0 \\
        & = \hat{A}(u).
    \end{align*}
\end{itemize}

In either case we have $\sum_v D^{\downharpoonright \hat{A}}(u,v) \leq \hat{A}(u)$. A symmetric argument using  $D^-$ (where $D^-$ is defined in \Cref{def:projDemand}) shows that $\sum_v D^{\downharpoonright \hat{A}}(v,u) \leq \hat{A}(u)$ and so $D^{\downharpoonright \hat{A}}(u,v)$ is $\hat{A}$-respecting.

To see that $|D^{\downharpoonright \hat{A}}| \geq |D| - 2 |\bar{A}|$, observe that by definition, $|D^{\downharpoonright \hat{A}}| \geq |D|-|D^+| - |D^-|$. But, also by definition, $|D^+|, |D^-| \leq |\bar{A}|$, giving the claim.

Lastly, observe that $D^{\downharpoonright \hat{A}} \preceq D$ trivially by construction.
\end{proof}

\subsubsection{Proof Of Largest Cut At Most Largest Expander's Complement}
Having formalized the projected down demand, we can now formally prove the main theorem of this section.
\LCAtMostLEC*
\begin{proof}
The basic idea of the proof is to argue that the largest sparse cut cannot be larger than the size of the complement of the largest expanding subset because if it were then it would cut into the the largest expanding subset itself; this contradicts the fact that no sparse cut can cut too much into an expanding subset. The projected down demand (\Cref{def:projDemand}) allows us to formalize this idea.

More formally, let $C_0$ be the $(h,s)$-length $\phi$-sparse cut of largest size w.r.t.\ $A$ in $G$ and let $\bar{A}$ be the complement of the largest $(h,s)$-length $\phi'$-expanding subset $\hat{A} \preceq A$ as in \Cref{dfn:equantities}.

Let $D$ be the demand that witnesses the $(h,s)$-length $\phi$-sparsity of $C_0$; that is, it is the minimizing $A$-respecting demand in \Cref{def:sparsity}. We may assume, without loss of generality, that $C_0$ $hs$-separates all of $D$; that is, $\sep_{hs}(C_0, D) = |D|$. Let $D^{\downharpoonright \hat{A}}$ be the projected down demand (as in \Cref{def:projDemand}). Recall that by \Cref{lem:projDemProps} we know that $D^{\downharpoonright \hat{A}}$ is $\hat{A}$-respecting, $|D^{\downharpoonright \hat{A}}| \geq |D| - 2 |\bar{A}|$ and $D^{\downharpoonright \hat{A}} \preceq D$.

However, since $C_0$ $hs$-separates all of $D$ and $D^{\downharpoonright \hat{A}} \preceq D$ we know that $C_0$ must $hs$-separate all of $D^{\downharpoonright \hat{A}}$ and so applying this and $|D^{\downharpoonright \hat{A}}| \geq |D| - 2 |\bar{A}|$ we have
\begin{align}\label{eq:ay}
    \spa_{s \cdot h}(C_0, D^{\downharpoonright \hat{A}}) = \frac{|C_0|}{\sep(C_0, D^{\downharpoonright \hat{A}})} = \frac{|C_0|}{|D^{\downharpoonright \hat{A}}|}\leq  \frac{|C_0|}{|D| - 2 |\bar{A}|}
\end{align}
where, as a reminder, $\spa$ is defined in \Cref{dfn:CDSparse} and \Cref{def:sparsity}.

On the other hand, since $D^{\downharpoonright \hat{A}}$ is $\hat{A}$-respecting and $\hat{A}$ is $(h,s)$-length $\phi'$-expanding by definition, we know that no cut can be too sparse w.r.t.\ $\hat{A}$ and, in particular, we know that
\begin{align}\label{eq:by}
    3\phi = \phi' \leq \spa_{(h,s)}(C_0, \hat{A}) \leq \spa_{s
    \cdot h}(C_0, D^{\downharpoonright \hat{A}})
\end{align}

Combining \Cref{eq:ay} and \Cref{eq:by} and solving for $\phi' \cdot |\bar{A}|$ we have
\begin{align*}
    \frac{3\phi \cdot |D| - |C_0|}{2} \leq \phi' \cdot |\bar{A}|.
\end{align*}

However, recall that $C_0$ is an $(h,s)$-length $\phi$-sparse cut witnessed by $D$ and, in particular, this means that $\frac{1}{\phi} |C_0|\geq |D|$. Applying this we conclude that 
\begin{align*}
    |C_0| \leq \phi' \cdot |\bar{A}|.
\end{align*}
as required.
\end{proof}

\subsection{Largest Expander's Complement (\ref{eqv: node weighting}) at Most Weighted Sparse Cut Sequence (\ref{eqv: weighted maximum sparse cut sequence})}\label{sec:LECAtMostWLWSC}
We now argue that $\qLEC \leq \qLWSC$. The basic idea is to use our characterization of length-constrained expanders in terms of expander power embeddings (as developed in \Cref{sec:expanderPowersCharacterization} and formalized by the neighborhood router demand). In particular, any demand of the neighborhood router demand not separated by the largest cut sequence must be efficiently routable after applying any expander decomposition since the resulting graph is a length-constrained expander. By \Cref{lem:neighRouting} this implies the existence of a large expanding subset and so the complement of the largest expanding expanding complement must be small. Formally we show the following.

\begin{restatable}{thm}{LECatMostLWSC}\label{lem:LECatMostLWSC}
Given graph $G$ and node-weighting $A$, we have that 
\begin{align*}
\qLEC(\phi, h, s) \leq \qLWSC(\phi', h', s')
\end{align*}
where $\phi' = \phi \cdot \Omega\left(2^{O(k)} \cdot N^{O(1/k')} \cdot \log^{2} N \right)$, $h' = h \cdot k'$ and $s' = s \cdot O\left(\frac{k}{ k'} \cdot \log N \right)$.
\end{restatable}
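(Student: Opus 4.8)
The plan is to deduce $\qLEC(\phi,h,s)\le\qLWSC(\phi',h',s')$ from the characterization of length-constrained expanders as embedded expander powers (\Cref{lem:neighRouting}), applied to the graph obtained by cutting a \emph{maximal} weighted sparse cut sequence.

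First I would handle the setup. If the largest $(h,s)$-length $\phi$-expanding subset $\hat A\preceq A$ equals $A$, then $\bar A=A-\hat A$ is empty and $\qLEC(\phi,h,s)=0$, so assume $\bar A\ne 0$. Let $(C_1,\dots,C_m)$ be a largest weighted $(h',s')$-length $\phi'$-sparse cut sequence w.r.t.\ $A$, put $C=\sum_i C_i$, $G'=G-C$, and let $\phi_i:=\spa_{(h',s')}(C_i,A)$ be evaluated in $G-\sum_{j<i}C_j$, so $\phi_i\le\phi'$ and $\qLWSC(\phi',h',s')=\sum_i\frac{\phi'}{\phi_i}|C_i|$. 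The key point is that this sequence is maximal: if $G'$ admitted a nonempty $(h',s')$-length $\phi'$-sparse cut, appending it would strictly increase $\sum_i\frac{\phi'}{\spa_{(h',s')}(C_i,A)}|C_i|$, contradicting the choice of the sequence. Hence $A$ is $(h',s')$-length $\phi'$-expanding in $G'$, so by \Cref{thm:flow character} every $h'$-length $A$-respecting demand of $G'$ routes in $G'$ with congestion $O(\log N/\phi')$ along paths of length at most $s'h'$.

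Next, to bound $|\bar A|$ it suffices (by the definition of $\hat A$) to produce a node-weighting $A''\preceq A$ that is $(h,s)$-length $\phi$-expanding \emph{in $G$} with $|A''|$ large, since then $|\bar A|\le|A|-|A''|$. I would obtain $A''$ from the converse half of \Cref{lem:neighRouting}, applied with the \emph{trivial} moving cut (so that its conclusion is about $G$ itself) and with parameters $k,k'$ chosen so that $hk'=h'$ and so that its output slacks, together with the slacks in $s'$ and $\phi'$, land on $s,\phi$: it then suffices to route a $(1-\epsilon)$ fraction of the neighborhood router demand $\nrd$ (built for $G$ with length $h$ and parameters $k,k'$) in $G$ with the congestion and dilation it demands, and the lemma yields $A''\preceq A$ that is $(h,s)$-length $\phi$-expanding in $G$ with $|A''|\ge|A|(1-\epsilon')$ and $\epsilon'=O(\epsilon\cdot 2^{O(k)}N^{O(1/k')}\log N)$. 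Since the edge lengths of $G$ are pointwise at most those of $G'$, any flow in $G'$ is a flow of no larger dilation (and the same congestion) in $G$; and $\nrd$ is $A$-respecting and $h'$-length \emph{in $G$}, so its restriction to the pairs still at distance at most $h'$ in $G'$ is an $h'$-length $A$-respecting demand of $G'$, which by the previous paragraph routes in $G'$, hence in $G$, with the required guarantees. The loss is exactly $\epsilon\,|\nrd|=\sep_{h'}(C,\nrd)$, the amount of $\nrd$ that $C$ pushes past distance $h'$.

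The heart of the proof — and the step I expect to be the main obstacle — is the bound
\[
\sep_{h'}(C,\nrd)\ \le\ O(1)\cdot\sum_i\frac{|C_i|}{\phi_i}.
\]
I would prove it by charging each pair $(u,v)$ contributing to $\sep_{h'}(C,\nrd)$ to the first index $i$ with $d_{G-\sum_{j\le i}C_j}(u,v)>h'$; for such a pair $C_i$ is ``responsible,'' and since $\nrd$ is $A$-respecting, the $\nrd$-mass charged to $C_i$ should be comparable to $\max_D\sep(C_i,D)=|C_i|/\phi_i$, the maximum taken over $A$-respecting $h'$-length demands $D$ of $G-\sum_{j<i}C_j$. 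Two issues must be dealt with here, and this is precisely where the parameter slacks are spent: (i) $\nrd$ need not remain $h'$-length in the intermediate graphs $G-\sum_{j<i}C_j$, so one charges the restriction of $\nrd$ to the pairs still $h'$-close there; and (ii) the quantity $|C_i|/\phi_i$ controls separation past radius $s'h'$, whereas the routing loss concerns separation past radius $h'$, and reconciling the two radii is why $s'=s\cdot O(\tfrac{k}{k'}\log N)$ rather than $s'=s$. Granting this, the arithmetic is routine: $|\nrd|\ge|A|/(N^{O(1/k')}\log N)$ by \Cref{lem:nrdProps}, so $\epsilon\,|A|\le\sep_{h'}(C,\nrd)\cdot N^{O(1/k')}\log N$, hence $\epsilon'|A|\le 2^{O(k)}N^{O(1/k')}\log^2 N\cdot\frac{1}{\phi'}\sum_i\frac{\phi'}{\phi_i}|C_i|$, and therefore
\[
\qLEC(\phi,h,s)=\phi|\bar A|\le\phi\,\epsilon'|A|\le\frac{\phi}{\phi'}\cdot 2^{O(k)}N^{O(1/k')}\log^2 N\cdot\qLWSC(\phi',h',s')\le\qLWSC(\phi',h',s'),
\]
the last step using $\phi'=\phi\cdot\Omega(2^{O(k)}N^{O(1/k')}\log^2 N)$.
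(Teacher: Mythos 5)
Your overall architecture matches the paper's: use maximality of the weighted sparse cut sequence to conclude that $A$ is $(h',s')$-length $\phi'$-expanding in $G'=G-\sum_i C_i$, then route (most of) the neighborhood router demand and invoke the converse direction of \Cref{lem:neighRouting} to produce a large expanding sub-node-weighting, bounding the loss by the weighted size of the sequence. However, the step you yourself identify as the heart of the proof is a genuine gap, and I do not believe it can be repaired in the form you state it. You route only the restriction of $\nrd$ to pairs still within distance $h'$ in $G'$, so your loss term is $\sep_{h'}(C,\nrd)$, the mass of $\nrd$ pushed past radius $h'$. You then claim $\sep_{h'}(C,\nrd)\le O(1)\cdot\sum_i |C_i|/\phi_i$. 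But $|C_i|/\phi_i$ only controls demand that $C_i$ pushes past radius $h's'$ (that is what $(h',s')$-length sparsity measures), and there is no control on demand pushed merely past $h'$: a moving cut of negligible size can shift a huge amount of $A$-respecting demand from distance just below $h'$ to just above $h'$ without $h's'$-separating any of it, while still being $\phi'$-sparse because its sparsity is certified by separations elsewhere in the graph. Your "first index $i$ at which the pair exceeds $h'$" charging correctly handles your issue (i) (the charged restriction of $\nrd$ is indeed $h'$-length in $G-\sum_{j<i}C_j$), but issue (ii) is not a matter of spending slack: the factor $s'=s\cdot O(\frac{k}{k'}\log N)$ in the theorem comes from the length blow-up in \Cref{lem:neighRouting}, and enlarging $s'$ only widens the gap between the radius at which you lose demand ($h'$) and the radius at which sparsity gives you anything ($h's'$). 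So the central inequality is unsubstantiated, and false in general.

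The paper's proof (via \Cref{lem:EDLemma}) sidesteps exactly this by measuring the loss at the radius that sparsity controls: it keeps the restriction $\nrd'$ of $\nrd$ to pairs within distance $h's'$ in $G'$, so the discarded mass is $\sep_{h's'}(\mcC,\nrd)$, which, since $\nrd$ is an eligible $A$-respecting $h'$-length witness demand, is at most $|\mcC|/\spa_{(h',s')}(\mcC,A)$, and this in turn is bounded by $\sum_i |C_i|/\spa_{(h',s')}(C_i,A)$ by a first-separation charging at the $h's'$ threshold; the kept demand $\nrd'$ is then routed in $G'$ with congestion $O(\log N/\phi')$ and dilation $O(h's')$ before feeding into \Cref{lem:neighRouting}. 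If you want to keep your variant in which \Cref{lem:neighRouting} is applied with the trivial cut so that the conclusion is expansion in $G$ itself (a reasonable refinement, and your observation that flows in $G'$ are flows of no larger dilation in $G$ is fine), you must still account for the loss at radius $h's'$ rather than $h'$, and then argue how the kept pairs at distance between $h'$ and $h's'$ in $G'$ are routed; as written, your proof neither does this nor proves the inequality it substitutes for it.
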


It will be useful for us to abstract out this argument as we will later use it to argue that $\qLEC \leq \qSED$. The following formalizes the fact which we abstract out.

\begin{lemma}\label{lem:EDLemma}
Given graph $G$, node-weighting $A$ and parameters $k,k' \geq 1$, let $\bar{A}$ be the largest $(h,s)$-length $\phi$-expanding subset's complement (as defined in \Cref{dfn:equantities}). Furthermore, let $C^*$ be an $(h',s')$-length $\phi'$-expander decomposition where $\phi' = \phi \cdot \Omega\left(2^{O(k)} \cdot N^{O(1/k')} \cdot \log^{3} N \right)$, $h' = h \cdot k'$ and $s' = s \cdot O\left(\frac{k}{ k'} \cdot \log N \right)$. Then
\begin{align*}
    \spa_{(h',s')}(C^*,A) \cdot |\bar{A}| \leq |C^*| \cdot O\left(2^{O(k)} \cdot N^{O(1/k')} \cdot \log^2 N \right).
\end{align*}

\end{lemma}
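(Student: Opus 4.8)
The plan is to reduce the claim to \emph{producing} a large $(h,s)$-length $\phi$-expanding sub-node-weighting of $A$ inside $G$ itself. Since $\hat A$ is by definition the largest such sub-node-weighting, any $A' \preccurlyeq A$ that is $(h,s)$-length $\phi$-expanding in $G$ gives $|\bar A| = |A| - |\hat A| \le |A| - |A'|$; rearranging the target inequality, it therefore suffices to exhibit such an $A'$ with $|A'| \ge |A| - \frac{|C^*|}{\spa_{(h',s')}(C^*,A)}\cdot O(2^{O(k)}N^{O(1/k')}\log^2 N)$. The engine for producing $A'$ is the neighborhood-router-demand characterization, \Cref{lem:neighRouting}: its second bullet converts a low-congestion low-dilation routing of most of the neighborhood router demand $\nrd$ of $G$ into exactly such an $A'$, with only a controlled loss in size and in the expansion parameters. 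So the whole task reduces to routing most of $\nrd$ in $G$.

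To get that routing I would use $C^*$. Recall from \Cref{lem:nrdProps} that $D := \nrd$ is $A$-respecting, $(hk')$-length (i.e.\ $h'$-length), and satisfies $|D| \ge |A|/(N^{O(1/k')}\log N)$. Because $C^*$ is an $(h',s')$-length $\phi'$-expander decomposition for $A$, the graph $G - C^*$ is an $(h',s')$-length $\phi'$-expander for $A$. Write $\phi^* := \spa_{(h',s')}(C^*,A)$, the quantity on the left of the target. Since $(h',s')$-length sparsity is a minimum over $A$-respecting $h'$-length demands and $D$ is one such demand, $\spa_{s'h'}(C^*,D) \ge \phi^*$, i.e.\ $\sep_{s'h'}(C^*,D) \le |C^*|/\phi^*$: all but a $(|C^*|/\phi^*)$-sized portion of $D$ remains ``close'' after $C^*$ is applied. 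Restricting $D$ to this surviving portion yields an $A$-respecting demand $D'$ which, because $G - C^*$ is a length-constrained expander for $A$, is routable in $G - C^*$ with congestion $O(\log N/\phi')$ and dilation $O(s'h')$ (via \Cref{thm:flow character}). Crucially, since $C^*$ only increases edge lengths, this same flow is a valid flow in $G$ with the same congestion and dilation at most as large. Hence a $(1-\epsilon)$ fraction of $D$ is routable in $G$ with congestion $O(\log N/\phi')$ and dilation $O(s'h')$, where $\epsilon = O\big(|C^*|/(\phi^*|D|)\big)$.

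I would then invoke the second bullet of \Cref{lem:neighRouting} with the applied moving cut taken to be empty (so $G' = G$): this produces $A' \preccurlyeq A$ with $|A'| \ge |A|(1-\epsilon')$, where $\epsilon' = O(\epsilon \cdot 2^{O(k)}N^{O(1/k')}\log N)$, and with $A'$ length-constrained expanding in $G$. The inflated hypothesis parameters $\phi' = \phi\cdot\Omega(2^{O(k)}N^{O(1/k')}\log^3 N)$ and $s' = s\cdot O(\frac{k}{k'}\log N)$ are calibrated precisely so that, after propagating the multiplicative losses of \Cref{lem:neighRouting} (which degrade $\phi$ by a $\Omega(2^{-O(k)}N^{-O(1/k')}\log^{-2}N)$ factor, inflate the length slack by $O(\frac{\log N}{k})$, and contract the scale $h' = hk'$ back to $h$), the output guarantee is at least $(h,s)$-length $\phi$-expanding, making $A'$ a legitimate competitor for $\hat A$. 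Thus $|\bar A| \le |A| - |A'| \le \epsilon'|A|$. Substituting $\epsilon = O(|C^*|/(\phi^*|D|))$ and $|A|/|D| \le N^{O(1/k')}\log N$ gives $|\bar A| \le \frac{|C^*|}{\phi^*}\cdot 2^{O(k)}N^{O(1/k')}\log^2 N$, and multiplying through by $\phi^* = \spa_{(h',s')}(C^*,A)$ is the claim.

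The step I expect to be the main obstacle is the ``restriction'' in the second paragraph: $D$ is built from distances in $G$, whereas the routing guarantee we have lives in $G - C^*$, whose distances are strictly larger. One must decide carefully whether to retain the pairs of $D$ that stay within $h'$ of each other in $G - C^*$ or those that stay within $s'h'$, and argue simultaneously that the discarded demand is genuinely at most $|C^*|/\spa_{(h',s')}(C^*,A)$ \emph{and} that the surviving demand is still routable in $G - C^*$ with dilation $O(s'h')$ -- here the co-clustered structure of $\nrd$ (its support pairs lie in a common cluster of a neighborhood cover with separation factor $2$ and diameter $h'$) is what should keep the two requirements compatible. The remaining parameter bookkeeping through the two bullets of \Cref{lem:neighRouting} is routine but must be carried out with care to land exactly at $(h,s)$-length $\phi$-expansion; the same abstracted argument then drives both $\qLEC \le \qLWSC$ (taking $C^*$ to be the union of a maximal weighted sparse-cut sequence, which must be an expander decomposition) and $\qLEC \le \qSED$.
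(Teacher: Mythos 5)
Your argument is, in substance, the paper's own proof of \Cref{lem:EDLemma}: the same chain of the neighborhood router demand $\nrd$, the bound $\sep_{h's'}(C^*,\nrd)\le |C^*|/\spa_{(h',s')}(C^*,A)$ (valid because $\nrd$ is an $A$-respecting $h'$-length demand by \Cref{lem:nrdProps}, so it competes in the minimum defining $\spa_{(h',s')}(C^*,A)$), routing of the unseparated portion in $G-C^*$ via \Cref{thm:flow character}, and then the second bullet of \Cref{lem:neighRouting} to extract a large expanding $A'\preccurlyeq A$ that competes with $\hat A$, followed by the same parameter bookkeeping.

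Two remarks on the points you single out. First, the obstacle you flag is handled in the paper by the blunter of the two options you consider: $\nrd'$ retains exactly the pairs at $G-C^*$ distance at most $h's'$ (the only restriction whose discarded mass is charged to the sparsity), and its routability in $G-C^*$ with congestion $O(\log N/\phi')$ and dilation $O(h's')$ is then asserted directly from \Cref{thm:flow character}; no co-clustering argument is invoked, and the mismatch you worry about (the routing characterization is stated for demands that are $h'$-length in $G-C^*$, whereas $\nrd'$ is only guaranteed $h's'$-length there) is glossed over in the paper's write-up as well, so it is a looseness you share with the paper rather than an idea you are missing. Second, your instantiation of \Cref{lem:neighRouting} with the empty moving cut, using that a flow in $G-C^*$ is also a flow in $G$ of no larger length or congestion, is the right reading: the paper applies the lemma with $G'=G-C^*$ and so literally obtains $A'$ expanding in $G-C^*$, yet then compares it against $\hat A$, which \Cref{dfn:equantities} defines with respect to $G$; your version supplies exactly the justification that this comparison needs.
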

\begin{proof}
Let $G' := G - C^*$ be $G$ with $C^*$ applied. Observe that $G'$ must be an $(h',s')$-length $\phi'$-expander since $C^*$ is an expander decomposition.

Next, let $\nrd$ be the neighborhood router demand, as defined in \Cref{sec:NRD}. Recall that by \Cref{lem:nrdProps} we know $\nrd$ is $A$-respecting and $h'$-length. Let $\nrd'$ be $\nrd$ restricted to be $h's'$-length in $G'$. That is, $\nrd'$ on $u$ and $v$ is defined as
\begin{align*}
    \nrd'(u,v) &:= \begin{cases}
        \nrd(u,v) & \text{if $d_{G'}(u,v) \leq h's'$}\\
        0 & \text{otherwise}.
    \end{cases}
\end{align*}

Observe that $\nrd'$ is trivially $A$-respecting since \nrd is $A$-respecting. Also, $\nrd'$ is $h's'$-length by construction. Additionally, observe that by construction we have
\begin{align*}
    |\nrd'| \geq |\nrd| - \sep_{h's'}\left(C^*, A \right).
\end{align*}

Since $G'$ is an $(h', s')$-length $\phi'$-expander and $\nrd'$ is $A$-respecting and $h's'$-length, we know that $\nrd'$ can be routed in $G'$ with congestion at most $O(\frac{\log N}{\phi'})$ and dilation $O(h's')$ by \Cref{thm:flow character}. Letting $\eps = \frac{\sep_{h's'}(C^*, A)}{ |\nrd|}$, it follows that a $1-\eps$ fraction of \nrd can be routed with congestion at most $O(\frac{\log N}{\phi'})$ and dilation at most $O(h's')$. Applying \Cref{lem:neighRouting} and the fact that $|\nrd| \geq \Omega\left(\frac{1}{N^{O(1/k')} \cdot \log N} \cdot |A| \right)$ by \Cref{lem:nrdProps}, we have that that there is a node weighting $A' \preceq A$ of size at least
\begin{align*}
    |A'| \geq |A| \cdot \left(1 - \frac{\sep_{h's'}(C^*, A)}{ |\nrd|} \cdot O\left(2^{O(k)} \cdot N^{O(1/k')} \cdot \log N \right) \right)\\
    \geq |A| - \sep_{h's'}(C^*, A) \cdot O\left(2^{O(k)} \cdot N^{O(1/k')} \cdot \log^2 N \right)
\end{align*}
such that $A'$ is $(h, s)$-length $\phi$-expanding in $G'$. 

Let $\bar{A}$ be the complement of the largest $(h,s)$-length $\phi$-expanding subset of $A$ as in \Cref{dfn:equantities}.
Then, observing that $|\bar{A}|\leq |A| - |A'|$ and $\sep_{h's'}(\mcC, A) = \frac{|\mcC|}{\spa_{(h',s')}(\mcC,A)}$ then gives:
\begin{align*}
    |\bar{A}| \leq |\mcC| \cdot O\left(\frac{2^{O(k)} \cdot N^{O(1/k')} \cdot \log^2 N}{\spa_{(h',s')}(\mcC,A)} \right)
\end{align*}
as required.
\end{proof}

Applying the above helper lemma allows us to conclude the main fact of this section.
\LECatMostLWSC*
\begin{proof}
    The basic idea is to observe that the union of the largest sequence of sparse cuts is an expander decomposition and then apply \Cref{lem:EDLemma}.

    More formally, let $\bar{A}$ be the largest $(h,s)$-length $\phi$-expanding subset's complement (as defined in \Cref{dfn:equantities}). Also, let $(C_1, C_2, \ldots)$ be the $(h',s')$-length $\phi'$-sparse moving cut sequence of largest weighted size as defined in \Cref{dfn:equantities}, let $\mcC := \sum_i C_i$. 
    
    Observe that $\mcC$ must be an $(h',s')$-length $\phi'$-expander decomposition since otherwise there would be a cut that could be appended to $(C_1, C_2, \ldots)$ to increase its size, contradicting its maximality. It follows by \Cref{lem:EDLemma} that
    \begin{align}\label{eq:xa}
        \spa_{(h',s')}(\mcC,A) \cdot |\bar{A}| \leq |\mcC| \cdot O\left(2^{O(k)} \cdot N^{O(1/k')} \cdot \log^2 N \right).
    \end{align}

    Let $D$ be the $A$-respecting demand witnessing the sparsity of $\mcC$ so that 
    \begin{align*}
    \spa_{(h',s')}(\mcC,A) = \frac{|\mcC|}{\sep_{h's'}(\mcC, D)}.    
    \end{align*}
    Observe that by definition of $\mcC$ and since $\spa_{(h',s')}(C_i, D) \geq \spa_{(h',s')}(C_i, A)$ we get
    \begin{align*}
        \sep_{(h's')}(\mcC,D) = \sum_i \sep_{h's'}(C_i, D) = \sum_i \frac{|C_i|}{\spa_{(h',s')}(C_i, D)} \leq \sum_i \frac{|C_i|}{\spa_{(h',s')}(C_i, A)}
    \end{align*}
    and so it follows that 
    \begin{align*}
        \spa_{(h',s')}(\mcC,A) \geq \frac{|\mcC|}{ \sum_i |C_i| / \spa_{(h',s')}(C_i, A)}.
    \end{align*}
    Combining this bound on $\spa_{(h',s')}(\mcC,A)$ and \Cref{eq:xa} we get
    \begin{align*} 
        |\bar{A}| \leq \sum_i \frac{|C_i|}{\spa_{(h',s')}(C_i, A)} \cdot O\left(2^{O(k)} \cdot N^{O(1/k')} \cdot \log^2 N \right).
    \end{align*}
    Multiplying both sides by $\phi$ we get
    \begin{align*}
        \phi \cdot |\bar{A}| \leq \sum_i \frac{\phi'}{\spa_{(h',s')}(C_i, A)} \cdot |C_i|.
    \end{align*}
    as required.
\end{proof}

The remaining inequalities we show are mostly trivial.

\subsection{Largest Expander's Complement (\ref{eqv: node weighting}) at Most Smallest Expander Decomposition (\ref{eqv: ED size})}\label{sec:LECAtMostSED}
We now leverage the helper lemma from the previous section (\Cref{lem:EDLemma}) to show $\qLEC \leq \qSED$.

\begin{restatable}{thm}{LECAtMostSED}\label{lem:LECAtMostSED}
Given graph $G$ that is $(h',s')$-length $\phi_G$-expanding w.r.t.\  node-weighting $A$ and parameters $h,s,\phi$, we have that 
\begin{align*}
\frac{\phi_G}{\phi'} \cdot \qLEC(\phi, h, s) \leq \qSED(\phi', h', s').
\end{align*}
where $\phi' = \phi \cdot \Omega\left(2^{O(k)} \cdot N^{O(1/k')} \cdot \log^{2} N \right)$, $h' = h \cdot k'$ and $s' = s \cdot O\left(\frac{k}{ k'} \cdot \log N \right)$.
\end{restatable}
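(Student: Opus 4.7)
The plan is to reduce this inequality to the helper lemma \Cref{lem:EDLemma} proved in the previous subsection, using the fact that the whole graph $G$ being $(h',s')$-length $\phi_G$-expanding w.r.t.\ $A$ forces a lower bound on the $(h',s')$-length sparsity of \emph{any} moving cut with respect to $A$. Specifically, let $C^*$ be a moving cut of minimum size such that $A$ is $(h',s')$-length $\phi'$-expanding in $G - C^*$, i.e., the minimizer in the definition of $\qSED(\phi', h', s')$. Then by definition $C^*$ is an $(h',s')$-length $\phi'$-expander decomposition for $A$, so the hypothesis of \Cref{lem:EDLemma} is satisfied (with the same choice of $k, k'$). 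Let $\bar{A}$ denote the complement of the largest $(h,s)$-length $\phi$-expanding sub-node-weighting of $A$, so that $\qLEC(\phi,h,s) = \phi \cdot |\bar{A}|$.

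Applying \Cref{lem:EDLemma} to $C^*$ yields
\[
\spa_{(h',s')}(C^*, A) \cdot |\bar{A}| \;\leq\; |C^*| \cdot O\!\left(2^{O(k)} \cdot N^{O(1/k')} \cdot \log^2 N\right).
\]
Now I would invoke the hypothesis that $A$ is $(h',s')$-length $\phi_G$-expanding in $G$: this is precisely the statement that every $(h's')$-length moving cut (in particular $C^*$) has $(h',s')$-length sparsity at least $\phi_G$ with respect to $A$. Substituting $\phi_G \leq \spa_{(h',s')}(C^*, A)$ gives
\[
\phi_G \cdot |\bar{A}| \;\leq\; |C^*| \cdot O\!\left(2^{O(k)} \cdot N^{O(1/k')} \cdot \log^2 N\right).
\]

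The final step is purely algebraic: by the choice of $\phi'$ in the statement, namely $\phi' = \phi \cdot \Omega(2^{O(k)} \cdot N^{O(1/k')} \cdot \log^2 N)$, we have $O(2^{O(k)} \cdot N^{O(1/k')} \cdot \log^2 N) \leq \phi'/\phi$. Multiplying the last inequality through by $\phi$ and rearranging yields
\[
\phi_G \cdot \phi \cdot |\bar{A}| \;\leq\; \phi' \cdot |C^*|,
\]
which rearranges to $\tfrac{\phi_G}{\phi'} \cdot \qLEC(\phi,h,s) \leq \qSED(\phi', h', s')$, as desired. There is no real obstacle here beyond verifying that the parameters $(h',s',\phi')$ assumed by \Cref{lem:EDLemma} and those appearing in the present theorem statement match up; this is immediate from inspection, since both use exactly the same dependence of $h',s',\phi'$ on $h,s,\phi,k,k'$.
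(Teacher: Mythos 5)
Your proposal is correct and follows exactly the same route as the paper's proof: apply \Cref{lem:EDLemma} to the minimizer $C^*$ of $\qSED$, use the hypothesis that $G$ is $(h',s')$-length $\phi_G$-expanding to lower bound $\spa_{(h',s')}(C^*,A)$ by $\phi_G$, and then rearrange using the definition of $\phi'$. No substantive differences.
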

\begin{proof}
The proof is immediate from our previous helper lemma, \Cref{lem:EDLemma}.

In particular, let $C^*$ be the smallest $(h',s')$-length $\phi'$-expander decomposition and let $\bar{A}$ be the largest $(h,s)$-length $\phi$-expanding subset's complement (as in \Cref{dfn:equantities}). By \Cref{lem:EDLemma} we know that
    \begin{align*}
        \spa_{(h',s')}(C^*,A) \cdot |\bar{A}| \leq |\mcC| \cdot O\left(2^{O(k)} \cdot N^{O(1/k')} \cdot \log^2 N \right).
    \end{align*}
Since $G$ is $(h',s')$-length $\phi_G$-expanding we know $\spa_{(h',s')}(C^*,A) \geq \phi_G$ and so 
    \begin{align*}
        \phi_G \cdot |\bar{A}| \leq |\mcC| \cdot O\left(2^{O(k)} \cdot N^{O(1/k')} \cdot \log^2 N \right).
    \end{align*}
Multiplying both sides by $\phi$ and applying the definition of $\phi'$ then gives
    \begin{align*}
        \frac{\phi_G}{\phi'} \cdot \phi |\bar{A}| \leq |\mcC|.
    \end{align*}
as required.    
\end{proof}

\subsection{Largest Cut (\ref{eqv: maximum sparse cut}) at Most Largest Cut Sequence (\ref{eqv: maximum sparse cut sequence})}
Showing $\qLC \leq \qLSC$ is trivial since the first moving cut of any sparse length-constrained cut sequence cut could be the largest sparse length-constrained cut.

\begin{restatable}{thm}{LCAtMostLSC}\label{lem:LCAtMostLSC}
Given graph $G$ and node-weighting $A$ and parameters $h,s,\phi$, we have that 
\begin{align*}
\qLC(\phi, h, s) \leq \qLSC(\phi, h, s).  
\end{align*}
\end{restatable}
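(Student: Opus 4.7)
The plan is to observe that this inequality is essentially by definition: any single $(h,s)$-length $\phi$-sparse cut constitutes a valid length-one sparse moving cut sequence, so the supremum of $\sum_i |C_i|$ over all sparse cut sequences must be at least the size of the largest single sparse cut.

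More precisely, I would proceed as follows. Let $C_0$ be the moving cut of largest size satisfying $\spa_{(h,s)}(C_0, A) \leq \phi$, so that $|C_0| = \qLC(\phi, h, s)$. Now consider the length-one sequence $(C_0)$. Checking against \Cref{dfn:movingCutSequence}, this is a valid sequence of moving cuts since the only requirement is that each $C_i$ be an $(h,s)$-length $\phi$-sparse cut in $G - \sum_{j < i} C_j$ w.r.t.\ $A$; for $i = 1$, this reduces to $C_0$ being $(h,s)$-length $\phi$-sparse in $G$ itself, which holds by assumption. Therefore $(C_0)$ is a candidate in the maximization defining $\qLSC(\phi, h, s)$, and its total size $|C_0|$ lower bounds $\qLSC(\phi, h, s)$.

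There is essentially no obstacle here; the only thing to double-check is that the definitions in \Cref{dfn:equantities}, item \ref{eqv: maximum sparse cut sequence} do not impose any additional constraints (such as a lower bound on sequence length or a requirement that later cuts exist) beyond each $C_i$ being $(h,s)$-length $\phi$-sparse in the appropriate residual graph. Assuming the empty sequence and length-one sequences are allowed as cut sequences, the inequality follows immediately. Combining this bound with the reverse-style inequality $\qLWSC \leq \qLC$ from \Cref{lem:LWSCAtMostLC} (together with the trivial $\qLSC \leq \qLWSC \cdot \frac{1}{1}$-style relationship when each cut has sparsity at most $\phi$) then gives the equivalences claimed in \Cref{thm: equivalence}.
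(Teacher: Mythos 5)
Your proof is correct and matches the paper's argument: the paper likewise observes that the largest $(h,s)$-length $\phi$-sparse cut $C_0$ can serve as the first cut of a valid sparse cut sequence, so $|C_0| \leq \qLSC(\phi,h,s)$. The closing remarks about combining with $\qLWSC \leq \qLC$ are not part of this lemma's proof, but they don't detract from it.
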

\begin{proof}
Any cut sequence $(C_1,C_2,\ldots)$ that begins with $C$ will satisfy $|C|\le \sum_{i}|C_i|$ and since $C$ can always be chosen as the first cut in an $(h,s,\phi)$-sequence this gives the theorem. 
\end{proof}

\subsection{Largest Cut Sequence (\ref{eqv: maximum sparse cut sequence}) at Most Largest Weighted Cut Sequence (\ref{eqv: weighted maximum sparse cut sequence})}

Likewise $\qLSC \leq \qLWSC$ is trivial since the largest weighted sequence of sparse length-constrained cuts always has as a candidate the largest (unweighted) sequence of sparse length-constrained cuts.

\begin{restatable}{thm}{LSCAtMostLWSC}\label{lem:LSCAtMostLWSC}
Given graph $G$ and node-weighting $A$ and parameters $h,s,\phi$, we have that 
\begin{align*}
\qLSC(\phi, h, s) \leq \qLWSC(\phi, h, s).
\end{align*}
\end{restatable}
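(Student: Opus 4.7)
The plan is to show this inequality via essentially a one-line observation: any cut sequence that witnesses $\qLSC$ also serves as a candidate for $\qLWSC$, and for this candidate the weighted sum is term-by-term at least the unweighted sum.

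More concretely, I would proceed as follows. First, let $(C_1, C_2, \ldots)$ be the $(h,s)$-length $\phi$-sparse moving cut sequence witnessing $\qLSC(\phi, h, s)$, so that $\sum_i |C_i| = \qLSC(\phi, h, s)$. By the definition of a $\phi$-sparse cut sequence (\Cref{dfn:movingCutSequence}), each $C_i$ satisfies
\[
\spa_{(h,s)}(C_i, A) \leq \phi
\]
when evaluated in $G - \sum_{j < i} C_j$. Consequently $\frac{\phi}{\spa_{(h,s)}(C_i, A)} \geq 1$ for every $i$, and so the weighted size of this same sequence obeys
\[
\sum_i \frac{\phi}{\spa_{(h,s)}(C_i, A)} \cdot |C_i| \;\geq\; \sum_i |C_i| \;=\; \qLSC(\phi, h, s).
\]

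Finally, since $\qLWSC(\phi, h, s)$ is defined as the maximum of $\sum_i \frac{\phi}{\spa_{(h,s)}(C_i, A)} |C_i|$ over all $(h,s)$-length $\phi$-sparse cut sequences (\Cref{dfn:equantities}), it is at least as large as the weighted size of our particular candidate sequence, which gives the desired inequality $\qLSC(\phi, h, s) \leq \qLWSC(\phi, h, s)$. There is no real obstacle here; the proof is immediate from the definitions, and the only thing to verify is that the same sequence that maximizes the unweighted total is admissible for the weighted maximization, which it is by construction.
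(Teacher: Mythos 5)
Your proposal is correct and uses exactly the same argument as the paper: take the sequence witnessing $\qLSC$, observe that each $\spa_{(h,s)}(C_i,A)\le\phi$ forces $\frac{\phi}{\spa_{(h,s)}(C_i,A)}\ge 1$, and conclude by the maximality in the definition of $\qLWSC$. No differences worth noting.
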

\begin{proof}
The proof is immediate from the fact that the weighted size of a sequence of $(h,s)$-length $\phi$-sparse cuts is always larger than its actual size

Specifically, let $(C_1, C_2,)$ be the largest sequence of $(h,s)$-length $\phi$-sparse cuts and let $(C_1', C_2', \ldots)$ be the sequence of of $(h,s)$-length $\phi$-sparse cuts of largest weighted size (as in \Cref{dfn:equantities}). Observe that since each $C_i$ is $\phi$-sparse we know that for each $C_i$ we have $\spa_{(h,s)}(C_i,A) \leq \phi$ and since $(C_1', C_2', \ldots)$ is the $(h,s)$-length $\phi$-sparse cut sequence of largest size we have
\begin{align*}
\sum_i |C_i| \leq \sum_i \frac{\phi}{\spa_{(h,s)}(C_i,A)} \cdot |C_i| \leq \sum_i \frac{\phi}{\spa_{(h,s)}(C_i',A)} \cdot |C_i'|
\end{align*}
as required.
\end{proof}

\subsection{Smallest Expander Decomposition (\ref{eqv: ED size}) at Most Largest Cut Sequence (\ref{eqv: maximum sparse cut sequence})}

Lastly, $\qSED \leq \qLSC$ is trivial since the largest sequence of length-constrained sparse cuts is itself a length-constrained expander decomposition.

\begin{restatable}{thm}{SEDAtMostLSC}\label{lem:SEDAtMostLSC}
Given graph $G$ and node-weighting $A$ and parameters $h,s,\phi$, we have that 
\begin{align*}
\qSED(\phi, h, s) \leq \qLSC(\phi, h, s).
\end{align*}
\end{restatable}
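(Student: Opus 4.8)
The plan is to observe, exactly as indicated in the section heading, that the union of a \emph{maximal} $(h,s)$-length $\phi$-sparse moving cut sequence is itself an $(h,s)$-length $\phi$-expander decomposition, so its size upper bounds the size of the \emph{smallest} such decomposition. First I would let $(C_1, C_2, \ldots, C_k)$ be the $(h,s)$-length $\phi$-sparse moving cut sequence w.r.t.\ $A$ of largest total size, so that $\qLSC(\phi,h,s) = \sum_i |C_i|$ (a maximizer exists by the definition in \Cref{dfn:equantities}), and set $\mcC := \sum_i C_i$, treating sums of moving cuts as moving cuts as is done elsewhere in the paper (the requirement that cut values be multiples of $1/h$ and at most $1$ is immaterial here since we only care about approximately optimal cuts).

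The core step is to check that $A$ is $(h,s)$-length $\phi$-expanding in $G - \mcC$. I would argue this by contradiction: if it were not, then $\cond_{(h,s)}(A)$ in $G - \mcC$ would be strictly below $\phi$, so there would exist an $hs$-length moving cut $C_{k+1}$ in $G - \mcC = G - \sum_{j \le k} C_j$ with $\spa_{(h,s)}(C_{k+1},A) < \phi$. Moreover $C_{k+1}$ must be nonempty: an $h$-length $A$-respecting demand cannot be $hs$-separated by the empty cut (since $s \ge 1$, any such demand is supported on pairs at distance $\le h \le hs$), so the witnessing cut of any sub-$\phi$-sparse configuration has $|C_{k+1}| > 0$. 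Then $(C_1, \ldots, C_k, C_{k+1})$ is an $(h,s)$-length $\phi$-sparse moving cut sequence of strictly larger total size, contradicting the maximality of $(C_1,\ldots,C_k)$.

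Having established that $\mcC$ is a moving cut with $A$ being $(h,s)$-length $\phi$-expanding in $G-\mcC$, the conclusion is immediate from the definition of $\qSED$: $\qSED(\phi,h,s) \le |\mcC| = \sum_i |C_i| = \qLSC(\phi,h,s)$. The only point requiring any care — and it is genuinely minor — is the nonemptiness of the appended cut, which is what turns "strictly larger total size" into a true contradiction with maximality; everything else is bookkeeping. I expect no real obstacle, as this is one of the trivial inequalities feeding into \Cref{thm: equivalence}.
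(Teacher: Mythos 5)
Your proposal is correct and follows essentially the same route as the paper: maximality of the sparse cut sequence forces $G - \sum_i C_i$ to be an $(h,s)$-length $\phi$-expander for $A$ (else one could append a further sparse cut), so the union is itself an expander decomposition and upper bounds $\qSED$. The extra remark about nonemptiness of the appended cut is a harmless refinement the paper leaves implicit.
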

\begin{proof}
 Let $(C_1, C_2, \ldots)$ be the largest sequence of $(h,s)$-length $\phi$-sparse cuts and let $C^*$ be the smallest $(h,s)$-length $\phi$-expander decomposition.

From the maximality of the sequence $(C_1, C_2,\ldots)$, it must hold that $G-\sum_{i}C_i$ is an $(h,s)$-length $\phi$-expander, as otherwise we can find another moving cut and append it to the sequence, contradicting the maximality of the sequence $(C_1, C_2, \ldots)$. Therefore, the moving cut $\mcC =\sum_{i} C_i$ is an $(h,s)$-length $\phi$-expander decomposition of $G$ for $A$. But since $C^*$ is the smallest such expander decomposition we then know that $|C^*| \leq |\mcC| = \sum_i |C_i|$ as required.
\end{proof}

\subsection{Proof of Equivalence of Distance Measures (\Cref{thm: equivalence})}
We conclude this section by stringing together our proven inequalities to show the equivalence of our various graph quantities.

\equivQuan*
\begin{proof}
We string together our inequalities as follows.

$\qLC(\phi, h, s) \leq \qLEC(\phi_1, h, s)$ by \Cref{thm:lcAtMostLEC}

$\qLEC(\phi_1, h_1, s_1) \leq \qLSC(\phi_2, h_2, s_2)$ by \Cref{lem:LECatMostLWSC}, \Cref{lem:LWSCAtMostLC} and \Cref{lem:LCAtMostLSC}.

$\qLSC(\phi_2, h_2, s_2) \leq \qLC(\phi_3, h_3, s_3)$ by \Cref{lem:LSCAtMostLWSC} and \Cref{lem:LWSCAtMostLC}.

For our second set of inequalities we additionally have the following.

$\frac{\phi_G}{\phi_4} \cdot \qLEC(\phi, h, s) \leq \qSED(\phi_4, h_4,s_4)$ by \Cref{lem:LECAtMostSED}.

$\qSED(\phi_4, h_4,s_4) \leq \qLSC(\phi_4, h_4, s_4)$ by \Cref{lem:SEDAtMostLSC}.

$\qLSC(\phi_4, h_4, s_4) \leq \qLC(\phi_5, h_5, s_5)$ again by \Cref{lem:LSCAtMostLWSC} and \Cref{lem:LWSCAtMostLC}.

$\qLC(\phi_5, h_5, s_5) \leq \qLSC(\phi_5, h_5, s_5)$ by \Cref{lem:LCAtMostLSC}.    
\end{proof}

\section{Algorithm: Sparse Flows and Cutmatches}\label{sec:sparseFlows}
Our algorithm to compute large length-constrained sparse cuts from cut strategies will be based on the previously-studied idea of a (length-constrained) cutmatch. Informally, a cutmatch matches two node sets over flow paths and finds a cut certifying that the unmatched nodes cannot be matched without significant additional congestion. In the rest of this section we give new algorithms for efficiently computing cutmatches with sparse flows; towards this, we give the first efficient algorithms for near-optimal length-constrained flows with support size $\tilde{O}(m)$. As our algorithms parallelize, we give our results as parallel algorithms in this section.



To define our cutmatch algorithm guarantees we will make use of the following notion of batching.
\begin{definition}[$\batch$-Batchable]\label{dfn:batchable}
 Given a graph $G = (V,E)$ with edge lengths $\l$ and vertex subsets $V_1, V_2, \ldots \subseteq V$, we say that $\mcV = \{V_1, V_2, \ldots\}$ is $\batch$-batchable for length $h$ if $\mcV$ can be partitioned into ``batches'' $\mcV_1, \mcV_2, \ldots, \mcV_{\batch}$ so that if $u \in V_i \in \mcV_j$ and $v \in V_{i'} \in \mcV_{j}$ and $i \neq i'$ then $u$ and $v$ are at least $2h$ apart in $G$. We say that pairs of vertex subsets $\{(S_i, T_i)\}_i$ are $\batch$-batchable if $\{S_i \cup T_i\}_i$ is $\batch$-batchable.
\end{definition}

A cutmatch is defined as follows.


\begin{definition}[Multi-Commodity $h$-Length Cutmatch, \cite{haeupler2021fast}]\label{def:cutmatch}
Given a graph $G = (V,E)$ with lengths $\l$, an $h$-length $\phi$-sparse cutmatch of congestion $\gamma$ between disjoint and equal-size node-weighting pairs $\{(A_i,A_i')\}_i$ consists of, for each $i$, a partition of the support of the node-weightings $M_i \sqcup U_i = \supp(A_i)$ and $M_i' \sqcup U_i' = \supp(A_i')$ where $M_i, M_i'$ and $U_i, U_i'$ are the ``matched'' and ``unmatched'' parts respectively and
\begin{itemize}
    \item An integral $h$-length flow $F = \sum_i F_i$ in $G$ with lengths $\l$ of congestion $\gamma$ according to $\U$ where, for each $i$, each $u$ sends at most $A_i(u)$ according to $F_i$ (with equality iff $u \in M_i$) and each $u' \in M_i'$ receives at most $A_i'(u')$ flow according to $F_i$ (with equality iff $u'\in M_i'$);
    \item A moving cut $C$ in $G$ where $U_i$ and $U_i'$ are at least $h$-far according to lengths $\{\l_e + h \cdot C(e)\}_e$ and $C$ has size at most
    \begin{align*}
        |C|\leq \phi \cdot \left( \left(\sum_i |A_i| \right) - \val(F) \right).
    \end{align*}
\end{itemize}
\end{definition}

\future{TODO: Update flow paper with this definition of cutmatch or discuss how this is implied}

The below summarizes our new cutmatch algorithms. Previously, \cite{haeupler2021fast} gave the same result but with support size $\tilde{O}(b \cdot \poly(h) \cdot m )$.

\begin{theorem}\label{thm:multiCutmatch}
Suppose we are given a graph $G = (V,E)$ on $m$ edges with lengths $\l$, $h \geq 1$ and $\phi \leq 1$. There is an algorithm that, given node-weighting pairs $\{(A_i,B_i)\}_i$ whose supports are $b$-batchable for length $h$, outputs a multi-commodity $h$-length $\phi$-sparse cutmatch $(F,C)$ of congestion $\gamma$ where $\gamma=\tilde{O}(\frac{1}{\phi})$. Furthermore, $|\supp(F)| \leq \tilde{O}(m + b + \sum_i |\supp(A_i \cup B_i)|)$ and this algorithm has depth $b \cdot \poly(h,\log N)$ and work $\tilde{O} \left(|\supp(F)| \cdot \poly(h) \right)$.
\end{theorem}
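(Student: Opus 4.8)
The plan is to run the length-constrained cutmatch algorithm of \cite{haeupler2021fast}, feeding it our new sparse-flow routine (\Cref{thm:sparseMultiFlows}) as the underlying $h$-length maxflow/mincut subroutine; since the cutmatch flow is, up to stripping a pair of auxiliary edges off each flow-path, exactly the flow returned by this subroutine, its support size inherits the $\tilde{O}(|E|+b)$ bound of \Cref{thm:sparseMultiFlows} instead of the $\tilde{O}(b\cdot\poly(h)\cdot m)$ bound of \cite{haeupler2021fast}. Concretely, I would form the auxiliary graph $G'$ by: for each pair $(A_i,B_i)$ adding a super-source $s_i$ joined to each $u\in\supp(A_i)$ by a length-$0$ edge of capacity $A_i(u)$ and a super-sink $t_i$ joined to each $v\in\supp(B_i)$ by a length-$0$ edge of capacity $B_i(v)$, and scaling every edge of $G$ to capacity $\gamma\cdot\U_e$ with $\gamma=\tilde{\Theta}(1/\phi)$. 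Routing commodity $i$ from $s_i$ to $t_i$ over $h$-length paths in $G'$ is then the same as routing $A_i$ to $B_i$ over $h$-length paths in $G$ subject to the node-weighting caps $A_i(u),B_i(v)$; the decomposition of the flow into flow-paths, each with its two super-edges stripped, yields the matching $F=\sum_iF_i$ (with $M_i,M_i'$ the super-terminals whose edges $F$ saturates and $U_i,U_i'$ the rest), and the accompanying mincut, restricted to $E(G)$ and rescaled by $\gamma^{-1}$, yields the moving cut $C$.

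\textbf{Invoking the sparse flow routine.} The auxiliary graph has $|E(G')| = m + \sum_i(|\supp(A_i)| + |\supp(B_i)|) \le m + 2\sum_i|\supp(A_i\cup B_i)|$. The source/sink pairs $\{(\{s_i\},\{t_i\})\}_i$ are $b$-batchable for length $h$ in $G'$: taking the given $b$-batching of $\{\supp(A_i\cup B_i)\}_i$ and using that the super-edges have length $0$, the $G'$-distance between super-terminals of two distinct pairs in a common batch equals the $G$-distance between their supports, which is $\ge 2h$. Thus \Cref{thm:sparseMultiFlows} applies to $G'$ with length constraint $h$, batch parameter $b$, and a constant $\eps$; it returns a feasible $(1\pm\eps)$-approximate $h$-length flow/cut pair $(F',C')$ with $F'=\eta\sum_{j=1}^kF'_j$ ($\eta=\tilde{\Theta}(\eps^2)$, each $F'_j$ an integral $h$-length $s_i$-$t_i$ flow) and $|\supp(F')|\le\tilde{O}(|E(G')|+b)$. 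Stripping super-edges then gives $|\supp(F)|\le|\supp(F')|\le\tilde{O}(m+b+\sum_i|\supp(A_i\cup B_i)|)$ as claimed; the congestion $\tilde{O}(1/\phi)$ is forced by the choice of $\gamma$; the $h$-separation of each $U_i$ from $U_i'$ by $C$ is inherited from $C'$ being a valid $h$-length cut of each pair; and $|C|\le\phi\big((\sum_i|A_i|)-\val(F)\big)$ follows from approximate maxflow/mincut duality $(1-\eps)|C'|\le\val(F')$ (the mincut must pay for exactly the unrouted, hence unmatched, supply) together with the $\gamma$-rescaling. The depth and work follow from the corresponding bounds of \Cref{thm:sparseMultiFlows} on $G'$ with $\eps=\Theta(1)$, giving depth $b\cdot\poly(h,\log N)$ and work $\tilde{O}(|\supp(F)|\cdot\poly(h))$ (restricting the flow computation to the relevant subgraphs where needed so the work does not pick up an extra factor of $b$).

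\textbf{Main obstacle.} The genuinely delicate point is that \Cref{def:cutmatch} requires the cutmatch flow $F$ to be \emph{integral} and to respect the caps $A_i(u),B_i(v)$ \emph{exactly}, whereas \Cref{thm:sparseMultiFlows} hands back the $\eta$-averaged fractional flow $F'=\eta\sum_jF'_j$, and its integral constituents $\sum_jF'_j$ overload those caps by an $\eta^{-1}$ factor. I would resolve this by greedily selecting an integral sub-multiset of the stripped $F'_j$'s --- adding $F'_j$ unless it would overload some super-terminal's capacity --- and arguing that each skipped $F'_j$ can be charged to a super-terminal already saturated by the accumulated flow, so that the routed value falls by at most a constant factor (absorbed into $\gamma$) and the cut inequality $|C|\le\phi(\sum_i|A_i|-\val(F))$ still holds. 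Making this charging precise, together with the bookkeeping needed to turn $C'$ (which may place cut weight on super-edges) into a bona fide $h$-length moving cut supported on $E(G)$ that still $h$-separates every $U_i$ from $U_i'$, is where the real effort lies; the remaining correctness is a direct appeal to the cutmatch analysis of \cite{haeupler2021fast}, and the constant-factor slacks from $\eps$, from $\gamma$, and from the length-$0$ super-edges are all absorbed into the stated asymptotics.
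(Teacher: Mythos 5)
Your proposal matches the paper's own proof, which is simply the observation that the cutmatch algorithm of \cite{haeupler2021fast} calls a batchable multicommodity $h$-length flow/cut routine as a blackbox (via the super-source/super-sink construction you describe), so substituting \Cref{thm:sparseMultiFlows} for that routine immediately yields the sparse cutmatch with support $\tilde{O}(m+b+\sum_i|\supp(A_i\cup B_i)|)$. The integrality/cap bookkeeping you flag as the delicate point is exactly the part the paper delegates to the cutmatch analysis of \cite{haeupler2021fast} (the paper even notes the extra loss from the super-source trick), so your reconstruction is essentially the same argument spelled out in more detail.
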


Our cutmatch algorithm will be based on new length-constrained flow algorithms whose guarantees are summarized by the below.
\sparseMultiFlows*


\subsection{Rounding Flows to Blaming Flows}

In order to achieve sparse flows and cutmatches we introduce the following sense of blaming flows. The utility of the following sense of blaming is that each time an edge is $\gamma$-blamed, a $\gamma$-fraction of its capacity is used up so if we compute a series of $\gamma$-blaming flows the total support size of these flows should be at most (about) $m / \gamma$.

\begin{definition}[Blaming Flow]
Given flow $F$ we say that $F$ is $\gamma$-blaming if for each $P$ in the support of $F$ there is a unique edge $e \in P$ that $P$ ``blames'' such that $F(e) \geq \gamma \cdot \U_e$.
\end{definition}

The following algorithm shows how to convert arbitrary flows into blaming flows
\begin{lemma}\label{lem:makeBlaming}
    Given a feasible (possibly fractional) $h$-length flow $F$ on graph $G = (V,E)$ one can compute a feasible integral flow $\hat{F}$ where $\supp(\hat{F}) \subseteq \supp(F)$ and
    \begin{enumerate}
        \item \textbf{Blaming:} $\hat{F}$ is $\frac{1}{2}$-blaming;
        \item \textbf{Approximate:} $\val(\hat{F}) \geq \Omega \left(\frac{1}{h \cdot \log^2 N}\right) \cdot \val(F)$;
    \end{enumerate}
    in $\tilde{O}(|\supp(\hat{F})|/|E| + \log |\supp(\hat{F})|)$ parallel time with $m$ processors.
\end{lemma}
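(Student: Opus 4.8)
The plan is to repeatedly extract a "blaming" sub-flow from $F$ by bucketing flow paths according to which edge along them is closest to saturation, then argue that one bucket carries an $\Omega(1/(h\log^2 N))$ fraction of the total value and can be rounded to an integral flow that is $\tfrac12$-blaming. First I would observe that, since $F$ is a feasible $h$-length flow, every flow-path $P$ in its support has at most $h$ edges (using integer lengths $\ell_e \ge 1$ and $\dil(F)\le h$; if lengths are not unit we instead use $\step_F \le h$, which holds because each edge contributes length at least $1$). For each path $P$, among its $\le h$ edges let $e(P)$ be an edge maximizing the congestion $F(e)/\U_e$ along $P$; since $F$ is feasible this ratio lies in $(0,1]$. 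I would then dyadically bucket the paths: path $P$ goes into bucket $B_j$ if $F(e(P))/\U_e \in (2^{-(j+1)}, 2^{-j}]$, for $j = 0,1,\dots,O(\log N)$ (ratios below $1/\poly(N)$ carry negligible total value and can be discarded since the minimum nonzero capacity and flow value are $\poly(N)$-bounded by our object-size convention).

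Next I would further sub-bucket within $B_j$ by the identity of the blamed edge is not needed; instead, within bucket $B_j$, I want to select paths so that the blamed edges actually get $\ge \tfrac12 \U_e$ of flow. The key step is: restrict attention to the single heaviest bucket $B_{j^*}$, which by averaging carries at least a $\Omega(1/\log N)$ fraction of $\val(F)$. Now I claim we can route an integral flow $\hat F$ supported on paths of $B_{j^*}$ with value $\Omega(\val(F)/(h\log^2 N))$ such that each used path $P$ can blame its edge $e(P)$ with $\hat F(e(P)) \ge \tfrac12\U_{e(P)}$. To see this, consider the fractional flow $F_{j^*}$ consisting of the paths in $B_{j^*}$. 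Scale it up by a factor $\Theta(2^{j^*})$: the resulting flow sends, along each path $P$, flow such that $e(P)$ carries roughly $\U_{e(P)}$ — but scaling up may overload edges that are $e(P)$ for many paths. Instead I would greedily assign each blamed edge $e$ a "budget" of $\U_e$ and, processing edges, select a subset of the $B_{j^*}$-paths blaming $e$ whose $\hat F$-flow through $e$ totals between $\tfrac12\U_e$ and $\U_e$; each such path can receive $\hat F$-value equal to (its blamed-edge ratio)$^{-1}$ times its scaled value, i.e., roughly $2^{j^*}$ times its $F$-value, as long as $e$'s budget is not yet exceeded. A path contributes flow to at most $h$ edges, so congestion across all edges from this assignment is $O(1)$ (each path loads $\le h$ edges, each edge is loaded to $\le \U_e$), which after a final $O(1)$-factor rescaling gives feasibility. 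The surviving value is at least $\val(F_{j^*}) \cdot 2^{j^*} / O(h)$, and since each path in $B_{j^*}$ has blamed-edge ratio $\ge 2^{-(j^*+1)}$ we have $\val(F_{j^*}) \ge \Omega(2^{-j^*})\cdot(\text{number of paths})$... — more directly, the total $F$-value assigned to saturated edges is $\Omega(1/\log N)\cdot \val(F)$, each such edge contributes $\Omega(\U_e \cdot 2^{-j^*})$ of $F$-value but $\Omega(\U_e)$ of $\hat F$-value, yielding a $\Theta(2^{j^*})$ blow-up, and dividing by the $O(h)$ congestion loss gives $\val(\hat F) = \Omega(\val(F)/(h\log^2 N))$. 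Finally I would round the resulting fractional selection to an integral flow: since the constraints form a flow/assignment polytope, a standard pipage/dependent-rounding or cycle-canceling argument produces an integral flow of at least the same value with $\supp(\hat F)\subseteq\supp(F)$, losing only another constant factor absorbed into the $\tilde O$.

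For the running time, all steps — computing $e(P)$ for each path (in parallel over $\supp(F)$, $O(\log h)$ depth each), bucketing, the greedy per-edge selection (parallelizable via a standard load-balancing / maximal-independent-set-style argument over the bipartite paths–edges incidence structure), and the integral rounding — run in $\tilde O(|\supp(\hat F)|/|E| + \log|\supp(\hat F)|)$ depth with $m$ processors; work is linear in the incidence size $\sum_{P}|P| \le h\cdot|\supp(F)|$, folded into the $\tilde O$.

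The main obstacle I anticipate is the middle step: arguing that within the heaviest bucket one can simultaneously (i) have every selected path blame an edge that is genuinely $\ge\tfrac12$-saturated by $\hat F$ and (ii) keep total congestion $O(1)$ so that feasibility survives, all while (iii) retaining an $\Omega(1/(h\log^2 N))$ fraction of the value. The tension is that making blamed edges half-full is a "packing from below" requirement that fights the feasibility "packing from above" requirement; resolving it requires the observation that each path touches only $h$ edges, so the per-edge lower-bound demands can be met independently with only an $O(h)$ aggregate congestion cost. Getting the integrality in at the end without further loss (beyond constants) via cycle-cancelling on the chosen fractional flow is routine but needs the standard argument that extreme points of a flow polytope with integral capacities are integral.
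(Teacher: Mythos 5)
There is a genuine gap in the middle step, and it is exactly the tension you flagged. You choose the blamed edge $e(P)$ as the most-congested edge of $P$ under the \emph{input} flow $F$ and bucket by the ratio $F(e(P))/\U_{e(P)}$, but the blaming property must hold for the \emph{output} flow $\hat F$, which only keeps paths from the single heaviest bucket. The congestion that places $e(P)$ in the range $(2^{-(j^*+1)},2^{-j^*}]$ may be carried almost entirely by paths from other buckets, e.g.\ a few high-flow ``backbone'' paths whose own maximum-congestion edge is a private near-saturated edge (so they land in $B_0$) and which are threaded through many of the blamed edges. Concretely, take $k$ backbone paths of flow $2$, each passing through $t=4k\le h$ shared edges of capacity $2^{j^*}(2k+1)$, where each shared edge also carries one unit-flow path whose remaining edges have huge capacity: the unit-flow paths form the heaviest bucket, yet the bucket flow through each blamed edge, even after the scaling by $2^{j^*}$ that your bucket membership licenses, is $\approx \U_e/(2k)\ll \frac{1}{2}\U_e$. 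Your greedy step ``select a subset of $B_{j^*}$-paths blaming $e$ whose flow through $e$ totals between $\frac{1}{2}\U_e$ and $\U_e$'' then has no feasible choice for \emph{any} blamed edge of the heaviest bucket, and the value analysis collapses; the claim that the per-edge lower bounds ``can be met independently with only an $O(h)$ aggregate congestion cost'' is where the argument breaks, since the problem is not congestion but that the required flow simply is not present in the selected bucket. A secondary issue is the final integralization: because half-saturation is a \emph{lower}-bound constraint per blamed edge, the feasible set is not a standard flow polytope, and cycle-cancelling or pipage rounding can reduce flow on a path and silently destroy its blame certificate.

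The paper avoids this by making each selected path saturate its blamed edge \emph{by itself}: capacities and path flow values are rounded down to powers of two, paths are bucketed by the pair (capacity $2^{i^*}$ of the minimum-capacity edge on the path, path value $2^{j^*}$), only the heaviest of the $O(\log^2 N)$ buckets is kept, and every selected path is boosted to send exactly $2^{i^*}$, so its minimum-capacity edge alone receives at least $\frac{1}{2}\U_e$. Feasibility and integrality are enforced combinatorially via a maximal independent set on the paths, with a clique on every block of $2^{i^*-j^*}$ paths sharing an edge; the $\Omega(1/h)$ loss comes from the degree bound $h\cdot 2^{i^*-j^*}$ of that MIS instance, not from congestion. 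If you want to salvage your scheme, you need a blame rule certified by the selected sub-flow itself rather than by $F$; replacing ``most congested edge'' with ``minimum-capacity edge, with the path boosted to that capacity'' is precisely such a rule.
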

\begin{proof}
    The basic idea of the algorithm is as follows. By a standard bucketing trick we can assume that every flow path in the support of $F$ has the same value $2^{j^*}$ and that the minimum capacity edge used by every flow path has the same capacity of $2^{i^*}$. We then create an instance of maximal independent set (MIS) whose vertices are our flow paths to select a subset of flow paths $I \subseteq \supp(F)$ so that each capacity $2^{i^*}$ edge used by $F$ has at most one flow path in $I$ going over it; sending $2^{i^*}$ flow along each flow path of $I$ gives our result.



    More formally, we begin by rounding our flow values and capacities down to powers of $2$. Specifically let $\U'$ be the capacity which gives edge $e$ value
    \begin{align*}
        \U_e' :=  2^{\lfloor\log_2(\U_e) \rfloor}
    \end{align*}
    Likewise, let $F'$ be the flow which gives path $P$ flow value
    \begin{align*}
        F'(P) :=  2^{\lfloor\log_2(F(P)) - 1 \rfloor}
    \end{align*}
    Observe that by the feasibility of $F$ for capacities $\U$ and the extra $-1$ in the exponent in our definition of $F'(P)$ we know that $F'$ is feasible in $G$ for capacities $\U'$. Furthermore, we know by definition of $F'$ that 
    \begin{align}
        \val(F') \geq \val(F)/2. \label{eq:qega}
    \end{align}

    Next, we partition paths of $F'$ by the smallest capacity edge that they use and their flow value. Specifically, for each $i$ and $j \leq i$, let $\mcP_{i,j} := \{P \in \supp(F') : \min_{e \in P} \U_e' = 2^i \text{ and } F'(P) = 2^j\}$ be all paths in the support of $F'$ with minimum capacity edge of capacity $2^i$ and flow value $2^j$. Let $F_{i,j}'$ be the flow which matches $F'$ on paths in $\mcP_{i,j}$, namely $F_{i,j}'$ on path $P$ is defined as
    \begin{align*}
        F_{i,j}'(P) := \begin{cases}
            F'(P) = 2^j & \text{if $P \in \mcP_{i,j}$}\\
            0 & \text{otherwise}.
        \end{cases}
    \end{align*}
    Let $i^*, j^* := \argmax_{i,j} \val(F_{i,j}')$ be the index of the maximum such flow. We let 
    \begin{align*}
        F^* := F_{i^*, j^*}'
    \end{align*}
    for ease of notation. By our assumption of polynomial capacities, the fact that $F' = \sum_{i,j} F_{i,j}'$ and \Cref{eq:qega} we know that 
    \begin{align}\label{eq:ASFas}
        \val(F^*) \geq \Omega \left(\val(F') / \log^2 N \right) \geq \Omega \left(\val(F) / \log^2 N \right).
    \end{align}
    Furthermore, we know that $F^*$ is feasible in $G$ with capacities $\U'$ and therefore feasible in $G$ with capacities $\U$.

    In what remains we will show how to round $F^*$ to another flow $\hat{F}$ which is both integral and blaming to at a negligible loss in cost. Each path in the support of $\hat{F}$ will blame some edge of capacity $2^{i^*}$ according to $\U'$. 
    
    We will construct $\hat{F}$ by solving an appropriate instance of maximal independent set (MIS) based on $F^*$. By construction every path in the support of $F^*$ has flow value $2^{j^*}$ and uses an edge of capacity $2^{i^*}$. Our goal will be to construct an instance of MIS which allows us to select paths so that for each edge of capacity $2^{i^*}$ that is used by some path in $\supp(F^*)$ we select exactly one path. We will the increase said paths flow value to $2^{i^*}$; likewise we will select $2^{i-{i^*}}$ many paths for an edge of capacity $i > {i^*}$ in $G$.
    Specifically, consider the following instance of MIS on graph $H = (V_H, E_H)$.
      \begin{itemize}
        \item \textbf{MIS Vertices:} For each path in $P \in \supp(F^*)$ we have $1$ vertex. In other words, $V_H := \supp(F^*)$.
        \item \textbf{MIS Edges:} For each edge $e \in G$ in our original graph such that $F^*(e) > 0$ we construct edges $E_e$ where $E_H := \sqcup_{e \in E} E_e$ is the union of all such edges. $E_e$ is constructed as follows. Let $\mcP_e = \{P_0, P_1, \ldots\}$ be all paths in $\supp(F^*)$ which include $e$ ordered arbitrarily and for $l \in \left[\lceil|\mcP_e|/2^{{i^*}-j^*} \rceil\right]$ let $\mcP_e^{(l)} := \{P_{(l-1) \cdot 2^{i^*-j^*}}, \ldots P_{(l)\cdot 2^{i^*-j^*}}\}$ be the $l$th set of contiguous $2^{i^*-j^*}$ such paths. For each $l$ and each $P, P' \in \mcP_e^{(l)}$ we include the edge $\{P, P'\}$ in $E_e$; in other words, we add a clique for each $\mcP_e^{(l)}$.
    \end{itemize}
    Let $I \subseteq \supp(F^*)$ be an MIS in $H$. Then, we define $\hat{F}$ as the flow corresponding to $I$ where each flow value is rounded up from $2^{j^*}$ to $2^{i^*}$; that is, $\hat{F}$ on path $P$ is defined as 
    \begin{align*}
        \hat{F}(P) := \begin{cases}
            2^{i^*} & \text{if $P \in I$}\\
            0 & \text{otherwise}.
        \end{cases} 
    \end{align*}

    We now argue that $\hat{F}$ satisfies the required properties. We have $\supp(\hat{F}) \subseteq \supp(F)$ since $\supp(\hat{F}) \subseteq \supp(F^*) \subseteq \supp(F') = \supp(F)$.
    
    $\hat{F}$ must be integral since $i^* \geq 1$ since we have assumed $\U_e$ is integral for every $e$.
    
    We next argue that $\hat{F}$ is feasible for capacities $\U$. Even stronger, we observe that $\hat{F}$ is feasible for $\U'$. In particular, applying the fact that $I$ includes at most one element from each $\mcP_e^{(l)}$ and $l \leq |\mcP_e|/2^{i^*-j^*}$ we have the total flow that $\hat{F}$ sends over edge $e$ is
    \begin{align*}
        \hat{F}(e) &= \sum_{l} \sum_{P \in I \cap \mcP_e^{(l)}} 2^{i^*}\\
        &\leq \sum_{l}  2^{i^*}\\
        & \leq |\mcP_e| \cdot 2^{j^*}\\
        & = F^*(e)\\
        & \leq F'(e)\\
        &\leq \U_e'.
    \end{align*}
    Thus, $\hat{F}$ is feasible for $\U'$ and since $\U_e' \leq \U_e$ for every edge $e$, $\hat{F}$ is also feasible for $\U$.

    Next, we claim that $\hat{F}$ is $\frac{1}{2}$-blaming. By definition of $F^*$ we know that, for each $P \in \supp (F^*)$, there is an edge $e \in P$ such that $\U_e' = 2^{i^*}$. As $\supp(\hat{F}) \subseteq \supp(F^*)$ and $\hat{F}$ sends $2^{i^*}$ over each path in its support, it follows that for each $P \in \supp(\hat{F})$ there is a some $e \in P$ such that $\U'_e = 2^{i^*}$. Since $\hat{F}(P) = 2^{i^*}$ and $\hat{F}$ is feasible for $\U'$, it follows that this edge for each $P \in \supp(\hat{F})$ is unique and $\hat{F}(e) = 2^{i^*}$. Thus, we have $P \in \supp(\hat{F})$ blame this unique edge $e$; since $\U'_e \leq \frac{1}{2}\cdot \U_e$, it follows that $\hat{F}$ is $\frac{1}{2}$-blaming.

    Lastly, we argue that $\hat{F}$ has approximately the same value as $F$. Since every path in $F^*$ consists of at most $h$-many edges, observe that the maximum degree in $H$ is at most $h \cdot 2^{i^*-j^*}$ so $|I| \geq \frac{1}{h \cdot 2^{i^*-j^*}} \cdot |\supp(F^*)|$. Combining this with the fact that $|\supp(F^*)| = \val(F^*) / 2^{j^*}$ we have
    \begin{align}
        |I| &\geq \frac{1}{h \cdot 2^{i^*-j^*}} \cdot |\supp(F^*)|\nonumber\\
        &= \frac{1}{h \cdot 2^{i^*}} \cdot \val(F^*). \label {eq:asfgas}
    \end{align}
    Applying \Cref{eq:asfgas} and the definition of $\hat{F}$ we get
    \begin{align}
        \val(\hat{F}) &= 2^{i^*} \cdot |I|\nonumber\\
        & = \frac{1}{h} \cdot \val(F^*)\label{eq:sgadda}
    \end{align}
    Combining \Cref{eq:sgadda} and \Cref{eq:ASFas} we get
    \begin{align*}
        \val(\hat{F}) \geq \Omega \left(\frac{1}{h \cdot \log^2 N} \right) \cdot \val(F)
    \end{align*}
    as required.
    
    It remains to argue the runtime of our algorithm. Computing $\hat{F^*}$ in the stated time is trivial to do by inspecting each path in the support of $F$ in parallel. Likewise, computing $\hat{F}$ from $I$ is trivial to do in the stated time. The only non-trivial step is to construct $H$ and compute $I$. Constructing $H$ can be done in the stated time as it consists of $|\supp(F^*)| \leq |\supp(F)|$-many vertices and each $\mcP_e^{(l)}$ and its corresponding clique can be computed in parallel. Computing $I$ can then be done by any number of a standard number of parallel MIS algorithms running in deterministic parallel time $O(\log |V_H|) = O(\log |\supp(F)|)$ rounds; see e.g.\ \cite{luby1985simple}.
\end{proof}

\subsection{Blaming Flow Sequences}
Our algorithm will ultimately compute a sequence of blaming flows, defined as follows.

\begin{definition}[Blaming Flow Sequence]
Given flow $F$ we say that $F$ is decomposable into a $\gamma$-blaming flow sequence $F_1, F_2, \ldots$ if $F$ can be expressed as $F = F_1 + F_2 + \ldots$ where $F_i$ is $\gamma$-blaming in $G$ with capacities $\U^{(i)} := \{\U_e - \sum_{j < i} F_j(e)\}_e$.
\end{definition}
Given a flow $F$ that can be decomposed into a blaming flow sequence we will refer to the number of times $F$ blames an edge $e$ by which we mean the number of flows in $F_1, F_2, \ldots$ that have in their support a path that blames $e$.

The following will imply the sparsity of a sequence of blaming flows.
\begin{lemma}\label{lem:blamingFlowSeq}
    Let $F$ be decomposable into a $\gamma$-blaming flow sequence $F_1, F_2, \ldots$ of integral flows on graph $G = (V,E)$ with capacities $\U$. Then each edge is blamed at most $O(\frac{\log N}{\gamma})$ times by $F$.
\end{lemma}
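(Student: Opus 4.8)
The plan is to fix a single edge $e$ and track how its residual capacity shrinks every time it is blamed, exploiting the fact that a blame at step $i$ consumes at least a $\gamma$-fraction of the capacity of $e$ that is still available at step $i$. Concretely, fix $e\in E$ and let $i_1<i_2<\cdots<i_t$ be exactly the indices $i$ for which some flow path in $\supp(F_i)$ blames $e$; the goal is to show $t=O(\log N/\gamma)$, since the same bound for every edge is precisely the claim. Throughout, $\U^{(i)}_e=\U_e-\sum_{j<i}F_j(e)$, and the $F_j$ are feasible for the successive residual capacities, so $\U^{(1)}_e\ge\U^{(2)}_e\ge\cdots\ge 0$ (each $F_j(e)\ge 0$).

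Next I would carry out the decay computation. When $e$ is blamed at step $i_k$, the definition of a blaming flow applied to $F_{i_k}$ with capacities $\U^{(i_k)}$ gives $F_{i_k}(e)\ge\gamma\cdot\U^{(i_k)}_e$, hence
\begin{align*}
  \U^{(i_k+1)}_e=\U^{(i_k)}_e-F_{i_k}(e)\le(1-\gamma)\,\U^{(i_k)}_e,
\end{align*}
and by monotonicity $\U^{(i_{k+1})}_e\le\U^{(i_k+1)}_e\le(1-\gamma)\,\U^{(i_k)}_e$. Iterating over $k=1,\dots,t-1$ yields $\U^{(i_t)}_e\le(1-\gamma)^{t-1}\,\U_e$. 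Now I would invoke integrality: since every $F_j$ is integral, every $\U^{(i)}_e$ is a non-negative integer, and at step $i_t$ there is a flow path of $F_{i_t}$ through $e$, so $F_{i_t}(e)\ge 1$, whence feasibility gives $\U^{(i_t)}_e\ge F_{i_t}(e)\ge 1$. Combining, $(1-\gamma)^{t-1}\,\U_e\ge 1$, i.e.\ $(1-\gamma)^{t-1}\ge 1/\U_e\ge 1/\poly(N)$ by the polynomial-size convention. Taking logs and using $\ln\frac{1}{1-\gamma}\ge\gamma$ gives $(t-1)\gamma\le O(\log N)$, so $t\le 1+O(\log N)/\gamma=O(\log N/\gamma)$.

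There is no genuine obstacle here; the one point to handle with a bit of care is the integrality step, which is exactly what prevents the residual capacity from decaying geometrically forever: once it would drop below $1$ it is forced to $0$, so after $O(\log N/\gamma)$ blames no further blame of $e$ is possible. The degenerate cases are consistent with the bound: if $\gamma=1$ then one blame already kills the residual capacity, and if $\U_e=0$ then $e$ is never blamed. I would also note that replacing $\poly(N)$ by the stated $N$ changes nothing, since $\log\poly(N)=O(\log N)$ and all $O$-notation in the paper already depends on the exponent $c_{\max}$.
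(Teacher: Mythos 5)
Your proof is correct and follows essentially the same route as the paper: fix an edge, observe that each blame shrinks its residual capacity by a $(1-\gamma)$ factor, and conclude from the polynomial bound on $\U_e$ that this can happen only $O(\log N/\gamma)$ times. Your explicit use of integrality to force the residual capacity to be at least $1$ at the final blame is a nice touch of rigor that the paper's proof leaves implicit.
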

\begin{proof}
    Each time an edge is blamed its capacity is reduced by a $1-\gamma$ multiplicative factor and by our assumption of polynomial-size capacities such a reduction can happen at most $\frac{1}{\gamma}\cdot \log N$ times.
    
    More formally, fix an edge $e$ suppose (without loss of generality) that $F_1', F_2', \ldots, F_k'$ is a subsequence of $F_1, F_2, \ldots$ where each $F_i'$ has in its support a path that blames $e$. We have that the capacity of $e$ in the graph in which $F_{i+1}'$ contains a path blaming $e$ is at most
    \begin{align*}
        \U_e \cdot (1-\gamma)^{i}
    \end{align*}
    By our assumption that $\U_e \leq N$ we then have that $i \leq O(\log N)$ as required.
\end{proof}

\subsection{Blaming Near-Lightest Path Blockers}
The sequence of blaming flows computed by our algorithm will be a so-called ``near-lighted path blocker'' as previously introduced by \cite{haeupler2021fast}. Towards defining these, it will be useful to treat a moving cut $C$ as assigning ``weights'' to edges of our input graph. Given a moving cut $C$ and path $P$ we let 
\begin{align*}
    C(P): = \sum_{e \in P}C(e)
\end{align*} 
be the total weight of the path and let 
\begin{align*}
d_C^{(h)}(u,v) := \min_{u-v \text{ path }P : 
\l(P) \leq h} C(P)    
\end{align*}
give the minimum weight of a length at most $h$ path connecting $u$ and $v$. For vertex sets $W, W' \subseteq V$ we define $d_C^{(h)}(W,W') := \min_{w \in W} \min_{w' \in W'} d_C^{(h)}(w,w')$ analogously. Then, we have our definition of near-lightest path blockers below.

\begin{definition}[$h$-length $(1+\epsilon)$-Lightest Path Blockers, \cite{haeupler2021fast} Definition 11.1]\label{dfn:alphaPathBlocker}
Let $G = (V,E)$ be a graph with lengths $\l$, weights $C$ and capacities $\U$. Fix $\epsilon >0$, $h \geq 1$, $\lambda \leq d^{(h)}_{C}(S,T)$ and $S, T \subseteq V$. Let $F$ be an $h$-length integral $S$-$T$ flow. $F$ is an $h$-length $(1+\epsilon)$-lightest path blocker if:
\begin{enumerate}
    \item \textbf{Near-Lightest:} $P \in \supp(F)$ has weight at most $(1 + 2\epsilon) \cdot \lambda$;
    \item \textbf{Near-Lightest Path Blocking:} If $S$-$T$ path $P'$ has length at most $h$ and weight at most $(1+\epsilon) \cdot \lambda$ then there is some $e \in P'$ where $F(e) = \U_e$.
\end{enumerate}
\end{definition}

Previous work showed how to compute near-lightest path blockers.
\begin{restatable}{thm}{pathBlockerAlgOld}[\cite{haeupler2021fast} Theorem 11.1]
\label{thm:pathBlockerAlgOld}
One can compute $h$-length $(1+\epsilon)$-lightest path blocker $F$ in deterministic parallel time $\tilde{O}\left(\poly(\frac{1}{\eps}, h)\right)$ with $m$ processors where $|\supp(F)| \leq \tilde{O}\left(\poly(\frac{1}{\eps}, h) \cdot |E|\right)$.
\end{restatable}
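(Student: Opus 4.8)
The plan, following \cite{haeupler2021fast}, is to reduce the construction of an $h$-length $(1+\eps)$-lightest path blocker to a single blocking-flow computation in an auxiliary bounded-depth DAG. First I would normalize the weights: set $\Lambda := \lceil h/\eps \rceil$, scale $C$ by $\sigma := \Lambda/\lambda$, round each scaled weight up to an integer $C'(e) := \lceil \sigma\, C(e) \rceil$, and discard every edge with $C'(e) > \lceil (1+2\eps)\Lambda \rceil$. Since lengths are positive integers, any $h$-length path uses at most $h$ edges, so this rounding inflates a path's weight additively by at most $h \le \eps \Lambda$; hence any $h$-length $S$-$T$ path of $C$-weight at most $(1+\eps)\lambda$ has $C'$-weight at most $(1+2\eps)\Lambda$, and conversely any path of $C'$-weight at most $(1+2\eps)\Lambda$ has $C$-weight at most $(1+2\eps)\lambda$.

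Next I would build the layered DAG $\hat G$ whose vertices are triples $(v,a,b)$ with $v \in V$, $a \in \{0,\dots,h\}$ (length used so far) and $b \in \{0,\dots,\lceil(1+2\eps)\Lambda\rceil\}$ (rounded weight used so far), together with a super-source $\hat s$ joined to each $(s,0,0)$ for $s \in S$ and a super-sink $\hat t$ joined from each $(t,a,b)$ for $t \in T$. Each $e = (u,v) \in E$ induces DAG-edges $(u,a,b) \to (v,a+\l_e,b+C'(e))$ for all feasible $a,b$, and all these copies of $e$ share the single capacity $\U_e$. Because lengths are $\ge 1$, every DAG edge strictly raises the first coordinate, so $\hat G$ is acyclic and every $\hat s$-$\hat t$ path has $O(h)$ edges. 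Contracting away the extra coordinates maps $\hat s$-$\hat t$ paths of $\hat G$ onto exactly the $h$-length $S$-$T$ paths of $G$ of $C'$-weight at most $(1+2\eps)\Lambda$; in particular every $h$-length path of $C$-weight at most $(1+\eps)\lambda$ lifts into $\hat G$.

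I would then compute an integral blocking flow $\hat F$ in $\hat G$ with respect to the \emph{grouped} capacities (copies of $e$ jointly bounded by $\U_e$), i.e.\ a flow admitting no $\hat s$-$\hat t$ path all of whose edges lie in unsaturated groups, and set $F(e) := \sum_{\text{copies of }e} \hat F$. Then $F$ is a feasible integral $h$-length $S$-$T$ flow; each of its flow paths is the image of an $\hat s$-$\hat t$ path, hence has $C'$-weight at most $(1+2\eps)\Lambda$ and therefore $C$-weight at most $(1+2\eps)\lambda$, giving the near-lightest property of \Cref{dfn:alphaPathBlocker}. For the near-lightest-path-blocking property, any $h$-length $S$-$T$ path $P'$ with $C(P') \le (1+\eps)\lambda$ lifts to an $\hat s$-$\hat t$ path of $\hat G$, which by the blocking property of $\hat F$ contains an edge whose group is saturated, i.e.\ some $e \in P'$ with $F(e) = \U_e$, as required.

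For the efficiency claims, $\hat G$ has $\tilde{O}\!\left(m \cdot \poly(h,1/\eps)\right)$ edges and depth $O(h)$, and I would never materialize it globally: a processor owning $e \in E$ locally maintains the $\poly(h,1/\eps)$ copies of $e$. Using a parallel blocking-flow routine for bounded-depth DAGs, adapted to respect the grouped capacities, the whole computation runs in depth $\tilde{O}(\poly(h,1/\eps))$ with $m$ processors; and since an integral DAG flow decomposes into at most $|E(\hat G)|$ flow paths and the projection to $G$ only identifies paths, $|\supp(F)| \le \tilde{O}(\poly(h,1/\eps) \cdot m)$. The step I expect to be the main obstacle is precisely this grouped, bounded-depth, parallel blocking-flow computation: an off-the-shelf DAG blocking flow ignores that many DAG edges share one $G$-capacity, so one must verify both that taking ``group-saturated'' as the blocked predicate preserves the termination guarantee of the blocking-flow construction, and that the sharing inflates neither the parallel depth beyond $\tilde{O}(\poly(h,1/\eps))$ nor the support of $F$ beyond $\tilde{O}(\poly(h,1/\eps) \cdot m)$.
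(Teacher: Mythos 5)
This statement is not proved in the paper at all: it is imported verbatim as Theorem 11.1 of \cite{haeupler2021fast}, so there is no paper-internal argument to compare against, and a blind proof attempt here amounts to re-proving the prior work's result. Judged on that basis, your layered-graph reduction is a sensible outline -- the rounding of weights to multiples of $\lambda\eps/h$, the $(v,\text{length},\text{weight})$-indexed DAG of depth $O(h)$, and the projection argument showing that a ``group-saturating'' flow yields both the near-lightest and the blocking properties of \Cref{dfn:alphaPathBlocker} are all fine, as is the support bound $|\supp(F)| \le |E(\hat G)| \le \poly(h,1/\eps)\cdot m$ up to logs.

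However, the step you defer is precisely the entire content of the theorem, so the attempt has a genuine gap rather than a routine loose end. Computing an \emph{integral} blocking flow on a capacitated DAG in deterministic parallel depth $\tilde O(\poly(h,1/\eps))$ with $m$ processors is not an off-the-shelf primitive: known parallel blocking-flow algorithms have depth polynomial in $n$ (or rely on randomization/approximation), not polynomial only in the DAG depth, and your construction additionally requires the blocking predicate to be ``some edge of the path lies in a saturated \emph{group}'' where up to $\poly(h,1/\eps)$ DAG copies share one capacity $\U_e$. You would need to exhibit a concrete procedure and prove (i) that it terminates after $\poly(h,1/\eps)$ parallel rounds -- e.g.\ via a progress measure showing each round saturates a group on every surviving near-lightest path or removes a constant fraction of something -- and (ii) that the grouped capacities neither break that progress argument nor blow up the number of flow paths created per round. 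This is exactly what Theorem 11.1 of \cite{haeupler2021fast} establishes, and it does so by a purpose-built procedure rather than by invoking a generic blocking-flow routine; without supplying that component, your proposal reduces the theorem to an unproved claim that is essentially as strong as the theorem itself.
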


By repeatedly making near-lightest path blockers blaming, we can compute a near-lightest path blockers which is also blaming.
\begin{restatable}{thm}{pathBlockerAlgNew}
	\label{thm:pathBlockerAlgNew}
	One can compute an $h$-length $(1+\epsilon)$-lightest path blocker $F$ in deterministic parallel time $\tilde{O}\left(\poly(\frac{1}{\eps}, h)\right)$ with $m$ processors which is decomposable into a $\frac{1}{2}$-blaming flow sequence.
\end{restatable}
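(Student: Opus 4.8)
The plan is to iteratively apply \Cref{thm:pathBlockerAlgOld} and \Cref{lem:makeBlaming} to build up the flow $F$ as a blaming flow sequence, while maintaining the lightest-path-blocker invariant. First I would compute an initial $h$-length $(1+\epsilon)$-lightest path blocker $F^{(0)}$ using \Cref{thm:pathBlockerAlgOld}. This flow need not be blaming, so I apply \Cref{lem:makeBlaming} to it to obtain a feasible integral flow $F_1$ with $\supp(F_1) \subseteq \supp(F^{(0)})$ that is $\frac{1}{2}$-blaming and satisfies $\val(F_1) \geq \Omega\left(\frac{1}{h \log^2 N}\right) \cdot \val(F^{(0)})$. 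Since $F_1 \preceq F^{(0)}$, its flow-paths still have weight at most $(1+2\epsilon)\lambda$, so the near-lightest property of $F_1$ is inherited for free; only the blocking property may now fail because $F_1$ dropped many paths.

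Next I would iterate: having fixed $F_1, \ldots, F_{i-1}$, work in the residual graph $G$ with capacities $\U^{(i)} := \{\U_e - \sum_{j<i} F_j(e)\}_e$. I invoke \Cref{thm:pathBlockerAlgOld} again on this residual instance (with the same $S,T,\lambda$, noting $\lambda \leq d^{(h)}_C(S,T)$ is a weight condition unchanged by capacity reductions) to get a lightest path blocker $\hat F^{(i)}$ in the residual graph, then apply \Cref{lem:makeBlaming} to extract a $\frac{1}{2}$-blaming integral piece $F_i$ capturing an $\Omega\left(\frac{1}{h\log^2 N}\right)$-fraction of $\val(\hat F^{(i)})$. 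I continue until the residual lightest path blocker is empty, i.e., until every $h$-length $S$-$T$ path of weight at most $(1+\epsilon)\lambda$ is saturated by $\sum_j F_j$. Setting $F := \sum_j F_j$, by construction $F$ is decomposable into a $\frac{1}{2}$-blaming flow sequence, and the termination condition is exactly the near-lightest-path-blocking property of \Cref{dfn:alphaPathBlocker}; the near-lightest property holds since every piece only uses paths of weight $\leq (1+2\epsilon)\lambda$.

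The main thing to verify carefully is the iteration count and hence the runtime. Each call to \Cref{lem:makeBlaming} keeps an $\Omega\left(\frac{1}{h\log^2 N}\right)$-fraction of the current residual path-blocker's value, but more importantly each $F_i$ saturates (i.e., $\frac{1}{2}$-blames) at least one edge's residual capacity down by a $\frac{1}{2}$ factor on each path; by \Cref{lem:blamingFlowSeq} each edge can be blamed only $O(\log N)$ times across the whole sequence, so the total support size is $\tilde O(|E|)$ and — combined with the $\tilde O(\poly(\frac1\epsilon,h))$ support-size and runtime bounds per call from \Cref{thm:pathBlockerAlgOld} and the $\tilde O(|\supp|/|E| + \log|\supp|)$ cost of \Cref{lem:makeBlaming} — the overall parallel time is $\tilde O(\poly(\frac1\epsilon, h))$ with $m$ processors. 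The subtle point is arguing the process actually terminates in $\poly$ many rounds rather than merely that each round makes progress: here one should track the total residual weight-capacity of near-lightest paths (or equivalently use that \Cref{thm:pathBlockerAlgOld} on an empty residual returns the empty flow) and combine with \Cref{lem:blamingFlowSeq} to bound the number of rounds by $\tilde O(\poly(\frac1\epsilon, h))$. A second minor point to check is that $\epsilon$ must be handled consistently across rounds — using parameter $\epsilon/2$ throughout so that the accumulated near-lightest slack stays within the required $(1+2\epsilon)$ bound.
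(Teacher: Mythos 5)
Your proposal follows essentially the same route as the paper: repeatedly compute a $(1+\epsilon)$-lightest path blocker via \Cref{thm:pathBlockerAlgOld}, round it to a $\frac{1}{2}$-blaming integral flow with \Cref{lem:makeBlaming}, decrement capacities, and iterate until the accumulated flow is near-lightest-path blocking, which by construction yields the $\frac{1}{2}$-blaming flow sequence decomposition. The one step you flag as subtle---bounding the number of iterations by $\tilde{O}(\poly(\frac{1}{\eps},h))$---is exactly the step the paper also does not argue from scratch but instead cites from Theorem 11.1 of \cite{haeupler2021fast}.
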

\begin{proof}
    Our algorithm simply repeatedly takes lightest path blockers, rounds them to be blaming, reduces capacities and iterates.

    More formally, we do the following. We initialize our output $(1+\eps)$-lightest path blocker $F$ to be the empty flow. Then, we compute a $(1+\eps)$-lightest path blocker $F'$ using \Cref{thm:pathBlockerAlg}. We then apply \Cref{lem:makeBlaming} to round this flow to flow $\hat{F}$ which is $\frac{1}{2}$-blaming. We update $F$ to $F+\hat{F}$ and decrement the capacity of each edge $e$ by $\hat{F}(e)$. We iterate this until $F$ is near-lightest path blocking.

    We first claim that the above algorithm must only iterate $\tilde{O}(\poly(h))$ times. This is proven in Theorem 11.1 of \cite{haeupler2021fast}. \future{Unpack this if I get time} Our runtime and the fact that $\hat{F}$ is a $(1+\eps)$-lightest path blocker and decomposable into a $\frac{1}{2}$-blaming flow sequence is immediate by construction and the guarantees of \Cref{thm:pathBlockerAlg} and \Cref{lem:makeBlaming}.
\end{proof}

\subsection{Sparse Flows and Cutmatches via Blaming Near-Lightest Path Blockers}
We now use our near-lightest path blockers to compute sparse flows and cutmatches. Specifically, we adopt \Cref{alg:mwMulti} which was shown by \cite{haeupler2021fast} to compute a near-optimal flow.

\begin{restatable}{thm}{pathBlockerAlg}[\cite{haeupler2021fast} Theorem 11.1]
\label{thm:pathBlockerAlg}
\Cref{alg:mwMulti} returns a feasible  $h$-length $\{(S_i, T_i)\}_i$ flow, moving cut pair $(F, C)$ that is $(1 \pm \epsilon)$-approximate in deterministic parallel time $\tilde{O}\left(\poly(\frac{1}{\eps}, h\right)$ with $m$ processors. Also, $F = \eta \cdot \sum_{j=1}^k F_j$ where $\eta = \tilde{\Theta}(\epsilon^2)$, $k = \tilde{O}\left(\frac{h}{\epsilon^4} \right)$ and each $F_j$ is an integral $h$-length $S_i$-$T_i$ flow for some $i$.
\end{restatable}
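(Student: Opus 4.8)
This statement is a restatement of Theorem~11.1 of \cite{haeupler2021fast}, so the plan is to invoke that result after checking that its hypotheses are met; I recall the argument for completeness. \Cref{alg:mwMulti} is the length-constrained multiplicative-weights flow algorithm (a Garg--K\"onemann-style scheme): it maintains a moving cut $C$ (equivalently, a weight on each edge), initialized to the uniform value $C(e)=\delta/\U_e$ for a parameter $\delta=\poly(\epsilon/(Nh))$ fixed by the analysis, together with a running integral flow $F^{\mathrm{int}}$, initially empty. Each iteration $j$ (i) sets $\lambda=\min_i d_C^{(h)}(S_i,T_i)$ and computes an $h$-length $(1+\epsilon)$-lightest path blocker $F_j$ for the current weights $C$ over all commodities simultaneously, using \Cref{thm:pathBlockerAlgOld} (or the blaming variant \Cref{thm:pathBlockerAlgNew}); (ii) sets $F^{\mathrm{int}}\gets F^{\mathrm{int}}+F_j$; and (iii) performs the multiplicative update $C(e)\gets C(e)\cdot(1+\epsilon\cdot F_j(e)/\U_e)$ for every edge $e$. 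It halts once $\min_i d_C^{(h)}(S_i,T_i)\ge 1$, after $k$ iterations, and outputs $C$ (rescaled by $1/\log_{1+\epsilon}(1/\delta)$) together with $F:=\eta\cdot\sum_{j=1}^k F_j$, where $\eta=\Theta(1/\log_{1+\epsilon}(1/\delta))=\tilde\Theta(\epsilon^2)$ is the scaling factor that makes $F$ feasible.

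First I would verify feasibility and the $(1\pm\epsilon)$-approximation by the standard potential argument. For feasibility: each time an edge $e$ accumulates total blocker flow $\U_e$ its weight grows by roughly a $(1+\epsilon)$ factor, so the halting condition together with the initial weight $\delta/\U_e$ bounds the number of such saturations by $O(\log_{1+\epsilon}(1/\delta))$; hence dividing $\sum_j F_j$ by $\eta$ yields congestion at most $1$. For the approximation: the rescaled final $C$ is a feasible $h$-length $\{(S_i,T_i)\}_i$ moving cut, and by weak duality between the maximum value feasible $h$-length $\{(S_i,T_i)\}_i$ flow and the minimum size $h$-length $\{(S_i,T_i)\}_i$ cut (recalled in the preliminaries), to get a $(1\pm\epsilon)$-approximate pair it suffices to show $(1-\epsilon)|C|\le\val(F)$; this follows from the amortized-charging step of the MWU analysis, which relates the total weight increase per iteration to the value of the blocker routed, and is exactly the content of \cite{haeupler2021fast}. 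The structural claim $F=\eta\sum_{j=1}^k F_j$ with each $F_j$ an integral $h$-length $S_i$-$T_i$ flow for some $i$ is then immediate: $F_j$ is the integral path-blocker flow of iteration $j$, all of whose flow-paths run between $S_i$ and $T_i$ for a single commodity $i$ by the blocker call.

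Second I would bound the iteration count $k$ and the runtime. The near-lightest path-blocking property of $F_j$ (\Cref{dfn:alphaPathBlocker}) guarantees that after iteration $j$ every $h$-length $S_i$-$T_i$ path of $C$-weight within a $(1+\epsilon)$ factor of $\lambda$ contains a saturated edge, so the update raises $\min_i d_C^{(h)}(S_i,T_i)$ by a $(1+\Omega(\epsilon))$ factor within every $O(h/\epsilon)$ consecutive iterations (the extra factor $h$ because weight added to one edge of an $\le h$-edge path need not be ``progress'' for that path). Since $\min_i d_C^{(h)}(S_i,T_i)$ must climb geometrically from $\Theta(\delta)$ to $1$, this gives $k=\tilde O(h/\epsilon^4)$ once the $\log_{1+\epsilon}$ factors and the choice of $\delta$ are tracked carefully; I would quote this bookkeeping from \cite{haeupler2021fast}. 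Each iteration is a single call to \Cref{thm:pathBlockerAlgOld}, which runs in deterministic depth $\tilde O(\poly(\tfrac1\epsilon,h))$ with $m$ processors, so the total depth is $k\cdot\tilde O(\poly(\tfrac1\epsilon,h))=\tilde O(\poly(\tfrac1\epsilon,h))$ with $m$ processors, as stated.

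The main obstacle in a from-scratch proof is the iteration bound $k=\tilde O(h/\epsilon^4)$: one must argue that replacing the classical single-lightest-path augmentation of Garg--K\"onemann (which needs $\tilde\Theta(N)$ augmentations) by $(1+\epsilon)$-approximate path blockers costs only $\poly(1/\epsilon,h)$ rounds, and that using an \emph{approximate} (not exact) lightest-path blocker together with the extra factor $h$ from long flow-paths does not push the exponents past $\epsilon^{-4}$ and $\eta=\tilde\Theta(\epsilon^2)$. This is precisely the analysis carried out in \cite{haeupler2021fast} for the identical algorithm \Cref{alg:mwMulti}, so the cleanest route is to cite it; the only point that needs checking on our side is that the path blockers we plug in (\Cref{thm:pathBlockerAlgOld}, or the blaming version \Cref{thm:pathBlockerAlgNew}) meet the exact input/output contract that analysis assumes, which they do by \Cref{dfn:alphaPathBlocker}.
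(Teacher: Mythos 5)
Your proposal takes the same route as the paper: \Cref{thm:pathBlockerAlg} is stated there purely as an import of Theorem~11.1 of \cite{haeupler2021fast} with no proof given, so justifying it by invoking that result (after checking that the lightest-path-blocker subroutine meets the contract of \Cref{dfn:alphaPathBlocker}) is exactly what the paper does, and your sketch of the Garg--K\"onemann/MWU potential argument is a reasonable reconstruction of the cited analysis. Only note that some details you recall differ from the printed pseudocode of \Cref{alg:mwMulti} --- the weights are initialized uniformly to $(1/m)^{\zeta}$ rather than $\delta/\U_e$, the dual update is the exponentiated form $C(e)\gets(1+\epsilon_0)^{\hat F(e)/\U_e}\cdot C(e)$, and $\lambda$ is a geometric scale parameter driving a fixed number of repetitions per commodity rather than the recomputed minimum $h$-length distance --- but since the quantitative bookkeeping is quoted from \cite{haeupler2021fast} for this identical algorithm, these discrepancies do not affect the argument.
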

We observe that if we use blaming flows for our near-lightest path blockers in  \Cref{alg:mwMulti} then the resulting flow is sparse.

\begin{lemma}\label{lem:blamingFlows}
    If each lightest path blocker in \Cref{alg:mwMulti} is decomposable into a  $\gamma$-blaming flow sequence then the flow $F$ returned by \Cref{alg:mwMulti} satisfies $\supp(F) \leq \tilde{O}\left(\frac{1}{\gamma} \cdot |E|\right)$.
\end{lemma}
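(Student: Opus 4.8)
The plan is to track how many distinct flow paths contribute to the support of the final flow $F$ returned by \Cref{alg:mwMulti}, using the blaming structure of the lightest path blockers as the key accounting device. Recall from \Cref{thm:pathBlockerAlg} that $F = \eta \cdot \sum_{j=1}^k F_j$ where each $F_j$ is one of the integral $h$-length lightest path blockers computed over the course of the algorithm (there are $k = \tilde O(h/\eps^4)$ of them), and each $F_j$ is, by hypothesis, decomposable into a $\gamma$-blaming flow sequence. The rough idea is that every flow path that ever enters the support must, at the step it was added, have blamed some edge whose residual capacity was then reduced by a $(1-\gamma)$ multiplicative factor; since capacities are polynomially bounded, each edge can be blamed only $O(\log N / \gamma)$ times, bounding the total number of distinct support paths.

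First I would set up the bookkeeping: view the entire execution as producing a single (multi-set) collection of integral flow paths, namely the union over all $k$ lightest path blockers $F_j$ of the blaming flow sequences guaranteeing their decomposition. By \Cref{lem:blamingFlowSeq} applied to each $F_j$ on the appropriate residual graph, each edge $e$ is blamed at most $O(\log N / \gamma)$ times \emph{within a single} $F_j$. Summing over the $k = \tilde O(h/\eps^4)$ blockers, each edge is blamed at most $k \cdot O(\log N/\gamma) = \tilde O(1/\gamma)$ times across the whole algorithm (absorbing the $\poly(h,1/\eps)$ factors into the $\tilde O$). Since the blaming map assigns to each path in the support of each $F_j$ a \emph{unique} edge it blames, the total number of flow paths summed over all $F_j$ is at most $\sum_{e \in E} (\text{number of times } e \text{ is blamed}) \le |E| \cdot \tilde O(1/\gamma) = \tilde O(|E|/\gamma)$.

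Finally I would conclude by observing that $\supp(F) \subseteq \bigcup_{j=1}^k \supp(F_j)$, since $F = \eta \cdot \sum_j F_j$ and scaling by $\eta > 0$ does not introduce new paths; and $|\bigcup_j \supp(F_j)|$ is at most the total path count bounded above, i.e.\ $\tilde O(|E|/\gamma)$, which is exactly the claimed $|\supp(F)| \le \tilde O(\frac{1}{\gamma} \cdot |E|)$.

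The main obstacle I anticipate is making the ``each edge is blamed $O(\log N/\gamma)$ times across all of the algorithm'' step fully rigorous: the blaming flow sequences of different blockers $F_j$ are defined relative to different residual capacity functions, so one must be careful that when \Cref{alg:mwMulti} decrements capacities between blocker computations (and possibly restores them, depending on how the multiplicative-weights routine manages residuals), the per-edge blame count still telescopes correctly — i.e.\ that the capacity seen by the blaming argument is monotonically non-increasing in the relevant sense, or at worst bounded by the original polynomial capacity so that \Cref{lem:blamingFlowSeq}'s $\log N$ bound still applies per blocker. If the algorithm genuinely only ever decreases residual capacities, this is immediate; if it resets them, one instead argues per-blocker and multiplies by $k$, which is the route sketched above and still gives the $\tilde O$ bound since $k = \tilde O(h/\eps^4)$ is hidden by $\tilde O$.
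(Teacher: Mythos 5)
Your per-blocker accounting is fine and matches the paper's first step (this is exactly \Cref{lem:blamingFlowSeq}: within a single blocker each edge is blamed at most $O(\log N/\gamma)$ times). The gap is in how you aggregate across blockers. You sum over all $k = \tilde{O}(\kappa h/\eps^4)$ blockers and then claim the factor $k$ can be ``absorbed into the $\tilde{O}$.'' It cannot: $\tilde{O}$ hides polylogarithmic factors in $N$, not $\poly(h)$ or $\poly(1/\eps)$, and $h$ may be polynomially large. The resulting bound $\tilde{O}(|E|\cdot h \cdot \poly(1/\eps)/\gamma)$ is essentially the old $\poly(h)\cdot m$ support bound of \cite{haeupler2021fast} that \Cref{thm:sparseMultiFlows} is explicitly designed to beat (the theorem's whole point is that $|\supp(F)| \leq \tilde{O}(|E|+b)$ with \emph{no} dependence on $h$). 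Your fallback branch (``capacities only decrease across blockers, so the blame count telescopes'') is also unavailable: \Cref{alg:mwMulti} never modifies $\U$; each lightest path blocker is computed against the original capacities, and only the weights $C$ change between blocker computations. So the cross-blocker telescoping you hoped for does not happen at the level of capacities, and the route you actually take (multiply by $k$) does not prove the lemma as stated.

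What your argument is missing is the mechanism the paper uses for the cross-blocker step: the dual (moving cut) update inside \Cref{alg:mwMulti}. If a blocker blames $e$ at least once, then by the definition of $\gamma$-blaming it pushes at least $\gamma\cdot\U_e$ flow over $e$, so the weight update multiplies $C(e)$ by at least $(1+\eps_0)^{\gamma}$. Since $C(e)$ is initialized to $1/m^{O(1)}$ and an edge whose weight has grown past roughly $(1+2\eps_0) > (1+2\eps_0)\lambda$ can never again lie on a near-lightest path (and hence can never again be blamed), the number of \emph{distinct blockers} that ever blame a fixed edge $e$ over the entire run is bounded independently of $k$ (and hence of $h$) --- up to $\log$ and $\poly(1/\eps,1/\gamma)$ factors. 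Combining this per-edge bound on blaming blockers with the per-blocker bound $O(\log N/\gamma)$ gives the claimed $h$-independent bound $|\supp(F)| \leq \tilde{O}(|E|/\gamma)$. In short: the hard case you deferred (no capacity telescoping across blockers) is the real case, and it is resolved by charging to the multiplicative growth of the dual weights, not to capacities or to the number of iterations.
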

\begin{proof}
    Consider one edge $e$. Next, consider one shortest path blocker $\hat{F}$ computed by \Cref{alg:mwMulti}. By \Cref{lem:blamingFlowSeq} if $\hat{F}$ blames $e$ at least once then it blames it at most $O\left(\frac{\log N}{\gamma}\right)$ times. Furthermore, if $\hat{F}$ $\gamma$-blames $e$ at least once then $e$ has its cut value increases by a $(1+\eps_0)^{\gamma}$ multiplicative factor. Since each edge has its cut value initialized to $\frac{1}{m^{O(1)}}$ it follows that the total number of computed shortest path blockers that blame $e$ is at most $\tilde{O}(1)$ and since each such shortest path blocker blames $e$ at most $O\left(\frac{\log N}{\gamma}\right)$ times it follows that the total number of paths in the support of all shortest path blockers computed by \Cref{alg:mwMulti} and therefore in the support of $F$ is $\supp(F) \leq \tilde{O}\left(\frac{1}{\gamma}\cdot |E| \right)$.
\end{proof}

\begin{algorithm}[ht]
    \caption{Multi-Commodity Length-Constrained Flows  and Moving Cuts}
    \label{alg:mwMulti}
    \begin{algorithmic}[0] 
            \State \textbf{Input:} graph $G = (V,E)$ with lengths $\l$, capacities $\U$, length constraint $h$ and $\kappa$-batchable source, sink pairs $\{(S_i, T_i)\}_i$ where $S_i, T_i \subseteq V$ for every $i$ and an $\eps \in (0,1)$.
            \State \textbf{Output:} $(1 \pm \epsilon)$-approximate $h$-length multi-commodity flow $F$ and moving cut $C$.
            \State Let $\epsilon_0 = \frac{\epsilon}{6}$, let $\zeta = \frac{1+2 \eps_0}{\eps_0} + 1$ and let $\eta = \frac{\eps_0}{(1 + \eps_0) \cdot \zeta} \cdot \frac{1}{\log m}$.
            \State Initialize $C(e) \gets \left(\frac{1}{m}\right)^{\zeta}$ for all $e \in E$.
            \State Initialize $\lambda \gets  \left(\frac{1}{m}\right)^{\zeta}$.
            \State Initialize $F(P) \gets 0$ for every path $P$.
            \While{$\lambda < 1$}:

                \For{$j \in [\kappa]$ and each batch $(\mcS_j, \mcT_j)$}
                    \For{each $(S_i, T_i)$ with $S_i \in \mcS_j$ and $T_i \in \mcT_j$ in parallel}
                        \For{$\Theta\left(\frac{h \log_{1+\epsilon_0} n}{ \epsilon_0} \right)$ repetitions}
                                \State Compute any $S_i$-$T_i$ $h$-length  $(1+\epsilon_0)$-lightest path blocker $\hat{F}$.
                                \State \textbf{Length-Constrained Flow (Primal) Update:} $F \gets F + \eta \cdot \hat{F}$.
                                \State \textbf{Moving Cut (Dual) Update:} $C(e) \gets (1+\epsilon_0)^{\hat{F}(e)/ \U_e} \cdot C(e)$ for every $e \in E$.
                        \EndFor
                    \EndFor
                \EndFor
                \State $\lambda \gets (1+\epsilon_0) \cdot \lambda$
            \EndWhile
            \State \Return $(F,C)$.
    \end{algorithmic}
\end{algorithm}

We conclude with our sparse flow and cutmatch algorithms.
\sparseMultiFlows*
\begin{proof}
    The proof is immediate from combining \Cref{thm:pathBlockerAlgNew}, \Cref{lem:blamingFlows} and \Cref{thm:pathBlockerAlg}.
\end{proof}

\cite{haeupler2021fast} showed how to compute cutmatches using flow algorithms. Combining our sparse flows (\Cref{thm:sparseMultiFlows}) with the cutmatch algorithms of \cite{haeupler2021fast} (which just call batchable multicommodity length-constrained flow, cut algorithms as a blackbox) immediately gives our sparse cutmatches as described in \Cref{thm:multiCutmatch}.
\future{Describe this more / the extra loss here is from the super source trick}

\section{Algorithm: Demand-Size-Large Sparse Cuts from EDs}\label{sec:LCsFromStrats} In the previous section we developed the theory of length-constrained expander decompositions. We now put this theory to use by giving new algorithms for length-constrained expander decompositions. Our algorithms will make use of a well-studied ``spiral'' paradigm from the classic setting where we compute a length-constrained expander decomposition by repeatedly computing large sparse cuts \cite{saranurak2019expander}. In particular, we will show that one can compute expander decompositions from large length-constrained sparse cuts (\Cref{lem: from MSC to ED}) which one can compute from expander decompositions (\Cref{lem:cutsFromEDs}) and so on. In order to prevent this argument from becoming circular we argue that it ``spirals'' in that the expander decompositions we must compute get smaller and smaller each time we go around the circle of dependencies.

In this section we show how to compute large length-constrained sparse cuts using length-constrained expander decompositions.

The following is our notion of size which is analogous to the volume of a cut in the non-length-constrained setting.
\begin{definition}[$(h,s)$-Separated Demand-Size]\label{def:demandSize}
    Given length-constrained cut $C$ and node-weighting $A$, we define the $(h,s)$-length demand-size of $C$ with respect to $A$ as the size of the largest $A$-respecting $h$-length demand which is $(hs)$-separated by $C$. We denote this ``demand-size'' by $A_{(h,s)}(C)$.
\end{definition}

\begin{definition}[Demand-Size Largest Sparse Cut, \qLDSC]\label{def:LDSC}
    We call the $(h,s)$-length $\phi$-sparse cut $C$ the demand-size largest $(h,s)$-length $\phi$-sparse cut for node-weighting $A$ if it its demand-size is maximum among all $(h,s)$-length $\phi$-sparse cuts. We notate the size of this cut as \begin{align*}
        \qLDSC(\phi, h, s) := A_{(h,s)}(C)
    \end{align*}
\end{definition}

\future{change alpha phi to kappa}
\begin{definition}[Approximately Demand-Size-Largest Sparse Length-Constrained Cut]\label{dfn:apxLargeCut}
    Length-constrained cut $C$ is an $\alpha$-approximate demand-size-largest $(\leq h,s)$-length $\phi$-sparse cut for node-weighting $A$ with length approximation $\alpha_s$ and sparsity approximation $\alpha_{\phi}$ if it is an $(h'', s)$-length $\phi$-sparse cut for some $h'' \leq \alpha_s \cdot h $ and for all $h' \leq h/\alpha_s$ we have
    \begin{align*}
        A_{(h'', s)}(C) \geq \frac{1}{\alpha} \cdot \qLDSC(\phi/\alpha_\phi, h', s \cdot \alpha_s).
    \end{align*}
\end{definition}
\future{Mention tricriteria tuple notation}

For the below result, recall the definition of a length-constrained expansion witness (\Cref{def:LCExpWitness}).
\begin{restatable}{lem}{largeCutsFromEDs}
\label{lem:cutsFromEDs}
For any parameter $\eps>0$, there exists an algorithm that, given a graph $G$ on $n$ vertices and $m$ edges, a node weighting $A$, a length bound and slack $h$ and $s$, a recursion size parameter $L$, and a conductance parameter $\phi$, computes a $\alpha$-approximate demand-size-largest $(\leq h, s)$-length $\phi$-sparse cut with sparsity approximation $\alpha_\phi$ and length slack approximation $\alpha_s$ with respect to $A$ where
\begin{align*}
    \alpha = \frac{\tilde{O}(N^{O(\eps)})}{\eps^3} \qquad \qquad \alpha_\phi = \frac{s^3 N^{O(1/s)}}{\eps} \qquad \qquad \alpha_s = \max\left(2,\frac{1}{\eps^2} \cdot (s)^{1+O(1/\eps)}\right)
\end{align*}
with work
\begin{align*}
    \wsparseCut(A,m)\leq m \cdot \tilde{O}\left( \frac{1}{\eps} \cdot (s)^{O(1/\eps)} \cdot N^{O(\eps)}   + \frac{1}{\eps} \cdot L \cdot N^{O(\eps)} \cdot \poly(h) \right) +\frac{\tilde{O}(1)}{\poly(\eps)}  \sum_i  \wED(A_i,m_i)
\end{align*}
and depth 
\begin{align*}
    \dsparseCut(A,m)\leq \tilde{O}\left( \frac{1}{\eps} \cdot (s)^{O(1/\eps)} \cdot N^{O(\eps)}   + \frac{1}{\eps} \cdot L \cdot N^{O(\eps)} \cdot \poly(h) \right) +\frac{\tilde{O}(1)}{\poly(\eps)}  \max_i    \dED(A_i,m_i)
\end{align*}
where $\wED(A_i, m_i)$ and $\dED(A_i, m_i)$ are the work and depth to compute an $(h,2^{1/\eps})$-length $\phi$-expander decomposition with cut slack $N^{\poly(\eps)}$ for node-weighting $A_i$ in an $m_i$-edge graph and for all $i$ each $A_i \preceq A$ and $|A_i| \leq \frac{|A|}{L}$ and $\{m_i\}_i$ are non-negative integers satisfying
\begin{align*}
    \sum_i m_i \leq \frac{1}{\eps} \cdot \tilde{O}(m + n^{1 + O(\eps)} + L^2 \cdot N^{O(\eps)}).
\end{align*}
Furthermore, if the graph is $( \leq h \cdot \frac{1}{\eps} \cdot s^{O(1/\eps)}, s)$-length $\phi$-expanding then the algorithm also returns a $(\leq h, s_w)$-length $\phi_w$-expansion witness where $s_w = \frac{1}{\eps^2} \cdot s^{O(1/\eps)}$ and $\phi_w = \tilde{O}(\phi \eps/N^{O(\eps)})$.
\end{restatable}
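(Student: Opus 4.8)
# Proof Proposal for Lemma \ref{lem:cutsFromEDs}

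\textbf{Overall strategy.} The plan is to realize the ``spiral'' paradigm: compute a demand-size-largest length-constrained sparse cut by reducing to a sequence of cutmatch computations glued together across scales, where the internal expander-decomposition calls are made on strictly smaller node-weightings. The starting point is a cut-matching-game style argument: to find a large sparse cut (equivalently, by \Cref{thm: equivalence}, to find a large non-expanding subset), we iteratively play a cut-matching game in which each round either produces a large matching (making progress toward certifying expansion) or produces a sparse cut (which we output). The matchings are found via the sparse $h$-length cutmatch algorithm of \Cref{thm:multiCutmatch}, and the ``cut player'' is implemented by recursively computing an expander decomposition of the (much smaller) graph formed by the union of matching edges — this is exactly where the $\wED(A_i, m_i)$ terms enter, and where the size reduction $|A_i| \le |A|/L$ comes from (we contract/aggregate into $L$ groups, or equivalently only recurse once the support has been shrunk by a factor $L$).

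\textbf{Key steps, in order.} First I would set up the neighborhood cover at the relevant scales: using \Cref{thm:cover-separation-factor-existential} with parameter $\eps$, obtain covers with covering radius $h'$ for each dyadic $h' \le h$, separation factor $s^{\Theta(1/\eps)}$ and width $N^{O(\eps)}$; this is the source of the $s^{O(1/\eps)}$ factors in $\alpha_s$ and $\phi_w$ and of the $N^{O(\eps)}$ blowups. Second, within each cluster of each cover, run the cut-matching game: each round calls the batchable sparse cutmatch (\Cref{thm:multiCutmatch}) to match node-weighting pairs over $h$-length paths with congestion $\tilde O(1/\phi)$; the batchability (\Cref{dfn:batchable}) is what keeps the depth at $b \cdot \poly(h)$ and lets clusters at the same cover-level be processed in parallel. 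Third, the union of matching edges across the $\tilde O(1)$ rounds forms a graph with $\tilde O(\sum_i |\supp(A_i \cup B_i)| + m + b)$ edges; recursively apply the ED algorithm to it (these are the $A_i$, $m_i$ with $\sum_i m_i \le \frac1\eps \tilde O(m + n^{1+O(\eps)} + L^2 N^{O(\eps)})$ — the $L^2$ coming from the at most $L$ groups each contributing $\le L$ cross edges). Fourth, combine: either some round's cutmatch returned a cut $C$ of size $\le \phi \cdot (\text{unmatched mass})$ which, via the moving-cut separation guarantee and \Cref{def:demandSize}, certifies $A_{(h'',s)}(C) \ge \frac1\alpha \qLDSC(\phi/\alpha_\phi, h', s\alpha_s)$ — here the $\alpha_\phi = s^3 N^{O(1/s)}/\eps$ loss is forced by the union-of-moving-cuts bound \Cref{thm:unionOfMovingCuts} applied across the dyadic scales and cover levels — or else all rounds matched, in which case the embedded expander powers assemble (via \Cref{lem:neighRouting}) into a $(\le h, s_w)$-length $\phi_w$-expansion witness with $s_w = \frac{1}{\eps^2} s^{O(1/\eps)}$, $\phi_w = \tilde O(\phi\eps/N^{O(\eps)})$. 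Finally, tally work and depth: each of the $\tilde O(1/\poly(\eps))$ rounds contributes one cutmatch (work $m \cdot \poly(h)$, depth $b\cdot\poly(h)$ with $b = N^{O(\eps)}$) plus one batch of recursive ED calls, yielding the stated $\wsparseCut$ and $\dsparseCut$ bounds.

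\textbf{Main obstacle.} The hard part will be bookkeeping the composition of slacks so that all the losses land exactly in the claimed $\alpha$, $\alpha_\phi$, $\alpha_s$, and in particular verifying the ``for all $h' \le h/\alpha_s$'' quantifier in \Cref{dfn:apxLargeCut}: we must ensure that a single cut $C$ found at one working scale $h''$ is simultaneously competitive against $\qLDSC$ at \emph{every} smaller dyadic scale $h'$. This requires running the cut-matching game across a logarithmic tower of scales and arguing that the non-expanding mass at the finest uncovered scale upper-bounds (up to $\alpha$) the non-expanding mass at all coarser scales — which is precisely the content of the equivalence chain in \Cref{thm: equivalence} combined with the monotonicity of $(\leq h, s)$-expansion in $h$. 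A secondary technical nuisance is controlling the recursion-size parameter $L$: we need the matching-edge graphs to have node-weightings bounded by $|A|/L$, which forces an aggregation step (grouping $\supp(A)$ into $L$ buckets by powers of two of their weights, or by cover membership) before recursing, and then a disaggregation step afterward that does not blow up congestion — I expect this aggregation/disaggregation, rather than the cut-matching game itself, to be where most of the careful argument goes, and it is the reason the $L^2 N^{O(\eps)}$ term appears in the edge-count bound.
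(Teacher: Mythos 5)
Your overall route matches the paper's: neighborhood covers at every dyadic scale $h' \le h$ (\Cref{thm:cover-separation-factor-existential}), a cut-matching game inside each cluster driven by the sparse cutmatch algorithm (\Cref{thm:multiCutmatch}), the cut player of \Cref{thm:cutStrat} realized by recursive expander-decomposition calls on the union-of-matchings graphs (whence the $\wED(A_i,m_i)$ terms), outputting the demand-size-largest cut among all cutmatch cuts or a witness when every cut is empty, and the sparsity approximation $\alpha_\phi$ coming from the $\qLDSC \leq \qLEC$ chain, i.e., ultimately the union-of-moving-cuts theorem. This is exactly \Cref{lem: MBSC to CMG} combined with \Cref{thm:cutStrat}.

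There is, however, a concrete gap at the spot you yourself flag as the hard part: guaranteeing $|A_i| \le |A|/L$ for the recursive calls. Your proposed aggregation of $\supp(A)$ into $L$ buckets (by weight classes or by cover membership) does not give this bound: a single cluster of the cover can carry essentially all of $|A|$, and weight-bucketing gives no size control either. The paper instead \emph{splits} each over-large cluster node-weighting $A_S$ into $L$ equal pieces of size at most $|A|/L$, runs an independent cut-matching game on each piece, and then glues the pieces back with one additional cutmatch over all pairs of pieces of the same cluster (this gluing is the source of the $L^2 \cdot N^{O(\eps)}$ edge-count term and of an extra $L$ in the congestion budget). Crucially, the cut produced by this gluing cutmatch is itself one of the candidate outputs, and the demand-size analysis has a dedicated failure case for it: if a big cluster's pieces are each internally expanding but cannot be matched to one another, none of the per-piece cut-matching-game cuts is large, and only the gluing cut certifies non-expansion; without this mechanism your algorithm could terminate with neither a demand-size-large cut nor a valid witness. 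A milder point: in the all-cuts-empty case the witness is assembled directly per \Cref{def:LCExpWitness} from the routers guaranteed by \Cref{thm:cutStrat} (unioned with the gluing matchings) together with the cutmatch flows as embeddings; invoking \Cref{lem:neighRouting} is neither needed nor quite applicable there, since the routers are built adaptively by the game rather than being fixed embedded expander powers.
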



Observe that applying our previous relations we can get a simple lower bound on the demand-size of the largest-demand-size length-constrained sparse cut.
\begin{lemma} \label{lem:LDSCAtMostLEC} Given graph $G$ and node-weighting $A$ and parameters $h,s,\phi$, we have that 
    \begin{align*}
        \qLDSC(\phi, h, s) \leq \qLEC(\phi', h', s)
    \end{align*}
    where $\phi' = \tilde{O}(\phi \cdot s^3 \cdot N^{O(1/s)})$ and $h' = 2h$.
\end{lemma}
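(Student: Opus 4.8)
The goal is to bound $\qLDSC(\phi, h, s)$, the demand-size of the demand-size-largest $(h,s)$-length $\phi$-sparse cut, by $\qLEC(\phi', h', s)$ with $\phi' = \tilde{O}(\phi \cdot s^3 \cdot N^{O(1/s)})$ and $h' = 2h$. The plan is to reduce this to the already-established inequality $\qLC \leq \qLEC$ (\Cref{thm:lcAtMostLEC}), using the fact that a demand-size-largest sparse cut can be ``padded out'' to one of comparable \emph{size}, which then feeds into the union-of-cuts machinery only trivially (we have a single cut, not a sequence). The main conceptual point is that demand-size and size of a sparse cut are within a sparsity factor of each other, exactly as formalized by \Cref{lem:paddingCuts}.

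First I would let $C$ be the demand-size-largest $(h,s)$-length $\phi$-sparse cut for $A$, so that $\qLDSC(\phi, h, s) = A_{(h,s)}(C)$, and let $D$ be the $A$-respecting $h$-length demand of maximum size that is $hs$-separated by $C$, so $A_{(h,s)}(C) = |D|$. By definition of sparsity, $\spa_{(h,s)}(C, A) \le \spa_{hs}(C, D) = |C| / |D|$, hence $|D| \le |C| / \spa_{(h,s)}(C, A)$. Now apply \Cref{lem:paddingCuts}: there is an $(h,s)$-length $\phi$-sparse cut $C'$ with $\frac{\phi}{\spa_{(h,s)}(C,A)} \cdot |C| \le |C'|$. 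Combining these two displays gives $\phi \cdot |D| \le |C'|$, i.e.\ $\phi \cdot \qLDSC(\phi, h, s) \le |C'| \le \qLC(\phi, h, s)$ since $C'$ is an $(h,s)$-length $\phi$-sparse cut and $\qLC$ is the largest such.

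Next I would invoke \Cref{thm:lcAtMostLEC}, which gives $\qLC(\phi, h, s) \le \qLEC(3\phi, h, s)$, to obtain $\phi \cdot \qLDSC(\phi,h,s) \le \qLEC(3\phi,h,s)$. At this point the target bound has an extra factor of $\frac{1}{\phi}$ on the left that needs to be absorbed; I expect this is handled exactly as in the surrounding results by the observation that $\qLEC(\phi'', \cdot, \cdot) = \phi'' \cdot |\bar A|$ is defined with a built-in multiplication by the conductance parameter, so $\frac{1}{\phi}\qLEC(3\phi, h, s) = 3|\bar A| = \qLEC(1, h, s)$ after rescaling, and then re-scaling the conductance up to $\phi' = \tilde O(\phi \cdot s^3 \cdot N^{O(1/s)})$ only \emph{increases} $\qLEC$ (since $\bar A$ can only grow as the expansion requirement relaxes). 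One must also account for the $h' = 2h$ slack and the $s^3 N^{O(1/s)}$ loss: these appear because passing through the union-of-cuts theorem (\Cref{thm:unionOfMovingCuts}) — even for a length-one sequence — incurs $h' = 2h$, $s' = (s-2)/2$, and the $s^3 \log^3 N \cdot N^{O(1/s)}$ sparsity blowup, which matches the $\phi'$ in the statement after folding in the $\tilde O$.

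\textbf{Main obstacle.} The delicate point is bookkeeping the chain of parameter degradations so that the composite loss is exactly $\tilde{O}(\phi \cdot s^3 \cdot N^{O(1/s)})$ in sparsity and only $2h$ in length, rather than something worse — in particular making sure the $s$-to-$(s-2)/2$ degradation and the $h$-to-$2h$ degradation are consistent with the claimed $h' = 2h$ (i.e.\ that we do \emph{not} also pay in $s$ here, which suggests the intended route goes through $\qLWSC \le \qLC$ / \Cref{lem:LWSCAtMostLC} and \Cref{thm:lcAtMostLEC} with the length-slack parameter $s$ left unchanged in the final $\qLEC$). I would double check whether the cleanest derivation is in fact $\qLDSC \le \frac{1}{\phi}\qLC(\phi,h,s) \le \qLEC$ directly via \Cref{thm:lcAtMostLEC}, in which case the $s^3 N^{O(1/s)}$ factor and the doubling of $h$ must instead be coming from relating $A_{(h,s)}(C)$ for the demand-size-largest cut to the size of an honest $(2h, s)$-length sparse cut — which is precisely the content of padding (\Cref{lem:paddingCuts}) composed with one application of \Cref{thm:unionOfMovingCuts} to the trivial one-element sequence $(C)$. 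Getting that composition to land on the stated constants is the only real work.
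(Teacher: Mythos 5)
Your first two steps are correct and in fact tighter than what the paper itself does: padding the single demand-size-largest cut via \Cref{lem:paddingCuts} gives $\phi\cdot\qLDSC(\phi,h,s)\le\qLC(\phi,h,s)$ with no loss in $h$ or $s$, and \Cref{thm:lcAtMostLEC} then yields $\phi\cdot\qLDSC(\phi,h,s)\le\qLEC(3\phi,h,s)$, i.e.\ $\qLDSC(\phi,h,s)\le 3|\bar A_{3\phi}|$ where $\bar A_{3\phi}$ is the complement of the largest $(h,s)$-length $3\phi$-expanding subset. The genuine gap is your final ``absorption'' step. The identity $\frac1\phi\qLEC(3\phi,h,s)=3|\bar A_{3\phi}|=\qLEC(1,h,s)$ is false: $\qLEC(1,h,s)=|\bar A_{1}|$ for the complement at conductance parameter $1$, which is a different (generally much larger) set, and your monotonicity claim is stated backwards --- raising the conductance parameter makes being expanding \emph{harder}, so $\bar A$ grows as the parameter grows, not ``as the requirement relaxes.'' More fundamentally, no monotonicity argument can convert $3|\bar A_{3\phi}|$ into $\phi'|\bar A_{\phi'}|$ with $\phi'=\tilde O(\phi\cdot s^3 N^{O(1/s)})\ll 1$: you would need $|\bar A_{\phi'}|\ge (3/\phi')\,|\bar A_{3\phi}|$, which can exceed $|A|$ (picture two internally expanding clusters joined by a single low-capacity path at distance just under $h$: the separated demand, and hence $\qLDSC$, is of order the smaller cluster's weight, which is of order $|\bar A|$ itself rather than $\phi'|\bar A|$). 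So the factor $\frac1\phi$ cannot be absorbed the way you propose.

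For comparison, the paper does not attempt that absorption. Its chain is $\qLDSC(\phi,h,s)\le\frac1\phi\qLWSC(\phi,h,s)$ (the one-element sequence $(C)$ has weighted size at least $\phi\cdot A_{(h,s)}(C)$, since $\spa_{(h,s)}(C,A)\le |C|/A_{(h,s)}(C)$), then \Cref{lem:LWSCAtMostLC} --- which is exactly where the $h'=2h$, $s'=(s-2)/2$ and the $s^3\log^3 N\cdot N^{O(1/s)}$ sparsity loss enter, as you guessed in your ``main obstacle'' paragraph --- and finally \Cref{thm:lcAtMostLEC} plus a monotonicity-in-$s$ observation. Note that, read literally, this chain also terminates at a bound of the form $\frac1\phi\qLEC\bigl(\tilde O(\phi s^3N^{O(1/s)}),2h,\cdot\bigr)$, i.e.\ $\tilde O(s^3N^{O(1/s)})\cdot|\bar A|$, which is the $\frac1\phi$-weaker (and dimensionally sensible) reading of the statement; your shorter chain delivers a bound of the same form with slightly better parameters. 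In short, up to that scaling issue your route is a legitimate and leaner alternative to the paper's $\qLWSC$-based argument, but the specific rescaling/monotonicity trick you propose for removing the factor $\frac1\phi$ is invalid, and it is not something the paper's proof does either.
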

\begin{proof}
\future{Put this stuff in the main quantities section}
    See \Cref{dfn:equantities} for a definition of the relevant graph quantities below. Let $C$ be the $(h,s)$-length $\phi$-sparse cut of largest demand-size. Observe that $\qLDSC(\phi, h, s) \leq \frac{1}{\phi}\qLWSC(\phi, h, s)$ since we can use $C$ in the largest weighted sparse cut sequence and so $\phi \cdot A_{(h,s)}(C)$ is a lower bound on $\qLWSC(\phi, h, s)$. Continuing, by \Cref{lem:LWSCAtMostLC} we have 
    \begin{align*}
        \qLWSC(\phi, h, s) \leq \qLC(\phi'', h', s')
    \end{align*}
    where $\phi'' = \tilde{O}(\phi \cdot s^3 \cdot N^{O(1/s)})$, $h' = 2h$ and $s' = (s-2)/2$. Lastly, by \Cref{thm:lcAtMostLEC} we have 
    \begin{align*}
        \qLC(\phi'', h', s') \leq \qLEC(3\phi'', h', s').
    \end{align*}
    Combining the above and observing that $\qLEC(3\phi'', h', s') \leq \qLEC(3\phi'', h', s)$ gives the lemma.
\end{proof}
 We will use the above lower bound to argue that our algorithm returns an approximately demand-size largest sparse cut as described below.

\subsection{(Preliminary) Algorithm: Cut Strategies from Expander Decompositions}\label{sec:stratsFromEDs}

We describe the cut matching games and describing prior work on computing expander decompositions from cut strategies. The cut-matching game was first proposed and studied in \cite{khandekar2007cut,khandekar2009graph}, and later on it has found a wide range of applications in graph algorithms. We use a slightly generalized version of it as follows.

\paragraph{Cut Strategies.} A cut strategy is an algorithm which given a graph $G$ and node-weighting $A$ produces a set of node-weightings $\{(A^{(j)}, B^{(j)})\}_j$ where for each $j$ we have $A^{(j)} + B^{(j)} \preccurlyeq A$ and $|A^{(j)}| = |B^{(j)}|$ and $A \preceq  \sum_{j} A^{(j)} + B^{(j)}$. 

\paragraph{Matching Strategies.} A matching strategy is an algorithm which given a graph $G$ and the node-weighting pairs $\{(A^{(j)}, B^{(j)})\}_j$ produced by a cut player outputs a set of edges $M^{(j)} \subseteq \supp(A^{(j)}) \times \supp(B^{(j)})$ for each $j$ between the vertices in the support of $A^{(j)}$ and $B^{(j)}$ and capacities $\U$ subject to the constraint that for every vertex $u$ we have $\U(\delta_{M^{(j)}}(u)) \leq A^{(j)}(u), B^{(j)}(u)$.

\paragraph{Cut Matching Games.} A cut matching game is a procedure for using cut and matching strategies to produce good routers by a sequence of interactions between cut and matching strategies. Namely, given a set of vertices $V$ and a node-weighting $A$ on $V$, it produces a series of graphs $G_0,\dots,G_r$ where $G_0 = (V, \emptyset)$ is the empty graph and we call $G_r$ the output of the cut matching game. The graph $G_{i}$ is $G_{i-1}$ plus the output of the matching player when given the output of the cut player when given $G_{i-1}$. That is, if $\{(A^{(j)}_{i}, B^{(j)}_{i})\}_j$ is the output of the cut player when given $G_{i-1} = (V, E_{i-1})$ and $\{M^{(j)}_i\}_j$ is the output of the matching player when given $\{(A^{(j)}_{i}, B^{(j)}_{i})\}_j$, then $G_{i} = (V, E_{i-1} \cup \bigcup_j M_{i}^{(j)})$. 

We will be interested in the following parameters of a cut matching game.
\begin{itemize}
    \item \textbf{Rounds of Interaction:} We call $r$ the number of rounds of interaction.
    \item \textbf{Cut Batch Size:} We call the maximum number of pairs the cut strategy plays in each round of interaction $\max_i |\{(A_{i}^{(j)}, B_i^{(j)})\}_j|$ the cut batch size of the cut matching game. In typical cut matching games \cite{khandekar2007cut,khandekar2009graph} the cut batch size is $1$; we will be interested in potentially larger batch sizes.
    \item \textbf{Matching Perfectness:} If each set of edges the matching player plays for a batch always has total capacity at least a $1-\alpha$ fraction of the total node-weighting then we say that the cut matching game is $(1-\alpha)$-perfect. That is, a cut matching game is $(1-\alpha)$-perfect if for every $i$ we have
    \begin{align*}
        \sum_j \U\left(M_i^{(j)} \right) \geq (1-\alpha) \cdot \sum_j|A_i^{(j)}| = (1-\alpha) \cdot \sum_j|B_i^{(j)}|.
    \end{align*}
\end{itemize}

\noindent 

We use the following result from \cite{ghaffari2022cut} which shows both the existence of high quality cut matching games and how to compute them assuming we can compute length-constrained expander decompositions.

\begin{theorem}[\cite{ghaffari2022cut}]\label{thm:cutStrat}
\label{final cut matching game} For every $\eps > 0$ there is a cut strategy with cut batch size $N^{O(\eps)}$ which when used in a cut matching game with $1/\eps$ rounds of interaction against any $(1-\alpha)$-perfect matching strategy results in a $G_r$ that is a $1/\eps$-step and $N^{O(\eps)}$-congestion router for some $A' \preceq A$ of size $|A'| \geq (1-O(\frac{\alpha}{\eps})) \cdot |A|$. 

This cut-strategy on a node-weighting $A$ in a graph with $m$ edges can be computed in work 
\begin{align*}
    \wcutStrat(A, m) \leq \frac{\tilde{O}(1)}{\poly(\eps)} \cdot \wED(A,m)
\end{align*}
and depth
\begin{align*}
    \dcutStrat(A, m) \leq \frac{\tilde{O}(1)}{\poly(\eps)} \cdot \dED(A,m)
\end{align*}
where $\wED(A,m)$ and $\dED$ are the work and depth respectively for computing an $(h,2^{1/\eps})$-length $\phi$-expander decomposition with cut slack $N^{\poly(\eps)}$ for node-weighting $A$ in an $m$-edge graph. Likewise, $A'$ can be computed in the same work and depth and is vertex induced: i.e.\ for each vertex $u$ if $A'(u) \neq 0$ then $A'(u) = A(u)$.
\end{theorem}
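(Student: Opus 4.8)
The plan is to obtain \Cref{thm:cutStrat} essentially by citing the cut-matching-game construction of \cite{ghaffari2022cut} and then doing two things: (i) verifying that the parameters line up with the cut/matching strategy vocabulary of \Cref{sec:stratsFromEDs}, and (ii) checking the computational claims by inspecting how that cut player is implemented. For the existential part, the construction of \cite{ghaffari2022cut} supplies, for every $\eps>0$, a cut player which in each of $1/\eps$ rounds proposes a batch of $N^{O(\eps)}$ node-weighting pairs, and they prove that no matter how a $(1-\alpha)$-perfect matching player responds, the union graph $G_r$ admits, after deleting a node-weighting of size $O(\alpha/\eps)\cdot|A|$, a routing of every $A'$-respecting demand (for the surviving $A'\preceq A$ with $|A'|\ge(1-O(\alpha/\eps))|A|$) using $1/\eps$ steps and congestion $N^{O(\eps)}$; by \Cref{def:router} this is precisely the assertion that $G_r$ is a $1/\eps$-step $N^{O(\eps)}$-router for $A'$. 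The only thing to check here is the dictionary between their vocabulary and ours (their ``number of rounds'' is our $r=1/\eps$, their ``width'' is our cut batch size $N^{O(\eps)}$, their perfectness parameter is our $\alpha$); this is purely notational, so I would spell out the dictionary and otherwise quote their bound verbatim.

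For the computational part, the key observation is that the cut player of \cite{ghaffari2022cut} performs exactly one nontrivial computation per round: given the current graph $G_{i-1}$ (a union of matchings on $\supp(A)$), it computes an $(h,2^{1/\eps})$-length $\phi$-expander decomposition of $G_{i-1}$ with respect to $A$ with cut slack $N^{\poly(\eps)}$, and reads the batch $\{(A^{(j)}_i,B^{(j)}_i)\}_j$ — and, in the last round, the surviving node-weighting $A'$ — off of the expanding pieces and the cut this decomposition induces. Consequently $\wcutStrat(A,m)$ is at most $1/\eps$ times the cost of one such expander-decomposition call plus $\tilde{O}(1)$ bookkeeping per round, giving $\wcutStrat(A,m)\le\frac{\tilde{O}(1)}{\poly(\eps)}\cdot\wED(A,m)$, and the depth bound is identical since the $1/\eps$ rounds are processed sequentially and each expander-decomposition call runs in depth $\dED$. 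Finally, $A'$ is vertex-induced because it is a union of expanding sub-node-weightings, each of which is a vertex-induced restriction of $A$ produced by the expander-decomposition routine (cf.\ the ``$\hat{A}$'' in \Cref{dfn:equantities} and the vertex-induced guarantee of \Cref{thm:expdecomp exist}).

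The main obstacle I expect is the bookkeeping hidden in ``one expander-decomposition call costs $\wED(A,m)$'': strictly speaking $G_{i-1}$ is not the input graph $G$ but a graph on $\supp(A)$ carrying up to $r\cdot N^{O(\eps)}$ accumulated matchings, i.e.\ up to $N^{O(\eps)}\cdot|\supp(A)|\le N^{O(\eps)}\cdot m$ edges, so one must argue that running \Cref{thm:expdecomp exist} on these graphs still fits inside the stated $\frac{\tilde{O}(1)}{\poly(\eps)}\cdot\wED(A,m)$ budget — either because the $N^{O(\eps)}$ edge blow-up is absorbed into the $N^{O(\poly(\eps))}$ factors already present in $\wED$ and charged to the slack in $\phi$, or because (as is the case where this theorem is invoked, in \Cref{lem:cutsFromEDs}) the matchings are produced by the sparse-flow routine of \Cref{thm:sparseMultiFlows} and hence the graphs $G_i$ have total edge count $\tilde{O}(m)$. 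I would make this precise by first stating the cut player's cost as a function of the actual size of $G_{i-1}$ and then invoking an explicit edge-count bound on $G_{i-1}$; everything else is a direct quotation of \cite{ghaffari2022cut}.
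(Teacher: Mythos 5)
Your proposal matches the paper exactly: the paper gives no proof of this statement at all, importing it verbatim as a black-box result of \cite{ghaffari2022cut} whose cut player is built from calls to an $(h,2^{1/\eps})$-length expander-decomposition subroutine, which is precisely the route you take. Your additional dictionary-checking and the edge-count caveat for $G_{i-1}$ are sensible due diligence but go beyond anything the paper itself argues, since it simply cites the theorem.
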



\subsection{Algorithm: Demand-Size-Large Sparse Cuts from Cut Strategies}

The following is our main result for this section and shows how to compute large sparse length-constrained cuts using cut strategies. Below, we let $\wcutStrat$ and $\dcutStrat$ give the work and depth to compute the cut strategy given by \Cref{final cut matching game}.




\future{Add intuition about recursion decreasing by $L$ on size of volume and total size of recursion of size basically $m$}
\begin{restatable}{theorem}{largeCutsFromCutStrats}
\label{lem: MBSC to CMG}
For any parameter $\eps>0$, there exists an algorithm that, given a graph $G$ on $n$ vertices and $m$ edges, a node weighting $A$, a length bound and slack $h$ and $s$, a recursion size parameter $L$, and a conductance parameter $\phi$, computes a $\alpha$-approximate demand-size-largest $(\leq h, s)$-length $\phi$-sparse cut with sparsity approximation $\alpha_\phi$ and length slack approximation $\alpha_s$ with respect to $A$ where
\begin{align*}
    \alpha = \frac{\tilde{O}(N^{O(\eps)})}{\eps^3} \qquad \qquad \alpha_\phi = \frac{s^3 N^{O(1/s)}}{\eps} \qquad \qquad \alpha_s = \max\left(2,\frac{1}{\eps^2} \cdot (s)^{1+O(1/\eps)}\right)
\end{align*}
with work
\begin{align*}
    \wsparseCut(A,m)\leq m \cdot \tilde{O}\left( \frac{1}{\eps} \cdot (s)^{O(1/\eps)} \cdot N^{O(\eps)}   + \frac{1}{\eps} \cdot L \cdot N^{O(\eps)} \cdot \poly(h) \right) +\frac{1}{\eps}  \sum_i  \wcutStrat(A_i, m_i)
\end{align*}
and depth 
\begin{align*}
    \dsparseCut(A,m)\leq \tilde{O}\left(\frac{1}{\eps} \cdot (s)^{O(1/\eps)} \cdot N^{O(\eps)}  + \frac{1}{\eps} \cdot L \cdot N^{O(\eps)} \cdot \poly(h) \right) +\frac{1}{\eps}  \max_i  \dcutStrat(A_i, m_i)
\end{align*}
where for all $i$ each $A_i \preceq A$ and $|A_i| \leq \frac{|A|}{L}$ and $\{m_i\}_i$ are non-negative integers satisfying
\begin{align*}
    \sum_i m_i \leq \frac{1}{\eps} \cdot \tilde{O}(m + n^{1 + O(\eps)} + L^2 \cdot N^{O(\eps)}).
\end{align*}
and $\wcutStrat$ and $\dcutStrat$ are the work and depth to compute the cut strategy described in \Cref{thm:cutStrat} on an $m_i$-edge graph for node-weighting $A_i$. Furthermore, if the graph is $( \leq h \cdot \frac{1}{\eps} \cdot s^{O(1/\eps)}, s)$-length $\phi$-expanding then the algorithm also returns a $(\leq h, s_w)$-length $\phi_w$-expansion witness where $s_w = \frac{1}{\eps^2} \cdot s^{O(1/\eps)}$ and $\phi_w = \tilde{O}(\phi \eps/N^{O(\eps)})$.
\end{restatable}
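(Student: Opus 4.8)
The plan is to prove \Cref{lem: MBSC to CMG} by setting up a cut-matching game that uses the cut strategy from \Cref{thm:cutStrat} and a matching player implemented via the sparse cutmatch algorithm of \Cref{thm:multiCutmatch}. At a high level, the cut-matching game builds up (inside each cluster of a neighborhood cover) a router for a large node-weighting, where the matching player's moving cuts accumulate into the sparse cut we are looking for. The key structural input is that whenever the matching player fails to match a large fraction of the node-weighting, it produces a moving cut separating the unmatched parts; by \Cref{thm:unionOfMovingCuts} the union of these moving cuts (across rounds) is itself a sparse length-constrained cut. Combined with the equivalences of \Cref{thm: equivalence}—specifically the chain $\qLEC \le \qLSC \le \qLC$ and its consequence \Cref{lem:LDSCAtMostLEC}—this lets us certify that the cut we output is approximately demand-size-largest: either the cut-matching game terminates having built routers on almost all of $A$ (so we extract a witness of $(\le h, s)$-length expansion on a large sub-node-weighting, making the remaining unexpanding part small, which upper bounds $\qLDSC$), or it accumulates a large moving cut, which by construction separates a large demand.

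First I would set up the neighborhood cover: invoke \Cref{thm:cover-separation-factor-existential} with covering radius (a power of $2$ at most) $h$, separation factor $2$, to obtain clusterings with diameter $h \cdot \frac{1}{\eps} O(s)^{O(1/\eps)}$ and width $N^{O(\eps)}\log N$; this is where the $\alpha_s$ and $\alpha_\phi$ blowups and the $N^{O(\eps)}$ factors enter. Within each cluster $S$, run the cut-matching game of \Cref{thm:cutStrat} with $1/\eps$ rounds against the matching player obtained by calling the $h$-length $\phi$-sparse cutmatch of \Cref{thm:multiCutmatch} on the node-weighting pairs the cut player proposes. Batching across the disjoint clusters of a single clustering (which are $2h$-separated) keeps the cutmatch calls $b$-batchable with $b$ equal to the width, so the work is $m \cdot \tilde O(b \cdot \poly(h))$ and the flow support stays $\tilde O(m + b + \ldots)$; this sparsity is exactly what lets the recursion size $\sum_i m_i$ stay near-linear. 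The node-weightings $A_i$ fed recursively to $\wED$ correspond to the sub-instances the cut strategy spawns; the $|A_i| \le |A|/L$ bound comes from the standard trick of only recursing on clusters/sub-node-weightings that are small relative to $L$ and handling large ones directly, adding the $L \cdot N^{O(\eps)} \cdot \poly(h)$ and $L^2 \cdot N^{O(\eps)}$ terms.

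Next I would argue correctness of the output cut. Aggregate, over all clusters and all rounds, the moving cuts produced by the cutmatch calls; call the sum $C$. By \Cref{def:cutmatch} each such cut has size at most $\phi$ times the unmatched node-weighting it accounts for, and the unmatched nodes at the end of the game are $h$-separated. Using \Cref{thm:unionOfMovingCuts}, $C$ is an $(h', s')$-length sparse cut with sparsity $O(\phi \cdot s^3 \log^3 N \cdot N^{O(1/s)})$, giving the claimed $\alpha_\phi$; the factor $2$ in $h$ and the halving of $s$ from that theorem are absorbed into $\alpha_s$. For the demand-size lower bound: if instead the cut-matching game matched almost everything in each cluster, then \Cref{thm:cutStrat} yields a router $R_S$ for $A'_S$ with $|A'_S| \ge (1 - O(\alpha/\eps))|A_S|$, and the matching flow embeds $R_S$ into $G$ over $h \cdot \poly(s,1/\eps)$-length paths with low congestion; collecting these over all clusterings and all covering radii $h' \le h$ gives exactly a $(\le h, s_w)$-length $\phi_w$-expansion witness as in \Cref{def:LCExpWitness}, with $s_w = \frac{1}{\eps^2} s^{O(1/\eps)}$, $\phi_w = \tilde O(\phi \eps / N^{O(\eps)})$. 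In that case the non-witnessed node-weighting $\bar A$ is small, and since $\qLDSC(\phi/\alpha_\phi, h', s\alpha_s) \le \qLEC(\cdot) = (\phi/\alpha_\phi)\cdot|\bar A|$ up to the constants from \Cref{lem:LDSCAtMostLEC}, the demand our cut $C$ separates is within $1/\alpha$ of $\qLDSC$, where $\alpha = \tilde O(N^{O(\eps)})/\eps^3$ collects the cover width, the router congestion $N^{O(\eps)}$, the $1/\eps$ rounds, and the $O(\alpha/\eps)$ matching-imperfection loss.

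The main obstacle I expect is the bookkeeping that ties together the three sources of slack and ensures the recursion genuinely shrinks: the neighborhood cover forces a multiplicative $\frac{1}{\eps}s^{O(1/\eps)}$ on $h$ and an $N^{O(1/s)}$ on $\phi$ per level, the union-of-cuts theorem forces another $s^3 \log^3 N \cdot N^{O(1/s)}$ on $\phi$ and a halving of $s$, and the cut-matching game forces the $N^{O(\eps)}$ congestion/width factors and the $L$-based recursion split. Verifying that, after composing all of these, the output still satisfies \Cref{dfn:apxLargeCut} with the stated $\alpha, \alpha_\phi, \alpha_s$—in particular that $s_w \cdot (\text{separation slack}) \le s$ and the quantifier ``for all $h' \le h/\alpha_s$'' lines up with the covering radii we used—is delicate. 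The other subtle point is guaranteeing the sparse-flow support bound of \Cref{thm:sparseMultiFlows}/\Cref{thm:multiCutmatch} propagates so that $\sum_i m_i$ does not blow up by $\poly(h)$, which is precisely why the $h$-independent support size $\tilde O(|E|+b)$ from our new flow algorithm is essential here.
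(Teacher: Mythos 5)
Your overall architecture is the same as the paper's: neighborhood covers, a cut-matching game that plays the cut strategy of \Cref{thm:cutStrat} against a matching player implemented with the sparse cutmatches of \Cref{thm:multiCutmatch}, splitting cluster node-weightings into pieces of size at most $|A|/L$ to control the recursion, and bounding the benchmark via \Cref{lem:LDSCAtMostLEC}. The genuine gap is in how you form and certify the output cut. You aggregate all moving cuts from all cutmatch calls into one cut $C$ and invoke \Cref{thm:unionOfMovingCuts} to certify sparsity $O(\phi\cdot s^3\log^3 N\cdot N^{O(1/s)})$, claiming this ``gives the claimed $\alpha_\phi$.'' This misplaces the slack: by \Cref{dfn:apxLargeCut} the returned cut must itself be $(h'',s)$-length $\phi$-sparse, and $\alpha_\phi$ only relaxes the benchmark $\qLDSC(\phi/\alpha_\phi,\cdot,\cdot)$; a cut certified only at sparsity $\phi\cdot s^3 N^{O(1/s)}$ does not satisfy the statement. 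Moreover, \Cref{thm:unionOfMovingCuts} requires a sequence in the sense of \Cref{dfn:movingCutSequence}, i.e.\ each $C_i$ must be sparse in $G-\sum_{j<i}C_j$ with a witnessing demand that is $h$-length \emph{after} the earlier cuts are applied; this is not automatic for cuts computed independently on $G$ across rounds and clusters. The paper avoids both problems by running all cutmatches at the relaxed sparsity $\phi'=\phi/\tilde{O}(N^{O(\eps)})$ and returning the single demand-size-largest cutmatch cut, whose $\phi$-sparsity follows directly from the cutmatch definition once the unmatched demand is scaled by the cover width; the union-of-cuts theorem enters only indirectly, through \Cref{lem:LDSCAtMostLEC}, to bound the benchmark.

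The second missing piece is the demand-size lower bound for that single returned cut. Your dichotomy (``either routers on almost all of $A$, or a large accumulated cut'') skips the crux: the un-embedded volume $|\bar{B}^*|$ must be charged to the unmatched node-weighting of \emph{one specific} cutmatch. The per-round witnessing demands do not sum to an $A$-respecting demand, so one cannot simply say the accumulated cut separates a large demand; the paper scales the summed demand by $\eps$, averages over the $1/\eps$ rounds and the $O(\log N)$ scales to pick one cutmatch, and, crucially, uses the $(1-O(\alpha/\eps))$ guarantee of \Cref{thm:cutStrat} in reverse to convert a vertex's failure in its cut-matching game into a matching deficit $\alpha_{Si}\cdot|A_S^{(i)}|$ in the round with the smallest matching, with a separate case for failures of the gluing cutmatches between the pieces $A_S^{(1)},A_S^{(2)},\ldots$ of broken-up clusters. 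Your write-up also does not really describe this gluing step (the paper's \stepThree), yet it is needed both for the routing argument behind the expansion witness and for the second failure case; without the charging argument and the gluing, the claimed $\alpha=\tilde{O}(N^{O(\eps)})/\eps^3$ is not established.
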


Having shown in the previous section how to compute sparse cutmatches, we now use these cutmatches to compute large length-constrained cuts using the cut strategies from cut matching games. That is, we prove \Cref{lem: MBSC to CMG}. 

We begin by describing the algorithm for \Cref{lem: MBSC to CMG}.

\paragraph{Step 1: Create Clusters for Cut Matching Games.}\label{step:1} We do the following for each $h' \leq h$ which is a power of $2$.  First, apply \Cref{thm:cover-separation-factor-existential} 
to $G$ to compute a neighborhood cover $\nset_{h'}$ with covering radius $h_{cov}=h'$, separation factor $2s$, cluster diameter $h_{\diam} = \frac{1}{\eps} \cdot (s)^{O(1/\eps)} \cdot h'$ and width $\omega = N^{O(\eps)} \log N$. 

Modify $\nset_{h'}$ as follows. Since we would like to only run our cut strategy on clusters whose node-weightings are a small fraction of the total size of $A$, we must further break up the node-weighting in each cluster in our neighborhood cover. 
Specifically, for each cluster $S\in \nset$ we let $A_S$ be the restriction of node-weighting $A$ on $S$ (i.e.\ $A_S(u)$ is $A(u)$ if $u \in S$ and $0$ otherwise). Then if $|A_S| \leq |A|/L$ we do nothing. However, if $|A_S| > |A|/L$ then we break $A_S$ into sub-node-weightings $A_S^{(1)}, A_S^{(2)}, \ldots, A_S^{(L)}$ so that $\sum_i A_S^{(i)} = A$ where each of these has equal size and size at most $|A|/L$. We remove $S$ from $\nset_{h'}$ and add a copy of $S$ for each of  $A_S^{(1)}, A_S^{(2)}, \ldots$ to $\mcN_{h'}$. If we don't break up $A_S$ then we say that $A_S$ corresponds to cluster $S$; if we do then we say that each of $A_S^{(1)}, A_S^{(2)}, \ldots$ correspond to each respective copy of $S$. For ease of notation, if we do not break up $A_S$ then we let $A_S^{(1)}:= A_S$. 

Observe that (by e.g.\ iterating over vertices and greedily constructing $A^{(i)}_S$), we can ensure that the total support size across the node-weightings of all clusters of $\mcN[h']$ is
\begin{align}\label{eq:brokenClusterSize}
    \sum_{S \in \mcN_{h'}} \sum_i |\supp(A_S^{(i)})| \leq \omega \cdot (n + L^2) \leq \tilde{O}(n^{1 + O(\eps)} + N^{O(\eps)}L^2).
\end{align}
\future{state this more formally}

We let $\mcN$ be the union of all clusters of all $\mcN_{h'}$. We next partition $\mcN$ on the basis of cluster diameter. Specifically, for $h'' \leq \frac{1}{\eps} \cdot (s)^{O(1/\eps)} \cdot h$ which is a power of $2$, we let $\mcN[h'']$ be all clusters of $\mcN$ whose diameter is in $(h''/2,h'']$. Observe that clusters of $\mcN[h'']$ have diameter at most $h''$ but may contain clusters in $\mcN_{h'''}$ for $h''' > h''$ since $\mcN_{h'''}$ may contain clusters with diameter much smaller than $h'''$. 
For each $S \in \mcN[h'']$, we let $S^+$ be all nodes within distance $s \cdot h''$ of some vertex in $S$. 


\paragraph{Step 2: Run Cut Matching Games.}\label{step:2} 
First, let 
\begin{align}\label{eq:phiPrime}
    \phi' := \phi / \tilde{O}\left( N^{O(\eps)}\right)
\end{align}
be the (relaxed) sparsity with respect to which we will run compute our cutmatches. 

Next, we do the following for each $h'' \leq h \cdot \frac{1}{\eps} \cdot (s)^{O(1/\eps)}$ which is a power of $2$. More or less, we simultaneously implement a cut matching game for the node-weightings corresponding to clusters in $\mcN[h'']$. For each $S \in \mcN[h'']$ with corresponding node-weightings $A_S^{(1)}, A_S^{(2)}, \ldots$, we initialize graph $G_{Si} = (S, \emptyset)$ to the empty graph. Then, repeat the following $1/\eps$ times.
\begin{enumerate}
    \item \textbf{Run Cut Strategies:} For each $S \in \mcN[h'']$ and each node-weighting $A_S^{(i)}$ corresponding to $S$, apply the the cut strategy (from \Cref{thm:cutStrat}) to $G_{Si}$. Let $\{(A_{Si,k},B_{Si,k})\}_k$ be the output pairs of node-weightings from the cut strategy for cluster $S$.
    \item \textbf{Compute a Cutmatch:} For all pairs $\{(A_{Si,k},B_{Si,k})\}_{S,i,k}$ just computed, compute a $(h'' \cdot s)$-length $\phi'$-sparse cutmatch $(F, C)$ of congestion $\tilde O(1/\phi')$ by invoking \Cref{thm:multiCutmatch} (we will reason about the batch size in our analysis). We let $F = \sum_{S,i,k} F_{Si,k}$ be the relevant decomposition of this flow. 
    \item \textbf{Update Graphs:} For each pair $(A_{Si,k}, B_{Si,k})$, with corresponding flow $F_{Si,k}$, let $E_{Si,k}$ be the edge set which for each path $P$ in the support of $F_{Si,k}$ with flow value $F_{Si,k}(P)$ from node $u$ to node $v$ has an edge from $u$ to $v$ of capacity $F_{Si,k}(P)$. Add to $G_{Si}$ the edge set $\bigcup_k E_{Si,k}$.
    \future{Comb the new pair notation through}
\end{enumerate}

\paragraph{Step 3: Glue Broken Up Clusters.}\label{step:3} Lastly, we glue together our broken-up clusters. Specifically, we again do the following for each $h''\le h \cdot \frac{1}{\eps} \cdot (s)^{O(1/\eps)}$ which is a power of $2$. Let $S$ be a cluster we broke up with diameter in $(h''/2, h'']$ whose node-weighting we broke up into $A_S^{(1)}, A_S^{(2)}, \ldots$. Let $\{A_S^{(i)}, A_S^{(j)}\}_{i \neq j}$ be all relevant pairs for this cluster and let $\{A_S^{(i)}, A_S^{(j)}\}_{i \neq j, S}$ be all pairs across all clusters whose node-weightings we broke up; here, $S$ ranges over all clusters whose node-weightings we broke up. Then, we compute a $(\phi'/ L)$-sparse $h''s$-length cutmatch $(F,C)$ of congestion $\tilde{O}(L/\phi')$ on the pairs $\{A_S^{(i)}, A_S^{(j)}\}_{i \neq j, S}$ by invoking \Cref{thm:multiCutmatch}; here, the $S$ in these pairs again ranges over all clusters whose node-weightings we broke up.




\paragraph{Algorithm Output.}
We return as our cut $C^*$ (and corresponding length $h'' \leq h \cdot \frac{1}{\eps} \cdot (s)^{O(1/\eps)}$) the largest cut (by demand-size) of any cutmatch we computed above in \stepTwo or \stepThree (among all $O(\log N/\eps)$ cutmatches). If the size of $C^*$ is $0$ then we return as our $(\leq h, s_w)$-length $\phi_w$-expansion witness the neighborhood covers $\{\mcN_{h'}\}_{h'}$, the roouter for cluster $S$ gotten by taking the union of the routers computed for $A_S^{(1)}, A_S^{(2)}, \ldots$ from \stepTwo and the matchings corresponding to \stepThree and the embedding given by all flows we compute for our cutmatches in \stepTwo and \stepThree. Recall that $s_w = \frac{1}{\eps^2} \cdot s^{O(1/\eps)}$ and $\phi_w = \tilde{O}(\phi \eps/N^{O(\eps)})$.

\future{demand-size not clearly easily computable; really have to just estimate based on the cutmatch output}
\future{comb through fact thare we're also returning $h''$}



We conclude with our proof of \Cref{lem: MBSC to CMG}.

\largeCutsFromCutStrats*
\begin{proof}
We use the algorithm described directly above.    
\paragraph{Runtime Analysis.}

We begin by analyzing the runtime of the above algorithm. We begin with \stepOne wherein we build our neighborhood covers. Since $\diam(S)\le h \cdot \frac{1}{\eps} \cdot (s)^{O(1/\eps)}$, by \Cref{thm:cover-separation-factor-existential} we can compute each of our $O(\log h) \leq O(\log N)$-many neighborhood covers which form $\mcN$ in work at most 
\begin{align}\label{eq:NCTime}
    m \cdot \frac{1}{\eps} \cdot (s)^{O(1/\eps)} \cdot N^{O(\eps)} \cdot \log N 
\end{align}
and depth at most
\begin{align}\label{eq:NCDepth}
    \frac{1}{\eps} \cdot (s)^{O(1/\eps)} \cdot N^{O(\eps)} \cdot \log N 
\end{align}
\noindent Likewise, since each of the clusterings of each $\mcN_{h'}$ are disjoint and each $\mcN_{h'}$ has width $N^{O(\eps)} \log N$, we can break up all of our clusters in work at most
\begin{align}\label{eq:breakupTime}
    m \cdot N^{O(\eps)} \log^2 N.
\end{align}
and depth at most 
\begin{align}\label{eq:breakupDepth}
    N^{O(\eps)} \log^2 N.
\end{align}\future{Doing the smart breaking up / greedy thing algorithmically here actually requires a little bit of thought similar to the parallel algos we have for sparse flow decompositions}
Thus, combining Equations \ref{eq:NCTime} and \ref{eq:breakupTime}, the total work of \stepOne is 
\begin{align}\label{eq:step1Time}
    m \cdot \tilde{O} \left( \frac{1}{\eps} \cdot (s)^{O(1/\eps)} \cdot N^{O(\eps)} \right)
\end{align}
and the total depth of \stepOne is 
\begin{align}\label{eq:step1Depth}
    \tilde{O} \left( \frac{1}{\eps} \cdot (s)^{O(1/\eps)} \cdot N^{O(\eps)} \right)
\end{align}

We now discuss \stepTwo. We first discuss how we compute our cutmatches in \stepTwo. Towards this, we first discuss the batch sizes used when invoking \Cref{thm:multiCutmatch} for a given $h'$ in a given one of the $1/\eps$-many iterations. First, observe that for a given $S \in \mcN[h']$, we have that $\{(A_{Si,k},B_{Si,k})\}_{i,k}$ is $N^{O(\eps)}$-batchable since, by \Cref{thm:cutStrat}, the batch size of our cut strategy is $N^{O(\eps)}$. Furthermore, since $\mcN_{h'}$ has width $N^{O(\eps)} \log N$ before we break up clusters and since we duplicate a given cluster at most $L$-many times when breaking up clusters, it follows that each $\mcN[h']$ is $N^{O(\eps)} L \log^2 N$-batchable. Thus, we therefore have that $\{(A_{Si,k},B_{Si,k})\}_{Si,k}$ is $\tilde{O}(L \cdot N^{O(\eps)})$-batchable.

 It follows by \Cref{thm:multiCutmatch} that in one iteration of step 2, we can compute all of our cutmatches for all clusters in $\mcN[h']$ in work
\begin{align*}
    m \cdot \tilde{O}\left(L \cdot N^{O(\eps)} \cdot \poly(h) \right)
\end{align*}
and depth at most 
\begin{align*}
    \tilde{O}\left(L \cdot N^{O(\eps)} \cdot \poly(h) \right)
\end{align*}
\noindent and so we can compute our cutmatches for all clusters in $\mcN$ across all $1/\eps$ iterations in work at most
\begin{align}\label{eq:cutMatchTime}
    m \cdot \tilde{O}\left(\frac{1}{\eps} \cdot L \cdot N^{O(\eps)} \cdot \poly(h)\right).
\end{align}
and depth at most
\begin{align}\label{eq:cutMatchDepth}
    \tilde{O}\left(\frac{1}{\eps} \cdot L \cdot N^{O(\eps)} \cdot \poly(h)\right).
\end{align}

Next, we analyze the time to compute our cut strategy cuts in \stepTwo. To do so, we first bound the total number of edges across all $G_S$. Specifically, observe that since we have constructed our clusters so that if $A_S^{(i)}$ is a node-weighting corresponding to cluster $S \in \mcN[h']$, then we have $\sum_{S \in \mcN[h'], i} \supp(A_S^{(i)}) \leq \tilde{O}(n^{1 + O(\eps)} + N^{O(\eps)}L^2)$ (see \Cref{eq:brokenClusterSize}). Furthermore, since in \stepTwo the pairs $\{(A_{Si,k},B_{Si,k})\}_{k}$ for fixed cluster $S$ and $i$ are $N^{O(\eps)}$-batchable (by \Cref{thm:cutStrat}), it follows that for a fixed $S \in \mcN[h']$ and fixed $i$ we have
\begin{align*}
    \sum_{k} |\supp(A_{Si,k} \cup B_{Si,k})) |\leq N^{O(\eps)} \cdot \supp(A_S^{(i)})
\end{align*}
and so 
\begin{align*}
    \sum_{S,i} \sum_{k} |\supp(A_{Si,k} \cup B_{Si,k})| \leq \sum_{S,i} N^{O(\eps)} \cdot \supp(A_S^{(i)}) \leq \tilde{O}(n^{1 + O(\eps)} + N^{O(\eps)}L^2)
\end{align*}
Thus, plugging this bound on $\sum_{S,i,k} |\supp(A_{Si,k} \cup B_{Si,k})|$ into the guarantees of \Cref{thm:multiCutmatch} and the fact that our pairs are $L \cdot N^{O(\eps)}$-batchable, we have that each time we compute a cutmatch in \stepTwo, the total number of edges we add across all $G_S$ for $S \in \mcN[h']$ for a fixed $h'$ is at most

\begin{align*}
    \tilde{O}(m + L \cdot N^{O(\eps)} + n^{1 + O(\eps)} + N^{O(\eps)}L) = \tilde{O}(m + n^{1 + O(\eps)} + L^2 \cdot N^{O(\eps)}).
\end{align*}
Since we have $1/\eps$ iterations, it follows that the number of edges across all $G_S$ for $S \in \mcN[h']$ is never more than
\begin{align*}
    \frac{1}{\eps} \cdot \tilde{O}(m + n^{1 + O(\eps)} + L^2 \cdot N^{O(\eps)}).
\end{align*}
It follows that the work and depth to compute all cut strategies for all $S \in \mcN[h']$ for all $1/\eps$-many iterations and all $h' \leq h \cdot \frac{1}{\eps} \cdot (s)^{O(1/\eps)}$ a power of $2$ in \stepTwo are respectively
\begin{align}\label{eq:cutStratTime}
    \frac{1}{\eps} \cdot  \sum_i \wcutStrat(A_i, m_i)
\end{align}
and
\begin{align}\label{eq:cutStratDepth}
    \frac{1}{\eps} \cdot  \max_i \dcutStrat(A_i, m_i)
\end{align}
where $|A_i| \leq |A| / L$ for all $i$ and $\sum_i m_i \leq \tilde{O}(m + n^{1 + O(\eps)} + L^2 \cdot N^{O(\eps)})$.

Combining the work and depth to compute cutmatches (work \Cref{eq:cutMatchTime} and depth \Cref{eq:cutMatchDepth}) and cut strategies (work \Cref{eq:cutStratTime} and depth \Cref{eq:cutStratDepth}) in \stepTwo, we have that the $1/\eps$-many iterations of \stepTwo for all $h' \leq h \cdot \frac{1}{\eps} \cdot (s)^{O(1/\eps)}$ a power of $2$ can be implemented in work
\begin{equation}\label{eq:step2Time}
m \cdot \tilde{O}\left(\frac{1}{\eps} \cdot L \cdot N^{O(\eps)} \cdot \poly(h)\right)+\frac{1}{\eps} \cdot  \sum_i \wcutStrat(A_i, m_i)
\end{equation}
and depth
\begin{equation}\label{eq:step2Depth}
\tilde{O}\left(\frac{1}{\eps} \cdot L \cdot N^{O(\eps)} \cdot \poly(h)\right) + \frac{1}{\eps} \cdot  \max_i \dcutStrat(A_i, m_i)
\end{equation}
where $|A_i| \leq |A| / L$ for all $i$ and $\sum_i m_i \leq \frac{1}{\eps} \cdot \tilde{O}(m + n^{1 + O(\eps)} + L^2 \cdot N^{O(\eps)})$.

Lastly, we analyze the runtime of \stepThree. Since each $\mcN[h']$ is $\tilde{O}(L \cdot N^{O(\eps)})$-batchable and by definition of how we broke up clusters, we have that all pairs $\{A_S^{(i)}, A_S^{(j)}\}_{i \neq j, S}$ in \stepThree are $\tilde{O}(L \cdot N^{O(\eps)})$-batchable. Thus, applying \Cref{thm:multiCutmatch}, we have that all cumatches of \stepThree can be computed in work 
\begin{align}\label{eq:step3Time}
    m \cdot \tilde{O}\left( L \cdot N^{O(\eps)} \cdot \poly(h) \right).
\end{align}
and depth
\begin{align}\label{eq:step3Depth}
    \tilde{O}\left( L \cdot N^{O(\eps)} \cdot \poly(h) \right).
\end{align}
The work and depth of our algorithm then follows by combining the running time of 
\begin{itemize}
    \item \stepOne work (\Cref{eq:step1Time}) and depth (\Cref{eq:step1Depth})
    \item \stepTwo work (\Cref{eq:step2Time}) and depth (\Cref{eq:step2Depth})
    \item \stepThree work (\Cref{eq:step3Time}) and depth (\Cref{eq:step3Depth}).
\end{itemize}

\textbf{Correctness Analysis: Upper Bound on Largest Sparse Cut Size.} The basic idea will be to argue that most vertices are successfully ``embedded'' which in turn gives us a large expanding subset which will allow us to upper bound the demand-size of the largest sparse cut by \Cref{lem:LDSCAtMostLEC}. Fix an $h'' \leq h \cdot \frac{1}{\eps} \cdot (s)^{O(1/\eps)}$. We fix a suitably large constant $c \in (0,1)$.


\textit{Cut Matching Game Success.}  Consider a node-weighting $A_S^{(i)}$ with corresponding cluster $S \in \mcN[h'']$. Let $B_S^{(i)}$ be the (expanding) node-weighting returned by our cut matching game in \stepTwo as described by \Cref{thm:cutStrat}. We say that a vertex $u$ succeeded for the cut matching game for $A_S^{(i)}$ if $B_S^{(i)}(u) = A_S^{(i)}(u)$. We say that $A_S^{(i)}$ succeeded for the cut matching game if $|B_S^{(i)}| \geq c \cdot |A_S^{(i)}|$. If $S$ is a cluster whose node-weighting we didn't break up then we say that $S$ succeeded for its cut matching games if $A_S^{(1)}$ succeeded. If $S$ is a cluster whose node-weighting we broke up into $A_S^{(1)}, A_S^{(2)}, \ldots$ then we say that $S$ succeeded for its cut matching games if at least a $c$ fraction of the cut matching games of $A_S^{(1)}, A_S^{(2)}, \ldots$ succeeded. We say that a vertex $u$ succeeds for the cut matching games of $S$ if $\sum_i B_S^{(i)}(u) \geq c \cdot A_S(u)$.


\textit{Cutmatching Success.} Next, suppose $S \in \mcN[h'']$ is a cluster whose node-weighting $A_S$ we broke up into $A_S^{(1)}, A_S^{(2)}, \ldots$ so that for each $i$ and $j$ the pair $(A_S^{(i)}, A_S^{(j)})$ is a pair for our cutmatching in \stepThree. We let $F_{Sij}$ be the flow returned for the pair $(A_S^{(i)}, A_S^{(j)})$ of the cutmatching returned in \stepThree. We say that the pair $(i,j)$ succeeded for the cutmatching if $\val(F_{Sij}) \geq c |A_S^{(i)}| = c |A_S^{(j)}|$. Likewise, we say that the cutmatching succeeded for $A_S^{(i)}$ if, among all $j$, at least a $c$ fraction of the $(i,j)$ succeeded for the cutmatching. We let $I_S$ be the indices of all $A_S^{(i)}$ which succeeded for the cutmatching. Lastly, we say that the cutmatching succeeded for $S$ if, among all $i$, at least a $c$ fraction of $A_S^{(i)}$ succeeded for the cut matching; i.e.\ $|I_S| \geq c L$. 

Towards defining the node-weighting we will claim is length-constrained expanding, we define the node-weighting $\hat{B}_S$ for each $S \in \mcN[h'']$.
\begin{itemize}
    \item Specifically, for each cluster $S$  whose node-weighting we did not break up and which succeeded for the cut matching game, we let $\hat{B}_S := B_S^{(1)}$.
    \item If $S$ is a cluster  whose node-weighting $A_S$ we broke up into $A^{(1)}_S, A^{(2)}_S, \ldots$, which succeeded for both the cut matching game and cutmatching and $V_S$ are the vertices which succeeded for the cut matching games of $S$, then we let $\hat{B}_S$ be $A_S$ restricted to $V_S$.
    \item  If $S$ did not succeed for both the cut matching game and cutmatching, then we just let $\hat{B}_S$ be uniformly $0$.
\end{itemize}



We now argue that any demand $D$ that decomposes as $D = \sum_{S} D_{S}$ where $D_S$ is $\hat{B}_S$-respecting can be efficiently routed; here, we sum over clusters of $\mcN[h'']$ without multiplicity. To do so we will first route within routers given by our cut matching game (from \stepTwo), then route across clusters using the cutmatches (from \stepThree) and then again route according to our cut matching game (from \stepTwo). We describe this more formally below.




\textbf{Routing Within Clusters.} We first describe how to route what we call a \emph{cut matching game demand}. Let $A_S^{(i)}$ be a node-weighting whose cut matching game succeeded and let $B_S^{(i)}$ be the corresponding expanding node-weighting returned by the cut matching game. Likewise, let $D_{Si}$ be a demand that is $B_S^{(i)}$-respecting. We call $D = \sum_{S,i} D_{Si}$ a cut matching game demand. By the guarantees of \Cref{thm:cutStrat} we know that the result of our cut matching game on $S$ is a $1/\eps$-step and $N^{O(\eps)}$-congestion router for $B_S^{(i)}$. Since each edge of the output of our cut matching game for cluster $S$ corresponds to a path of length $h''\cdot s$ in $G[S^+]$ and each of the $1/\eps$ cutmatches we compute has congestion $\tilde{O}(1/\phi')$ by \Cref{thm:multiCutmatch}, it follows that $D$ can be routed over $(h''s/\eps)$-length paths with congestion at most $\tilde{O}(\frac{N^{O(\eps)}}{\phi' \cdot \eps})$.

\textbf{Routing Across Clusters.} We next describe how to route what we call a \emph{cutmatching demand} between indices. Specifically, consider any function $D$ that decomposes as $\sum_S D_S$ (where our sum is over clusters whose node-weightings we broke up) such that for each $S$ we have:
\begin{enumerate}
    \item $(i,j) \in \supp(D_S)$ only if $A_S^{(i)}$ and $A_S^{(j)}$ succeeded for the succeeded for the respective cut matching games and the cutmatching for $S$ and;
    \item $\sum_j D_S(i, j) \leq |B_S^{(i)}|$ and $\sum_j D_S(j, i) \leq |B_S^{(i)}|$ for every $i$.
\end{enumerate}
Then, given any such $D$, we claim there is a $2h''s$-length flow $\hat{F}  = \sum_{S,i,j} \hat{F}_{Sij}$ with congestion $\tilde{O}(1/\phi')$ wherein each $\hat{F}_{Sij}$ routes $D_S(i,j)$ flow from $\supp(B_S^{(i)})$ to $\supp(B_S^{(j)})$ so that no vertex $u$ sends or receives more than $B_S^{(i)}(u)$ and $B_S^{(j)}(u)$ flow respectively according to $\hat{F}_{Sij}$.

To construct $\hat{F}$, first consider two pairs $(i, k)$ and $(k,j)$ that both succeed for the cutmatching where both $A_S^{(i)}$ and $A_S^{(j)}$ succeeded for their cut matching game. Observe that, by definition of a pair succeeding for the cutmatching, we know that $\val(F_{Sik}) \geq c| A_{S}^{(k)}|$ and $\val(F_{Skj}) \geq c| A_{S}^{(k)}|$. Likewise, we know that since both $A_S^{(i)}$ and $A_S^{(j)}$ succeeded for their cut matching game it holds that $B_S^{(i)} \geq c |A_S^{(i)}|$ and $B_S^{(j)} \geq c |A_S^{(j)}|$.

Thus, it follows by scaling and concatenating flow paths of $F_{Sik}$ and $F_{Skj}$ that it is possible to construct a flow $\hat{F}_{Sikj}$ which, for a fixed $S$, $i$ and $j$, is a $2h''s$-length flow that routes $D_S(i,j)$ flow from $i$ to $j$ and incurs congestion on edge $e$ at most $\sum_S \frac{D_S(i,j)}{| A_S^{(i)}|} \cdot O(F_{Sik}(e) + F_{Skj}(e))$. 


Next, let $I_{Sij}$ be all $k$ such that the pair $(i, k)$ and the pair $(k, j)$ both succeeded for the cutmatching and $i$ and $j$ succeeded for the cut matching game. Likewise, let
\begin{align*}
    \hat{F}_{Sij} := \Theta(1) \sum_{k \in I_{Sij}} \hat{F}_{Sikj} / L
\end{align*}
for an appropriate hidden constant. Since $i$ and $j$ both succeeded for the cutmatching we know that $|I_{Sij}| \geq \Omega(L)$ so this is a $2h''s$-length flow that routes at least $D_S(i,j)$ from $i$ to $j$ and on edge $e$ incurs congestion at most 
\begin{align*}
    \frac{D_S(i,j)}{| A_S^{(i)}|} \sum_{k \in I_{Sij}}  O(F_{Sik}(e) + F_{Skj}(e)) / L.
\end{align*}
Let 
\begin{align*}
    \hat{F} := \sum_S \sum_{i,j} \hat{F}_{Sij}.
\end{align*}
be all such flows pairs for all $S$, $i$ and $j$.

This $2h''s$-length flow routes, simultaneously for every $S$, $i$ and $j$, $D_S(i,j)$ flow from $i$ to $j$ for cluster $S$ and on edge $e$ incurs congestion at most 
\begin{align*}
    \sum_{S, i,j} \frac{D_S(i,j)}{| A_S^{(i)}|} \sum_{k \in I_{ij}}  \frac{O(F_{Sik}(e) + F_{Skj}(e))}{L} &= \sum_{S,i,k} \frac{O(F_{Sik}(e))}{L} \sum_j \frac{D_S(i,j)}{| A_S^{(i)}|} + \sum_{S,j,k} \frac{O(F_{Skj}(e))}{L} \sum_i \frac{D_S(i,j)}{| A_S^{(j)}|}\\
    & \leq \sum_{S,i,j}\frac{O(F_{Sij}(e))}{L}\\
    & = \frac{F(e)}{L}\\
    & \leq \tilde{O}\left(\frac{1}{\phi'}\right)
\end{align*}
where, above, we use the fact that $F$ has congestion $\tilde{O}(L/\phi')$ as given in the definition of \stepThree.

\textbf{Routing Across and Within Clusters.}
Next, we describe how to route an arbitrary demand $D$ which decomposes as $D = \sum D_S$ where $D_S$ is $\hat{B}_S$ respecting. First, a minor technical detail to deal with the fact that a vertex can appear in multiple copies of a cluster: observe that since $D_S$ is $\hat{B}_S$-respecting, it is possible to decompose $D_S$ into $D_S = \sum_{ij} D_{Sij}$ where for each $i$ we have $\sum_j D_{Sij}$ and $\sum_j D_{Sji}$ are both $B_S^{(i)}$-respecting. Using this decomposition, we construct our demand $D_2$ between indices. Namely, we let $D_S'$ for indices $i,j \in I_S$ be
\begin{align*}
 D'_S(i,j) := \sum_{u,v} D_{Sij}(u,v)
\end{align*}
and let $D_2 := \sum_S D'_S$.

First, observe that $D_2$ is a cutmatching demand by construction and so by the above discussion can be routed over $2h''s$-length paths by a flow $\hat{F} = \sum_{S, i, j}\hat{F}_{Sij}$ with congestion at most $\tilde{O}(1/\phi')$ where each node $u$ sends and receives at most $\hat{B}_S^{(i)}(u)$ flow according to $\hat{F}_{Sij}$. For a given vertex $u$, we let $w_{Sij}(u)$ be the amount of flow that $u$ sends according to $\hat{F}_{Sij}$. We use these values to construct two cut matching game demands $D_1$ and $D_3$ such that concatenating the routing paths of $D_1$, $D_2$ and $D_3$ give a routing for $D$. 

We first describe $D_1$. Let 
\begin{align*}
    D_{Si}(u,j) := \sum_{v} D_{Sij}(u,v)
\end{align*}
be the amount of demand that vertex $u$ sends to $\hat{B}_S^{(j)}$ according to $D_S$ according to the portion of $u$'s node-weighting that is in $\hat{B}_S^{(i)}$. Next, consider the demand $\hat{D}_{Si}$ wherein vertex $u$ sends each of its demands $D_{Si}(u,j)$ to each node $v$ proportional to $w_{Sij}(v)$. Specifically, let
\begin{align*}
    \hat{D}_{Sij}(u,v) := D_{Si}(u,j) \cdot \frac{w_{Sij}(v)}{\val(\hat{F}_{Sij})}.
\end{align*}
and
\begin{align*}
    \hat{D}_{Si} := \sum_j \hat{D}_{Sij}.
\end{align*}
Lastly, let $D_1$
\begin{align*}
    D_1 := \sum_{S} \hat{D}_{Si}
\end{align*}

First, we claim that $D_1$ is a cut matching game demand. To do so, we must show that $\hat{D}_{Si}$ is $B_S^{(i)}$-respecting. To see this, observe that the demand that a vertex $u$ sends according to $\hat{D}_{Si}$ is 
\begin{align*}
\sum_v \hat{D}_{Si}(u,v) &= \sum_v \sum_j D_{Si}(u,j) \cdot \frac{w_{Sij}(v)}{\val(\hat{F}_{Sij})}\\
&= \sum_j D_{Si}(u,j) \sum_v \frac{w_{Sij}(v)}{\val(\hat{F}_{Sij})}\\
&= \sum_j \sum_{v} D_{Sij}(u,v)\\
& \leq B^{(i)}_S(u)
\end{align*}
where in the last step we used the fact that $\sum_j D_{Sij}$ is $B_S^{(i)}$-respecting. Symmetrically, one can show that $\sum_v \hat{D}_{Si}(v,u) \leq B^{(i)}_S(u)$ which shows that $D_1$ is indeed a cut matching game demand. It follows by the above discussion that we can route $D_1$ over $h''s/\eps$-length paths with congestion at most $\tilde{O}(\frac{N^{O(\eps)}}{\phi' \cdot \eps})$.

Next, we claim that it is possible to concatenate the routing paths of $D_1$ and $D_2$ to get a flow $F = \sum_{S,i,j} F_{Sij}$ in which $F_{Sij}$ routes from each vertex $u$ a flow of value $D_{Si}(u,j)$ to $B_S^{(j)}$.

We describe $F_{Sij}$. Recall that $D_1 = \sum_{i,j} \hat{D}_{Sij}$ and let $\hat{F} = \hat{F}_{Sij}$ be the aforementioned flow which routes $D_2$. We will construct $F_{Sij}$ by concatenating paths of the portion of the flow for $D_1$ which routes $\hat{D}_{Sij}$ and $F_{Sij}$. Specifically, notice that according to $\hat{D}_{Sij}$ the total flow from vertices to a vertex $v$ must be
\begin{align*}
    \frac{w_{Sij}(v)}{\val(\hat{F}_{Sij})} \sum_u D_{Si}(u,j) = \frac{w_{Sij}(v)}{\val(\hat{F}_{Sij})} \sum_{u,w} D_{Sij}(u,w) = w_{Sij}(v).
\end{align*}
and since by definition the flow from vertex $v$ according to $F_{Sij}$ is just $w_{Sij}(v)$, we have that we can concatenate the flow paths of these two flows at each vertex $v$. Next, observe that for a given vertex $u$ this flow sends
\begin{align*}
    \sum_{v} \hat{D}_{Sij}(u,v) =  \sum_v D_{Si}(u,j) \cdot \frac{w_{Sij}(v)}{\val(\hat{F}_{Sij})} = D_{Si}(u,j)
\end{align*}
flow from $u$ to $B_S^{(j)}$ as required. Lastly, observe that $F$ has length $O(h''s/\eps)$ and congestion at most $\tilde{O}(\frac{N^{O(\eps)}}{\phi' \cdot \eps})$. $D_3$ can be constructed symmetrically to $D_1$ and concatenated to $F$ for a flow with the same guarantees but one in which each vertex $u$ sends to vertex $v$ flow $\sum_{} D_{Sij}(u,v)$ flow. Summarizing, we shown how to route our initial demand $D$ that decomposes as $D = \sum_{S} D_{S}$ where $D_S$ is $\hat{B}_S$-with the above length and congestion.
\future{Unpack $D_3$ argument and make notation not so horrible}

\textbf{Constructing a Length-Constrained Expanding Node-Weighting.} Finally, we use the above routing to demonstrate the existence of a large length-constrained expanding subset. 

Specifically, say that a vertex $v$ is \emph{$h'$-length embedded} if for every cluster $S \in \mcN_{h'} \cap \mcN[h'']$ for all $h'' \leq h \cdot \frac{1}{\eps} \cdot (s)^{O(1/\eps)}$ which contains $v$, $v$ succeeds for the cut matching games of $S$ (of which there are either $1$ or $L$), $S$ succeeded for its cut matching games and cut matching. We let $B_{h'}$ be $A$ restricted to all \emph{$h'$-length embedded} vertices. Clearly $B_{h'}$ is $A$-respecting.

We claim that $B_{h'}$ has large $h'$-length expansion. Consider an $h'$-length $B_{h'}$-respecting demand $D$. Since $\mcN_{h'}$ is a neighborhood cover with covering radius $h'$, for each pair $(u,v) \in \supp(D)$, we know there must be some cluster $S \in \mcN_{h'}$ such that $u,v \in S$. It follows that we can decompose $D$ as $D = \sum_{h''} D_{h''}$ where every pair in the support of $D_{h''}$ is contained in some cluster in $S \in \mcN[h''] \cap \mcN_{h'}$ wherein both $u$ and $v$ succeeded for the cut matching games of $S$ and $S$'s cut matching games and cutmatching succeeded.

Such a demand $D_{h''}$ is exactly the sort of demand we argued we can route above and so it follows that we can route each $D_{h''}$ and therefore $D$ (at an increase of $O(\log N)$ in congestion) over length  $O(h''s / \eps) \leq h' \cdot \frac{1}{\eps^2} \cdot (s)^{1+O(1/\eps)} $ paths with congestion at most $\tilde{O}(\frac{N^{O(\eps)}}{\phi' \cdot \eps})$. Thus, applying our choice of $\phi'$ (\Cref{eq:phiPrime}), we have $B_{h'}$ is $(h', \frac{1}{\eps^2} \cdot (s)^{1+O(1/\eps)})$-length $\tilde{O}(\phi \cdot \eps)$-expanding.

Letting $\bar{B}_{h'} := A - B_{h'}$, it follows by \Cref{lem:LDSCAtMostLEC} that
\begin{align*}
    \qLDSC\left(\phi \frac{\eps}{s^3 N^{O( 1/s)}}, \frac{h'}{2}, \frac{1}{\eps^2} \cdot (s)^{1+O(1/\eps)} \right) \leq \qLEC\left(\phi \eps, h',  \frac{1}{\eps^2} \cdot (s)^{1+O(1/\eps)}\right) \leq |\bar{B}_{h'}|.
\end{align*}
In other words, (up to appropriate slacks) the above upper bounds the demand-size of the demand-size-largest $(h',s)$-length $\phi$-sparse cut for any $h' \leq h$ in terms of $|\bar{B}_{h'}|$.

 Letting $\bar{B}^*$ be the $\bar{B}_{h'}$ of smallest size, we have the following upper bound for all $h'$ on the demand-size of the demand-size-largest length-constrained $\phi' \frac{\eps}{N^{O(\eps)}}$-sparse cut.
 \begin{align}\label{eq:LDSCBound}
     \qLDSC\left(\phi \frac{\eps}{s^3N^{O(1/s)}}, \frac{h'}{2}, \frac{1}{\eps^2} \cdot (s)^{1+O(1/\eps)}\right) \leq |\bar{B}^*|.
 \end{align}

\textbf{Correctness Analysis: Sparsity of Each of Our Candidate Cuts.}
Consider a fixed $h'' \leq h \cdot \frac{1}{\eps} \cdot (s)^{O(1/\eps)}$ that is a power of $2$. Let $(F,C)$ be one of the $1/\eps + 1$ cutmatches we compute (in either \stepTwo or \stepThree) in this iteration. We argue that $C$ has $(h'',s)$-length sparsity at most $\phi$.

To do so we begin by constructing a large $h''$-length demand $D_{h''}$ which is $h''s$-separated in $G + h' \cdot C_{h'}$. Let the pairs for $(F,C)$ be $\{(A_i, A_i')\}_{i}$ where the support of each such pair is in some cluster in $\mcN[h'']$ and with matched and unmatched parts $\{(M_i, M_i')\}_{i}$ and $\{(U_i, U_i')\}_{i}$ respectively. Throughout this proof we will assume without loss of generality that for all $i$ $A_i(U_i) \leq A_i'(U_i')$. For each pair $(A_i, A_i')$, let $D_{i}$ be an $h''$-length demand with $\supp(D_i) \subseteq \supp(U_i) \times \supp(U_i')$ where each $u \in U_i$ sends $A_i(u)$ to nodes in $U_i'$ so that no node $v \in U_i'$ receives more than $A_i'(v)$ demand. The definition of a cutmatch (\Cref{def:cutmatch}) ensures that in $G + h'' \cdot C_{h''}$ all pairs of this demand are at least $(h'' s)$-far.

Similarly, if we are in \stepTwo we let
\begin{align*}
    D_{h''} := \sum_{i} D_{i} / \tilde{O}(N^{O(\eps)})
\end{align*}
and if we are in \stepThree we let
\begin{align*}
    D_{h''} := \sum_{i} D_{i} / \tilde{O}(L \cdot N^{O(\eps)})
\end{align*}
be this demand summed and scaled appropriately across all pairs. Observe that $C_{h''}$ still clearly $(h''s)$-separates this demand.  Furthermore, observe that this demand is $h''$-length by since each pair in the support is contained in a cluster in $\mcN[h'']$. Also, observe that above demand is $A$-respecting by virtue of the width of each of our neighborhood covers being $N^{O(\eps)}$ and by definition of how we break up node-weightings; this holds regardless of whether the cutmatch is computed in \stepTwo or \stepThree. Lastly, observe that the size of this demand is 
\begin{align}\label{eq:demandLower}
    |D_{h''}| &= \sum_{i} A_i(U_i) / \tilde{O}(N^{O(\eps)})\nonumber\\
    & = \sum_i \frac{|A_i| - A_i(M_i)}{\tilde{O}(N^{O(\eps)})}\nonumber\\
    & \geq \frac{\sum_i |A_i| - \val(F)}{  \tilde{O}(N^{O(\eps)})}
\end{align}

On the other hand, by the definition of a cutmatch (\Cref{def:cutmatch}), we have that the size of $C$ is at most
\begin{align}\label{eq:cutUpper}
    |C|\leq \phi'' \cdot \left( \left(\sum_i |A_i| \right) - \val(F) \right).
\end{align}
where $\phi'' = \phi'/L$ if we are in \stepThree and $\phi'' = \phi'$ otherwise.

Combining Equations \ref{eq:demandLower} and \ref{eq:cutUpper}, the definition of $\phi'$ (\Cref{eq:phiPrime}), we have that $C_{h''}$ is an $h''$-length cut with sparsity at most 
\begin{align*}
    \phi' / \tilde{O}(N^{O(\eps)}) \leq \phi.
\end{align*}




\textbf{Correctness Analysis: Output Cut is Demand-Size Large.} It remains to argue that the demand-size of the cut $C^*$ returned by our algorithm is sufficiently large. Recall that, if $C^*$ was computed when we were considering diameter $h'' \leq h \cdot \frac{1}{\eps} \cdot (s)^{O(1/\eps)}$, then we know that the $(h'', s)$-length sparsity of $C^*$ is at most $\phi$.

We claim that the $(h'',s)$-length demand-size of $C^*$ with respect to $A$ (\Cref{def:demandSize}) is 
\begin{align}\label{eq:cutLower}
    A_{(h'',s)}(C^*) \geq  |\bar{B}^*| \cdot \frac{\eps^3}{\tilde{O}(N^{O(\eps)})}.
\end{align}
Furthermore, by \Cref{eq:LDSCBound} for any $h' \leq h$, the largest $(\frac{h'}{2}, \frac{1}{\eps^2} \cdot (s)^{1+O(1/\eps)})$-length demand-size of a $(\frac{h'}{2}, \frac{1}{\eps^2} \cdot (s)^{1+O(1/\eps)})$-length $\phi \cdot \frac{\eps}{s^3N^{O(1/s)}}$-sparse cut is at most $|\bar{B}^*|$. It follows that $C_{h'}$ is a $\frac{\eps^3}{\tilde{O}(N^{O(\eps)})}$-approximate $(\leq h,s)$-length $\phi$-sparse cut with length slack approximation $\alpha_s = \max\left(2,\frac{1}{\eps^2} \cdot (s)^{1+O(1/\eps)}\right)$ and sparsity approximation $\alpha_\phi = \frac{\eps}{s^3 N^{O(1/s)}}$ (see, again, \Cref{dfn:apxLargeCut} for a definition of this notion of approximation).

It remains to argue that \Cref{eq:cutLower} holds. We let $\bar{B}_{h'} = \bar{B}^*$ for the remainder of this proof; that is, for the remainder of this proof we let $h'\leq h$ be the length with respect to which $B_{h'} = A - \bar{B}_{h'}$ is length-constrained expanding. 

A vertex $u$ can fail to be  $h'$-length embedded if there is a cluster $S \in \mcN_{h'} \cap \mcN[h'']$ for some $h'' \leq h' \cdot \frac{1}{\eps} \cdot (s)^{O(1/\eps)}$ which contains $u$ and with corresponding node-weighting $A_S$ for which
\begin{enumerate}
    \item \textbf{Cut Matching Games Fails for Vertex.} $u$ fails the cut matching game for $S$; in other words, there is some cluster $S \in \mcN[h'']$ containing $u$ such that $\sum_i B_S^{(i)}(u) < c A_S(u)$. Let $W^{(1)}_{h''}$ be all such nodes.
    \item \textbf{Cut Matching Game Fails for Cluster.} $u$ succeeds the cut matching game for $S$ but $S$ does not succeed the cut matching game; i.e\ it is not the case that a constant fraction of $A_S^{1}, A_S^{(2)}, \ldots$ succeed. Let $W^{(2)}_{h''}$ be all such nodes.
    \item \textbf{Cutmatching Fails.} $S$ is a cluster whose node-weighting we broke up which did not succeed for the cutmatching. Let $W^{(3)}_{h''}$ be all such nodes not in $W^{(2)}_{h''}$ or $W^{(1)}_{h''}$.
\end{enumerate}
Likewise, let $W^{(i)} := \bigcup_{h''} W^{(i)}_{h''}$ for $i \in [1,3]$. It follows that 
\begin{align*}
    |B_{h'}| = A - A(W^{(1)}) - A(W^{(2)}) - A(W^{(1)}).
\end{align*}
and so 
\begin{align*}
    |\bar{B}^*| = A(W^{(1)}) + A(W^{(2)}) + A(W^{(3)}).
\end{align*}
Likewise, by averaging there is some $h''$ such that 
\begin{align*}
A(W^{(1)}) + A(W^{(2)}) + A(W^{(3)})  \leq \tilde{O} \left(
    A(W^{(1)}_{h''}) + A(W^{(2)}_{h''}) + A(W^{(3)}_{h''}) \right).
\end{align*}

Furthermore, observe that by how defined what it means for a cluster to succeed we have $A(W^{(2)}_{h''}) \leq O(A(W_{h''}^{O(1)}))$ for an appropriate hidden constant. Thus, we have
\begin{align*}
    |\bar{B}^*|  \leq \tilde{O}\left(
    A(W^{(1)}_{h''})  + A(W^{(3)}_{h''}) \right).
\end{align*}
We case on which of $A(W^{(1)}_{h''})$ and $A(W^{(3)}_{h''})$ are larger.
\begin{enumerate}
    \item Suppose $A(W^{(1)}_{h''}) \geq A(W^{(3)}_{h''})$ so that $|\bar{B}^*|  \leq \tilde{O}\left(A(W^{(1)}_{h''}) \right)$. For a $u$ and $i$, say that $u$ fails the cut matching game for $S$ and $i$ if $B_S^{(i)}(u) \leq c'' \cdot A_S^{(i)}(u)$ for a fixed constant $c'' \in [0,1)$. Let $W^{(1)}_{Si}$ be all nodes that fail the cut matching game for $S$ and $i$. Observe that, by choosing $c''$ appropriately, we have
    \begin{align*}
       \sum_{S, i} A_S^{(i)}(W^{(1)}_{Si}) \geq \Omega\left(A(W^{(1)}_{h''})\right).
    \end{align*}
    and so it suffices to bound $\sum_{S,i} A_S^{(i)}(W^{(1)}_{Si})$.
    \future{write this out}
    


    
    Let $(F^{(l)}, C^{(l)})$ be the cutmatch we compute in iteration $l \in [1/\eps]$ for pairs $\{(A_{Si,k}^{(l)},B_{Si,k}^{(l)})\}_{S,i,k}$ and let $F^{(l)} = \sum_{S,i,k} F_{Si, k}^{(l)}$ be the decomposition of this flow (one sub-flow for each pair).

    Let $\alpha_{Si}$ be the smallest matching played by the cut matching game we run on $A_S^{(i)}$. It follows that, for a fixed $S$ and $i$, there must be some iteration $l_{Si}$ among our $1/\eps$ cutmatches 
    \begin{align*}
        \sum_k \val(F_{Si,k}^{(l_{Si})}) = (1-\alpha_{Si})\cdot \sum_k |A_{Si,k}^{(l_{Si})}|.
    \end{align*}

    Likewise, we know by the guarantees of \Cref{thm:cutStrat} that
    \begin{align*}
        \left(1-\frac{\alpha_{Si}}{\eps}\right) \cdot |A_S^{(i)}| \leq |B_S^{(i)}| \leq |A_S^{(i)}| - \Omega\left(A_S^{(i)}(W_{Si}^{(1)}) \right).
    \end{align*}
    and so rearranging we have
    \begin{align*}
        \eps \cdot A_S^{(i)}(W_{Si}^{(1)}) \leq \alpha_{Si} \cdot |A_S^{(i)}|.
    \end{align*}


    Let $F = \sum_l F^{(l)}$, let $C = \sum_l C^{(l)}$ and let $D_{h''} = \eps \sum_l D_{h''}^{(l)}$ where $D_{h''}^{(l)}$ is the demand for this cutmatch as described in our sparsity analysis. Note that $C$ is the cut our algorithm considers from \stepTwo for this value of $h''$. Thus, $D_{h''}$ is $A$-respecting, $C$ $h''s$-separates $D_{h''}$ and $D_{h''}$ has size at least
    \begin{align*}
        |D_{h''}| \geq \eps \cdot \sum_l \sum_{S,i,k} \frac{ |A_{Si,k}| - \val(F_{Si,k}^{(l)})}{  \tilde{O}(N^{O(\eps)})} &\geq \eps \cdot\sum_{S, i} \frac{\alpha_{Si} \cdot \sum_k |A_{Si,k}^{(l_{Si})}|}{\tilde{O}(N^{(\eps)})}\geq \eps \cdot\sum_{S, i} \frac{\alpha_{Si} \cdot |A_{S}^{(i)}|}{\tilde{O}(N^{(\eps)})}
    \end{align*}
    where in the last inequality we applied the fact that the total size of the node-weighting of all pairs played by a cut strategy on node-weighting $A$ is at least $|A|$ (by definition of a node-weighting).
    

    Combining the above we get that this demand has size at least 
    \begin{align*}
        |D_{h''}| \geq \eps \cdot \sum_{S, i} \frac{\alpha_{Si} \cdot |A_{S}^{(i)}|}{\tilde{O}(N^{(\eps)})} \geq \eps^2 \cdot \sum_{S, i} \frac{A_S^{(i)}(W_{Si}^{1})}{\tilde{O}(N^{(\eps)})} \geq \frac{\eps^2 }{\tilde{O}(N^{(\eps)})} \cdot A(W^{(1)}_{h''}) \geq \frac{\eps^2 }{\tilde{O}(N^{(\eps)})} \cdot |\bar{B}^*|,
    \end{align*}
    demonstrating that in this case we have $|D_{h''}| \geq \frac{\eps^2 }{\tilde{O}(N^{(\eps)})} \cdot |\bar{B}^*|$. By an averaging argument, there must be some $C^{(l)}$ which separates at least an $\eps$ fraction of $D_{h''}$, demonstrating that one of the $C^{(l)}$ has demand-size at least $\frac{\eps^3}{\tilde{O}(N^{(\eps)})} \cdot |\bar{B}^*|$.

    \item Suppose $A(W^{(1)}_{h''}) < A(W^{(3)}_{h''})$ so that $|\bar{B}^*|  \leq \tilde{O}\left(
    A(W^{(3)}_{h''}) \right)$. Let $(F,C)$ be the cutmatch returned in \stepThree when we are using diameter $h''$. We claim that $A_{(h,s)}(C)$ is large. Namely, let $D_{h''}$ be the $A$-respecting demand which is $h''s$-separated and of size  as described above which by \Cref{eq:demandLower} has size at least 
    \begin{align*}
        \sum_{S,i,j} \frac{ |A_S^{(i)}| - \val(F_{Sij})}{  \tilde{O}(L \cdot N^{O(\eps)})}.
    \end{align*}
    Let $\mcS^{(3)}$ be all clusters of $\mcN[h'']$ that did not succeed for the cutmatching (without multiplicity). If a cluster $S \in \mcS^{(3)}$ then we know that, for at least a constant fraction of $i$, a constant fraction of the pairs $(i,j)$ did not succeed, i.e.\ $\val(F_{Sij}) < c |A_S^{(i)}|$. Thus, we have that, among all $L^2$ pairs, a constant fraction did not succeed for $S$ and so for any $S \in \mcS^{(3)}$ we have
    \begin{align*}
        \sum_{i,j} |A_S^{(i)}| - \val(F_{Sij}) \geq (1-c') L \cdot |A_S|
    \end{align*}
    for some constant $c' > 0$.

    Since each vertex of $W_{h''}^{(3)}$ appears in at least one cluster of $\mcS^{(3)}$, we have
    \begin{align*}
        \sum_{S \in \mcS^{(3)}} |A_S| \geq A(W_{h''}^{(3)})
    \end{align*}

    Thus, we have that this demand has size
    \begin{align*}
        \frac{\sum_i |A_i| - \val(F)}{  \tilde{O}(N^{O(\eps)})}
    \end{align*}

    Combining the above, we have that the demand $D_{h''}$ has size at least
    \begin{align*}
        \sum_{S,i,j} \frac{ |A_S^{(i)}| - \val(F_{Sij})}{  \tilde{O}(L \cdot N^{O(\eps)})} &\geq \sum_{S \in \mcS^{(3)}} \frac{ \sum_{i,j}|A_S^{(i)}| - \val(F_{Sij})}{  \tilde{O}(L \cdot N^{O(\eps)})}\\
    &\geq \frac{(1-c')}{N^{O(\eps)}}\sum_{S\in \mcS^{(3)}}|A_S|\\
    & \geq \frac{1}{\tilde{O}( N^{O(\eps)})) }\cdot A(W_{h''}^{(3)}).
    \end{align*}
    Thus, in this case we have $|D_{h''}| \geq |\bar{B}^*|/\tilde{O}(N^{O(\eps)})$ and so the demand-size of $C$ is at least this
\end{enumerate}
In either of the above cases we have that one of the cuts we compute for $h''$ has demand-size at least $|\bar{B}^*| \cdot \frac{\eps^3}{\tilde{O}(N^{O(\eps)})}$ as required.

\textbf{Correctness Analysis: Witnesses.} Lastly, we argue about the returned witness (as defined in \Cref{def:LCExpWitness}). 

\begin{itemize}
    \item \textbf{Neighborhood Cover.} Clearly, $\mcN$ is a neighborhood cover satisfying the required properties
    \item \textbf{Routers.} Recall that each $B_{h'}$ as described above is $(h', \frac{1}{\eps^2} \cdot (s)^{1+O(1/\eps)})$-length $\tilde{O}(\phi \cdot \eps)$-expanding. Since each cut considered by our algorithm has $(h'', s)$-length sparsity at most $\phi$ for some $h'' \leq h \cdot \frac{1}{\eps} \cdot s^{O(1/\eps)}$, it follows that if any of these cuts has non-zero size then our graph is not a $(h'', s)$-length $\phi$-expander. Thus, each of our cutmatches' cuts must always have a cut of size $0$ and so each router we compute for cluster $S$ must be a $1/\eps$-step and $N^{O(\eps)}$ congestion router. Thus, if $S$ is a cluster whose node-weighting we broke up into $A_S^{(1)}, A_S^{(2)}, \ldots$ then the union of the routers we compute for $A_S^{(1)}, A_S^{(2)}, \ldots$ using our cut matching game along with the corresponding matching edges from \stepThree is a $(s_0 = (2/\eps+1))$-step router with congestion $\kappa_0 = N^{O(\eps)}$.
    \item \textbf{Embedding of Routers.} Observe that the sum of the flows we compute across all cutmatches for clusters of $\mcN_{h'}$ have length at most $h' \cdot \frac{1}{\eps} \cdot s^{O(1/\eps)} = h' \cdot s_1$ and congestion at most $\kappa_1 = \tilde{O}(N^{O(\eps)}/(\phi \eps))$.
\end{itemize}
Lastly, observe that the overall for our witness we get
\begin{align*}
    s_0 \cdot s_1 = (2/\eps+1) \cdot \frac{1}{\eps} \cdot s^{O(1/\eps)} = O\left(\frac{1}{\eps^2} \cdot s^{O(1/\eps)} \right) = s_w
\end{align*}
and
\begin{align*}
    \kappa_0 \cdot \kappa_1 = N^{O(\eps)} \cdot \tilde{O}(N^{O(\eps)}/(\phi \eps)) = \tilde{O}(N^{O(\eps)}/(\phi \eps)) = 1/\phi_w
\end{align*}
\end{proof}

\future{Break the above into way more lemmas}





Combining \Cref{thm:cutStrat} and \Cref{lem: MBSC to CMG} immediately gives \Cref{lem:cutsFromEDs}, restated below for convenience.
\largeCutsFromEDs*


\section{Algorithm: EDs from Demand-Size-Large Sparse Cuts}
We now show how to compute expander decompositions from demand-size large length-constrained sparse cuts. Specifically, we show the following. Recall the definition of an $\alpha$-approximate $(\leq h, s)$-length $\phi$-sparse cut with length approximation $\alpha_s$ and sparsity approximation $\alpha_\phi$ from \Cref{dfn:apxLargeCut}.

    
    

\future{These aren't our ED parameters}
\begin{restatable}{lemma}{EDsFromCuts}
    \label{lem: from MSC to ED}
    Fix $\alpha, \alpha_{\phi},\alpha_s >1$ and let $\wsparseCut(A,m)$ be the time to compute an $\alpha$-approximate $(\leq h', s')$-length $\phi'$-sparse cut with sparsity approximation $\alpha_\phi$ and length approximation $\alpha_s$ w.r.t.\ node-weighting $A$ in a graph with $m'\leq m$ edges for $h' \leq h$, $s' \leq s$ and $\phi' \geq \phi$. 
    
    Then, for every $\eps, \eps' >0$, one can compute a $(\leq h,s)$-length $\phi$-expander decomposition for $A$ with cut slack
    \begin{align*}
        \kappa =   \tilde{O}\left(\alpha \cdot N^{O(\eps')} \right) \cdot \tilde{O} \left(\alpha_\phi \cdot s \cdot N^{O(1/\sqrt{s})}  \ \right)^{O(1/\eps')},
    \end{align*}
    length slack
    \begin{align*}
        s = \alpha_s^{O(1/\eps')},
    \end{align*}
    work
    \begin{align*}
        \wED(A,m) \leq \tilde{O}\left(\frac{N^{O(\eps')}}{\eps'} \cdot \alpha \right) \cdot \wsparseCut(A,m)    
    \end{align*}
    and depth 
   \begin{align*}
        \dED(A,m) \leq \tilde{O}\left(\frac{N^{O(\eps')}}{\eps'} \cdot \alpha \right) \cdot \dsparseCut(A,m).    
    \end{align*}
    
    
\end{restatable}

We begin by describing the algorithm we use to prove the above. We give pseudo-code in \Cref{alg:EDsfromCuts}. The algorithm runs in $\frac{1}{\eps'}$ top-level iterations we call epochs. In epoch $\epoch$ the algorithm repeatedly cuts an $\alpha$-approximate demand-size-largest $(\leq h_\epoch,s_\epoch)$-length $\phi_\epoch$-sparse cut for the target node weighting $A$ in the current graph with sparsity slack $\alpha_\phi$ and length slack $\alpha_s$. For the next epoch the algorithm adjusts its target values for sparsity and length by decreasing $\phi_{\epoch}$ by about $1/\alpha_\phi$, $h_{\epoch}$ by about $1/\alpha_s$ and increasing $s_{\epoch}$ by about $\alpha_s$.

The proof shows that, at the end of an epoch $\epoch$, no demand-size-large $(h_{\epoch}', s_{\epoch}')$-length $\phi_{\epoch}'$-sparse cuts exist for $A$ anymore for $h_{\epoch}'$, $s_{\epoch}'$ and $\phi_{\epoch}$ each slightly more relaxed than $h_{\epoch}$, $s_{\epoch}$ and $\phi_\epoch$. More specifically, for $h_\epoch' \leq h_\epoch / \alpha_s$, $s_{\epoch}' = s_{\epoch} \cdot \alpha_s$ and $\phi_{\epoch}' = \phi_{\epoch}/ \alpha_\phi$. Initially, no such cut of demand-size strictly more than $|A|$ exists (trivially) and in each epoch we improve the quality of this upper bound on the demand-size-largest cut by an $N^{\eps'}$ factor so that after $1/\eps'$ iterations no such cut exists (for appropriately relaxed length, length slack and sparsity). Generally speaking, the trick is to make sure that our upper bound on $|A|$ improves faster than we must relax our length, length slack and sparsity.


\begin{algorithm}[h]
    \caption{Length-Constrained Expander Decompositions (from Large Sparse Cuts)}
    \label{alg:EDsfromCuts}
    \begin{algorithmic}[0] 
            \State \textbf{Input:} Edge-capacitated graph $G_0$, parameters $\eps, \eps' \in (0,1)$, node-weighting $A$ on $G$, length bound $h$, and conductance bound $\phi>0$, an algorithm for demand-size-largest sparse cut.
            \State \textbf{Output:} $(\leq h,s)$-length $\phi$-expander decomposition
            \State \textbf{Initialize Graph and Number of Iterations:} $G = G_0$ and $l=\Theta(\log N \cdot \alpha \cdot N^{O(\eps')})$ 
            \State \textbf{Initialize Length:} $h_0 = \alpha_s \cdot (2\alpha_s)^{1/\eps'} \cdot h$
            \State \textbf{Initialize Length Slack:} $s_0 = 2 \cdot \alpha_s^{1/\eps'}$
            \State \textbf{Initialize Sparsity:} $\phi_0 = \phi \cdot \alpha_\phi \cdot \prod_{\epoch \in [1/\eps']}\left( \tilde{O}(\alpha_\phi \cdot s_\epoch^3 \cdot N^{O(1/s_\epoch)}) \right)$
            \For{$\epoch = 1, 2, \ldots 1/\eps'$}
                \State \textbf{Update Length:} $h_\epoch = \frac{1}{2\alpha_s} \cdot h_{\epoch-1}$
                \State \textbf{Update Length Slack:}
                $s_\epoch = \alpha_s  \cdot s_{\epoch-1}$
                \State \textbf{Update Sparsity:} $\phi_\epoch = \frac{1}{\tilde{O}(\alpha_\phi \cdot s_\epoch^3 \cdot N^{O(1/s_\epoch)})} \cdot \phi_{\epoch-1} $
                \For{ $j =1,2,\ldots, l$ }
                    \State Let $C$ be an $\alpha$-approximate demand-size-largest $(\leq h_\epoch, s_{\epoch})$-length $\phi_{\epoch}$-sparse cut\\ \qquad \qquad  for $A$ and  length $h_\epoch''$  in $G$  with length and sparsity approximation $\alpha_s$ and $\alpha_\phi$.
                    \State \textbf{Update Graph:} $G = G + s_{\epoch} \cdot h''_\epoch \cdot C$.
                \EndFor
            \EndFor
            \State \Return $\sum_C C$
    \end{algorithmic}
\end{algorithm}

We will use the following notion of the demand-size of a sequence of cuts and the subsequent relation. Below, recall the definition of $\qLDSC$ from \Cref{def:LDSC}.
\begin{definition}[\qLDSCS] Fix a graph $G$, node-weighting $A$ and parameters $h$, $s$ and $\phi$. $\qLDSCS$ is the demand-size of the demand-size-largest sequence of $(h,s)$-length $\phi$-sparse moving cuts. Specifically,
    \begin{align*}
        \qLDSCS(\phi, h, s)  := \sum_i A_{(h,s)}(C_i)
    \end{align*}
    where above $A_{(h,s)}(C_i)$ is computed after applying all $C_j$ for $j < i$ and $(C_1, C_2, \ldots)$ is the $(h,s)$-length $\phi$-expanding moving cut sequence maximizing $\sum_i A_{(h,s)}(C_i)$.
\end{definition}


\begin{lemma}\label{lem:LDSCSAtMostLDSC}
    Given graph $G$ and node-weighting $A$, we have that 
    \begin{align*}
    \qLDSCS(\phi, h, s) \leq \qLDSC(\phi', h', s')
    \end{align*}
    where $h' = 2h$, $s' = \frac{(s-2)}{2}$ and $\phi' = s^3 \cdot \log^3 N \cdot N^{O(1/s)} \cdot \phi$.
\end{lemma}
\begin{proof}
    Let $(C_1, C_2, \ldots)$ be a sequence of $(h,s)$-length $\phi$-sparse cuts such that $\sum_i A_{(h,s)}(C_i) = \qLDSCS(\phi, h, s)$ where $\phi_i \leq \phi$ is the minimum value for which $C_i$ is $\phi_i$-sparse. Let $D_1, D_2, \ldots$ be the demands witnessing these cuts so that for all $i$ we have
    \begin{align*}
        |C_i|/\phi = |D_i|.
    \end{align*}
    Let $C = \sum_i C_i$. By \Cref{thm:unionOfMovingCuts} we know that $C$ is an $(h',s')$-length $\phi'$-sparse cut for $A$ and so there must be some $h'$-length demand $D$ which is $h's'$-separated by $C$ and $D$ has size at least
    \begin{align*}
        \frac{|C|}{\phi'} = |C|\frac{\sum_i |C_i|/\phi_i}{\sum_i |C_i|} = \sum_i |D_i|
    \end{align*}
    In other words, the demand-size of $C$ is at least $\sum_i |D_i|$, as required.
\end{proof}

\begin{lemma}\label{lem:cutUpperBound}
    At the end of the $\epoch$th epoch of \Cref{alg:EDsfromCuts}, we have that for every $h'_\epoch \leq h_{\epoch}/\alpha_s$ that the demand-size-largest $(h'_\epoch, s_\epoch)$-length $(\phi_\epoch/ \alpha_\phi)$-sparse cut has $(h'_\epoch, s_\epoch \cdot \alpha_s)$-length demand-size at most $|A| \cdot \left(\frac{1}{N^{O(\eps')}}\right)^{\epoch}$. In other words, we have for every $h'_\epoch \leq h_{\epoch}/\alpha_s$ that
\begin{align*}
    \qLDSC(\phi_{\epoch}/\alpha_\phi, h'_\epoch, s_{\epoch} \cdot \alpha_s) \leq |A| \cdot \left(\frac{1}{N^{O(\eps')}}\right)^{\epoch}.
\end{align*}
\end{lemma}
\begin{proof}
We prove this by induction. Let $\gamma := 1/ N^{O(\eps')}$ for convenience of notation. The base case when $\epoch = 0$ is trivial as the demand-size of any cut is trivially at most $|A|$. Next, suppose $\epoch > 0$ and assume for the sake of contradiction that our inductive hypothesis does not hold and so there is some $\hat{h}_\epoch \leq h_{\epoch}/\alpha_s$ such that at the end of the $\epoch$th epoch we have
\begin{align}\label{eq:cutLB}
    \qLDSC(\phi_{\epoch}/\alpha_\phi, \hat{h}_\epoch, s_{\epoch}  \cdot \alpha_s) > |A| / \gamma^{\epoch}.
\end{align}

Consider a cut $C$ we compute in an epoch, $\epoch$, with corresponding length $h''_\epoch$.
Unpacking the definition of approximate demand-size largest sparse cuts (\Cref{dfn:apxLargeCut}), we have that $C$ is an $(h''_{\epoch}, s_{\epoch})$-length $\phi_{\epoch}$-sparse cut for $A$ where $h''_{\epoch} \leq \alpha_s \cdot h_{\epoch}$ and, for any $h'_\epoch \leq h_{\epoch}/\alpha_s$ and, in particular, for $\hat{h}_\epoch$ we have 
\begin{align*}
    A_{(h''_\epoch,s_\epoch)}(C) \geq \frac{1}{\alpha} \cdot \qLDSC(\phi_\epoch/\alpha_\phi, \hat{h}_\epoch, s_\epoch \cdot \alpha_s)
\end{align*}
at the moment when $C$ is computed. Combining the above with \Cref{eq:cutLB} and the fact that $\qLDSC(\phi_\epoch/\alpha_\phi, \hat{h}_\epoch, s_\epoch \cdot \alpha_s)$ can only be smaller at the end of our epoch than in the middle of it \future{Why? Not totally clear. Need to update union of cuts to make this correct}, we get
\begin{align}
    A_{(h''_\epoch,s_\epoch)}(C) > \frac{1}{\alpha} \cdot |A|/ \gamma^\epoch \label{eq:demSizeLB}
\end{align}

Let $h''_{\epoch} \leq h_{\epoch} \cdot \alpha_s$ be a value that could correspond to a cut in the $\epoch$th iteration. Notice that, by our definition $h_{\epoch} = \frac{1}{2\alpha_s} \cdot h_{\epoch-1}$, we have that
\begin{align}\label{eq:hUpper}
    2h''_{\epoch} \leq 2\alpha_s \cdot h_{\epoch}= h_{\epoch-1}.
\end{align}
We may assume these values are, without loss of generality, powers of $2$ and so after we compute $\Theta(\log N \cdot \alpha \cdot \gamma)$-many cuts, we know that there must be some $h''_{\epoch}$ such that $\mcC_{h''_{\epoch} }$ contains at least $(\alpha \cdot \gamma)$-many cuts. We let $(C_1, C_2, \ldots)$ be these cuts. Applying \Cref{eq:demSizeLB} to each of our  $C_i$s, we get
\begin{align}\label{eq:demandSizeLB}
    \sum_i A_{(h''_\epoch, s_\epoch)}(C_i) > |A| / \gamma^{\epoch-1}.
\end{align}

$(C_1, C_2, \ldots)$ is an $(h_{\epoch}'', s_\epoch)$-length $\phi_\epoch$-sparse sequence of moving cuts and so forms a candidate for the demand-size-largest such sequence. \future{TODO: not totally clear because some things are getting cut; again update union of cuts}

It therefore follows that at the end of the $\epoch$th epoch we have
\begin{align*}
    \sum_i A_{(h''_{\epoch},  s_\epoch)}(C_i)
    &\leq \qLDSCS(\phi_\epoch, h''_{\epoch} , s_\epoch) \\
    &\leq \qLDSC\left(\phi_\epoch \cdot \tilde{O}(s_\epoch^3 \cdot N^{O(1/s_\epoch)}),2h''_{\epoch} , s_\epoch\right)\\
    &= \qLDSC\left(\phi_{\epoch-1},2h''_{\epoch} ,s_{\epoch-1} \cdot \alpha_s\right)\\
    & \leq \qLDSC\left( \phi_{\epoch-1}, h_{\epoch-1}, s_{\epoch-1} \right)\\
    & \leq |A|/ \gamma^{\epoch-1}.
\end{align*}
where, above, the second inequality follows from \Cref{lem:LDSCSAtMostLDSC}, the third from the definition of $\phi_{\epoch-1}$ and $s_{\epoch-1}$, the fourth from \Cref{eq:hUpper} and the fact that $\qLDSC(\phi, h, s)$ is monotone increasing in $h$ (as long as $s \geq 2$) and monotone decreasing in $s$ and the fifth from our inductive hypothesis. However, the above contradicts \Cref{eq:demandSizeLB}.
\end{proof}
    
We conclude with our proof of \Cref{lem: from MSC to ED}.
\EDsFromCuts*
\begin{proof}
We use \Cref{alg:EDsfromCuts}.

First, we claim that the returned cuts $\sum_C C$ are indeed an $(\leq h, s)$-length expander decomposition for sparsity $\phi$ with cut slack $\kappa$.

Let $\epoch$ be $1/\eps'$. By our choice of how we initialize $h_0$, $\phi_0$, observe that after our $1/\eps'$-many epochs we have that $h_\epoch = \alpha_s \cdot h$, $s_\epoch = 2 \cdot \alpha_s^{2/\eps'}$ and $\phi_\epoch = \alpha_\phi \cdot \phi$. Letting $s = 2 \cdot \alpha_s^{2/\eps'}$ and applying \Cref{lem:cutUpperBound} to the final epoch of our algorithm, we therefore have that, after applying $\sum_C C$ to our graph, for every $h' \leq h_\epoch / \alpha_s \leq h$ that the $(h,s)$-length demand-size of the demand-size-largest $(h,s)$-length $\phi_{\epoch} / \alpha_\phi = \phi$-sparse cut is at most 
\begin{align*}
|A| \cdot \left(\frac{1}{N^{O(\eps')}}^{1/\eps'} \right)    \leq m \cdot \left(\frac{1}{N^{O(\eps')}}\right)^{1/\eps'} < 1 
\end{align*}
where, above, we applied the fact that $|A| \leq m$ (since it must be degree-respecting). Since the demand-size of a cut is integral, it follows that for all $h' \leq h$ no $(h,s)$-length $\phi$-sparse cut exists and so $\sum_C C$ is indeed a $(\leq h, s)$-length $\phi$-expander decomposition (\Cref{def:LCED}).

We next consider the cut slack of this expander decomposition. Observe that each cut $C$ that we compute in the epoch $\epoch$ is, by construction, a ($(\leq h_{\epoch}, s_{\epoch})$-length) $\phi_\epoch$-sparse cut for $|A|$ and so has size at most 
\begin{align*}
    \phi_{\epoch} \cdot |A| & \leq \phi \cdot \alpha_\phi \cdot \left( \tilde{O}(\alpha_\phi \cdot s_\epoch^3 \cdot N^{O(1/s_\epoch)}) \right)^{1/\eps'} \cdot |A|\\
    &\leq \phi \cdot \alpha_\phi \cdot \left( \tilde{O}(\alpha_\phi \cdot s^{3} \cdot N^{O(1/s_0)}) \right)^{1/\eps'} \cdot |A|\\
    & \leq \phi \cdot \alpha_\phi^{O(1/\eps')} \cdot s^{O(1/\eps')} \cdot N^{O(1/(\eps' \cdot \sqrt{s}))} \cdot \left(\tilde{O}(1) \right)^{1/\eps'}\cdot |A|
\end{align*}
Furthermore, applying the fact that, although we compute $l = \tilde{O}(\alpha \cdot N^{O(\eps')})$ total such cuts in one iteration, the size of these cuts is geometrically increasing, so the entire size of $\sum_C C$ is dominated by the sum of the cuts we compute in the last iteration. Namely, we have
\begin{align*}
    \sum_C |C| \leq \phi |A| \cdot \left(  \tilde{O}\left(\alpha \cdot N^{O(\eps')} \right) \cdot \alpha_\phi^{O(1/\eps')} \cdot s^{O(1/\eps')} \cdot N^{O(1/(\eps' \cdot \sqrt{s}))} \cdot \left(\tilde{O}(1) \right)^{1/\eps'}  \right)
\end{align*}
giving our bound on the cut slack.

Lastly, the work and depth of our algorithm is trivial since we compute  $\tilde{O}(\frac{N^{O(\eps')}}{\eps'} \cdot \alpha)$-many cuts.












\end{proof}

\section{Algorithm: Length-Constrained EDs from ``The Spiral''}\label{sec:spiral}

We conclude by combining our algorithm which computes length-constrained expander decompositions using demand-size-large length-constrained sparse cuts (\Cref{lem: from MSC to ED}) with our algorithm that computes large length-constrained sparse cuts using length-constrained expander decompositions (\Cref{lem:cutsFromEDs}). This forms a ``spiral'' of mutual recursion where each time we go around the spiral we make substantial progress on the size of the problem on which we are working (in terms of node-weighting size). We first state (and prove) our result with a maximally general tradeoff of parameters ($\eps$ and $\eps'$). We next simplify this presentation by choosing these parameters to get our final theorem.



\begin{theorem}
\label{thm:EDNoLink} There exists an algorithm that, given edge-capacitated graph $G$, parameters $\eps,\eps' \in (0,1)$, node-weighting $A$ on $G$, length bound
$h$, and conductance bound $\phi>0$, computes an $(\leq h,s)$-length witnessed $\phi$-expander
decomposition for $A$ in $G$
with cut and length slack 
\begin{align*}
    \kappa = N^{O(\eps')} \cdot \tilde{O}\left( \frac{N^{O(\eps)}}{\poly(\eps)} \right)^{O(1/\eps')} \qquad \qquad s=(1/\eps)^{O(1/\eps')}
\end{align*}
with work 
\begin{align*}
    \wED(A, m) \leq m \cdot  \tilde{O}\left(\frac{N^{O(\eps +\eps')}}{\poly(\eps, \eps')}\right) \left(  \tilde{O}(1/\eps)^{O(1/(\eps\eps'))}  +  \poly(h)  \right).
\end{align*}
and depth
\begin{align*}
    \dED(A, m) \leq \tilde{O}\left(\frac{N^{O(\eps +\eps')}}{\poly(\eps, \eps')}\right) \left(  \tilde{O}(1/\eps)^{O(1/(\eps\eps'))}  +  \poly(h)  \right).
\end{align*}
\end{theorem}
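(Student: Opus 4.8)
The plan is to realize the ``spiral'' of mutual recursion between \Cref{lem: from MSC to ED}, which produces a witnessed length-constrained expander decomposition by repeatedly cutting demand-size-large sparse cuts, and \Cref{lem:cutsFromEDs}, which produces such a cut (or, when none of positive size exists, an expansion witness) by calling an expander-decomposition subroutine on a collection of strictly smaller node-weightings. Concretely, to build an $(\leq h, s)$-length $\phi$-expander decomposition for $A$ I would invoke \Cref{lem: from MSC to ED}; each of its queries to the large-sparse-cut primitive is answered by \Cref{lem:cutsFromEDs} run with recursion-size parameter $L := N^{\eps'}$; and each of the latter's calls to $\wED(A_i, m_i)$ with $|A_i| \le |A|/L$ is answered by recursing. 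Because every node-weighting considered in the paper has size at most $N^{c_{\max}}$, after $r := \lceil c_{\max}/\eps'\rceil = O(1/\eps')$ trips around the spiral the node-weightings have size below $1$; an $A$-respecting $h$-length demand with $|A|<1$ carries no mass between distinct vertices, so the empty moving cut is a valid witnessed decomposition and serves as the base case that makes the recursion well-founded.

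Next I would unroll the recursion into a tree of depth $r$ and track how the slacks and approximation factors of the two lemmas compose down a root-to-leaf path. One trip around the spiral multiplies the cut slack by a factor of order $\alpha \cdot \alpha_\phi \cdot N^{O(\eps')}$ --- with $\alpha = \tilde{O}(N^{O(\eps)})/\eps^3$ and $\alpha_\phi = s^3 N^{O(1/s)}/\eps$ the approximation parameters of \Cref{lem:cutsFromEDs}, and the $N^{O(\eps')}$ absorbing the $L$-batchability overheads --- and multiplies the length slack by a factor polynomial in $1/\eps$ and $s$, the inner decompositions always being taken with the bounded length slack $2^{1/\eps}$. Raising the per-level cut-slack factor to the $r = O(1/\eps')$-th power (and keeping one residual $N^{O(\eps')}$ term) gives $\kappa = N^{O(\eps')}\cdot \tilde{O}(N^{O(\eps)}/\poly(\eps))^{O(1/\eps')}$, while composing the length-slack factors across the $O(1/\eps')$ levels gives $s = (1/\eps)^{O(1/\eps')}$. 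It is essential that the inner ED calls use a fixed, bounded length slack (not the growing target $s$), so that the $N^{O(1/s)}$ in $\alpha_\phi$ is only $N^{o(1)}$ at every level and the length slack does not blow up to a tower.

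For the running time I would set up the work and depth recurrences directly from the statement of \Cref{lem:cutsFromEDs}: at an instance $(A,m)$ the work is $m\cdot\tilde{O}(\tfrac1\eps s^{O(1/\eps)}N^{O(\eps)} + \tfrac1\eps L\, N^{O(\eps)}\poly(h)) + \tfrac{\tilde{O}(1)}{\poly(\eps)}\sum_i \wED(A_i,m_i)$, and symmetrically for depth with $\max_i$ replacing $\sum_i$. The crucial accounting input is $\sum_i m_i \le \tfrac1\eps\tilde{O}(m + n^{1+O(\eps)} + L^2 N^{O(\eps)})$ with $L = N^{\eps'}$, which says the total edge count grows only by an $N^{O(\eps)}/\poly(\eps)$ multiplicative factor plus an $N^{1+O(\eps+\eps')}/\poly(\eps)$ additive term per level of the tree; iterating $r = O(1/\eps')$ levels keeps the cumulative edge count at $m\cdot N^{O(\eps)}/\poly(\eps) + N^{1+O(\eps+\eps')}/\poly(\eps)$, which is swallowed by the claimed $m\cdot\tilde{O}(N^{O(\eps+\eps')}/\poly(\eps,\eps'))$ prefactor. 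The $\tfrac{\tilde{O}(1)}{\poly(\eps)}$ per-level overhead raised to the $O(1/\eps')$ power, together with the $s^{O(1/\eps)}$ factor in which $s = (1/\eps)^{O(1/\eps')}$, gives the $\tilde{O}(1/\eps)^{O(1/(\eps\eps'))}$ term; and since $h$ is never changed by the recursion the $\poly(h)$ contributions add across the $O(1/\eps')$ levels rather than multiply, giving the final $+\poly(h)$. The depth bound follows the same computation without the summation over sibling instances.

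The hard part will be the bookkeeping that keeps the recursion both well-founded and cheap. Two points deserve the most care. First, the composition of slacks must keep the length slack at $(1/\eps)^{O(1/\eps')}$: this forces a careful choice of exactly which (bounded) length-slack parameter to hand each recursive ED call, so that after $O(1/\eps')$ applications of the approximation factor $\alpha_s$ from \Cref{dfn:apxLargeCut} one lands on the target $s$ rather than on a double-exponential value. Second, the additive $L^2 N^{O(\eps)}$ term in $\sum_i m_i$ must not be multiplied by an uncontrolled number of sibling instances when summed over a level; this requires observing that the node-weighting sizes of all instances at a fixed level sum to at most roughly $|A|$, so the number of ``heavy'' instances per level is $O(L)$ and the additive overhead over all $O(1/\eps')$ levels stays $N^{1+O(\eps+\eps')}/\poly(\eps)$. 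With these in place the theorem follows from the substitutions above, and \Cref{thm:expdecomp exist} is recovered by taking $\eps' = \Theta(\eps)$ (and renaming $\eps$).
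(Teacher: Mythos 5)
Your proposal takes essentially the same route as the paper's proof: a mutual recursion ("spiral") between \Cref{lem: from MSC to ED} and \Cref{lem:cutsFromEDs} with $L = N^{\eps'}$, recursion depth $O(1/\eps')$ because the node-weighting shrinks by a factor $N^{\eps'}$ per level, and the same work/depth recurrences driven by the bound $\sum_i m_i \le \frac{1}{\eps}\cdot\tilde{O}(m + n^{1+O(\eps)} + L^2 N^{O(\eps)})$, summed over the $O(1/\eps')$ levels. The one difference is in the slack bookkeeping: in the paper the $O(1/\eps')$ exponents in $\kappa$ and $s$ arise from the guarantee of \Cref{lem: from MSC to ED} itself (together with a per-level recurrence for $s_i$), with the recursive ED calls only required to meet the fixed spec of \Cref{thm:cutStrat} (length slack $2^{1/\eps}$, cut slack $N^{\poly(\eps)}$) so that the output cut slack does not actually compound multiplicatively across recursion levels as your "raise the per-trip factor to the $O(1/\eps')$-th power, keep one residual $N^{O(\eps')}$" account suggests — but this lands on the same formulas, and your write-up (which correctly flags the delicate choice of length-slack parameter for the recursive calls and the $L^2 N^{O(\eps)}$ accounting) is at the same level of rigor as the paper's own sketch.
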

\begin{proof}
    Applying \Cref{lem:cutsFromEDs} with $L = N^{\eps'}$ we have that one can compute an $\alpha$-approximate $(\leq h, s)$-length demand-size largest cut with sparsity and length approximation $\alpha_\phi$ and $\alpha_s$ where
    \begin{align*}
        \alpha = \tilde{O}\left(\frac{N^{O(\eps)}}{\eps^3} \right) \qquad \qquad \alpha_\phi = \frac{s^3N^{O(1/s)}}{\eps} \qquad \qquad \alpha_s = \max\left(2, \frac{1}{\eps^2} \cdot (s)^{1 + O(1/\eps)} \right)
    \end{align*}
    with work at most
    \begin{align*}
        \wsparseCut(A,m)\leq m \cdot \tilde{O}\left( \frac{1}{\eps} \cdot (s)^{O(1/\eps)} \cdot N^{O(\eps)}   + \frac{1}{\eps} \cdot N^{O(\eps + \eps')} \cdot \poly(h) \right) +\frac{1}{\eps}  \sum_i  \wED(A_i, m_i)
    \end{align*}
    where $|A_i| \leq \frac{|A|}{N^{\eps'}}$  for all $i$ and $\{m_i\}_i$ are non-negative integers satisfying
    \begin{align}\label{eq:boundmi}
        \sum_i m_i \leq \frac{1}{\eps} \cdot \tilde{O}(m + n^{1 + O(\eps)} + N^{O(\eps)}).
    \end{align}
    
    Likewise, applying \Cref{lem: from MSC to ED}, we have that we can compute a $(\leq h,s)$-length $\phi$-expander decomposition for $A$ with cut slack
    \begin{align*}
        \kappa =  \tilde{O}\left(\alpha \cdot N^{O(\eps')} \right) \cdot \tilde{O} \left(\alpha_\phi \cdot s \cdot N^{O(1/\sqrt{s})}  \ \right)^{O(1/\eps')},
    \end{align*}
    length slack
    \begin{align*}
        s = \alpha_s^{O(1/\eps')},
    \end{align*}
    with work
    \begin{align*}
        \wED(A,m) \leq \tilde{O}\left(\frac{N^{O(\eps')}}{\eps'} \cdot \alpha \right) \cdot \wsparseCut(A,m)    
    \end{align*}
    where $\alpha$, $\alpha_\phi$ and $\alpha_s$ are the approximation factors of our approximate demand-size-largest sparse cut algorithm.

    We now combine the above recursions into a single recursion to compute a $(\leq h, s_i)$-length $\phi$-expander decomposition for $A$. We will index each level of our recursion by $i$. Observe that at the $i$th level of our recursion, we have that 
    \begin{align*}
        s_i = \alpha_{s,i}^{O(1/\eps')} = \frac{1}{\eps^2} \cdot s_{i-1}^{1 + O(1/\eps)}
    \end{align*}
    Furthermore, since each time we recurse we reduce our node-weighting's size by a multiplicative $N^{\eps'}$, the depth of our recursion is at most $1/\eps'$ and so we have that $s_i$ is always at most
    \begin{align*}
        s_i \leq \left(\frac{1}{\eps}\right)^{O(1/\eps')}
    \end{align*}
    It follows that our cut slack in the $i$th level of our recursion is
    \begin{align*}
        \kappa_i &= \tilde{O}\left(\alpha \cdot N^{O(\eps')} \right) \cdot \tilde{O} \left(\alpha_\phi \cdot s_i \cdot N^{O(1/\sqrt{s_i})}  \ \right)^{O(1/\eps')}  \\
        &= \tilde{O}\left(\frac{N^{O(\eps + \eps')}}{\poly(\eps)} \right) \cdot \tilde{O} \left(\frac{s_i^3 N^{O(1/s_i)}}{\eps} \cdot s_i \cdot N^{O(1/\sqrt{s_i})}  \ \right)^{O(1/\eps')}\\
        &= \tilde{O}\left(\frac{N^{O(\eps + \eps')}}{\poly(\eps)} \right) \cdot \tilde{O} \left(\frac{1}{\poly(\eps)} \cdot N^{O(1/s_i+1/\sqrt{s_i})} \right)^{O(1/\eps')}\\
        &= N^{O(\eps')} \cdot \tilde{O}\left( \frac{N^{O(\eps)}}{\poly(\eps)} \right)^{O(1/\eps')}.
    \end{align*}
    where in the last line we used the fact that $1/\sqrt{s_i}+1/s_i \geq 1/\eps $ for every value of $s_i$ we consider. \future{Double check this}

    We next bound the work of our algorithm. Letting $m_j$ be the total number of edges at the $j$th level of our recursion and applying our bound on $\sum_i m_i$ from \Cref{eq:boundmi} we have
    \begin{align*}
        m_{j+1} \leq \frac{1}{\eps} \cdot \tilde{O}(m_j + n^{1 + O(\eps)} + N^{O(\eps)})
    \end{align*}
    Applying the fact that our recurrence has depth at most $1/\eps'$, we get that the maximum number of edges across an entire level of recursion $j$ is at most
    \begin{align}\label{eq:Saga}
        m_j &\leq  m \cdot \tilde{O}\left(\frac{1}{\eps}\right)^{1/\eps'} + \frac{1}{\eps'} \left( n^{1 + O(\eps)} + N^{O(\eps)} \right) \nonumber\\
        & =  m \cdot \tilde{O}\left(\frac{1}{\eps}\right)^{1/\eps'} + \frac{N^{O(\eps)}}{\eps'}
    \end{align}
    where, above, $m$ is the number of edges in our original graph. On the other hand, we have that (apart from the recursive calls), the entire work of our algorithm at a single level of recursion is at most
    \begin{align}\label{eq:gasfsa}
        &m_j \cdot \tilde{O}\left(\frac{\alpha \cdot N^{O(\eps')}}{\eps'}\right) \left( \frac{1}{\eps} \cdot (s)^{O(1/\eps)} \cdot N^{O(\eps)}   + \frac{1}{\eps} \cdot N^{O(\eps + \eps')} \cdot \poly(h)  \right) \nonumber\\
        &\leq m_j \cdot \tilde{O}\left(\frac{N^{O(\eps +\eps')}}{\poly(\eps, \eps')}\right) \left(  (s)^{O(1/\eps)}  +  \poly(h)  \right) \nonumber\\
        &\leq m_j \cdot \tilde{O}\left(\frac{N^{O(\eps +\eps')}}{\poly(\eps, \eps')}\right) \left(  (1/\eps)^{O(1/(\eps\eps'))}  +  \poly(h)  \right)
    \end{align}
    Combining \Cref{eq:Saga} and \Cref{eq:gasfsa} we get that the total work on a single level of recursion (again, excluding recursive calls) is at most
    \begin{align*}
        \left( m \cdot \tilde{O}\left(\frac{1}{\eps}\right)^{1/\eps'} + \frac{N^{O(\eps)}}{\eps'}  \right) \cdot \tilde{O}\left(\frac{N^{O(\eps +\eps')}}{\poly(\eps, \eps')}\right) \left(  (1/\eps)^{O(1/(\eps\eps'))}  +  \poly(h)  \right)\\
        = m \cdot  \tilde{O}\left(\frac{N^{O(\eps +\eps')}}{\poly(\eps, \eps')}\right) \left(  \tilde{O}(1/\eps)^{O(1/(\eps\eps'))}  +  \poly(h)  \right)\\
    \end{align*}
    Summing over our $1/\eps'$-many recursive levels of our algorithm then gives our work bound. The argument for depth is analogous (and, in fact, easier because we do not have to control the total number of edges over each level of recursion using \Cref{eq:boundmi}).
    \future{Go through this proof carefully / formalize it a bit more}

\end{proof}
Lastly, we give the simplified version of the above theorem with witnesses.

\mainAlgThm*
\begin{proof}
    We first apply \Cref{thm:EDNoLink} with $\eps'_0 = \eps$ and $\eps_0 = \eps^2$ where $\eps'_0$ and $\eps_0$ are the parameters described in \Cref{thm:EDNoLink}. Likewise, let $c$ be an upper-bound on the exponent of the poly-log in the $\tilde{O}$ notations and the exponent of all $\poly$ notation of \Cref{thm:EDNoLink}. It follows that we can compute a $(\leq h,s)$-length $\phi$-expander decomposition for $A$ in $G$ with cut and length slack 
\begin{align*}
    \kappa = \frac{N^{O(\eps)}}{(\eps)^{c \cdot O(1/\eps)}} \cdot \log^{c \cdot O(1/\eps)} N \qquad \qquad s=(1/\eps^2)^{O(1/\eps)}
\end{align*}
with work 
\begin{align*}
    \wED(A, m) \leq m \cdot  \tilde{O}\left(\frac{N^{O(\poly(\eps))}}{\poly(\eps)}\right) \left( (1/\eps)^{O(1/\eps^3)} \cdot \log^{c \cdot O(1/\eps)} N +  \poly(h)  \right).
\end{align*}
and depth
\begin{align*}
    \dED(A, m) \leq \tilde{O}\left(\frac{N^{O(\poly(\eps))}}{\poly(\eps)}\right) \left( (1/\eps)^{O(1/\eps^3)} \cdot \log^{c \cdot O(1/\eps)}N  +  \poly(h)  \right).
\end{align*}

We begin by reasoning about our cut slack. Notice that, for a suitable large hidden constant in the $\Omega$ we have that if
\begin{align*}
 \eps \geq \Omega\left(\sqrt{\frac{c \cdot \log \log n}{\log N}} \right)
\end{align*}
then we have that 
\begin{align}
    \frac{c \cdot O(1/\eps) \cdot \log \log N}{ \log N} \leq \eps
\end{align}
Similarly, we have that if
\begin{align*}
    \eps \geq \Omega\left( \left({\frac{c}{\log n}}\right)^{1/3}\right) 
\end{align*}
then 
\begin{align*}
    \frac{\log(1/\eps)\cdot c \cdot O(1/\eps)}{\log N} \leq \frac{c \cdot O(1/\eps^2)}{\log N} \leq \eps
\end{align*}
Thus, our cut slack is
\begin{align*}
    \kappa = N^{O(\eps) + \frac{c \cdot O(1/\eps) \cdot \log \log N}{ \log N} + \frac{\log(1/\eps)\cdot c \cdot O(1/\eps)}{\log N}} \leq N^{O(\eps)}
\end{align*}
Likewise, our length slack is
\begin{align*}
    s = (1/\eps^2)^{O(1/\eps)} = \exp(O(1/\eps) \cdot \log(1/\eps^2)) \leq \exp(O(1/\eps^2)) = \exp(\poly(1/\eps)).
\end{align*}
Lastly, observe that if
\begin{align*}
    \eps \geq \Omega\left(\frac{1}{\log N}\right)^{(1/5)}
\end{align*}
and
\begin{align*}
    \eps \geq \Omega\left(\log \log n / \log n \right)
\end{align*}
then we have
\begin{align*}
    \frac{O(1/\eps^3) \cdot \log(1/\eps)}{\log N} + \frac{c \cdot O(1/\eps) \cdot \log \log N}{\log N } \leq \frac{O(1/\eps^4)}{\log N} + \frac{O(1/\eps) \cdot \log \log N}{ \log N} \leq O(\eps)
\end{align*}
and so our work is 
\begin{align*}
    \wED(A, m) & \leq m \cdot  \tilde{O}\left(\frac{N^{O(\poly(\eps))}}{\poly(\eps)}\right) \left( (1/\eps)^{O(1/\eps^3)} \cdot \log^{c \cdot O(1/\eps)} N +  \poly(h)  \right)\\
    &= m \cdot  \tilde{O}\left(\frac{N^{O(\poly(\eps))}}{\poly(\eps)}\right) \left( N^{\frac{O(1/\eps^3) \cdot \log(1/\eps)}{\log N} + \frac{c \cdot O(1/\eps) \cdot \log \log N}{\log N }} +  \poly(h)  \right)\\
    &= m \cdot  \tilde{O}\left(\frac{N^{O(\poly(\eps))}}{\poly(\eps)}\right) \left( N^{O(\eps)} +  \poly(h)  \right)\\
    &= m \cdot  \tilde{O}\left(\frac{N^{O(\poly(\eps))}}{\poly(\eps)}\right) \cdot \poly(h)\\
    & \leq m \cdot  \tilde{O}\left(N^{O(\poly(\eps))}\cdot \poly(h) \right) 
\end{align*}
Where in the last line we used the fact that $\eps \geq 1/ \log ^{O(1)} N$. The final result comes from letting $\eps$ above be smaller by a suitable large constant (to get the cut slack from $N^{O(\eps)}$ to $n^{\eps}$). Our depth bound is symmetric.

Lastly, we discuss how to compute our witness. The basic idea is to slightly strengthen the expander decomposition we compute to deal with the slacks from \Cref{lem:cutsFromEDs}. In particular, recall that by \Cref{lem:cutsFromEDs} if the graph is already $( \leq h \cdot \frac{1}{\eps} \cdot s^{O(1/\eps)}, s)$-length $\phi$-expanding then when we apply \Cref{lem:cutsFromEDs} we get a $(\leq h, s_w)$-length $\phi_w$-expansion witness where $s_w = \frac{1}{\eps^2} \cdot s^{O(1/\eps)}$ and $\phi_w = \tilde{O}(\phi \eps/N^{O(\eps)})$. 

Fix an $\eps$ and let $\eps_0 = \poly(\eps)$ so that if $s_0 = \exp(\poly(1/\eps_0))$ we have that 
    \begin{align*}
        \frac{1}{\eps_0^2} \cdot s_0^{O(1/\eps_0)} \leq \exp(\poly(1/\eps))
    \end{align*}
    and we let $\phi_0 = \phi \cdot \tilde{O}\left(N^{O(\eps_0)}\right) / \eps_0$ so that
    \begin{align*}
        \phi \geq \phi_0 \cdot \eps / \tilde{O}(N^{O(\eps)})
    \end{align*}
    Thus, if our graph is $(\leq h_0, s_0)$-length $\phi_0$-expanding and we apply \Cref{lem:cutsFromEDs} with these parameters then we get back a $(\leq h, s)$-length $\phi$-expansion witness. 

    Next, apply our algorithm for computing length-constrained expander decompositions using $\eps_0$ to compute a $(\leq h_0, s_0)$-length $\phi_0$-expander decomposition. Since $h_0 \geq h$ and $\eps_0 = \poly(\eps)$, such a decomposition is a $(\leq h, \exp(\poly(\eps)))$-length $\phi$-expander decomposition but with a multiplicative cut slack increase of $\phi_0/\phi = N^{O(\eps_0)}/\eps_0$ for a total cut slack of (assuming $\eps_0 \geq \log \log N / \log N$)
    \begin{align*}
        N^{\eps_0} \cdot N^{O(\eps_0)}/\eps_0 =        N^{O(\eps_0) \log(\eps_0)}/\eps_0 \leq N^{O(\eps_0)}
    \end{align*}
    Lastly, letting $\eps_0$ be smaller by an appropriate polynomial gives a length slack of $\exp(\poly(\eps))$ and cut slack of $n^{\eps}$. Furthermore, the time to compute this expander decomposition is as described above since our parameters have only changed by a polynomial. Furthermore, if we apply \Cref{lem:cutsFromEDs} after applying the above decomposition we get back the desired witness. Finally, the work to invoke \Cref{lem:cutsFromEDs} with $L = N^\eps_0$ is
    \begin{align*}
        &m \cdot \tilde{O}\left( \frac{1}{\eps_0} \cdot s_0^{O(1/\eps_0)} \cdot N^{O(\eps_0)}   + \frac{1}{\eps_0} \cdot N^{\eps_0} \cdot N^{O(\eps_0)} \cdot \poly(h) \right) +\frac{\tilde{O}(1)}{\poly(\eps_0)}  \sum_i  \wED(A_i,m_i)\\
        &\leq m \cdot \tilde{O}\left( \frac{1}{\eps_0} \cdot s_0^{O(1/\eps_0)} \cdot N^{O(\eps_0)}   + \frac{1}{\eps} \cdot N^{O(\eps_0)} \cdot \poly(h) \right) \\& \qquad +\frac{\tilde{O}(1)}{\poly(\eps_0)}  \sum_i  m_i \cdot  \tilde{O}\left(N^{O(\poly(\eps_0))}\cdot \poly(h) \right) \\
        &\leq m \cdot \tilde{O}\left( \frac{1}{\eps_0} \cdot s_0^{O(1/\eps_0)} \cdot N^{O(\eps_0)}   + \frac{1}{\eps} \cdot N^{O(\eps_0)} \cdot \poly(h) \right) \\& \qquad +\frac{\tilde{O}(1)}{\poly(\eps_0)}  \frac{1}{\eps_0} \cdot \tilde{O}(m + n^{1 + O(\eps_0)} + N^{O(\eps_0)}) \cdot  \tilde{O}\left(N^{O(\poly(\eps_0))}\cdot \poly(h) \right)\\
        &\leq m \cdot \tilde{O}\left(N^{O(\poly(\eps))}\cdot \poly(h) \right)
    \end{align*}
    where above we applied the previously described algorithm for expander decompositions and in the last line we again assumed $\eps \geq 1/\log^{1/c}N$ for a suitable large constant $c$. Note that the above is asymptotically the same as the time to compute our expander decompositions. The depth calculation is analogous.
\end{proof}

In \Cref{sec:graphProps} we show how to achieve the above algorithm with the linkedness property. Likewise, we note that the above algorithm immediately gives ``witnessed'' expander decompositions. Specifically, if we run the above algorithm after already applying our expander decomposition, then we find no sparse cuts and the output of our algorithm (and, in particular, \Cref{lem:cutsFromEDs}) is a collection of routers, one for each cluster of our neighborhood cover, along with an embedding of a router into each cluster of the neighborhood cover. We also note that plugging the above result into \Cref{lem:cutsFromEDs} gives efficient approximation algorithms for the  demand-size-largest cut problem.

\section{Maintaining Extra Properties (\Cref{thm:expdecomp exist} with Linkedness)}\label{sec:graphProps}
In the previous sections we showed how repeatedly cutting large length-constrained sparse cuts gives length-constrained expander decompositions. In this section, we discuss how to incorporate additional graph properties---in particular, the linkedness property---into these decompositions.



\paragraph*{Techniques.} The basic idea for achieving said properties is to use the robustness of length-constrained expanders. In particular, start with a length-constrained expander decomposition. The resulting graph is a length-constrained expander but may not have our desired properties. We then force our graph to have our properties, possibly at the expense of length-constrained expansion. By the robustness properties of length-constrained expanders, we can easily restore length-constrained expansion, now possibly at the cost of our properties. Going back and forth between restoring our properties and length-constrained expansion eventually gives us a length-constrained expander with our desired properties.

\newcommand{\del}{\textsf{Del}}
\newcommand{\inidel}{I}
\newcommand{\maintdel}{M_{\del}}

We call a $\pset$ set of (possibly infinitely many) graphs a \emph{property}. We say that a graph $G$ has property $\pset$ iff $G\in \pset$. 
%
We say that an algorithm $\alg$ \emph{maintains property $\pset$} with {initial rate $I: \mathbb{R}^+\to \mathbb{R}^+$} and maintenance rate {$M: \mathbb{R}^+\to \mathbb{R}^+$}, iff for every graph $G$ on $n$ vertices,
\begin{itemize}
    \item the algorithm $\alg$ can first find a subset $E'\subseteq E(G)$ of at most $\inidel(n)\cdot |E(G)|$ edges such that the graph $G'=G\setminus E'$ has property $\pset$; and
    \item upon any online sequence of $k$ edge deletions from $G'$, the algorithm $\alg$ maintains a pruned set of at most $M(n)\cdot k$ edges of $G'$, such that when the edges in the pruned set are removed from $G'$, the remaining graph has property $\pset$.
\end{itemize}

\noindent In this case, we also say that the property $\pset$ has (or can be maintained with) rate $(\inidel(\cdot), M(\cdot))$.

The main result in this section is the following theorem.
We will first prove this theorem, and then use it to achieve an additional linkedness property in \Cref{thm:expdecomp exist}.

\begin{theorem}
\label{thm: rates}
Let $\pset_1,\ldots,\pset_k$ be graph properties with rates $(I_1,M_1),\ldots,(I_k,M_k)$ respectively where $M_1\ge \cdots\ge M_k$. 
Assume further that
\begin{itemize}
    \item $M_1\cdot \sum_{2\le i\le k}M_i<1/4$; and
    \item $\sum_{2\le i\le k}M_i<1/2$.
\end{itemize}
Then property $\pset=\bigcap_{1\le  i\le k}\pset_i$ can be maintained with rates $(O\big(\sum_{1\le i\le k}I_i),O(\sum_{1\le i\le k}M_i))$.
\end{theorem}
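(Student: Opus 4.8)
\textbf{Proof plan for \Cref{thm: rates}.} The plan is to construct the combined maintenance algorithm $\alg$ for $\pset = \bigcap_i \pset_i$ by running the individual algorithms $\alg_1, \dots, \alg_k$ in a carefully ordered cascade, and to bound the total number of deleted/pruned edges via a geometric-series argument that relies on the two smallness hypotheses. First I would handle the \emph{initialization} phase. Given $G$ with $m := |E(G)|$ edges, apply $\alg_1$ to find $E_1' \subseteq E(G)$ with $|E_1'| \le I_1(n) \cdot m$ so that $G \setminus E_1'$ has $\pset_1$; then treat the removal of $E_1'$ as an online deletion sequence fed to $\alg_2$, which prunes a set of size $\le M_2(n) \cdot |E_1'|$, plus its own initial set of size $\le I_2(n)\cdot m$, and so on down the chain. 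One must be slightly careful about the order: because deletions made to restore $\pset_j$ can violate $\pset_i$ for $i<j$, after the cascade we must re-feed everything back up. I would argue that because the sets $M_i$ are decreasing and their relevant products are $<1/4$ and their sum is $<1/2$, iterating the cascade "up and down" converges, and the total edges removed in initialization is $O(\sum_i I_i(n)) \cdot m$ (the $I$'s dominate since each appears once per level and the $M$-multipliers only inflate things by a bounded geometric factor).

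The core of the argument is the \emph{online maintenance} phase. Suppose $k$ online edge deletions arrive. The right way to organize the bookkeeping is: a deletion handled by $\alg_1$ may trigger $\alg_1$ to prune up to $M_1(n)$ edges per deletion; each such pruned edge is itself a deletion that must be handled by $\alg_2, \dots, \alg_k$ to restore their properties; and pruning by $\alg_j$ ($j \ge 2$) in turn is a deletion that $\alg_1$ must react to (and the other $\alg_i$). So I would set up a potential/accounting argument: let $d_1$ be the total number of edges ever pruned by $\alg_1$ and let $d_{\ge 2} := \sum_{2 \le i \le k} (\text{edges pruned by } \alg_i)$. Each genuine input deletion or each edge pruned by some $\alg_i$, $i\ge 2$, forces $\alg_1$ to react, so $d_1 \le M_1(n)\cdot(k + d_{\ge 2})$. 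Conversely each genuine input deletion or each edge pruned by $\alg_1$ forces each $\alg_i$ ($i \ge 2$) to react, so $d_{\ge 2} \le \big(\sum_{2\le i\le k} M_i(n)\big)\cdot (k + d_1)$. Substituting the first into the second (or vice versa) and using $M_1 \cdot \sum_{i\ge 2} M_i < 1/4$ to bound the resulting self-referential term, I would solve the linear system to get $d_1 + d_{\ge 2} \le O\big(\sum_{1\le i\le k} M_i(n)\big)\cdot k$; the hypothesis $\sum_{i\ge 2}M_i < 1/2$ ensures the denominators $1 - M_1\sum_{i\ge2}M_i$ and $1 - \sum_{i\ge2}M_i$ stay bounded away from $0$, so the constants hidden in the $O(\cdot)$ are absolute. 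Finally the union of all pruned sets, removed from the initialized graph, lies in every $\pset_i$ once the cascade has stabilized (no $\alg_i$ wants to prune further), hence lies in $\pset$, which is exactly what "maintains $\pset$ with maintenance rate $O(\sum_i M_i)$" demands.

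The main obstacle I anticipate is making the cascade \emph{terminate} and proving the pruned sets stabilize in a consistent joint state — i.e.\ a fixed point where no $\alg_i$ needs to prune any more edges — rather than the algorithms endlessly ping-ponging. This is precisely where the two product/sum conditions are used: they guarantee that each "round trip" through the cascade multiplies the number of newly-forced deletions by a factor strictly less than $1$, so the process is a contraction and the total is a convergent geometric series. I would make this rigorous by defining one "super-round" as: let $\alg_1$ react to all outstanding deletions, then let $\alg_2,\dots,\alg_k$ react, then repeat; show that the number of edges pruned in super-round $t+1$ is at most $\big(M_1(n)\cdot\sum_{i\ge 2}M_i(n) + \text{lower order}\big)$ times the number pruned in super-round $t$, which is $< 1/2$ by hypothesis, giving convergence and the claimed bound. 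A minor secondary subtlety is that the algorithms $\alg_i$ are specified as online-deletion algorithms, so I must check that feeding one algorithm's pruned edges as the online deletion stream of another is legitimate — it is, since from $\alg_j$'s perspective these are simply edge deletions from its current graph, and the definition of "maintains $\pset_j$" promises a pruned set of size $\le M_j(n)$ times the number of such deletions regardless of their source.
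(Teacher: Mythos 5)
Your overall strategy is the same as the paper's: cascade the individual maintainers and bound the total pruning by a contraction/geometric-series argument, with $\sum_{2\le i\le k}M_i<1/2$ controlling the block $\pset_2,\ldots,\pset_k$ and $M_1\cdot\sum_{2\le i\le k}M_i<1/4$ controlling its interaction with $\pset_1$ (the paper does this hierarchically: it first merges $\pset_2,\ldots,\pset_k$ into one property $\pset'$ with maintenance rate at most $2\sum_{2\le i\le k}M_i$, then pairs $\pset'$ with $\pset_1$ via a two-property lemma requiring $M_1\cdot M'<1/2$, and bounds everything with a ball-moving accounting). One inaccuracy in your accounting: the displayed inequality $d_{\ge 2}\le\big(\sum_{2\le i\le k}M_i\big)(k+d_1)$ omits the cross-reactions \emph{within} the block --- an edge pruned by $\alg_j$ with $j\ge 2$ is an online deletion for every other $\alg_i$ with $i\ge 2$, so the correct inequality is $d_{\ge 2}\le\big(\sum_{2\le i\le k}M_i\big)(k+d_1+d_{\ge 2})$. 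This still solves, precisely because $\sum_{2\le i\le k}M_i<1/2$ (which is exactly how the paper uses that hypothesis), and your super-round description at the end implicitly includes these terms, so this is fixable rather than fatal; note the per-round contraction factor is $(1+M_1)\sum_{2\le i\le k}M_i<3/4$, not $<1/2$ as you state.

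The genuine gap is in the initialization phase, and it comes from your ordering. You initialize $\alg_1$ first and then ``re-feed everything back up,'' so $\alg_1$ must react to the \emph{initial} deletion sets of $\alg_2,\ldots,\alg_k$, contributing up to $M_1\cdot\big(\sum_{2\le i\le k}I_i\big)\cdot m$ pruned edges. The hypotheses bound only $\sum_{2\le i\le k}M_i$ and the product $M_1\cdot\sum_{2\le i\le k}M_i$; they place no bound on $M_1$ itself (e.g.\ $M_1=10^3$ with $\sum_{2\le i\le k}M_i=10^{-4}$ is allowed), so this term is not $O\big(\sum_i I_i\big)\cdot m$ with an absolute constant, and your claim that ``the $M$-multipliers only inflate things by a bounded geometric factor'' fails for $M_1$ acting on the other properties' initial sets. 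The repair is the ordering the paper is careful about: initialize the low-rate block $\pset_2,\ldots,\pset_k$ first (as a single property $\pset'$ with maintenance rate $\le 2\sum_{2\le i\le k}M_i\le 1$), then initialize $\pset_1$ on the already-reduced graph, so that $M_1$ never multiplies any $I_i$ directly --- it enters only through products $M_1\cdot M'<1/2$ inside the geometric series and additively in the maintenance rate, yielding bounds of the form $I'+(2+2M')I_1\le 5\sum_i I_i$ and $M'+2(1+M')^2M_1\le 9\sum_i M_i$ as in \Cref{lem: rates}. Your maintenance-phase analysis is insensitive to this ordering issue (there every term already carries an $M$-factor), but as written the claimed initial rate does not follow.
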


\subsection{Proof of \Cref{thm: rates}}

The remainder of this section is dedicated to the proof of \Cref{thm: rates}.
We first prove the following lemma, which is the special case of \Cref{thm: rates} where $k=2$.

\begin{lemma}
\label{lem: rates}
Let $\pset_1,\pset_2$ be graph properties with rates $(I_1,M_1)$ and $(I_2,M_2)$.
If $M_1\cdot M_2<1/2$, then the property $\pset_1\cap \pset_2$ can be maintained with rate $(I_1+(2+2M_1)\cdot I_2,M_1+2(1+M_1)^2\cdot M_2)$.
\end{lemma}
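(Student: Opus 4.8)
The plan is to run the two given maintenance algorithms in tandem, alternating between them so that each one repairs the damage the other one does, and to show this alternation converges geometrically precisely because $M_1\cdot M_2<1/2$. To maintain $\pset_1\cap\pset_2$ on an input graph $G$ with $m:=|E(G)|$, I would first run the $\pset_1$-initialization on $G$, deleting a set $E_1$ with $|E_1|\le I_1(n)\,m$ and $G_1:=G\setminus E_1\in\pset_1$; then run the $\pset_2$-initialization on $G_1$, deleting $E_2$ with $|E_2|\le I_2(n)\,m$ and $G_1\setminus E_2\in\pset_2$. Now feed $E_2$ as a batch of online deletions from $G_1$ to the $\pset_1$-maintenance algorithm, which prunes $P_1^{(1)}$ with $|P_1^{(1)}|\le M_1|E_2|$; feed $P_1^{(1)}$ as online deletions from $G_1\setminus E_2$ to the $\pset_2$-maintenance algorithm, which prunes $P_2^{(1)}$ with $|P_2^{(1)}|\le M_2|P_1^{(1)}|$; feed $P_2^{(1)}$ back to the $\pset_1$-maintenance algorithm, and so on, alternating until nothing more is pruned. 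The initial deletion set is $E':=E_1\cup E_2\cup\bigcup_j\big(P_1^{(j)}\cup P_2^{(j)}\big)$. For the online phase I process incoming deletions one edge $e$ at a time: first hand $e$ to the $\pset_1$-maintenance algorithm, then hand $e$ together with its newly pruned edges to the $\pset_2$-maintenance algorithm, then hand the latter's newly pruned edges back to the $\pset_1$-maintenance algorithm, alternating as before; the reported pruned set is the union of all edges pruned by the two sub-algorithms.

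\textbf{Correctness and convergence.} The key invariant is that the $\pset_1$-maintenance guarantee describes "$G_1$ minus the deletions it has seen minus its own pruned set" as lying in $\pset_1$, and the $\pset_2$-maintenance guarantee describes "$(G_1\setminus E_2)$ minus the deletions it has seen minus its own pruned set" as lying in $\pset_2$; by construction both describe the same graph, namely $G\setminus E'$ (and, after online deletions, the current remaining graph), so that graph is in $\pset_1\cap\pset_2$. I would need to check the routine bookkeeping that at every step each set of edges handed off is genuinely an online deletion for the receiving sub-algorithm (it is currently present in that sub-algorithm's graph — in particular $\pset_1$ never prunes an edge of $E_2$ since those are already deleted from $G_1$), and that the cascade terminates: each full round multiplies the number of freshly pruned edges by at most $M_1M_2<1/2$, so after $O(\log m)$ rounds the increment drops below one edge, i.e.\ to zero. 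I would also invoke the standard fact (true for expander-pruning-type algorithms) that handing a maintenance algorithm an extra batch of $\delta$ deletions grows its pruned set by at most $M(n)\,\delta$; this incrementality, together with the order of the alternation, is exactly what makes the final rate asymmetric in $M_1$ and $M_2$.

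\textbf{Rate accounting.} For the initialization, summing the geometric cascade gives $\sum_j|P_1^{(j)}|\le |P_1^{(1)}|/(1-M_1M_2)\le 2M_1|E_2|$ and $\sum_j|P_2^{(j)}|\le M_2\sum_j|P_1^{(j)}|\le 2M_1M_2|E_2|$, so $|E'|\le I_1 m+|E_2|+2M_1(1+M_2)|E_2|\le \big(I_1+(1+2M_1+2M_1M_2)I_2\big)m\le\big(I_1+(2+2M_1)I_2\big)m$, where $M_1M_2\le 1/2$ is used both to get $1/(1-M_1M_2)<2$ and to absorb $2M_1M_2\le 1$. For the online phase, consider one deleted edge $e$: in round $1$ the $\pset_1$-algorithm prunes at most $M_1$ edges (its only new input is $e$); then the $\pset_2$-algorithm sees $e$ plus those $\le M_1$ edges and prunes at most $M_2(1+M_1)$; from round $2$ on the two prune counts form a geometric sequence of ratio $M_1M_2$, so the total pruned by $\pset_1$ is at most $M_1+2M_1M_2(1+M_1)=M_1\big(1+2M_2(1+M_1)\big)$ and by $\pset_2$ at most $2M_2(1+M_1)$, summing to $M_1+2M_2(1+M_1)^2$ per deletion, hence maintenance rate $M_1+2(1+M_1)^2M_2$; multiplying by the number $k$ of online deletions finishes.

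\textbf{Main obstacle.} The arithmetic is routine; the real care is in (i) showing the alternation is a legitimate online deletion sequence for each sub-algorithm and that the two invariants compose to $\pset_1\cap\pset_2$, and (ii) exploiting the \emph{order} of the alternation — charging each external deletion to the $\pset_1$-algorithm only once per round — so that one obtains the tighter asymmetric bound $M_1+2(1+M_1)^2M_2$ rather than the symmetric $M_2+2(1+M_2)^2M_1$ that a naive simultaneous update would give. This lemma is the $k=2$ base case from which \Cref{thm: rates} then follows by induction, contracting $\pset_1,\dots,\pset_{k-1}$ into one property and applying the lemma once more.
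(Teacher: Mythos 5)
Your proposal is correct and follows essentially the same route as the paper's proof: initialize $\pset_1$ then $\pset_2$, alternately feed each maintenance algorithm's pruned edges to the other as online deletions, and bound the cascade by a geometric series with ratio $M_1 M_2 < 1/2$, which yields exactly the asymmetric rates $I_1+(2+2M_1)I_2$ and $M_1+2(1+M_1)^2 M_2$. The only difference is presentational — the paper packages the identical accounting as a ``ball-moving'' process between two boxes, whereas you sum the geometric series directly.
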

\begin{proof}
Denote by $\alg_1$ and $\alg_2$ the algorithms for maintaining $\pset_1$ and $\pset_2$ with rates $(I_1,M_1)$ and $(I_2,M_2)$, respectively.
We now describe an algorithm that maintains property $\pset_1\cap \pset_2$. 
We first apply $\alg_1$ to $G$ and compute a set $E^1_0\subseteq E(G)$ of edges such that $|E^1_0|\le I_1\cdot |E(G)|$ and $G\setminus E^1_0$ has property $\pset_1$.
We then apply $\alg_2$ to $G\setminus E^1_0$ and compute a set $E^2_0\subseteq E(G\setminus E^1_0)$ of edges such that $|E^2_0|\le I_1\cdot |E(G\setminus (E^1_0\cup E^2_0))|$ and $G\setminus(E^1_0\cup E^2_0)$ has property $\pset_2$.
We then perform iterations. In the $i$th iteration, we iteratively maintain properties $\pset_1$ and $\pset_2$, starting with a graph $G_{i-1}$ (where $G_0=G\setminus (E^1_0\cup E^2_0)$). We apply $\alg_1$ to compute a set $E^1_i\subseteq E(G_{i-1})$ such that $|E^1_i|\le M_1\cdot |E^2_{i-1}|$ $G_{i-1}\setminus E^1_i$ has property $\pset_1$, and then apply $\alg_2$ to compute a set $E^2_i\subseteq E(G_{i-1}\setminus E^1_i)$ such that $G_{i-1}\setminus (E^1_i\cup E^2_i)$ has property $\pset_1$. Define $G_i=G_{i-1}\setminus (E^1_i\cup E^2_i)$ and continue to the next iteration. Whenever in some iteration $i^*$ we have $E^1_{i^*}=E^2_{i^*}=\emptyset$, we terminate the algorithm and return the graph $G_i$. In other words, the set of edges that we have removed from $G$ is $E'=\bigcup_{0\le i\le i^*}(E^1_i\cup E^2_i)$.

We now describe an ball-moving abstract process and use it to estimate the size of set $E'$. We have two boxes denoted by $B_1$ and $B_2$, respectively. Initially, $B_1$ contains $I_1\cdot |E(G)|$ \emph{inactive} balls and $B_2$ contains $I_2\cdot |E(G)|$ \emph{active} balls. In each iteration, we either deactivate $t$ balls in $B_1$ (if $B_1$ contains at least $t$ active balls at the moment) and add $t\cdot M_2$ new active balls to $B_2$, or deactivate $t$ balls from $B_2$ (if $B_2$ contains at least $t$ active balls at the moment) and add $t\cdot M_1$ new active balls to $B_1$, for an arbitrary $t$. The process can continue for arbitrarily many iterations and may terminate at any point.

On the one hand, we show our algorithm can be modelled a realization of the process, such that the size of $E'$ is bounded by the number of balls in $B_1\cup B_2$ at the terminating point. 
Recall that our algorithm starts by computing a set $E_0^1$ and then a set $E_0^2$ to initiate properties $\pset_1$ and $\pset_2$, where $|E_0^1|\le I_1\cdot |E(G)|$ and $|E_0^2|\le I_2\cdot |E(G)|$. 
The reason that the initial balls in $B_1$ are inactive while the initial balls in $B_2$ are active is because the set $E_0^2$ is computed in graph $G\setminus E^0_1$, which means that the edges in $E^0_1$ are not viewed as the online updates for property $\pset_2$. 
In the first iteration after the initialization, we compute a new prune set $|E^1_1|\le M_1\cdot |E_{0}^2|$ to maintain $\pset_1$. This can be viewed as deactivating $|E_{1}^1|/M_1\le |E_{0}^2|$ balls in $B_2$ and add $|E_{1}^1|$ new active balls in $B_1$.
Similarly, in the algorithm we then compute a new prune set $E^2_1$ to maintain $\pset_1$, and this can be viewed as deactivating $|E_{1}^2|/M_1$ balls in $B_2$ and add $|E_{1}^2|$ new active balls in $B_2$. Later iterations can be simulated in a similar way. The algorithm ends whenever $E^1_{i^*}=E^2_{i^*}=\emptyset$ for some $i^*$, and we let the process end at the same time.

From the definition of maintenance rate, upon $k$ online updates, property $\pset_i$ can be maintained by a pruned set of at most $M_i\cdot k$ edges. This is exactly modelled by our process. In particular, every time we compute a set $E_i^1$, in our process we deactivate $|E_i^1|/M_2$ balls from $B_2$, so we are guaranteed that $r$ active balls in one $B_2$ can give birth to at most $M_2\cdot r$ new balls in $B_1$. Therefore, at any time, we are never required to deactivate more balls than the current active balls in a box. As every edge in the pruned set $E'$ corresponds to a distinct ball in $B_1$ or $B_2$, the size of $E'$ is bounded by the number of balls in $B_1\cup B_2$ at the end of the process.

On the other hand, we show that, no matter how the process proceeds, the number of balls in $B_1\cup B_2$ at any time is at most $(I_1+(2+2M_1)\cdot I_2)$. In fact, for each initial active ball in $B_2$, it will eventually gives birth to at most $1+M_1+M_2M_1+M_1M_2M_1+\cdots$ balls in the final set. Therefore, the number of balls in $B_1\cup B_2$ at any time is bounded by
\[
\begin{split}
& |E(G)|\cdot 
\bigg(
I_1+I_2+M_1I_2+M_2(M_1I_2)+M_1(M_2M_1I_2)+\cdots
\bigg)\\
& \le |E(G)|\cdot 
\bigg(
I_1+I_2\cdot (1+M_1)\cdot \frac{1}{1-M_1M_2}
\bigg)\\
& \le |E(G)|\cdot (I_1+(2+2M_1)\cdot I_2),
\end{split}
\]
as $M_1M_2\le 1/2$. 
Therefore, our algorithm initializes property $\pset_1\cap \pset_2$ with rate $I\le I_1+(2+2M_1)\cdot I_2$.

Now upon an update, we perform iterations as exactly described before to maintain properties $\pset_1$ and $\pset_2$. Via a similar ball-moving process, we can show that the pruned set upon every update has size at most
\[
\begin{split}
M & \le 
M_1+(1+M_1)M_2+M_1(1+M_1)M_2+M_2(M_1(1+M_1)M_2)+\cdots
\\
& \le 
M_1+M_2\cdot (1+M_1)^2\cdot \frac{1}{1-M_1M_2}\\
& \le M_1+2(1+M_1)^2\cdot M_2.
\end{split}
\]
\end{proof}

We now prove \Cref{thm: rates} using \Cref{lem: rates}. Denote $I=\sum_{1\le i\le k}I_i$ and $M=\sum_{1\le i\le k}M_i$. 


We first show that property $\pset'=\bigcap_{2\le i\le k}\pset_i$ can be maintained with initial rate $I'=O(\sum_{2\le i\le k}I_i)$ and maintenance rate $M'=O(\sum_{2\le i\le k}M_i)$.
We describe an algorithm similar to (but simpler than) the one in the proof of \Cref{lem: rates} that maintains property $\pset$. Denote by $\alg_i$ the algorithm for maintaining property $\pset_i$. We perform iterations. In the first iteration, we start with a graph $G_0=G$, and apply $\alg_i$ to compute a set $E^1_i\subseteq E(G)$ such that $G\setminus E^1_i$ has property $\pset_i$. Define $E^1=\bigcup_{1\le i\le k}E^1_i$, so $|E^1|\le \big(\sum_{2\le i\le k}I_i\big)\cdot |E(G)|$. We remove edges in $E^1$ from graph $G_0$ and denote by $G_1$ the remaining graph as the outcome of iteration $1$.
We now describe the second iteration.
Observe that, for each $1\le i\le k$, the graph $G_1$ produced by iteration $1$ can be seen as obtained from $G\setminus E^1_i$ by removing edges in $E^1\setminus E^1_i$. Note that $G\setminus E^1_i$ has property $\pset_i$. From the definition of maintenance rate, we can apply $\alg_i$ to compute another set $E^2_i\subseteq E(G_1)$ with $|E^2_i|\le M_i\cdot |E^1\setminus E^1_i|\le M_i\cdot |E^1|$, such that $G_1\setminus E^2_i$ has property $\pset_i$.
Define $E^2=\bigcup_{1\le i\le k}E^2_i$, so $|E^2|\le \big(\sum_{2\le i\le k}M_i\big)\cdot |E^1|$.
We sequentially perform iteration $j$ for $j=3,4,\ldots$ similarly and compute sets $E^3,E^4,\ldots$.
Whenever for some iteration the set $E^{i^*}$ is empty, we terminate the algorithm and return $E'$ as the union of all sets $E^1,E^2,\ldots,E^{i^*}$ computed in these iterations. 
Via a similar ball-moving process as in the proof of \Cref{lem: rates}, we can show that
\[
\begin{split}
|E'|\le & \sum_{2\le i\le k}I_i\cdot |E(G)|+\sum_{j\ge 1}\bigg(\sum_{2\le i\le k}M_i\bigg)^j\cdot \sum_{2\le i\le k}I_i\cdot |E(G)|\\
\le & \frac{\sum_{2\le i\le k}I_i}{1-\sum_{2\le i\le k}M_i}\cdot |E(G)|\le 2\cdot \sum_{2\le i\le k}I_i \cdot |E(G)|.
\end{split}
\]
In other words, the initial rate of property $\pset'$ is at most $I'\le 2\cdot \sum_{2\le i\le k}I_i$. The same algorithm can also be used to deal with online update sequence. Via similar arguments, we can show that the maintenance rate is at most
\[
M'=\bigg(\sum_{2\le i\le k}M_i\bigg)+\bigg(\sum_{2\le i\le k}M_i\bigg)^2+\bigg(\sum_{2\le i\le k}M_i\bigg)^3+\cdots\le \frac{\sum_{2\le i\le k}M_i}{1-\sum_{2\le i\le k}M_i}\le 2\cdot \sum_{2\le i\le k}M_i.
\]


We then show that property $\pset_1\cap \pset'$ can be maintained with rate $(O(I),O(M))$.
Note that $M_1\cdot M'\le M_1\cdot 2\cdot \sum_{2\le i\le k}M_i\le 1/2$.
Applying \Cref{lem: rates} to properties $\pset'$ and $\pset_1$, we get that property $\pset$ can be maintained with initial rate
\[
I'+(2+2M')\cdot I_1\le I'+ (2+2\cdot 2\cdot 1/2)I_1\le 5\cdot\sum_{1\le i\le k}I_i,
\]
and maintenance rate
\[
M'+2(1+M')^2\cdot M_1\le M'+ 2(1+ 2\cdot 1/2)^2 M_1\le 9\cdot \sum_{1\le i\le k}M_i.
\]

\subsection{Achieving Linkedness in \Cref{thm:expdecomp exist}}

\future{Go through this more carefully}

Let $G$ be the input graph. 
We define the property $\pset(G,\beta)$ as all graphs $G'$ with $G'=G-C+L_C^{\beta}$ for some moving cut $C$.
Clearly, for graph $G$, its initial rate is $0$, and its maintenance rate is $\beta$ (as removing an edge will cause $\beta$ self-loops to be added).

Set $s= \exp(\poly(1/\eps))$ as in \Cref{thm:expdecomp exist}.
Let $\pset(h,s,\phi)$ be the set of all $( h,s)$-length $\phi$-expanders. 

The algorithm for preserving both properties $\pset(G,\beta)$ and $\pset(h,s,\phi)$ works as follows.
Let
\[
k=\frac{\poly(\eps)}{\eps},\quad\and k'=1/\poly(\eps).
\]
We first apply a $(hk',s/\poly(\eps))$-length $\big(\phi \cdot 2^{O(k)} \cdot N^{O(1/k')} \cdot \log^{2} N\big)$-expander decomposition. This has initial rate $\phi\cdot 2^{O(k)} \cdot N^{O(1/k')} \cdot \log^{2} N\cdot N^\eps$.
From \Cref{thm:HCExpPru} and \Cref{thm: equivalence}, its maintenance rate is $\phi\cdot\big(N^\eps \cdot 2^{O(k)}\cdot N^{O(1/k')} \big)\log N=\phi\cdot N^{\poly(\eps)}$.
Therefore, as long as $\beta\le \frac{1}{2\phi}\cdot N^{\poly(\eps)}$, from \Cref{lem: rates}, properties $\pset(G,\beta)$ and $\pset(h,s,\phi)$ can be achieved by alternately applying the algorithms for maintaining $\pset(G,\beta)$ and $\pset(h,s,\phi)$, with the running time dominated by the initialization running time. 

\bibliographystyle{alpha}
\bibliography{main}

\end{document}